\newtheorem{theorem}{Theorem}[section]
\newtheorem*{theorem*}{Theorem}
\newtheorem{proposition}[theorem]{Proposition}
\newtheorem{lemma}[theorem]{Lemma}
\newtheorem*{lemma*}{Lemma}
\newtheorem{corollary}[theorem]{Corollary}
\theoremstyle{definition}
\newtheorem{definition}[theorem]{Definition}
\newcommand{\R}{\mathbb{R}}
\newcommand{\I}{\mathcal{I}}
\newcommand{\flowP}{\operatorname{FlowP}}
\newcommand{\DITC}{ITCO}
\newcommand{\ODITC}{OITCO}
\newcommand{\approxFlow}{\codeStyle{ApproxFlow}}
\newcommand{\RemoveRoot}{\codeStyle{RemoveRoot}}
\newcommand{\PostProcess}{\codeStyle{PostProcess}}
\newcommand{\CreateCompact}{\codeStyle{Create\-Aux\-Instance}}
\newcommand{\Initialize}{\codeStyle{Initialize}}
\newcommand{\FastMaxFlow}{\codeStyle{FastMaxFlow}}
\newcommand{\ConvertBasic}{\codeStyle{ConvertBasic}}
\newcommand{\MaxCap}{\codeStyle{MaxCap}}
\newcommand{\Root}{\operatorname{root}}
\newcommand{\SendFlow}{\codeStyle{SendFlow}}
\newcommand{\Saturate}{\codeStyle{Saturate}}
\newcommand{\RerouteS}{\codeStyle{RerouteShortcut}}
\newcommand{\Fe}{F_{\mathrm{e}}}
\newcommand{\Fs}{F_{\mathrm{s}}}
\newcommand{\ab}{\Gamma}
\newcommand{\essentialthreshold}{\Gamma^{-5}\varepsilon}
\newcommand\bi[1]{^{(#1)}}
\newcommand{\Vess}{\roots_{\mathrm{ess}}}
\newcommand{\CP}{\mathcal{P}}
\newcommand{\ACP}{E_{\CP}}
\newcommand{\Eaux}{E_{\mathrm{ess}}}
\newcommand{\tr}{{\operatorname{tr}}}
\newcommand{\Tsolv}{\mathcal{T}_{\mathrm{solv}}}
\newcommand{\din}[2]{\partial_{#1}^{\mathrm{in}}(#2)}
\newcommand{\dout}[2]{\partial_{#1}^{\mathrm{out}}(#2)}
\newcommand{\dtot}[2]{\partial_{#1}(#2)}
\newcommand{\rev}[1]{\overleftarrow{#1}}
\newcommand{\poly}{\mathrm{poly}}
\newcommand{\eps}{\varepsilon} %
\newcommand{\roots}{\mathcal{R}}
\newcommand{\cu}{\bar{u}}
\newcommand{\cG}{\bar{G}}
\newcommand{\cV}{\bar{V}}
\newcommand{\cE}{\bar{E}}
\newcommand{\gV}{V}
\newcommand{\gE}{E}
\newcommand{\gEi}{E^\circ}
\newcommand{\gGi}{G^\circ}
\newcommand{\gVi}{V^\circ}
\newcommand{\ui}{u^{\circ}}
\newcommand{\gEStr}{E_{\mathrm{gap}}}
\newcommand{\gEA}{E_{\mathrm{abd}}}
\newcommand{\gEF}{E_{\mathrm{free}}}
\newcommand{\gEP}{\ACP}
\newcommand{\gET}{\gEA^\tr}
\newcommand{\nr}{r}
\newcommand{\res}[1]{u^{#1}}
\newcommand{\supp}{\mathrm{supp}}
\newcommand{\exc}[2]{\ensuremath{\mathrm{ex}_{#1}({#2})}}
\newcommand{\val}[1]{\operatorname{val}(#1)}
\newcommand{\gEbd}{E_{\mathrm{bound}}}
\newcommand{\defeq}{\stackrel{\scriptscriptstyle{\mathrm{def}}}{=}}
\newcommand{\runtime}{\mathcal{T}}
\newcommand{\Rnn}{\mathbb{R}_{\ge0}}
\newcommand{\Rp}{\mathbb{R}_{>0}}
\newcommand{\Rs}{\mathbb{R}_{+\infty}}
\newcommand{\0}{\mathbf{0}}
\newcommand{\dsStyle}[1]{\mathsf{#1}}
\newcommand{\codeStyle}[1]{\mathsf{#1}}
\newcommand{\ancestor}{\mathrm{anc}}
\newcommand{\descendant}{\mathrm{desc}}
\newcommand{\related}{\mathrm{rel}}
\newcommand{\out}{\mathrm{out}}
\newcommand{\Fone}{F_{\mathrm{one}}}
\newcommand{\Ftwo}{F_{\mathrm{two}}}
\newif\ifnotes\notestrue
\newcommand{\notename}[2]{{\textcolor{red}{\footnotesize{\bf (#1:} {#2}{\bf ) }}}}
\newcommand{\anote}[1]{{\notename{Aaron}{#1}}}
\newcommand{\sidford}[1]{}
\newcommand{\sidfordLongerTodo}[1]{}
\newcommand{\lnote}[1]{{\notename{Laci}{#1}}}
\newcommand{\dnote}[1]{{\notename{Daniel}{#1}}}
\newcommand{\jnote}[1]{{\notename{Jim}{#1}}}
\newcommand{\notename}[2]{{}}
\newcommand{\anote}[1]{}
\newcommand{\sidford}[1]{}
\newcommand{\sidfordLongerTodo}[1]{}
\newcommand{\lnote}[1]{}
\newcommand{\dnote}[1]{}
\newcommand{\jnote}[1]{}
\newcommand{\otilde}{\tilde{O}}
\newcommand{\wInstance}{\mathcal{W}}
\newcommand{\treeComplexity}{\mathrm{size}}
\newcommand{\loc}{\mathrm{loc}}
\newcommand{\dsordRebuild}{\dsStyle{Update}}
\newcommand{\dsordNewTransitive}{\dsStyle{NewTransitive}}
\newcommand{\dstreeAddOut}{\dsStyle{AddOut}}
\newcommand{\dstreeAddIn}{\dsStyle{AddIn}}
\newcommand{\dG}{\mathcal{G}}
\newcommand{\dV}{\mathcal{V}}
\newcommand{\dE}{\mathcal{E}}
\newcommand{\dELcl}{\mathcal{E}_*}
\newcommand{\dEL}{\mathcal{E}}
\newcommand{\dELsub}[1]{\mathcal{E}^{\leq #1}}
\newcommand{\madd}{m_{\mathrm{add}}}
\newcommand{\Efinal}{\dE_{\mathrm{final}}}
\newcommand{\Etree}{\dE_{T}^{\mathrm{rel}}}
\newcommand{\canc}{\mathrm{canc}}
\newcommand{\cdesc}{\mathrm{cdesc}}
\newcommand{\Ecov}{\dE_{\mathrm{cov}}}
\newcommand{\Edesc}{\dE_{\descendant}}
\newcommand{\dsTcInit}{\dsStyle{TcInit}}
\newcommand{\dsTcAdd}{\dsStyle{TcAdd}}
\newcommand{\dsTcCover}{\dsStyle{TcCover}}
\newcommand{\dsTcReorder}{\dsStyle{TcReorder}}
\newcommand{\dsTcWit}{\dsStyle{TcWitList}}
\newcommand{\wit}{\mathrm{wit}}
\newcommand{\WitRoute}{\dsStyle{WitRoute}}
\newcommand{\wlistInst}{L}
\newcommand{\wlistSpace}{\mathcal{W}(\dV)}
\newcommand{\walk}{\mathrm{walk}}
\title{From Incremental Transitive Cover to\\Strongly Polynomial Maximum Flow}
\author[1]{Daniel Dadush}
\author[2]{James B. Orlin}
\author[3]{Aaron Sidford} 
\author[4]{L{\'{a}}szl{\'{o}} A. V{\'{e}}gh}
\affil[1]{Centrum Wiskunde \& Informatica and Utrecht University}
\affil[2]{Massachusetts Institute of Technology}
\affil[3]{Stanford University}
\affil[4]{University of Bonn} 
\begin{document}

\maketitle
\thispagestyle{empty}

\begin{abstract}
We provide faster strongly polynomial time algorithms solving maximum flow in structured $n$-node $m$-arc networks. Our results imply an $n^{\omega + o(1)}$-time strongly polynomial time algorithms for computing a maximum bipartite $b$-matching where $\omega$ is the matrix multiplication constant. Additionally, they imply an $m^{1 + o(1)} W$-time algorithm for solving the problem on graphs with a given tree decomposition of width $W$. 

We obtain these results by strengthening and efficiently implementing an approach in Orlin's (STOC 2013) state-of-the-art $O(mn)$ time maximum flow algorithm. We develop a general framework that reduces solving maximum flow with arbitrary capacities to (1) solving a sequence of maximum flow problems with polynomial bounded capacities and (2) dynamically maintaining a size-bounded supersets of the transitive closure under arc additions; we call this problem \emph{incremental transitive cover}. Our applications follow by leveraging recent weakly polynomial, almost linear time algorithms for maximum flow due to Chen, Kyng, Liu, Peng, Gutenberg, Sachdeva (FOCS 2022) and Brand, Chen, Kyng, Liu, Peng, Gutenberg, Sachdeva, Sidford (FOCS 2023), and by developing incremental transitive cover data structures.

\end{abstract}
\newpage
\thispagestyle{empty}
\enlargethispage{4\baselineskip}
\tableofcontents
\newpage
\pagenumbering{arabic}

\section{Introduction}

We consider the maximum flow problem on a directed graph or network $G = (V,E)$ with $n$ nodes $V$ and $m$ edges $E$ along with a source node $s \in V$, sink node $t$, and arc capacities $u \in [0,\infty]^E$. 
The goal in the maximum flow problem is to send as much flow as possible from $s$ to $t$ while satisfying capacity restrictions and maintaining conservation of flow at all nodes other than $s$ and $t$.

The maximum flow problem has a long and rich algorithmic history;  see~\cite{cruz2023survey} for a recent survey. Central to this history is the distinction between \emph{weakly polynomial} and \emph{strongly polynomial} maximum flow algorithms. 
A weakly polynomial algorithm corresponds to the standard notion of a polynomial algorithm in the Turing model: the running time is polynomially bounded in the description length of the instance, including the total bit length of the capacities. In contrast, a strongly polynomial algorithm may only use basic arithmetic operations ($+$, $-$, $\times$, $/$) and comparisons ($\leq$), whose total number is polynomial in $n$ and $m$; and further, it has to run in polynomial space. When the capacity vector is integral, weakly polynomial runtimes typically depend polynomially on $\log U$, where $U = \max_{e \in E, u_e < \infty} u_e$ is the maximum finite capacity. To simplify the discussion of runtimes throughout the paper, we assume that all arithmetic operations require only $O(1)$ time. 

Currently, the fastest weakly polynomial algorithms for maximum flow are either deterministic and run in $m^{1+o(1)} \log U$ time, due tovan den Brand, Chen, Peng, Kyng, Liu, Gutenberg, Sashdeva and Sidford~\cite{brand2023deterministic} or are randomized and run in $(m+ n^{1.5}) (\log n)^{O(1)}\log U$ time, due to van den Brand, Lee, Liu, Saranurak, Sidford, Song, and Wang~\cite{Brand2021} (which is faster on dense instances). The almost-linear deterministic algorithm builds upon the randomized algorithm of Chen, Kyng, Liu, Peng, Gutenberg and Sachdeva~\cite{Chen2022maximum}. Deterministic almost-linear time algorithms for maximum flow in both decremental and incremental settings have also been obtained~\cite{chen2024almost,van2024almost}, where we note that~\cite{chen2024almost,van2024almost} may improve
 upon the $m^{o(1)}$-factor compared to~\cite{Brand2021,brand2023deterministic}. 
These algorithms are all based on highly efficient implementations of interior point methods using dynamic data structures to amortize the cost of the iterations. With these recent breakthroughs, we now have almost optimal maximum flow solvers for networks with polynomially bounded integer capacities (i.e., $U = \poly(n)$).

In contrast, the state-of-the-art strongly polynomial algorithm is $O(nm)$. This is obtained by taking the better of two deterministic algorithms:  the $O(nm\log n / \log(\frac{m}{n\log n}))$ time algorithm of King, Rao and Tarjan \cite{King1994}, which used a sophisticated push-relabel scheme, and the  $O(nm + m^{31/16} \log^2 m)$ time algorithm of Orlin~\cite{Orlin2013}. Taking the best of both algorithm, \cite{King1994} for $m \geq n^{1.06}$ and~\cite{Orlin2013} for $m \leq n^{1.06}$, the running time is $O(nm)$ for all parameter ranges.

Orlin's algorithm can be viewed as a framework for efficiently reducing exact maximum flow to a sequence of approximate maximum flow problems generated using an incremental transitive closure data structure. The approximate maximum flow problem is that of computing a flow which routes at least $1/2$ the maximum flow value. Importantly, this can also be solved in almost-linear time even for general capacities using weakly polynomial algorithms (see \Cref{sec:approx_flow} for more details). A key bottleneck in Orlin's algorithm is Italiano's $O(nm)$ total update time algorithm from 1986 for the incremental transitive closure problem~\cite{Italiano1986}, which remains state-of-the-art. Due to an additional bottleneck in Orlin's framework (see \Cref{sec:max-flow-tc} for more details), an improved incremental transitive closure data structure would not by itself be sufficient to improve his algorithm. We note that $\tilde{O}(nm)$\footnote{We use $\tilde O(.)$ notation to hide $\mathrm{polylog}(n)$ factors.}-time algorithms were known since the early 1980's~\cite{GalilNaaman80, Sleator83}, highlighting the difficulty of speeding up strongly polynomial flow algorithms beyond polylogarithmic factors. A long-standing open problem is to find a $O((nm)^{1-\eps})$-time maximum flow algorithm, for any positive $\eps > 0$.

While progress on the general problem remains elusive, an interesting line of inquiry is to understand which structured classes of maximum flow problems can be solved in strongly-polynomial time faster than $O(nm)$. With the exception of the extensively studied problem of maximum flow on planar networks~\cite{borradaile2009, hassin1985n, henzinger1997faster, itai1979maximum,miller1995flow,weihe1997maximum}, for which $O(n \log n)$-time of Borradaile and Klein~\cite{borradaile2009} is nearly optimal, there has been surprisingly little work in this direction. The goal of this work is to make significant progress on broad and interesting classes of flow problems.

\paragraph{Networks with Few Capacitated Arcs and Nodes.} 
One broad class of flow problems we consider are those involving a limited number of \emph{capacitated arcs}, that is arcs with positive finite capacity. Such networks also easily also encapsulate networks where there are also a limited number of \emph{capacitated nodes}, where we define a capacitated node as any node in which there is a finite capacity on the total flow into or out of that node. A capacity $u_v$ on the flow into (or out of) a node $v$ can be encoded by an arc capacity by splitting $v$ into two nodes $v'$ and $v''$, and then placing a capacity of $u_v$ on arc $(v', v'')$.\footnote{It is also possible to transform an arc capacity on $(i, j)$ into a node capacity.  In this transformation, one first adds a node $v$ to the network and replaces arc $(i, j)$ by uncapacitated arcs $(v, i)$ and $(v, j)$.} For additional details on such transformations, see e.g., Chapter 2 of \cite{amo}.   Correspondingly, our algorithms assume that any node capacities have already been encoded as arc capacities and we let $m_c$ denote the number of nodes plus the number of capacitated arcs. 

The parameter $m_c$ provides a useful way to parameterize the complexity of maximum flow instances. In particular, an important class of problems we highlight are those for which $m_c = O(n)$ (and therefore, the remaining arcs have infinite capacity). This problem class includes a variety of canonical and previously studied problems including the following:

\begin{itemize}
    \item \emph{Maximum node-capacitated flow}: given a maximum flow instance in which there are capacities on the nodes and there are no capacities on arcs, find the maximum flow from $s$ to $t$. Using the aforementioned transformation from node capacities to arc capacities, a maximum node-capacitated flow problem on a $n$-node, $m$-arc network, can be transformed into a maximum flow problem on a $2n$-nodes and $(m+n)$-arc network, of which only $n$ arcs have capacities.
    \item \emph{Maximum $b$-matching}: given a bipartite network on $n$ nodes with node capacities, send as much flow from left to right as possible without exceeding specified node capacities. This is a canonical special case of maximum node-capacitated flow and the reduction to the maximum node-capacitated flow problem simply adds a super source $s$ and sink $t$ to the network with the appropriate uncapacitated arcs from $s$ and to $t$.
    \item \emph{Maximum closure}: given a node weighted directed graph $G=(V,E)$, $w\in\R^{V}$, find a set $S\subseteq V$ maximizing $\sum_{i\in S} w_i$, such that the out-degree of $S$ is zero. The maximum closure problem can be formulated as a minimum cut problem in a flow instance with $O(n)$ finite capacity arcs. An example of the maximum closure problem is the \emph{open pit mining problem}, where the weights correspond to the profit of excavating a region and the arcs represent precedence relations (i.e., $(i,j) \in E$ means $j$ must be excavated before $i$), see \cite{hochbaum2001new,Hochbaum2000,johnson1968}. 
\end{itemize}

The state-of-the-art strongly polynomial maximum flow algorithms for graphs with a small number of capacitated arcs were given by Orlin~\cite{Orlin2013}. Orlin showed that one can significantly beat $O(nm)$ when $m_c/m$ is sufficiently small. Specializing to instances with $m = \Omega(n^2)$ and $m_c = O(n)$ for simplicity, \cite{Orlin2013} gave an $O(n^{8/3})$-time algorithm for these instances which relies on fast matrix multiplication based transitive closure.

\paragraph{Networks with Sublinear Treewidth.}

Another broad class of maximum flow problems we consider are those whose underlying undirected network have sublinear \emph{treewidth}. Treewidth has played a fundamental role in the study of fixed parameter tractability~\cite{cygan2015parameterized} and many interesting classes of graphs have sublinear treewidth. In particular, graphs excluding a fixed minor, such as planar or bounded genus graphs, have treewidth $\tilde{O}(\sqrt{n})$~\cite{alon1990separator}. Furthermore, many maximum flow problems on time-expanded networks naturally have sublinear treewidth (see \Cref{sec:layered-graphs} for more details). For networks of bounded treewidth, Hagerup, Katajainen, Nishimura and Ragde~\cite{Hagerup1998} gave a $O(2^{{\rm tw}(G)^2} n)$-time dynamic programming based algorithm that computes the value of the maximum flow, where ${\rm tw}(G)$ is the treewidth of the network. As most treewidth based algorithms have an exponential dependence on treewidth, Fomin, Lokshtanov, Pilipczuk, Saurabh and Wrochna~\cite{fomin2018fully} initiated the study of algorithms with polynomial dependence on treewidth for the sake of achieving significant speedups for polynomial-time solvable problems. As one such example, they gave an $O(n \cdot {\rm tw}(G)^2 \log n)$ algorithm to compute a maximum number of internally node disjoint paths from $s$ to $t$, corresponding to maximum flow with unit node capacities (note that this is essentially subsumed by the current fastest weakly polynomial solvers). Importantly, tree decompositions of width $O({\rm tw}(G) (\log n)^{3})$ were recently shown to be computable in randomized $m^{1+o(1)}$-time by Bernstein, Gutenberg and Saranurak~\cite{bernstein2022deterministic}, and hence can be leveraged in designing fast algorithms.

\subsection{Our results}
\label{sec:results}

We give substantially improved maximum flow algorithms for networks with a linear number of capacitated arcs or having sublinear treewidth. We achieve our results by making significant improvements to Orlin's transitive closure based framework, designing faster data structures and applying the almost-linear time solver of~\cite{brand2023deterministic}. We detail these results and their applications in this section. In~\Cref{sec:max-flow-tc}, we explain the guarantees of the improved framework, which is based on a new \emph{incremental transitive cover} data structure problem. A technical overview of our algorithm is found given in \Cref{sec:overview}, and a detailed comparison with Orlin's original framework is given in \Cref{sec:orlin-compare}.

\paragraph{\bf Networks with Few Capacitated Arcs and Nodes.} For these networks, we apply fast matrix multiplication, and thus our running times will depend on the matrix multiplication constant $\omega < 2.372$~\cite{duan2023faster,williams2024new}. Our main result is as follows:

\begin{restatable}[Fast-Matrix Multiplication based Maximum Flow]{theorem}{orderedthm}\label{thm:ordered-overall}
There is a deterministic strongly polynomial algorithm that computes a maximum flow in $\tilde{O}(n^{\omega-1}m_c + n m_c^{1+o(1)})$ time for an input instance $(G,u)$ with $n$ nodes, $m$ arcs, and a bound $m_c$ on the number of nodes plus capacitated arcs. Furthermore, using randomization, the running time can be reduced to $\tilde{O}(n^{\omega-1}m_c)$-time.
\end{restatable}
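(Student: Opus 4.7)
The plan is to instantiate the general framework advertised in \Cref{sec:max-flow-tc}, which reduces a strongly polynomial maximum flow computation on $(G,u)$ to (i) a sequence of approximate maximum flow calls on auxiliary instances whose size is governed by $m_c$ rather than $m$, and (ii) the maintenance of an incremental transitive cover data structure on an $n$-vertex graph under arc insertions. Concretely, I would first argue that after preprocessing—identifying strongly connected components among the uncapacitated arcs and contracting them as suggested in the \emph{few capacitated arcs} discussion—the auxiliary instance $\CreateCompact$ produces at each stage has $O(m_c)$ arcs (capacitated arcs together with those needed to represent shortcut/transitive information), while the ambient graph on which the transitive-cover data structure lives still has $n$ nodes.

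Second, I would instantiate the approximate flow oracle $\approxFlow$ using the almost-linear deterministic maximum flow algorithm of van den Brand--Chen--Kyng--Liu--Peng--Gutenberg--Sachdeva--Sidford~\cite{brand2023deterministic}. Since the framework reduces to polynomial-capacity instances whose size is $\tilde O(m_c)$, each $\approxFlow$ call costs $m_c^{1+o(1)}$ time. By the bound from the framework (which performs $\tilde O(n)$ outer iterations, each terminating or saturating at least one essential arc in the spirit of Orlin's scheme), the total cost attributable to approximate flow computations is $\tilde O(n)\cdot m_c^{1+o(1)} = n\,m_c^{1+o(1)}$, giving the second term in the running time.

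Third, and this is where the fast matrix multiplication enters, I would implement the incremental transitive cover data structure on the $n$-node host graph using matrix-multiplication–based transitive closure maintenance. Since the total number of arcs that are ever inserted into the cover is $\tilde O(m_c)$ (each insertion corresponds to an essential or auxiliary arc produced over the lifetime of the algorithm), I would partition the insertions into phases of size $n$ and after each phase rebuild the closure by one $n\times n$ Boolean matrix multiplication, charging $n^{\omega}$ to $n$ insertions for an amortized $n^{\omega-1}$ per insertion, totaling $\tilde O(n^{\omega-1} m_c)$. In the deterministic variant I would use the classical deterministic Boolean matrix multiplication based transitive closure plus logarithmic-factor bookkeeping to obtain the same bound up to polylog factors; the randomized variant would use the sparser/faster randomized versions to shave the $m_c^{o(1)}$ overhead off the $n\,m_c^{1+o(1)}$ term—this is the origin of the improvement in the randomized statement.

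The main obstacle is the third step: showing that an \emph{incremental transitive cover} data structure (a size-bounded superset of the transitive closure under arc additions, rather than the full closure itself) can be implemented in amortized $\tilde O(n^{\omega-1})$ per insertion while still satisfying the size and support guarantees required by the framework. In particular, care is needed to ensure that the cover remains of size $\tilde O(m_c)$ throughout and that the matrix-multiplication rebuilds interact correctly with the incremental semantics (no deletions, but newly discovered paths must be representable as single arcs in the auxiliary graph). Assuming this data structure can be built—this is exactly what the body of the paper develops—the two terms $n^{\omega-1}m_c$ and $n\,m_c^{1+o(1)}$ add to yield the claimed running time, and the randomized improvement follows from using a faster approximate-flow subroutine or a randomized transitive closure data structure to remove the $m_c^{o(1)}$ factor.
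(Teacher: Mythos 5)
Your high-level plan is in the right neighborhood—instantiate the framework of \Cref{sec:max-flow-tc}, pair it with the almost-linear \approxFlow{} solver, and use an FMM-based incremental transitive cover—but your accounting for the data structure cost has a fatal error, and the key idea that makes the claimed bound go through is absent.

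The main gap is in your third step. You assert that the total number of arcs inserted into the transitive cover data structure is $\tilde O(m_c)$ and then amortize by rebuilding the closure once every $n$ insertions, charging $n^\omega$ per rebuild to get $\tilde O(n^{\omega-1}m_c)$. But the insertion count is wrong: as \Cref{thm:maxflow-analysis-main}\eqref{part:tot-data} states, the total number of arcs passed to $\dsTcAdd$ over the course of the algorithm is $O(m)$, not $O(m_c)$—every original arc eventually becomes abundant and is inserted. Your rebuild-every-$n$-insertions scheme therefore yields only $\tilde O(n^{\omega-1}m)$, which is useless when $m\gg m_c$. The paper avoids this by replacing a flat rebuild schedule with the \emph{ordered} transitive cover data structure (\ODITC{}, \Cref{alg:dsord}). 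It maintains a nested family $\dV_1\subseteq\cdots\subseteq\dV_K$ of prefixes of geometrically growing size under a node ordering $x$ chosen so that the essential roots always form a prefix ($\dV_x(\Vess)=\Vess$; see \eqref{eq:def-node-ord}). Each insertion, cover query, and reorder only pays for the prefix it touches—amortized $\tilde O(|\dV_x(\cdot)|^{\omega-1})$ via \Cref{thm:dsord:inc_closure}—and the total prefix size across all such operations is $O(m_c)$ because it is dominated by the total number of essential roots. Without the ordering and the prefix-structured queries, the $O(m)$ insertions cannot be amortized down to $\tilde O(n^{\omega-1}m_c)$.

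Two smaller issues: first, the bound on approximate flow calls does not come from ``$\tilde O(n)$ outer iterations each costing $m_c^{1+o(1)}$''—it comes from bounding the total number of arcs across all auxiliary instances by $\sum_i|S_i|^2\le n\sum_i|S_i|=O(nm_c)$ (where $S_i$ are the queried essential-root sets), and then applying the $\tilde m^{1+o(1)}$ solver once to that total. Second, the randomized improvement does not come from a randomized transitive cover data structure; it comes from swapping the deterministic flow solver for the randomized $\tilde O(m+n^{1.5})$ solver of~\cite{Brand2021}, which is advantageous precisely because the auxiliary instances are dense ($O(|S_i|)$ nodes, $O(|S_i|^2)$ arcs), so the $n^{1.5}$ term is absorbed and the $m^{o(1)}$ overhead disappears.
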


Note that the running time of the deterministic algorithm is $\tilde{O}(n^{\omega-1}m_c)$ if $\omega > 2$. The randomized algorithm achieves this running time unconditionally using the approximate maximum flow solver of~\cite{Brand2021} (this proof leverages that the approximate maximum flow instances solved in our framework are sufficiently dense for these instances). 

Using that node-capacitated flow, $b$-matching, and maximum weight closure reduce to the case $m_c = O(n)$, we have the following direct corollary.

\begin{corollary}
\label{cor:bmatch}
There are deterministic strongly polynomial $n^{\omega+o(1)}$-time algorithms for the maximum node-capacitated flow problem, the maximum bipartite $b$-matching problem, and the maximum weight closure problem on $n$-node networks.
\end{corollary}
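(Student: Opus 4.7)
The plan is to observe that \Cref{cor:bmatch} is a direct consequence of \Cref{thm:ordered-overall} once one verifies that each of the three problems reduces, in linear time, to a maximum flow instance on $O(n)$ nodes in which the total number of nodes plus capacitated arcs satisfies $m_c = O(n)$. Plugging $m_c = O(n)$ into the bound $\tilde{O}(n^{\omega-1} m_c + n m_c^{1+o(1)})$ yields $\tilde{O}(n^{\omega} + n^{2+o(1)})$, and since $\omega \ge 2$ both terms are dominated by $n^{\omega + o(1)}$, as required.

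For maximum node-capacitated flow on an $n$-node, $m$-arc network with node capacities $u_v$, I would apply the textbook vertex-splitting transformation: replace each node $v$ by two nodes $v', v''$ connected by an arc of capacity $u_v$, redirect all in-arcs of $v$ into $v'$ and all out-arcs out of $v''$, and leave those redirected arcs uncapacitated. The resulting arc-capacitated instance has at most $2n$ nodes and at most $n$ arcs of finite capacity, so $m_c \le 3n = O(n)$. For maximum bipartite $b$-matching, I would reduce to the previous case by adjoining a super-source $s$ with uncapacitated arcs to all left vertices and a super-sink $t$ with uncapacitated arcs from all right vertices, treating the $b$-values as node capacities; applying the vertex-splitting reduction gives an arc-capacitated maximum flow instance on $O(n)$ nodes with $m_c = O(n)$.

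For the maximum closure problem on a node-weighted digraph $(V,E,w)$, I would use the classical reduction to minimum $s$-$t$ cut (equivalent to maximum flow by max-flow/min-cut): introduce a source $s$ and sink $t$, add an arc $(s,i)$ of capacity $w_i$ for every $i$ with $w_i > 0$ and an arc $(i,t)$ of capacity $|w_i|$ for every $i$ with $w_i < 0$, and give all original arcs of $E$ infinite capacity. The resulting maximum flow instance has $n+2$ nodes and at most $n$ capacitated arcs, so again $m_c = O(n)$, and the value of the cut recovers the optimal closure in the usual way.

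I do not expect a substantive obstacle: the three reductions are classical and are already announced in the introduction; the only care needed is to check that the constant-factor increase in the node count caused by vertex splitting does not affect the $n^{\omega+o(1)}$ bound (it does not, since the bound is monotone and polynomial in the node count up to a constant), and that the running time of performing the reductions themselves is $O(n+m)$, which is absorbed into the $\tilde{O}(n m_c^{1+o(1)})$ term of \Cref{thm:ordered-overall}.
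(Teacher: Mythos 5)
Your proposal is correct and takes essentially the same route as the paper: the paper also derives \Cref{cor:bmatch} directly from \Cref{thm:ordered-overall} by observing that all three problems reduce, via the classical transformations already described in the introduction (vertex splitting for node capacities, super-source/super-sink for $b$-matching, the cut formulation for closure), to maximum flow instances with $m_c = O(n)$, and then plugging $m_c = O(n)$ into the theorem's running time. The only thing worth flagging is implicit in both your writeup and the paper's: for the closure problem the algorithm must ultimately return a minimum $s$--$t$ cut rather than a maximum flow, but since the algorithm produces an exact maximum flow, a minimum cut is recoverable by a single BFS in the residual graph, so the stated bound still holds.
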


As in \Cref{thm:ordered-overall}, the running time of \Cref{cor:bmatch} improves to  $\tilde{O}(n^{\omega})$ if $\omega > 2$ or if we allow randomization. From \Cref{cor:bmatch}, we solve the node-capacitated max flow problem, the maximum $b$-matching problem, and the maximum weight closure problem in almost the current time for computing the transitive closure of a dense $n$-node digraph once. Comparing with~\cite{Orlin2013}, if Orlin had used the almost-linear time solvers, the running time of his algorithm for the the node-capacitated max flow problem, the $b$-matching problem, and the maximum weight closure problem would have improved to $O(n^{1+2\omega/3})$. Our algorithm is faster by a factor of $n^{1-\omega/3} > n^{1/5}$ for the current bound on $\omega$.

\paragraph{\bf Networks with Sublinear Treewidth.} For these networks, our algorithm more directly relies on the notion of tree-depth, which is defined as follows:

\begin{definition}[Tree-depth]\label{def:tree-depth}
The \emph{depth} of a rooted spanning tree is the maximum number of nodes on any path starting at the root.
A graph $G = (V, E)$ is said to have \emph{tree-depth} at most $D$ if there is a rooted spanning tree $T$ of depth $D$ on the same node set such that for every arc $(i, j) \in E$, either $i$ is an ancestor or a descendant of node $j$ in $T$. We call such a $T$ is a \emph{representing tree} for $G$ and we call the minimum $D$ for which $G$ has tree depth $D$ the tree-depth of $G$.
\end{definition}

For any graph $G$ on $n$-nodes, we have ${\rm tw}(G) \leq {\rm td}(G) \leq {\rm tw}(G) \log n$~\cite{bodlaender1995approximating}, where ${\rm tw}(G)$ is the treewidth and ${\rm td}(G)$ is the tree-depth of $G$. 
As before, when working with directed graphs, the quantities above are defined on the underlying undirected version. To compute such trees the randomized $m^{1+o(1)}$-time algorithm of \cite{bernstein2022deterministic} directly computes a tree representing $G$ of depth ${\rm tw}(G)\log^3 n$. The previously mentioned result for treewidth is a direct corollary. 

Our main result for these networks is as follows:

\begin{restatable}[Tree-depth based Maximum Flow]{theorem}{threedepththm}\label{thm:tree-depth-overall}
There is a deterministic algorithm that computes a maximum flow in 
$O\left(m^{1+o(1)} D\right)$ time for an input instance $(G,u)$ with $n$ nodes, $m$ arcs with a given depth-$D$ representing tree for $G$.
\end{restatable}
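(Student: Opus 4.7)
The plan is to apply the reduction framework developed earlier in this paper (see \Cref{sec:max-flow-tc}), which reduces strongly polynomial maximum flow on $(G,u)$ to a sequence of approximate maximum flow computations (on auxiliary instances with polynomially bounded capacities) together with a supporting incremental transitive cover (ITCO) data structure. For each approximate flow call I would invoke the deterministic $m^{1+o(1)}$-time algorithm of van den Brand et al.~\cite{brand2023deterministic}. The remaining and main technical task is to build an ITCO data structure whose total update cost is $m^{1+o(1)} D$, exploiting the given depth-$D$ representing tree $T$ for the underlying undirected graph of $G$.

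For the ITCO construction, the crucial structural property is that every arc of $G$ connects two nodes lying on a single root-to-leaf path in $T$, so each node has at most $D-1$ ancestors in $T$ and the natural universe of ancestor-descendant pairs has size $O(nD)$. I plan to maintain, for each node $v$, the set of ancestors of $v$ in $T$ that currently reach $v$ and the set of ancestors that $v$ currently reaches; each list has size at most $D$. When an arc $(x,y)$ is inserted, $x$ and $y$ share an ancestor chain, and newly discovered reachability pairs can be propagated by combining the in-reachability of $x$ with the out-reachability of $y$, walking upward along the ancestor chains of $x$ and $y$ and updating the corresponding lists. The cover maintained this way contains every true transitive pair (since any directed $x \to y$ path passes through their common ancestors in $T$) and has total size $O(nD) \leq O(mD)$, satisfying the ITCO size requirement.

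To achieve the target runtime, the amortized work per arc insertion and per newly created reachability pair must be $m^{o(1)}$. Since the total number of distinct pairs ever created is $O(nD)$, this can be achieved by implementing the ancestor-chain lists with dictionary and pointer structures that support insertion, membership, and upward walks in $m^{o(1)}$ amortized time, and by charging each unit of propagation work either to the triggering arc insertion or to the creation of a new ancestor-descendant pair. Combining the approximate flow subroutine invocations (each at cost $m^{1+o(1)}$) with the $m^{1+o(1)} D$ ITCO total cost then yields the runtime claimed in \Cref{thm:tree-depth-overall}.

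The main anticipated obstacle is engineering the propagation during arc insertions to avoid redundant work: a naive implementation could easily spend $\Omega(D)$ time per pair of ancestor positions on a single insertion and thus exceed the budget on long chains. Taming this requires structuring the propagation so that each new ancestor-descendant reachability pair is discovered exactly once and combined efficiently with existing reachability data, and verifying that the resulting data structure cleanly respects the interface demanded by the ITCO reduction.
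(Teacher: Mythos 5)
Your high-level plan matches the paper: use the reduction framework (\Cref{thm:maxflow-analysis-main}, via \Cref{thm:fundamental}), plug in the $m^{1+o(1)}$-time solver of~\cite{brand2023deterministic}, and build an \DITC{} that exploits the depth-$D$ tree. Your sketch of the data structure is essentially the paper's \Cref{alg:dstree}: maintain, per node, the ancestor set it reaches and that reaches it, and propagate along ancestor chains on each insertion, using the observation that any $x$--$y$ path in a tree-respecting graph passes through a common ancestor (the paper's \Cref{lem:dstree:path-through-ancestor}). So the approach is the right one.

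However, there are three places where the proposal is underspecified enough that the argument does not yet go through. First, the amortization is not the one you describe. You try to charge propagation work ``to the triggering arc insertion or to the creation of a new ancestor-descendant pair,'' but a newly discovered pair $(a,b)$ (with $a$ an ancestor) triggers a scan of all arcs leaving $b$ inside $\descendant_T(a)$, which is not $O(1)$ work per pair. The correct accounting is Italiano-style: each arc $(b,c)\in E$ is examined at most once per common ancestor $a\in\ancestor_T(b,c)$ (namely, when $(a,b)$ first enters the maintained set), giving total work $\sum_{(b,c)\in E}|\ancestor_T(b,c)|=O(D\,|E|)$. This is the paper's ``relative tree size'' bound (\Cref{def:dstree:size}, \Cref{lem:dstree:size_bound}, \Cref{lem:dstree:complexity_relation}); your charging scheme does not yield it as stated. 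Second, you never say what $\dsTcCover(S)$ outputs. It is not enough that the maintained transitive-arc set has size $O(nD)$; the framework requires that the \emph{per-query} output have size $O(|S|D)$, so that summing over all queries (total query size $O(m)$) gives ${\rm TC}_{\rm arcs}=O(mD)$. The paper achieves this by emitting, for each $a\in S$, only the stored transitive arcs incident to $a$ with the other endpoint in $\ancestor_T(a)$; you should make this explicit and argue it is a valid cover using the common-ancestor lemma. Third, the \DITC{} interface also includes $\dsTcWit()$ and the final call to $\WitRoute(\cdot)$, which the algorithm uses to transfer flow from extension arcs back to original arcs. Your proposal does not maintain witnesses at all. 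The paper keeps an out/in-rooted witness list of size $O(mD)$ during the incremental updates, which is needed both for correctness of $\WitRoute$ and for the $O(mD)$ term in ${\rm TC}_{\rm time}$; omitting this leaves the final routing step unsupported.
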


The runtime in \Cref{thm:tree-depth-overall} is substantially faster than $O(nm)$ as long as tree-depth is at most $n^{1-\varepsilon}$ for any $\varepsilon > 0$. As mentioned above, a suitable representing tree can in fact be computed at the expense of randomization and additional $m^{o(1)}$ factors in the running time. 

As an application of this result, we derive a faster maximum flow algorithm for the family of graphs excluding a $K_r$ minor, the complete graph on $r$ vertices.

\begin{corollary}\label{label:cor-minor}
There is a randomized algorithm that solves maximum flow on networks excluding a $K_r$ minor in $O(r^{3/2} \sqrt{n} m^{1+o(1)})$-time. 
\end{corollary}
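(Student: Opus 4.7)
The plan is to combine three ingredients: (i) a separator/treewidth bound for $K_r$-minor-free graphs, (ii) the randomized representing-tree constructor of Bernstein--Gutenberg--Saranurak~\cite{bernstein2022deterministic}, and (iii) the tree-depth based max-flow algorithm of \Cref{thm:tree-depth-overall}. The overall strategy is to first show that any $K_r$-minor-free $n$-node graph admits a representing tree of depth $D = r^{3/2}\sqrt{n}\cdot n^{o(1)}$ computable in randomized $m^{1+o(1)}$ time, and then feed this tree into \Cref{thm:tree-depth-overall}.

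First, I would invoke the classical theorem of Alon, Seymour, and Thomas, which states that every $n$-node graph excluding $K_r$ as a minor has treewidth $O(r^{3/2}\sqrt{n})$ (this follows from their balanced separator theorem by the standard recursive decomposition). Thus for our input graph $G$ we have ${\rm tw}(G) = O(r^{3/2}\sqrt{n})$. Note that the bound is on the underlying undirected graph, which matches the convention in the excerpt.

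Next, I would apply the randomized algorithm of~\cite{bernstein2022deterministic} to the underlying undirected graph of $G$; in $m^{1+o(1)}$ time this produces a representing tree (in the sense of \Cref{def:tree-depth}) of depth at most ${\rm tw}(G)\cdot \log^3 n$. Combined with the previous step, this gives a representing tree of depth
\[
D \;=\; O\!\left(r^{3/2}\sqrt{n}\,\log^3 n\right) \;=\; r^{3/2}\sqrt{n}\cdot n^{o(1)}.
\]
Feeding this tree into \Cref{thm:tree-depth-overall} yields a maximum flow in time $O(m^{1+o(1)} D) = O(r^{3/2}\sqrt{n}\cdot m^{1+o(1)})$, since the polylogarithmic factor is absorbed into $m^{1+o(1)}$. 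Adding the $m^{1+o(1)}$ preprocessing time from~\cite{bernstein2022deterministic} does not change the bound.

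The main potential obstacle is purely one of citation and consistency rather than mathematical difficulty: ensuring that the Alon--Seymour--Thomas bound is applied to the correct undirected version of the network and that the representing tree produced by~\cite{bernstein2022deterministic} satisfies exactly the ancestor/descendant condition required by \Cref{def:tree-depth}. Both are standard: any tree decomposition of width $W$ can be converted into an elimination forest (representing tree) of depth $O(W\log n)$, and this is what~\cite{bernstein2022deterministic} outputs. With these ingredients aligned, the corollary follows as a direct composition.
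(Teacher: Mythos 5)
Your proposal is correct and follows essentially the same route as the paper: cite the Alon--Seymour--Thomas treewidth bound $O(r^{3/2}\sqrt{n})$ for $K_r$-minor-free graphs, use the randomized algorithm of~\cite{bernstein2022deterministic} to construct a representing tree of depth ${\rm tw}(G)\log^3 n$ in $m^{1+o(1)}$ time, and plug into \Cref{thm:tree-depth-overall}, absorbing polylogs into $m^{1+o(1)}$. The only thing the paper adds beyond this is a remark (not needed for the corollary itself) that $K_r$-minor-free graphs have $m = O(r\sqrt{\log r}\,n)$, giving the alternative bound $O(r^{5/2}\sqrt{\log r}\cdot n^{3/2+o(1)})$.
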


\Cref{label:cor-minor} follows directly from well-known bounds in extremal graph theory, namely that the treedwith of a $K_r$ minor free graph is $O(r^{3/2}\sqrt{n})$~\cite{alon1990separator}. Since the number of arcs in an $n$-node  $K_r$ minor free graph is $O(r \sqrt{\log r} n)$~\cite{thomason2001extremal} this running time is also bounded by $O(r^{5/2}\sqrt{\log r}\cdot n^{3/2+o(1)})$. Interestingly, the linear dependence on tree-depth in \Cref{thm:tree-depth-overall} is important for achieving a non-trivial speedup over general networks. 

As a final application in \Cref{sec:tree-depth-proof}, we show that \Cref{thm:tree-depth-overall} also implies speedups for maximum flow problems on time-expanded networks where each arc spans only a short time interval. 

\subsection{Maximum flow via incremental transitive cover}
\label{sec:max-flow-tc}

We now outline our improved framework and provide the guarantees of our framework informally in \Cref{thm:fundamental}. The guarantees of the framework depend on how we solve a particular data structure problem and how we solve approximate flow problems. 

We start by explaining and explaining this data structure. The data structure is closely related to the \emph{incremental transitive closure problem}. 
The transitive closure of a graph $G = (V, E)$ is the graph $G^{\tr} = (V, E^{\tr})$, where $E^{\tr}$ contains all arcs $(i, j) \in \R^{V \times V}$ with $i \neq j$ such that there is a directed path from $i$ to $j$ in $G$. The incremental transitive closure problem \cite{Italiano1986} is to maintain a data structure that supports query access to the adjacency matrix (or adjacency lists) of $G^{\rm tr}$ under $m$ arc insertions. The running time is measured as a function of the number of arc insertions and adjacency queries. One may also require that the data structure support path queries on the underlying graph, which then also contribute to the running time.

In his maximum flow algorithm, Orlin maintains the transitive closure of the subgraph of so-called \emph{abundant arcs}. These arcs are those whose residual capacity (i.e., capacity minus flow 
exceeds the residual flow value (i.e., maximum flow value minus current flow value) by a suitable $\poly(n)$-factor and can be treated as essentially infinity capacity arcs (e.g., deleting the arc capacity does not change the flow value). %

Orlin uses the transitive closure to suitably compress the routing information of these abundant arcs, and thereby minimize the size of the networks on which an approximate maximum flow solver is run. 
We note that for the almost-linear time solvers, only the number of arcs controls the running time of the approximate maximum flow solver. The main operation used for this purpose is to restrict the transitive closure to a subset of nodes $S$ containing the endpoints of all relevant arcs needed to decrease the optimality gap by a $\poly(n)$-factor.

For the sake of speeding up the approximate maximum flow solver, one may hope to encode the reachability information on $S$ using fewer arcs than the direct restriction of the transitive closure to $S$. For this purpose, it may in fact be helpful to use nodes outside of $S$. 
This goal leads us to a concept we call a \emph{transitive cover}. 
For a graph $G = (V, E)$ and a subset of nodes $S \subseteq V$,  we say that $(V', E')$ is a \emph{transitive cover of $S$ with respect to $G$} if it satisfies $S \subseteq V' \subseteq V$ and $(E')^{\tr}[S] = E^{\tr}[S]$.

The \emph{sparsity} of the cover is the number of arcs $|E'|$. As mentioned above, the restriction  $(S,E^{\tr}[S])$ of the transitive closure to $S$ is always a valid transitive cover; however, it may have higher sparsity than the smallest cover. For example, if $V = [2n+1]$, and the nodes $\{1,\dots,n\}$ all have an outgoing arc to node $2n+1$, and if $2n+1$ has outgoing arcs to $\{n+1,\dots,2n\}$, then the restricted closure on $S=\{1,\dots,2n\}$ has $n^2$ arcs (it is complete bipartite) whereas the full graph itself is a cover of sparsity $2n$. More generally, note that any set $S$ of nodes in a $m$ arc network, by taking the smaller of the full graph and restricted closure, the transitive cover sparsity at most $\min \{|S|^2, m\}$. This simple bound is one of the key reasons why Orlin's algorithm spends less than $O(nm)$-time solving approximate maximum flow problems on sparse networks.  Importantly, this bound can be substantially improved on networks with bounded tree-depth. In a network with tree-depth $D$, the sparsity of the minimum transitive cover is at most $2D|S|$, which is derived by retaining $S$ and the nodes on the paths from $S$ the root and only keeping the transitive arcs that are incident to $S$ and that respect the representing tree (see \Cref{sec:data-structure-overview} for more details). Note that this improves on the quadratic bound when $|S| > 2D$.

The above discussion motivates the formal study of the \emph{incremental transitive cover problem}.  In this problem, one must maintain a data structure that supports arc insertions and transitive cover queries on subsets. The running time of the data structure is measured as a function of the number of arcs added and the sizes of the queried sets. Given a set of queries $S_1,\dots,S_k$, we define the \emph{total query size} to be the sum of the queried set sizes $\sum_{i \in [k]} |S_i|$. A key quality measure of the data structure is then the total sparsity of the transitive covers output, that is $\sum_{i \in [k]} |E_i'|$ where $E_1',\dots,E_k'$ are the arc sets outputted on the queries $S_1,\dots,S_k$ respectively. We use this measure to bound the time spent on solving approximate maximum flow problems.

For the remainder of the introduction, we assume that we have a data structure that solves the incremental transitive cover problem (which we call a \DITC{} more compactly). Let ${\rm TC}_{\rm time}(n,g,h)$ denote the time needed by a transitive cover data structure to process $g$ arc insertions and set queries of total size $h$ on a graph with $n$ nodes. Similarly, let ${\rm TC}_{\rm arcs}(n,g,h)$ denote the total sparsity of the outputted transitive covers. (The more precise definition used later in the paper in given in \Cref{sec:analyzing} includes an additional operation which we discuss shortly.) We will later specialize these time and sparsity bounds based on the class of maximum flow instances we work on, as well as the access and update pattern restrictions we can guarantee. We will assume in this work that the transitive cover data structure must output the arcs of the cover one by one. 

Implicit representations of covers will not be used. Under this assumption, the processing time ${\rm TC}_{\rm time}(n,g,h)$ is always greater than or equal to the total sparsity ${\rm TC}_{\rm arcs}(n,g,h)$.

Our main result regarding frameworks for solving maximum flow is that maximum flow can be solved in time that depends only on the cost of implementing a \DITC{} and of solving maximum flow instances approximately. Below in \Cref{thm:fundamental} we state an informal version of this meta theorem when the maximum flow instances are solved approximately using almost-linear-time maximum flow algorithms. 

\begin{theorem}[Meta Theorem (Informal)]
\label{thm:fundamental}
The maximum flow on an input $n$-node, $m$-arc maximum flow instance with at most $m_c$ nodes plus capacitated arcs can be computed in time
\[
{\rm TC}_{\rm time}(n,O(m),O(m_c)) + \left({\rm TC}_{\rm arcs}(n,O(m),O(m_c))+m\right)^{1+o(1)}\,.
\]
\end{theorem}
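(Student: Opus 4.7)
The plan is to adapt Orlin's augmenting-flow framework, now orchestrated through an incremental transitive cover data structure $\mathcal{D}$ together with the almost-linear-time approximate max-flow solver of~\cite{brand2023deterministic}. Throughout the execution I would maintain a feasible flow $f$, its residual capacity vector $u^f$, and a running designation of certain arcs as \emph{abundant}---those whose residual capacity exceeds the current residual optimality gap by a suitable $\poly(n)$ factor, so that treating them as uncapacitated cannot change the optimal flow value at the current scale. Arcs are inserted into $\mathcal{D}$ the first time they become abundant, so that $\mathcal{D}$ always indexes a superset of the abundant subgraph.

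Each outer iteration would proceed in five steps: (i) identify the essential node set $S \subseteq V$ consisting of $s$, $t$, and the endpoints of the not-yet-abundant capacitated arcs, so that $|S| = O(m_c)$; (ii) query $\mathcal{D}$ for a transitive cover $(V',E')$ of $S$; (iii) assemble an auxiliary instance on $V'$ whose arcs are $E'$, declared uncapacitated, together with the non-abundant capacitated arcs incident to $S$; (iv) invoke the almost-linear approximate max-flow solver on this instance; and (v) lift the approximate auxiliary flow back to $G$, using path queries against $\mathcal{D}$ to decompose it along abundant paths, then update $f$ and promote any newly abundant arcs by inserting them into $\mathcal{D}$. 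Correctness would follow from the defining identity $(E')^{\tr}[S] = E^{\tr}[S]$, which lets any solver-produced $S$-to-$S$ path on the auxiliary instance be realized in $G$, combined with the abundance margin, which guarantees that the lift respects capacities. The loop is driven by two monotone potentials---the size of the abundant set and the logarithm of the residual optimality gap---each iteration advancing one of them; once the gap falls below the capacity resolution, the exact optimum is recovered by a final \PostProcess{} step running in $m^{1+o(1)}$ time.

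For the running time, I would show via the standard abundance-progress analysis that each arc enters the abundant set only $O(1)$ times, so $\mathcal{D}$ receives at most $O(m)$ insertions overall; the aggregate query size would amortize to $O(m_c)$ because the essential set only shrinks as arcs are promoted, with newly essential nodes charged to the promotions themselves rather than to recurring queries. The data-structure work is then ${\rm TC}_{\rm time}(n,O(m),O(m_c))$, and the total arc count across all auxiliary instances is bounded by ${\rm TC}_{\rm arcs}(n,O(m),O(m_c)) + O(m)$, so summing the almost-linear per-call costs and using the concavity of $x \mapsto x^{1+o(1)}$ yields a combined approximate-solver cost of $({\rm TC}_{\rm arcs}(n,O(m),O(m_c)) + m)^{1+o(1)}$. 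Adding the two contributions gives the claimed bound.

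I expect the principal obstacle to be establishing the amortized $O(m_c)$ bound on total query size, which is precisely the strengthening of Orlin's framework that the paper must supply: Orlin's original analysis only controls per-iteration cover size, and a naive summation would blow up the running time by the iteration count. Overcoming this requires an invariant tying essential-node turnover to abundant-arc promotions, so that each unit of query size can be charged to a distinct insertion into $\mathcal{D}$. A secondary technical point is ensuring that the path-lifting operations used in step (v) are themselves absorbed into the \DITC{} budget, which dictates exactly which operations the data-structure interface must expose and hence shapes the formal definition of ${\rm TC}_{\rm time}$ and ${\rm TC}_{\rm arcs}$ used in \Cref{sec:analyzing}.
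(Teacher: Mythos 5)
Your outline captures the high-level scaling-plus-transitive-cover skeleton, but two of your key steps are wrong in ways that would prevent you from obtaining the claimed bound, and they are precisely the places where the paper departs from Orlin.

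First, your step (v) proposes to ``lift the approximate auxiliary flow back to $G$, using path queries against $\mathcal{D}$ to decompose it along abundant paths'' in each outer iteration. This is Orlin's original approach, and the paper explicitly identifies it as a bottleneck: rerouting flow from transitive arcs onto original arcs costs on the order of $n$ per arc, and doing it every iteration gives an $O(n m_c)$ term that exceeds the target running time whenever $m_c = \Omega(m/n)$. The paper's fix is structurally different: transitive arcs that carry flow are \emph{added} to the working instance (as the extension arc set $\Fe$ with $\infty$ capacity), the flow is allowed to live on them across iterations, and rerouting happens exactly once at the very end, via the witness list returned by $\dsTcWit(\cdot)$ and the procedure $\WitRoute(\cdot)$. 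This postponement is what makes the rerouting cost $O(n^2)$ (or $O(nD)$) rather than $O(nm_c)$, and it also forces the \DITC{} interface to include the witness-list machinery of \Cref{def:wit_list,def:general_transitive_cover}, which your proposal omits entirely. Without it, the final flow in $(\gGi,\ui)$ cannot be recovered in the stated time.

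Second, your amortization argument for the $O(m_c)$ total query size rests on the claim that ``the essential set only shrinks as arcs are promoted.'' This is false: as $\varepsilon$ decreases, arcs whose residual capacity was previously below the lower threshold $\Gamma^{-5}\varepsilon$ rise into the essential window $[\Gamma^{-5}\varepsilon,\Gamma\varepsilon)$, so new arcs become essential every iteration. The paper's $O(m_c)$ bound (\Cref{lemma:meta-essential-tot}) instead rests on a much stronger structural invariant on iterates (\Cref{def:proper}): after post-processing, every non-boundary arc between distinct free components is either at a capacity bound or is a \emph{gap arc} destined to become free next iteration. This forces each arc with finite $u_e + u_{\rev{e}}$ to be essential for only $O(1)$ iterations (the ``regular'' window), gap arcs to be charged to component merges ($O(n)$ total), and boundary arcs to require a separate charging argument. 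Enforcing the invariant is itself nontrivial — it is the point of $\PostProcess(\cdot)$, safe residual capacities, and $\Saturate(\cdot)$, none of which appear in your plan. You correctly flagged this amortization as the principal obstacle, but the mechanism you propose (charging query size to distinct insertions) is not the one that closes the gap.
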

The formal version of \Cref{thm:fundamental} is provided by~\Cref{thm:maxflow-analysis-main} in \Cref{sec:analyzing}. There, ${\rm TC}_{\rm time}$ is defined to include additional time to handle operations involving a mathematical object that we call a \emph{witness list} and introduce shortly (see also \Cref{sec:trans_cover_first_introduce}). In words, we essentially reduce the maximum flow problem to a transitive cover problem with $O(m)$ insertions and $O(m_c)$ total query size plus approximate maximum flow problems with a total number of arcs equal to the total sparsity of the outputted covers plus one additive factor of $m$. (We always assume $n \leq m$; hence, $m_c = O(m)$.  Recall that $m_c$ is $n$ plus the number of capacitated arcs.)

The $\left({\rm TC}_{\rm arcs}(n,O(m),O(m_c))+m\right)^{1+o(1)}$ term in \Cref{thm:fundamental} is the total time needed to solve the approximate maximum flow problems using the deterministic almost-linear time solver of~\cite{brand2023deterministic}. The $+m$ term inside the parentheses pays for the time to solve a single approximate maximum flow on the full network, as well as the $\tilde{O}(m)$ overhead cost associated with the improved framework. 

A crucial aspect of the framework we have not yet discussed is how to handle flows routed on arcs output by the transitive cover calls, which we call \emph{transitive arcs}. As already mentioned, transitive arcs are used to build what we call \emph{auxiliary networks} for the approximate maximum flow solves. In Orlin's framework, the cost of routing flows from the original network onto these networks and back can be a bottleneck.  The rerouting requires $O(nm_c)$-time over the course of his algorithm. At a high level, the dominant cost is the time needed to route flow from $O(m_c)$ transitive arcs along $O(n)$-length paths in the original network. Note that this can already be higher than the $O(m^{1+o(1)}D)$ running time of our bounded tree-depth algorithm when $m_c = \Omega(m)$. 

To circumvent this bottleneck in our framework, we add these transitive arcs to the original network and only reroute the flow on them at the very end of the algorithm. This change allows us to more effectively amortize the cost of rerouting, and also aids reduce the cost of moving flow onto the auxiliary networks (see \Cref{sec:orlin-compare} for more details). Importantly, it requires significant care to ensure that the flows obtained by this rerouting do not exceed the original arc capacities. To manage the rerouting information, we require additional functionality from the transitive cost data structure than described thus far. Specifically, we rely on the additional maintenance of what we call a \emph{witness list} (see \Cref{def:wit_list}), which contains a witness for each transitive arc.  The witnesses are stored in order of appearance. The time to compute the final rerouting can be made linear in the size of the witness list which is in turn at most linear in the running time of our \DITC{} in each of our applications.

Our formal definition of ${\rm TC}_{\rm time}(n,O(m),O(m_c))$ in \Cref{sec:analyzing} also includes the time to maintain the witness list. In all our implementations, the time to maintain these lists induces at most polylogarithmic overhead (this occurs only in the data structure presented in \Cref{sec:dsord};
see the discussion of 
``Witness List'' there
). 

As mentioned previously, both \Cref{thm:ordered-overall} and \Cref{thm:tree-depth-overall} are derived from suitable \DITC{} implementations combined with \Cref{thm:fundamental}. In these implementations, we leverage the specific restrictions on the updates and queries imposed by the maximum flow framework and the network class itself. In particular, in \Cref{thm:ordered-overall}, we leverage an intrinsic node ordering maintained by the maximum flow algorithm. At a high level, nodes are ordered by the largest residual capacity of a non-abundant arc to which they are incident. Updates and queries are guaranteed to occur in prefixes of this order. For \Cref{thm:tree-depth-overall}, we exploit the fact that the updates will always be compatible with the network's representing tree, which enables a simple and efficient tree-adapted variant of Italiano's algorithm. For additional details on the data structure implementations, please see the overview in \Cref{sec:data-structure-overview}. 

The above two applications exemplify the framework's flexibility, and motivates the search for broader classes of maximum flow instances for which \DITC{} can be implemented in faster than $O(nm)$-time. %

\medskip

\paragraph{New conceptual ingredients}  Our algorithm is inspired by and builds upon ideas of  Orlin’s \cite{Orlin2013} algorithm. 
We follow the high-level approach of his work, namely, reducing maximum flow to a sequence of approximate maximum flow problems on `compact' networks that are constructed with the help of a transitive closure data structure, and charging the running time of these solvers to the number of abundant arcs obtained.

However, there are significant differences and a number of new conceptual ideas that enable us to 
remove a number of bottlenecks inherent in Orlin's algorithm, and to obtain significant running time improvements for special settings; we now highlight three such ideas, and give a more detailed discussion in \Cref{sec:orlin-compare}, where we also provide  running time comparisons.
\begin{enumerate}[(i)]

\item {Extending the instance by transitive closure arcs, and postponing flow rerouting to the final step of the algorithm.}

\item {Maintaining highly structured flow iterates that enable fast auxiliary network constructions.}

\item {Using the more versatile transitive cover data structure instead of transitive closure, and developing a novel variant based on node orderings that relies on fast matrix multiplication.}
\end{enumerate}

\paragraph{Pathway to Faster Maximum Flow?} We conclude this section by explaining the implications of our framework to general maximum flow instances and speculate on possible improvements as well as algorithmic barriers. We first give the maximum flow running time achieved by using an \DITC{} based on Italiano's algorithm \cite{Italiano1986}. (See \Cref{sec:dsclo} for details of the specific \DITC{}.)%

\begin{restatable}[{Italiano Style \DITC{}}]{theorem}{genflowthm}\label{thm:general-flow}
There exists an  \DITC{} that for all $n,m > 1$ has
\[
{\rm TC}_{\rm time}(n,O(m),O(m)) =  O(mn)
\text{ and }
{\rm TC}_{\rm arcs}(n,O(m),O(m)) = O(m^{1.5})\, .
\]
Using that \DITC{}, there is a strongly polynomial $O(mn+m^{1.5+o(1)})$ time maximum flow algorithm.
\end{restatable}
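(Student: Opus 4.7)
The plan is to implement a \DITC{} as a near-direct wrapping of Italiano's classical incremental transitive closure data structure~\cite{Italiano1986}, and then to invoke \Cref{thm:fundamental}. Italiano's scheme maintains, under up to $O(m)$ arc insertions, explicit adjacency (in- and out-) representations of the transitive closure $E^{\tr}$ in total time $O(mn)$; it does so via BFS in- and out-trees rooted at every node, which simultaneously provide, at no extra asymptotic cost, a directed path in the current graph for any transitive arc at the moment that arc first enters $E^{\tr}$. I would record such a witness (or a suitable contracted version) for the witness list (\Cref{def:wit_list}) at the instant each transitive arc is created; since each transitive arc enters the closure at most once and the work Italiano's scheme already spends to detect it dominates the path length, the witness list maintenance fits inside the $O(mn)$ budget.

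On a cover query $S\subseteq V$, the data structure will output whichever of the two obvious transitive covers is sparser: (a) the restricted closure $(S,E^{\tr}[S])$, enumerated by scanning the out-adjacencies stored by Italiano's structure, or (b) the full current graph $(V,E)$. Both are valid transitive covers of $S$, so the returned cover has sparsity at most $\min\{|S|^2, m\}$ and is written out in time proportional to its size.

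The one substantive calculation is bounding the total cover sparsity over a sequence of queries $S_1,\dots,S_k$ with $\sum_i |S_i| = O(m)$. Split the queries by size: for those with $|S_i|\leq \sqrt{m}$ one has $\min\{|S_i|^2,m\}\leq |S_i|\sqrt{m}$, which sums to $\sqrt{m}\cdot O(m) = O(m^{1.5})$; for those with $|S_i|>\sqrt{m}$ the contribution is at most $m$ per query, but there are at most $O(m)/\sqrt{m} = O(\sqrt{m})$ such queries, contributing another $O(m^{1.5})$. Hence ${\rm TC}_{\rm arcs}(n,O(m),O(m)) = O(m^{1.5})$, and since $m\leq n^2$ we have $m^{1.5}\leq mn$, so the cost of writing out the covers is absorbed into the $O(mn)$ update cost; thus ${\rm TC}_{\rm time}(n,O(m),O(m)) = O(mn)$.

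Plugging these bounds into \Cref{thm:fundamental} gives a strongly polynomial maximum flow algorithm with running time
\[
O(mn) + (O(m^{1.5}) + m)^{1+o(1)} = O(mn + m^{1.5+o(1)}),
\]
as claimed. The only point that genuinely requires care is verifying that the witness list requirements imposed by the meta theorem (that every transitive arc's stored witness corresponds to a valid underlying walk at the time of insertion, and that recording these witnesses does not blow the $O(mn)$ budget) are met by Italiano's tree-based representation; this is the main technical obstacle, but inspecting Italiano's update rule shows that the BFS in/out trees directly yield such witnesses essentially for free.
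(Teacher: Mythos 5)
Your proposal is correct and matches the paper's approach: wrap Italiano's out-tree scheme as the \DITC{} (\Cref{alg:closure}), output the sparser of $(S,E^{\tr}[S])$ and $(V,E)$ on a cover query, bound total output sparsity by $O(m^{1.5})$ via the $\min\{|S_i|^2,m\}$ bound with $\sum_i|S_i|=O(m)$, and invoke the meta theorem. The one point you flag as ``requiring care''---that Italiano's trees yield valid witnesses cheaply---is exactly what the paper handles: each new transitive arc $(a,v)$ is recorded with witness $u$ taken from its out-tree parent, making the witness list \emph{out-rooted} (\Cref{def:rooted}), and \Cref{lem:simp_paths} shows any rooted witness list is automatically path-structured, which is the formal requirement in \Cref{def:general_transitive_cover}.
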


Given \Cref{thm:general-flow}, \DITC{} remains the bottleneck for general maximum flow as long as $m$ is $O(n^{2-\epsilon})$, for any positive $\epsilon > 0$. Comparing to Orlin~\cite{Orlin2013}, a careful analysis of his framework enhanced with the almost-linear approximate maximum flow solver reveals that the time spent on approximate maximum flow computations would be $m^{5/3+o(1)}$ (see \Cref{sec:orlin-compare}) compared to our $m^{3/2+o(1)}$. As mentioned previously, a more severe bottleneck in Orlin's framework is the $O(n m)$-time spent on flow rerouting from the auxilliary network to full network, which we reduce to $O(n^2)$.

An immediate corollary of \Cref{thm:fundamental} is that an \DITC{} with faster runtime than what Italiano's algorithm suggest would directly imply a faster maximum flow algorithm as formalized in \Cref{cor:conditional}. Given the aforementioned bottlenecks, a data structure improvement as in~\Cref{cor:conditional} would not be sufficient to improve maximum flow running times using Orlin's original framework.

\begin{corollary}[Conditional \DITC{} improvement]
\label{cor:conditional}
If for $c \in (0,1]$ and all $n,m > 1$, there exists an  \DITC{} satisfying
${\rm TC}_{\rm time}(n,O(m),O(m)) = O(mn^{1-c})$,
then there exists an $m^{1+o(1)} n^{1-c}$ strongly-polynomial time maximum flow algorithm.
\end{corollary}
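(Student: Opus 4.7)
The proof is essentially a direct invocation of the Meta Theorem (\Cref{thm:fundamental}), so my plan is to set up that invocation carefully and then verify the arithmetic matches the claimed bound.

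First, I would observe that the parameter $m_c$ in \Cref{thm:fundamental} satisfies $m_c \le n + (\text{capacitated arcs}) \le n + m \le 2m$ under the standing assumption $n \le m$. Thus the hypothesis ${\rm TC}_{\rm time}(n, O(m), O(m)) = O(mn^{1-c})$ directly upper bounds the relevant quantity ${\rm TC}_{\rm time}(n, O(m), O(m_c))$ appearing in \Cref{thm:fundamental}.

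Next, I would invoke the observation made explicitly in the text right before \Cref{thm:fundamental}: since the transitive cover data structure is required to output the arcs of each cover one by one, the total sparsity is dominated by the processing time, i.e.\ ${\rm TC}_{\rm arcs}(n, g, h) \le {\rm TC}_{\rm time}(n, g, h)$ for all $n, g, h$. Consequently, the same hypothesis also yields ${\rm TC}_{\rm arcs}(n, O(m), O(m_c)) = O(mn^{1-c})$.

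Plugging both bounds into \Cref{thm:fundamental}, the overall running time becomes
\[
O(mn^{1-c}) + \bigl(O(mn^{1-c}) + m\bigr)^{1+o(1)} = O(mn^{1-c}) + \bigl(O(mn^{1-c})\bigr)^{1+o(1)},
\]
where for the final equality I use $c \le 1$, which gives $n^{1-c} \ge 1$ and hence absorbs the additive $m$ into $mn^{1-c}$. Simplifying, this is $m^{1+o(1)} n^{1-c}$ as claimed, and the algorithm inherits strong polynomiality from the framework plus the assumed data structure. The only step requiring any thought is confirming that the $+m$ term inside the parentheses (which pays for a single full-network approximate flow call and polylogarithmic framework overhead) is indeed absorbed; since the hypothesis guarantees ${\rm TC}_{\rm time} \ge m$ whenever $n^{1-c} \ge 1$, there is no real obstacle here. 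I expect no subtle difficulty in this derivation beyond tracking the $m_c \le 2m$ reduction and being careful that the $(\cdot)^{1+o(1)}$ expression is monotone in its argument so the substitution is valid.
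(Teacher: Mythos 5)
Your proposal is correct and is precisely the "immediate" derivation the paper has in mind (the paper gives no separate proof of \Cref{cor:conditional}, explicitly describing it as an immediate corollary of \Cref{thm:fundamental}). The steps $m_c = O(m)$, ${\rm TC}_{\rm arcs} \le {\rm TC}_{\rm time}$, absorbing the additive $m$ via $n^{1-c} \ge 1$, and folding the $n^{o(1)}$ factor into $m^{o(1)}$ using $n \le m$ are exactly right; the only minor slip is the phrasing "the hypothesis guarantees ${\rm TC}_{\rm time} \ge m$" — an $O(\cdot)$ hypothesis gives no lower bound, but all that is actually needed (and all the arithmetic uses) is $m n^{1-c} \ge m$, which holds.
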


Importantly, the online Boolean matrix multiplication conjecture (OMv conjecture) by Henzinger, Krinninger, Nanongkai and Saranurak~\cite{henzinger2015unifying} (see \Cref{sec:omv}) gives conditional lower bounds that restricts the range of possible improvements for \DITC.  In the context of incremental transitive closure, they prove (see \cite[Corollary 4.8]{henzinger2015unifying}) that any algorithm that can process $O(m)$ updates and $O(n^2)$ queries starting from any empty $n$-node graph in $O(n^{1-\varepsilon} m)$-time, for any $\varepsilon > 0$, would break the OMv conjecture. Additionally, a generalization of the OMV conjecture to a broader class of not necessarily Boolean matrices was proven in \cite{CliffordGL15}.%

We note that transitive cover can simulate transitive closure queries by querying on sets of size $2$ and performing a BFS. Since the processing time dominates the total sparsity of the outputted covers, and hence also the BFS time, the same lower bound holds for transitive cover when replacing the number of queries by the total size of the queried sets.
However, the total query size in \Cref{cor:conditional} is only $O(m)$, which can be much smaller than $O(n^2)$ for sparse graphs. By optimizing their lower construction for $O(m)$ insertions and queries, we only obtain a lower bound of $\Omega(m^{3/2})$ (see \Cref{sec:omv} for details). Surprisingly, this is precisely the time our current framework spends on approximate maximum flow calls.

Given the above, it remains unclear whether one can rule out the conditional improvement in~\Cref{cor:conditional} for sparse graphs based on OMv alone. As already mentioned, the queries and updates made by our framework have non-trivial additional structure that can be exploited in the cases explored above. We leave it as an interesting research direction to explore the possible landscape of lower bounds for \DITC{} and their implications for maximum flow.

\subsection{Overview of the algorithm}
\label{sec:overview}

Here we give a high-level overview of our algorithms and  algorithmic contributions of our paper. For formal definitions of concepts mentioned in this overview, see  \Cref{sec:prelim}, \Cref{sec:trans_cover_first_introduce}, and \Cref{sec:maxflow_framework}.

For a flow $f$ in a maximum flow instance $(G,u)$, let $\res{f}_e$ denote the residual capacity of arc $e$, and let $\nu(G,u)$ denote the maximum flow and minimum cut value.
We assume the graph is symmetric, i.e., the reverse $\rev{e}=(j,i)$ of every arc $e=(i,j)$ is also present (possibly with 0 capacity).
Given the input instance $(\gGi,\ui)$, at a given iteration, the algorithm may create an extended graph $(G,u)$ by adding new arcs to the graph with $\infty$ capacity.  The newly added arcs are guaranteed not to be in any minimum cut, ensuring $\nu(G,u)=\nu(\gGi,\ui)$; i.e., the maximum flow value is preserved.  It also ensures that any minimum cut in the extended graph is also a minimum cut in the input instance.
The algorithm finds a maximum flow and a minimum  $s$--$t$ cut $S$ in the extended graph $(G,u)$. By rerouting flow from added arcs to original ones, the final maximum flow $f$ in the extended graph $(G,u)$ is converted to a maximum flow $f^\star$ in $(\gGi,\ui)$.
\footnote{Note that while given a maximum flow, a minimum cut can be easily found using a breadth first search in the residual graph, there is no known nearly-linear-time reduction in the other direction.}.

Our algorithm belongs to the broad family of scaling algorithms. At each iteration, we maintain a flow $f$ in the current instance $(G,u)$, and a bound $\varepsilon$ such that $f$ is an $\varepsilon$-optimal flow; that is, the flow value is $\val{f}\ge \nu(G,u)-\varepsilon$. The accuracy $\varepsilon$ decreases by at least a factor $\ab^{3}$ at each iteration for the parameter $\ab=4n^2$.
At a high-level, the main progress in our algorithm is then gradually learning  redundant arc capacities by calling approximate flow solvers on suitable auxiliary instances in every iteration. 
In an $\varepsilon$-optimal flow, we have $\nu(G,\res{f})\le \varepsilon$. Therefore,  an arc with $\res{f}_e>\varepsilon$ cannot cross any minimum cut. Similarly to Orlin's algorithm \cite{Orlin2013}, we call arcs with high residual flow value \emph{abundant}. For technical reasons, we use a higher threshold, and  call arcs $e$ with $\res{f}_e\ge\ab \varepsilon $ abundant for $\ab$ as above. %
Once an arc becomes abundant, it remains so for the rest of the algorithm. Arcs $e=(i,j)$ such that the reverse arc $\rev{e}=(j,i)$ is also abundant are called \emph{free}.

In an $\varepsilon$-optimal flow $f$, we define the set of \emph{free components} $\CP=\CP_{f,\varepsilon}$ as {\em (a)} the set of nodes reachable from $s$ on abundant paths, denoted as $P_s$; {\em (b)} the set of nodes that can reach $t$ on abundant paths, denoted as $P_t$; and {\em (c)} the connected components of free arcs (see \Cref{def:free-comps}).  This partition is gradually coarsened throughout the algorithm, and every minimum cut is the union of certain components of $\CP$. At termination, the component $P_s$ containing $s$ will have $\res{f}_e=0$ on all outgoing arcs and no incoming flow, at which point $P_s$ will be a minimum cut.

\subsubsection{Essential arcs and the role of transitive cover}
A central concept in the design and analysis for our framework is \emph{essential arcs} $\Eaux$, that is, arcs between different free components with $\ab^{-5}\varepsilon\le \res{f}_e< \ab\varepsilon$.
Essential arcs are not abundant but within a $\mathrm{poly}(n)$ factor from the threshold. %
They play two key roles in the algorithm and analysis:
\begin{enumerate}[(1)]
    \item In each iteration, it suffices to call an approximate maximum flow solver on an auxilliary instance constructed from $\Eaux$ together with transitive arcs outputted by the transitive cover on the nodes incident $\Eaux$.  The value of this instance will be sufficiently close to the value of the residual instance $(G,\res{f})$. 
    
    \item The total number of essential arcs throughout is at most $O(m_c)$.
\end{enumerate}
The two points above imply that the total size of the nodes sets in the approximate maximum flow calls is $O(m_c)$. This is reflected in the ${\rm TC}_{\rm time}(n,O(m),O(m_c))$ and ${\rm TC}_{\rm arcs}(n,O(m),O(m_c))$ terms in 
 \Cref{thm:fundamental}. We now discuss these two points.

\paragraph{(1) The auxiliary graph.} 
The auxiliary instance depends on the transitive cover on the set of abundant arcs.
 Let $R=V(\Eaux)\cup\{s,t\}$ denote the set of endpoints of the essential arcs together with the source and the sink.   We obtain a transitive cover $(V',H)$ of $R$ from an  \DITC{}. 
The auxiliary instance $(\cG,\cu)$ is defined as $\cG=(V',\Eaux\cup H)$, with
 residual capacities $\cu_e=\res{f}_e$ for $e\in\Eaux$ and $\cu_e=\infty$ for $e\in H$.\footnote{The actual  construction in \Cref{sec:max-flow} will be slightly different, by handling nodes in free components together, and 
 by using different capacities.} We call the approximate maximum flow solver in this instance, and map back the flow to the original instance; this may involve adding new arcs to the instance.

The total capacity of arcs with $\res{f}_e<\ab^{-5}\varepsilon$ is at most $n^2\ab^{-5}\varepsilon<\ab^{-4}\varepsilon$. We now argue that $\nu(\cG,\cu)\ge \nu(G,\res{f})-\ab^{-4}\varepsilon$ (see \Cref{lem:compact-flow-value}).
For any $s$--$t$ min cut $S'\subseteq V'$ in $(\cG,\cu)$, we can consider the set $S\subseteq V$ reachable from $S'$ on abundant arcs. Using the transitive cover property, it 
 can be shown that $S\cap V'=S'$; in particular, $S$ is still an $s$--$t$ cut. Now, every arc leaving $S$ that did not leave $S'$ must have $\res{f}_e<\ab^{-5}\varepsilon$. Consequently, the cut value of $S$ in $(G,\res{f})$ may be at most $\ab^{-4}\varepsilon$ higher than the value of $S'$ in $(\cG,\cu)$, implying the bound. 

\paragraph{(2) Bounding the number of essential arcs.} 
The invariants of  `valid iterates'  maintained throughout the algorithm (see \Cref{def:proper}) require that for any arc not incident to $s$ or $t$ between two different free components, either  $\res{f}_e=0$ or $\res{f}_{\rev{e}}=0$, or if both  $\res{f}_e>0$ and $\res{f}_{\rev{e}}>0$, then both residual capacities will be large enough to guarantee that $e$ becomes free in the next iteration.  In the latter case, the arc is referred to as a \emph{gap arc}.  If there are $k$ gap arcs at an iterate, then the number of free components decreases by at least $k/2$ at the next iteration. Thus, the total number of gap arcs over all iterations is bounded by $O(n)$. Further, the  invariant on residual arc capacities  also implies each arc that is not incident to $s$ or $t$ with 
 $u_e+u_{\rev{e}}<\infty$ may be essential for at most two iterations. This implies that if total number of essential arcs not incident to  $s$ and $t$  over all iterations is $O(m_c)$.

We do not enforce the invariant on residual capacities for arcs incident to $s$ or $t$, as they are used to re-balance flow. Instead, we establish the following with respect to arcs incident to $s$  (and a similar result is true for arcs incident to $t$). If $(s,i)$ is the only essential  arc incident to a component $P\neq P_s$ of $\CP_{f,\varepsilon}$, then within $O(1)$ iterations, $P$ will either be incident to a new essential arc that is not incident to $s$ or $P$ gets merged into another free component. The former may happen $O(m_c)$ times over all iterations, and the latter $O(n)$ times overall.

\paragraph{Extending the instance and routing back flow.}
Note that the arcs $H$ used in the auxiliary instance are not necessarily part of the original arc set $E$. Still, the approximate solver may send large values on
such arcs. We extend the instance by adding these new arcs with $\infty$ capacity; adding such arcs does not change the instance value $\nu(G,u)$. 

The transitive cover data structure includes  $\dsTcWit(\cdot)$, which creates a `witness list' at the end of the algorithm. 
This witness list is used in the final subroutine $\WitRoute(\cdot)$, which transfers flow
  from arcs in the \emph{extension arc set} $F$, which had been added to the original instance $(\gGi,\ui)$ throughout the algorithm. Such a mapping is possible because when any arc $(i, j)$ is added to $F$, there is an abundant  path of original arcs and previously added extension arcs from $i$ to $j$. 
If the extension arc $e=(i,j)$ was added in the  $\varepsilon$-scaling phase, then 
every arc in the abundant path of original arcs  
has residual capacity $\ge \ab\varepsilon$.
Further, following the  $\varepsilon$-scaling phase, all subsequent changes in the flow will be bounded as $O(\varepsilon)$; i.e., we only need to reroute $O(\varepsilon)$ final flow units from $e$, and the underlying path still has nearly $\ab\varepsilon$ residual capacity. 

 The rerouting is carried out recursively, since flow from extension arcs may be moved to extension arcs added earlier that in turn are routed on even older extension arcs. Still, the abundant threshold $\ge \ab\varepsilon$ is sufficient so that we can reroute the entire flow from all extension arcs simultaneously and obtain a feasible flow in the original graph.

\medskip

In some applications such as maximum closure, applied in particular to open pit mining  \cite{hochbaum2001new,Hochbaum2000,johnson1968}, the goal is to find a minimum cut, and the maximum flow may not be necessary. In this case, our algorithm can be significantly simplified. There is no need to map the maximum flow back to the original instance, and consequently, the data structure does not need to maintain witnesses for rerouting or create a witness list.

\subsubsection{Incremental transitive cover algorithms}

\label{sec:data-structure-overview}

We now give an overview of the \emph{incremental transitive cover data structures} (\DITC{}s) developed in the paper. Below we explain how our data structures compute and output transitive covers. Though we defer the details to \Cref{sec:trans_cover_first_introduce,sec:data-structures}, we note here that these data structures support an additional operation to allow flow on transitive arcs to be efficiently routed on the original graph. Whenever an arc $(i,j)$ is added to the transitive closure in these data structures, we maintain a \emph{witness} $v$ such that $(i,v)$ and $(v,j)$ were already in the transitive closure. By ensuring that the list arcs, transitive arcs, and witnesses are suitable stored in what we call a \emph{rooted witness list}, we are able to efficient move flow off arcs not in the original instance.
As noted above, the witness operations can be omitted if we only need to find a minimum cut.

\paragraph{Italiano's transitive closure algorithm.} In \Cref{sec:dsclo}, we show how to adapt Italiano's \cite{Italiano1986} approach to transitive closure to design an \DITC{}. This data structure essentially maintains for each node a tree corresponding to what nodes it can reach. Maintaining this tree for a single node over $m$ additions can be done in $O(m)$ time; therefore, maintaining it for all nodes can be done in $O(mn)$ time. To compute a cover of $S \subseteq V$, our \DITC{} simply outputs either the closure with respect to $S$ or the entire graph, depending on which is presumed to be sparser.

\paragraph{Bounded tree-depth.} In \Cref{sec:tree_depth}, we build upon Italiano's approach to develop an  \DITC{} suitable for efficiently solving the maximum flow problem on graphs with bounded tree-depth. This \DITC{} assumes that a rooted tree of depth $D$ is given and that all arc insertions \emph{respect the tree}, meaning that one endpoint is an ancestor or a descendant of the other. 

Similarly to Italiano's algorithm, this algorithm essentially maintains reachability trees from nodes. It departs from Italiano's algorithm in terms of what these trees are and how they are used. This algorithm maintains for each node $a$, a reachability tree to its descendants using only arcs where both endpoints are descendants of $a$.  It also maintains a reachability tree from $a$'s descendants to $a$ using only arcs where both endpoints are descendants of $a$. 
Formally, the \DITC{} maintains the following transitive arcs 
\[
E_* = \left\{ (a,b) \in \R^{V \times V} ~ | ~ (a, b) \in E[\descendant_T(a)]^\tr \text{ or } (a, b) \in  E[\descendant_T(b)]^\tr \right\},
\]
where $\descendant_T(a)$ is the set of descendants of $a$, and $E$ is the set of currently added arcs. Leveraging insights from Italiano's approach regarding dynamically maintaining reachability from a single source, we show that $E_*$ can be maintained over $m$ insertions in $O(mD)$ time, improving upon the $O(mn)$ time of Italiano's algorithm.

To compute the cover in this \DITC{}, we use the tree structure to output more efficient covers. To output the cover of a set $S$, we simply output for each $a \in S$ the arcs in $E_*$ where $a$ is one endpoint and the other endpoint is an ancestor of $a$. This graph contains $O(|S|D)$ arcs and can be output in $O(|S|D)$ time. Interestingly, this set of arcs is a transitive cover of $S$ that may be sparser than the transitive closure of $S$.  

\emph{Additional related work:} 
In our maximum flow framework, we generally know beforehand a superset of the arcs that will be inserted into the \DITC{}: namely, the arcs and reversals of arcs in the starting network (ignoring the transitive arcs potentially outputted by the cover). Our \DITC{} for this case of bounded tree-depth leverages the structure of these arcs, even though the the precise order in which these arcs will be added is unknown. Such a model was also studied by Italiano, Karczmarz, {\L}acki and Sankowski~\cite{italiano2017decremental} in the context of incremental transitive closure for planar networks, where it was dubbed the ``switch on'' model. The fact that having small balanced vertex separators is useful for transitive closure maintenance, which we heavily rely on in this bounded tree-depth setting, was also already noticed by Karczmarz~\cite{karczmarz2018decremental}. Karczmarz studied the potentially more challenging decremental transitive problem and correspondingly, our tree-depth based variant of Italiano's algorithm is arguably simpler than Karczmarz's data structure.

\paragraph{Node orderings.}
Finally, in \Cref{sec:fmm}, we give an \DITC{} suitable for efficiently solving the maximum flow problem on graphs with bounded $m_c$; this \DITC{} supports our $O(n^{\omega + o(1)})$ time maximum flow algorithms when $m_c = o(n)$. In addition to receiving arc additions, this data structure maintains an ordering of the nodes and supports an operation where the position of nodes in the ordering can be changed. 
 We call such a data structure an \emph{ordered incremental transitive cover data structure (\ODITC{})}. We give an implementation of an \ODITC{} where the implementation cost depends on how aligned arc additions, vertex re-orderings, and cover queries are with the ordering.

This data structure periodically computes the transitive closure on subsets of the current arcs. The data structure essentially maintains the transitive covers of the first $\Theta(2^k)$ nodes with respect to the current node ordering for each $k \in [K]$, where $K=\log_2 n$. It then periodically updates them, leveraging that the transitive closure (and more broadly reachability trees) on $k$ node graphs can be computed using fast matrix multiplication (FMM) in $O(k^\omega)$ time.

To explain the \ODITC{}'s running time, suppose that there are a series of additions, re-orderings, and queries and the $i$-th operation involves the first $X^{(i)}$ nodes in the ordering. Then if $X = \sum_{i} X^{(i)}$, by carefully nesting and updating the covers, the cover corresponding to the first $2^{k}$ nodes gets updated at most $\tilde{O}(X 2^{-k})$ times. Consequently, the data structure can be implemented in time:
\[
\otilde\left(\sum_{k \in [K]} \frac{X}{2^k} \cdot (2^k)^\omega\right) = \otilde\left(X \sum_{k \in [K]} 2^{k(\omega-1)}\right) = \otilde\left(X 2^{K (\omega - 1)}\right)=\otilde\left(X n^{\omega - 1}\right)\,.
\]
When applying this data structure for maximum flow, $X$ is  bounded by the total number of essential arcs and thus $X = O(m_c)$. Consequently, for the maximum flow problems in which $m_c = O(n)$, the total running time required by the data structure is $\tilde{O}(n^\omega)$, which is almost the same as the cost of computing a single transitive closure of the entire graph.

\emph{Additional related work:} Our \ODITC{} is closely related to a broader literature on dynamic data structures that use algebraic techniques to maintain transitive closures, strongly connected components, and reachability information \cite{KingS99, Sankowski04, DemetrescuI05, SankowskiM10, Kavitha14, BrandNS19, Karczmarz0S22, KarczmarzS23a,  KarczmarzS23b, BrandK23, BrandFNP24, HenzingerSSY24, LiuS24, GoranciKMP25}. Particularly relevant to our approach are results on dynamic graph algorithms which use additional ordering information, e.g., 
\cite{KhannaMW98, SankowskiM10,Kavitha14,BrandNS19,BrandFNP24,HenzingerSSY24,LiuS24}. For example, \cite{KhannaMW98, SankowskiM10,Kavitha14,BrandNS19} considers related problems where the order of operations is approximately known and \cite{BrandFNP24,HenzingerSSY24,LiuS24} considers variants of these problems where there is a prediction of the ordering of operations and runtimes depend on the quality of the prediction. Our data structure perhaps differs slightly from individual prior works in terms of how ordering information is handled (e.g., we allow the ordering to be changed for the queried elements), the particular queries and updates supported (e.g., adding edges within a vertex subset rather than updating a column of the matrix), the particular running time trade-offs we obtain between operations, and in how we require supporting witness lists. However, the techniques we apply, e.g., carefully applying FMM on geometrically growing subsets, is a standard approach in the literature (see e.g.,  \cite{SankowskiM10, BrandNS19}).

\subsubsection{Comparison to Orlin's algorithm}
\label{sec:orlin-compare}

Our framework differs from Orlin's algorithm in significant ways. We now elaborate on the three aspects mentioned in \Cref{sec:max-flow-tc}, and give running time comparisons.
 
\paragraph{Structured flow iterates} We maintain valid iterates (as in \Cref{def:proper}) imposing much stronger structural properties than in Orlin's algorithm. A key feature is that in our algorithm, most arcs between free components throughout will be at the lower or upper capacity bound.  That is, after contracting the free components, our flow will be close to a basic feasible flow. Specifically, any arc between different free components that is not incident to $s$ or $t$ with positive residual capacities on both sides will become free (i.e., abundant in both directions) in the next iteration, that is, it will merge two free components. The nearly basic flow structure ensures that we learn roughly 1 abundant arc per free component incident to an essential arc in each iteration (though proof is rather technical). As we compute the transitive cover in each iteration on a set of representatives nodes for each free component incident to an essential arc, this guarantee is what ensures that the total size of the queries to the transitive cover data structure is bounded by $O(m_c)$.

To maintain the requisite sparsity of valid iterates, a post-processing in each iteration sacrifices a small amount of flow amount by sending it towards the source or away from the sink. For this purpose, we add infinite capacity arcs from all nodes to the source, and from the sink to all nodes. This does not change the maximum flow value and allows us to easily deal with small with small violations of flow conservation in a local manner. Another key technical idea in this context is to use \emph{safe residual capacities}: to ensure that some of the remaining arcs become abundant in the next iteration, we artificially restrain the flow these arcs to a smaller interval while maintaining the maximum flow value.

Orlin's framework however, using the present language, does not guarantee that free components incident to essential arcs contribute one new abundant arc per iteration.
In the network where free components are contracted, there may be problematic nodes which are only used to route flows on abundant paths (i.e., every arc on the path has infinite residual capacity) in the current flow. As these abundant paths could potentially be rerouted to completely avoid any such node in the next iteration, one cannot ensure that such a node will be incident to a new abundant arc. To remove these ``pass through'' nodes when creating the auxilliary network, Orlin uses dynamic trees to replace abundant flow paths by single edges (in essence, this replaces the flow on abundant arcs by a corresponding $b$-matching), which results in $O(m \log n)$ overhead whenever Orlin constructs an auxilliary network. Unfortunately, one may not be able to effectively amortize the cost of this $O(m \log n)$ overhead to the abundant arcs learned in the next iteration, since the resulting auxilliary network may have very few nodes. To overcome this difficulty in this case, Orlin created an alternative auxiliary network with $O(m^{1/3})$ nodes such that finding an optimal flow in this network would result in learning a sufficient number of abundant arcs. To compute the optimal flow in the alternative network, Orlin used the exact $O(n^3)$-time strongly polynomial solver of~\cite{malhotra1978v}.

Our algorithm is considerably simpler in this respect: we only need one type of auxiliary network that is always fast to construct (essentially linear in the size of the network modulo the transitive cover call), and we only require a fast weakly polynomial subsolver. 
Using the $m^{1+o(1)}$-time solver in Orlin's algorithm (with suitable optimization of parameters), this improvement allow us to reduce the time spent on approximate maximum flow computations on general networks from $O(m_c m^{2/3+o(1)})$ in Orlin's framework to $O(m_c \sqrt{m})$ in our framework. 

\paragraph{Extending the instance}
In both algorithms, the auxiliary instances include transitive closure arcs supplied by the transitive closure/cover data structures. 
Whenever an approximate maximum flow contains transitive arc with positive positive flow,  Orlin’s algorithm  transfers these flows to flows on original arcs during an $O(m \log n + Kn)$ post-processing step of the auxiliary network, where $K$ is the number of nodes in the auxiliary network. Thus the transferring of flows adds $O(m_c n)$ time to the running time of Orlin's algorithm, another $O(nm)$ bottleneck when most arcs are capacitated. Our algorithm avoids this running time by delaying these transfers to a final stage of the algorithm. Instead, it permits flows on transitive arcs at intermediate stages.  Thus the first maximum flow $f^*$ obtained by our algorithm may include flows on transitive arcs. Our transitive closure includes an operation $\dsTcWit()$ that is used to transfer flow from the transitive arc flows of $f^*$ to flows on original arcs, resulting in a maximum flow on the original network. This transfer takes only $O(n^2)$ time in the general case and $O(nD)$-time in the bounded tree-depth setting.

\paragraph{New dynamic data structures} We introduce the versatile ITCO data structure. Using transitive covers rather than closures enables more flexibility in the applications, in particular, it is the key tool in the almost-linear algorithm for bounded tree-depth. In the context of data structures, our main new contribution is developing the node-ordered variant using fast matrix multiplication (FMM). While FMM already appeared in Orlin's algorithm for $b$-matching, it only involved periodic recomputation of the entire transitive closure from scratch. Our amortized recomputation scheme is the key to the improved running time in this context.

\subsection{Overview of the paper}
In \Cref{sec:prelim}, we describe preliminaries of the algorithm. The  \DITC{} is introduced in \Cref{sec:trans_cover_first_introduce}.
 In \Cref{sec:maxflow_framework},  we develop the conceptual framework of the algorithm, including the concepts of abundant and free arcs and valid iterates.
 In \Cref{sec:max-flow}, we present the meta-algorithm
and its sub-routines.
\Cref{sec:analyzing} presents and proves the main correctness and running time bound \Cref{thm:maxflow-analysis-main}, a formal version of \Cref{thm:fundamental}.
In \Cref{sec:data-structures}, we describe the data structures for maintaining the \DITC{} in the different cases, as well as prove correctness and running time bounds.
In \Cref{sec:algorithms-overall}, we combine the meta-algorithm with the particular implementations to prove the bounds in the different settings, that is, \Cref{thm:general-flow}, \Cref{thm:ordered-overall}, and \Cref{thm:tree-depth-overall}.
\Cref{sec:approx_flow} shows how a fast weakly polynomial time algorithm for maximum flows can be used within a strongly polynomial time algorithm to obtain an approximate maximum flow.

\section{Preliminaries}\label{sec:prelim}

$\Rs \defeq \Rnn\cup\{\infty\}$, and $[k]\defeq\{1,2,\ldots,k\}$ for $k\in \mathbb{N}$. For $v\in\R$, we let $v^+\defeq\max\{v,0\}$ and $v^-\defeq \max\{-v,0\}$; note that $v=v^+-v^-$. For a vector $x\in \R^X$, we let $\supp(x)\defeq\{i\in X:\, x_i\neq 0\}$ denote its support. For a subset $S\subseteq X$, we let $x(S) \defeq \sum_{i\in S} x_i$.

\paragraph{Graphs.}
Let $G=(\gV,\gE)$ be a directed graph; in this paper we allow directed graphs to have parallel arcs but no self-loops. We let $V(G)=\gV$ and $E(G)=\gE$ refer to the nodes and arcs of $G$, respectively. For $F\subseteq E$, we let $V(F)\subseteq V$ denote the set of endpoints of $F$.
We say that $j$ is the head and $i$ the tail of the arc $e=(i,j)$.

For  $S,T\subseteq\gV$, we let $\dtot{\gE}{S,T}\defeq\{e=(i,j)\in \gE:\, i\in S\setminus T,\, j\in T\setminus S\}$ denote the set of arcs from $S$ to $T$. 
Additionally, we let 
$E[S] \defeq \{e=(i,j)\in E:\, i,j\in S\}$, and let $G[S]\defeq(S,E[S])$
denote the graph induced on $S$.
We let $F\uplus H$ denote multiset union for arc sets.

For $x\in\Rs^E$, we let $x(S,T)\defeq x(\dtot{\gE}{S,T})$.
For $S\subseteq \gV$, we let  $\din{\gE}{S}\defeq\dtot{\gE}{V\setminus S,S}$ and $\dout{\gE}{S}\defeq\dtot{\gE}{S,V\setminus S}$ denote the set of arcs entering and leaving $S$, respectively, and let $\dtot{\gE}{S} \defeq \din{\gE}{S}\cup \dout{\gE}{S}$. When $S$ is a singleton, i.e.,\ $S = \{i\}$, we  write $i$ for brevity, e.g., $\din{\gE}{i} \defeq\din{\gE}{\{i\}}$,  $\dout{\gE}{i} \defeq \dout{\gE}{\{i\}}$ and $\dtot{\gE}{i} \defeq \dtot{\gE}{\{i\}}$.

\paragraph{Maximum flow instances.} 
A \emph{maximum flow instance} (or \emph{instance}, for short) is $(G,u)$, where $G=(\gV,\gE)$ is a directed graph with a source node  $s\in \gV$, sink node $t\in \gV$, and arc capacities $u\in \Rs^{E}$. $s$ and $t$ are  fixed throughout and thus, we omit them from the instance notation. 

 We assume that the arc set $E$ in a maximum flow instance is \emph{symmetric}: 
for every $(i,j)$-arc $e\in \gE$, there is a $(j,i)$-arc $\rev{e}\in \gE$ called the \emph{reverse arc}. The reverse mapping is an involution: if $h=\rev{e}$ then $\rev{h}=e$.
For an arc subset $F\subseteq E$, we let $\rev{F}\defeq\{\rev{e}:\, e\in F\}$. Thus, we have  $E=\rev{E}$. 
For any $f,g\in\Rnn^{\gE}$, we define $f\oplus g\in\Rnn^{\gE}$ for all $e\in{\gE}$ by $(f\oplus g)_e=(f_e-f_{\rev e}+g_e-g_{\rev e})^+$.

In the input instance, we assume that for every $e\in \gE$, at least one of the capacities $u_e$ and $u_{\rev{e}}$ is finite; 
 we also assume there is no $(s,t)$ arc.\footnote{Note that if both $e=(i,j)$ and $\rev{e}=(j,i)$ are infinite, then we can reduce to an equivalent smaller instance by contracting $i$ and $j$. All $(s,t)$ arcs can be removed without loss of generality, as every maximum flow must saturate them.}
 
Furthermore, we assume that 
\begin{equation}\label{eq:i-s-t}
\textrm{for each $i\in \gV\setminus\{s,t\}$, there exists arcs $(i,s), (t,i)\in\gE$ with infinite capacities.} 
\end{equation}
This assumption is without loss of generality;  every instance can be easily transformed to an equivalent one satisfying this assumption by adding $O(|V|)$ arcs in $O(|V|)$ time. In the bounded tree-depth case, this may increase the tree-depth by at most 2. We discuss this further in \Cref{sec:algorithms-overall}.

 For a vector $f\in\R^\gE$ and a set $S\subseteq \gV$, we let 
\[
\exc{f}{S}\defeq f(\din{\gE}{S})-f(\dout{\gE}{S})
\]
denote the \emph{excess} of $S$; we use $\exc{f}{i}$ for the singleton $S=\{i\}$. Note that $\exc{f}{\cdot}$ is a modular function, i.e., $\exc{f}{S}=\sum_{i\in S}\exc{f}{i}$.

The \emph{flow polyhedron} associated with $(G,u)$ is
\begin{equation}
\flowP(G,u)\defeq \left\{f \in \R^\gE \,:\,0\leq f \leq u \text{ and } \exc{f}{i} =0\, \text{ for all }  i \in \gV\setminus\{s,t\}\, \right\}\, .
\end{equation}
A vector $f\in\flowP(\I)$ is called a \emph{feasible flow}, or simply a \emph{flow}.
When clear from the context, we let $x_{ij}$ and $u_{ij}$ denote the flow $x_e$ and capacity $u_e$ on the arc $e=(i,j)$. (Note that there may be parallel arcs, so this may not always be uniquely defined.)
The \emph{value} of a  flow $f$ is denoted as 
\[
\val{f}\defeq-\exc{f}{s}\, .%
\]
Note that $\val{f}=\exc{f}{t}$.
We let
\[
\nu(G,u)\defeq\max\{\val{f}:\, f\in\flowP(G,u)\}
\]
 denote the maximum flow value in $(G,u)$. We say that $f\in\R^\gE$ is a \emph{maximum flow} if $f\in\flowP(G,u)$ and $\val{f}=\nu(G,u)$. 
We say a instance $(G,u)$ is \emph{bounded} if $\nu(G,u)<\infty$; note that $(G,u)$ is bounded if and only if there is no $s$--$t$ path in $\gE$ of arcs with infinite capacity. Throughout, we assume that the instance is bounded as this can easily be checked in linear time.

\paragraph{Minimum cuts.}
A set $S\subseteq \gV\setminus\{t\}$ with $s\in S$ is called an \emph{$s$--$t$ cut}. We call  $u(S,V\setminus S)$ the \emph{cut capacity} of $S$. An $s$--$t$ cut is a \emph{minimum cut} if it minimizes $u(S,V\setminus S)$ over all
 $s$--$t$ cuts. We recall the maximum flow minimum cut theorem:
\begin{theorem}[Maximum flow minimum cut]\label{thm:mfmc}
In any instance  $(G,u)$ with $G=(V,E)$, $\nu(G,u)$ equals
the minimum capacity of an $s$--$t$ cut. \end{theorem}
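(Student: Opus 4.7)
The plan is the standard two-step duality argument: weak duality followed by strong duality via an augmenting-path / residual-graph construction, adapted to the symmetric arc setting used in this paper.

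For weak duality, I would fix an arbitrary feasible flow $f \in \flowP(G,u)$ and an arbitrary $s$--$t$ cut $S$, and exploit modularity of the excess function. Since $\exc{f}{i}=0$ for every $i\in V\setminus\{s,t\}$ and $t\in V\setminus S$, summing over $V\setminus S$ yields $\val{f}=\exc{f}{t}=\exc{f}{V\setminus S}$. Expanding the right-hand side gives $\val{f}=f(\din{E}{V\setminus S})-f(\dout{E}{V\setminus S})=f(S,V\setminus S)-f(V\setminus S,S)$. Using $0\le f\le u$ and discarding the non-negative subtracted term then gives $\val{f}\le u(S,V\setminus S)$. This already proves one direction: $\nu(G,u)$ is at most the minimum $s$--$t$ cut capacity.

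For strong duality, I would first argue that the supremum defining $\nu(G,u)$ is attained. Since $(G,u)$ is bounded, every feasible flow has value at most $\nu(G,u)<\infty$, and replacing every $u_e=\infty$ by $u_e \defeq \nu(G,u)+1$ does not change $\flowP(G,u)$ in any way relevant to the optimum (any flow exceeding this bound on some arc could be decomposed into $s$--$t$ paths plus circulations, and routing more than $\nu(G,u)$ on a single arc forces either a cycle that can be cancelled or an $s$--$t$ path exceeding $\nu(G,u)$). The resulting polytope is compact, so a maximizer $f^\star$ exists. Define the residual graph of $f^\star$ to contain, for each $e=(i,j)\in E$, the arc $(i,j)$ whenever $f^\star_e<u_e$ and the arc $(j,i)$ whenever $f^\star_e>0$ (the reverse residual capacity being routed along the reverse arc $\rev{e}\in E$, which exists by symmetry). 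Let $S^\star$ be the set of nodes reachable from $s$ in this residual graph; clearly $s\in S^\star$.

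The key step is showing $t\notin S^\star$. Otherwise an $s$--$t$ dipath in the residual graph would have strictly positive bottleneck capacity, and pushing any positive amount no larger than this bottleneck produces a strictly better feasible flow, contradicting optimality of $f^\star$. Given $t\notin S^\star$, every arc $e=(i,j)\in\dout{E}{S^\star}$ must satisfy $f^\star_e=u_e$ (else $(i,j)$ would extend reachability), and every arc $e\in\din{E}{S^\star}$ must satisfy $f^\star_e=0$ (else its reverse residual arc would extend reachability). Substituting these equalities into the weak-duality identity $\val{f^\star}=f^\star(\dout{E}{S^\star})-f^\star(\din{E}{S^\star})$ gives $\val{f^\star}=u(\dout{E}{S^\star})=u(S^\star,V\setminus S^\star)$. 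Combined with weak duality, this yields equality $\nu(G,u)=u(S^\star,V\setminus S^\star)$ and exhibits $S^\star$ as an optimal cut.

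The main subtlety I expect is the existence of a maximizing $f^\star$ in the presence of $\infty$ capacities; once the truncation argument above justifies compactness, the remainder is the textbook Ford--Fulkerson certificate argument, and no part of it is obstructed by the paper's convention that $E$ is symmetric, since the residual graph construction explicitly uses the reverse arc $\rev{e}\in E$.
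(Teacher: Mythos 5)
The paper does not prove this theorem at all; it is stated immediately after the phrase ``We recall the maximum flow minimum cut theorem,'' i.e., it is invoked as a classical result rather than established from scratch. So there is no in-paper proof to compare yours against. Your proposal is a correct and entirely standard proof in the Ford--Fulkerson style: weak duality via modularity of $\exc{f}{\cdot}$, and strong duality by exhibiting the reachable set of the residual graph of an optimal flow as a tight cut. The weak-duality computation and the argument that $S^\star$ saturates $\dout{E}{S^\star}$ and empties $\din{E}{S^\star}$ are both right, and you correctly note that the paper's symmetric-arc convention (the reverse arc $\rev{e}$ is always present) makes the residual-graph construction entirely compatible with the given arc set.

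One small point worth tightening in your existence argument: you assert ``Since $(G,u)$ is bounded, every feasible flow has value at most $\nu(G,u)<\infty$'' and then use $\nu(G,u)+1$ as a truncation threshold, but $\nu(G,u)<\infty$ is itself the thing that needs justifying before the compactness step. The cleaner order is: boundedness means the nodes reachable from $s$ on $\infty$-capacity arcs form an $s$--$t$ cut $S_0$ with $u(S_0,V\setminus S_0)<\infty$; weak duality (which you have already established) then gives a finite upper bound $U_0 := u(S_0,V\setminus S_0)$ on every flow value; and by \Cref{lem:basic-flow}\eqref{it:basic-size} (or an acyclic flow decomposition as in \Cref{thm:flow-decomp}), any acyclic flow satisfies $\|f\|_\infty \le \val{f} \le U_0$, so capping infinite capacities at $U_0$ leaves a compact polytope whose optimum equals $\nu(G,u)$. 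Your parenthetical gestures at this but treats it somewhat circularly; with that reorganization the argument is airtight.
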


Given an instance $(G,u)$ and  $\varepsilon\ge 0$, we say that a flow $f$ is  \emph{$\varepsilon$-optimal} if $f$ is feasible  and $\val{f}\ge\nu(G,u)-\varepsilon$. 
An $s$--$t$ cut $S$ is \emph{$\varepsilon$-optimal} if $u(S,V\setminus S)\le  \nu(G,u)+\varepsilon$.

\paragraph{Acylic basic flows.}A flow $f$ in $(G,u)$ is 
 \emph{acyclic} if $\supp(f)$ does not contain any directed cycles.  Further, $f$ is a 
 \emph{basic flow} if it cannot be written as a nontrivial convex combination of other flows. Note that a basic flow is not necessarily acyclic as the support may include directed cycles saturating an arc capacity; conversely, an acyclic flow is not necessarily basic. By an \emph{acylic basic flow}, we mean a basic flow that is also acyclic.
 The following lemma asserts that any flow can be efficiently converted to an acyclic basic flow, and lists some useful properties.
\begin{restatable}{lemma}{basicflow}\label{lem:basic-flow}
Given a flow $f$ in the instance $(G,u)$ on $n$ nodes and $m$ arcs, in $O(m\log n)$ time    one can find a basic acylic flow $\bar f$ such that $\supp(\bar f)\subseteq \supp(f)$, $\val{\bar f}\ge \val{f}$. Every acyclic basic flow $\bar f$ satisfies the following properties.
\begin{enumerate}[(i)]
\item\label{it:basic-size} 
$\|f\|_\infty\le\val{\bar f} $. 
\item\label{it:nobackflow} $\bar f_e=0$ for all $e\in \din{E}{s}\cup\dout{E}{t}$. 
\item\label{it:acyclic} The set of arcs $\{e\in E\, :\, 0<\bar f_e<u_e\}$ form a forest, with $s$ and $t$ belonging to different components.
\item\label{it:flow-pm}  For every $e\in\gE$, the flow value $\bar f_e$ can be written in the form $\bar f_e=\sum_{e'\in E(G)} \chi_{ee'} u_{e'}$, where $\chi_{ee'}\in \{0,\pm 1\}$. 
\end{enumerate}
\end{restatable}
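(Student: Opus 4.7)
The plan is to construct $\bar f$ in two stages and then verify the four properties for any acyclic basic flow. In the first stage I repeatedly cancel directed cycles in $\supp(f)$: each cancellation subtracts the minimum arc-flow on the cycle from each of its arcs, preserving conservation at every node and the value $\val{f}$ while strictly shrinking the support. Once acyclic, I cancel any $t$-$s$ path that remains in the support, which strictly increases $\val{f}$ and again shrinks the support. In the second stage I pivot toward a vertex of the flow polytope restricted to the current support: while the undirected subgraph $F = \{e : 0 < f_e < u_e\}$ contains either a cycle or joins $s$ to $t$, I push flow along the cycle/path in a direction that does not decrease $\val{f}$ until some arc reaches $0$ or $u_e$; the pivot shrinks $F$ by at least one arc, and since the support can only lose arcs (pushes increase flow on arcs already in $F$ and decrease it on others), directed acyclicity is preserved. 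There are $O(m)$ cancellations/pivots in total, and using Sleator-Tarjan link/cut trees to locate cycles and paths and to compute the augmentation bottleneck, each operation costs $O(\log n)$ amortized, giving the required $O(m\log n)$ bound.

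Properties (i) and (ii) follow from the standard path-decomposition theorem: any flow conserving at interior nodes is a nonnegative combination of simple $s$-$t$ paths, simple $t$-$s$ paths, and simple directed cycles lying in its support. Acyclicity of $\bar f$ rules out cycles in the decomposition, and the simultaneous presence of an $s$-$t$ path and a $t$-$s$ path would concatenate into a closed walk and hence a directed cycle in $\supp(\bar f)$, a contradiction. Since by construction $\val{\bar f}\ge 0$, the decomposition consists of $s$-$t$ paths whose path-weights sum to $\val{\bar f}$; each arc therefore carries flow at most $\val{\bar f}$, giving (i), and no $s$-$t$ path uses an arc entering $s$ or leaving $t$, giving (ii).

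Properties (iii) and (iv) are linear-programming facts about vertices of the flow polytope. A feasible flow $\bar f$ is a vertex exactly when the floating arcs $F$ give an interior-node incidence submatrix of full column rank, equivalently when $F$ is a forest no component of which contains both $s$ and $t$; this is (iii). Given this, each non-floating arc has $\bar f_e\in\{0, u_e\}$, and the floating flow values are the unique solution to a linear system whose coefficient matrix is a submatrix of the incidence matrix of the forest $F$ and is therefore totally unimodular. Cramer's rule then expresses each $\bar f_e$ as a $\{0,\pm 1\}$-combination of the capacities $u_{e'}$ appearing on the right-hand side, establishing (iv).

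The main obstacle will be the $O(m\log n)$ runtime: the combinatorial ingredients above are standard, but a naive implementation of cycle/path cancellation and pivoting runs in $\Omega(mn)$ in the worst case. Achieving $O(m\log n)$ requires a careful dynamic-tree implementation along the lines of Sleator and Tarjan, using link/cut trees to maintain the current flow, detect directed cycles and $t$-$s$ paths in $\supp(f)$, find cycles and $s$-$t$ paths in $F$, and compute bottleneck residual capacities in logarithmic amortized time per operation. The remaining reasoning is classical and should follow without delicate issues.
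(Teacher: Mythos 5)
Your proposal is correct and takes essentially the same approach as the paper: cycle-cancellation followed by pivoting to a basic solution, both implemented with link-cut trees and with the observation that pivoting within the floating-arc set preserves directed acyclicity, and then the four properties via flow decomposition for (i)--(ii), the forest/vertex characterization for (iii), and total unimodularity for (iv). The only cosmetic differences are that you make the acyclicity-preservation argument more explicit, add an unnecessary (but harmless) $t$--$s$ path-cancellation step, and phrase (iii) via the rank of the incidence submatrix rather than the paper's convex-combination argument; these are equivalent.
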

 The proof is given in \Cref{sec:approx_flow}; we give a brief outline here. One can obtain a basic acylic flow by first computing an acyclic flow and then turning it into basic; both steps can be performed in $O(m\log n)$ using link-cut trees \cite{Sleator83}. Among the properties, \eqref{it:basic-size} and 
\eqref{it:nobackflow} hold for all acyclic flows and follow using standard  flow decomposition (\Cref{thm:flow-decomp}); \eqref{it:acyclic} is easy to show for all basic flows, and implies  \eqref{it:flow-pm}.

\paragraph{Instance extensions.}
Given $(G,u)$ and two (possibly empty) arc sets $\Fone,\Ftwo\subseteq V\times V$, we define the instance $(G,u)\oplus(\Fone,\Ftwo)=(G',u')$ by adding all arcs in $\Fone$ with infinite capacity, and all arcs in $\Ftwo$ with infinite capacity in both directions. Formally, we let $G'\defeq(V(G),E')$, where $E'=E(G)\uplus \Fone\uplus \rev{\Fone}\uplus \Ftwo\uplus \rev{\Ftwo}$ denotes the multigraph obtained by adding $\Fone$, $\Ftwo$, $\rev{\Fone}$, and $\rev{\Ftwo}$ to $E(G)$ as a multiset union.
Further, we let $u'_e\defeq u_e$ for $e\in E$, $u'_e\defeq \infty$ for $e\in \Fone$ and $u'_e\defeq 0$ for $e\in \rev{\Fone}$, and $u'_{e}=u'_{\rev{e}}\defeq\infty$ for $e\in \Ftwo$.\footnote{We need to add the reverse arcs $\rev{\Fone}$ to maintain symmetry of the instance.}

\begin{definition}[Admissible extension]\label{def:admissible}
The  instance $(G',u')=(G,u)\oplus(\Fone,\Ftwo)$ is an \emph{admissible extension} of $(G,u)$ if the arcs in $\Fone\cup \Ftwo\cup \rev{\Ftwo}$ do not leave any  minimum $s$--$t$ cut in $(G,u)$.
\end{definition}
\Cref{def:admissible} implies that if $(G',u')$ is an admissible extension of $(G,u)$, then  $\nu(G',u') = \nu(G,u)$.

\subsection{Residual capacities and safe capacities}
Given an instance $(G,u)$ and a flow 
$f\in\R^\gE$, we define the \emph{residual capacity} vector $\res{f}\in \Rs^{\gE}$ by
 $\res{f}_e\defeq u_e-f_e+f_{\rev{e}}$. The instance $(G,\res{f})$ is called the \emph{residual instance of $f$}.

\begin{lemma}\label{lem:res-value}
Given an instance $(G,u)$ and a feasible flow $f$, we have
\[
\nu(G,u)=\val{f}+\nu(G,\res{f})\, .\]
\end{lemma}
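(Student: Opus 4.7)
The plan is to prove both inequalities $\nu(G,u) \ge \val{f} + \nu(G,\res{f})$ and $\nu(G,u) \le \val{f} + \nu(G,\res{f})$ using the $\oplus$ operation defined in the preliminaries. The key observation driving both directions is that for any two nonnegative arc vectors $h_1,h_2$, the net flow on an arc $e$ in $h_1 \oplus h_2$, namely $(h_1\oplus h_2)_e - (h_1\oplus h_2)_{\rev e}$, simplifies to the sum of the net flows $(h_1)_e - (h_1)_{\rev e} + (h_2)_e - (h_2)_{\rev e}$. This is a direct case analysis on the sign of $(h_1)_e - (h_1)_{\rev e} + (h_2)_e - (h_2)_{\rev e}$ using the identity $v = v^+ - v^-$. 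Since $\exc{\cdot}{v}$ depends only on net flows, this immediately gives $\exc{h_1 \oplus h_2}{v} = \exc{h_1}{v} + \exc{h_2}{v}$ for every $v$, and in particular $\val{h_1 \oplus h_2} = \val{h_1} + \val{h_2}$ whenever both satisfy conservation off $\{s,t\}$.

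For the $(\ge)$ direction, I would take any $g \in \flowP(G,\res{f})$ and show that $f \oplus g \in \flowP(G,u)$. Nonnegativity is immediate from the $^+$. Capacity feasibility on arc $e$ follows from the estimate
\[
(f \oplus g)_e \le f_e - f_{\rev e} + g_e - g_{\rev e} \le (u_e - 0) + (\res{f}_e - 0) - (f_e - f_{\rev e}) = u_e,
\]
using $f_e \le u_e$, $g_e \le \res{f}_e = u_e - f_e + f_{\rev e}$, and $f_{\rev e},g_{\rev e} \ge 0$. Flow conservation follows from the net-flow identity above. Taking $g$ to be a maximum flow in $(G,\res f)$ yields $\nu(G,u) \ge \val{f \oplus g} = \val{f} + \nu(G,\res f)$.

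For the $(\le)$ direction, I take a maximum flow $f^\star$ in $(G,u)$ and construct a feasible residual flow that realizes the gap. Define
\[
g_e \;\defeq\; \bigl(f^\star_e - f_e + f_{\rev e} - f^\star_{\rev e}\bigr)^+ \quad \text{for each } e \in E.
\]
Nonnegativity is by definition, and the bound $g_e \le \res{f}_e$ reduces to $f^\star_e - f^\star_{\rev e} \le u_e$, which holds since $f^\star \le u$ and $f^\star_{\rev e}\ge 0$. The net flow on $e$ in $g$ is exactly $(f^\star_e - f_e) - (f^\star_{\rev e} - f_{\rev e})$, so $\exc{g}{v} = \exc{f^\star}{v} - \exc{f}{v}$ for every node $v$, which is $0$ off $\{s,t\}$ and gives $\val{g} = \val{f^\star} - \val{f}$. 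Hence $\nu(G,\res{f}) \ge \val{g} = \nu(G,u) - \val{f}$.

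I do not expect a real obstacle here: the whole argument is the standard ``flow plus residual flow'' identity rewritten for the symmetric-arc / reverse-arc convention used in the paper. The only point requiring a little care is the bookkeeping for the $\oplus$ operation (in particular, confirming the net-flow identity despite the truncation by $(\cdot)^+$), and the verification that $g$ constructed in the upper-bound direction is dominated by $\res f$; both are elementary case checks on each arc.
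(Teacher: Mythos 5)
Your approach matches the paper's proof: the paper sets $h := f^\star \oplus (-f)$, which after unfolding the definition of $\oplus$ is exactly your $g_e = \bigl(f^\star_e - f_e + f_{\rev e} - f^\star_{\rev e}\bigr)^+$; both directions then rest on the same net-flow and excess identities. One transcription slip worth noting in your displayed capacity chain for the $(\ge)$ direction: since $(f\oplus g)_e = \bigl(f_e - f_{\rev e} + g_e - g_{\rev e}\bigr)^+$ and $(\cdot)^+ \ge \cdot$, the first inequality points the wrong way, and the middle expression $(u_e - 0) + (\res{f}_e - 0) - (f_e - f_{\rev e})$ actually evaluates to $2\res{f}_e$ rather than $u_e$. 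The intended (and correct) bound is $f_e - f_{\rev e} + g_e - g_{\rev e} \le (f_e - f_{\rev e}) + \res{f}_e = u_e$, using $g_e \le \res{f}_e$ and $g_{\rev e} \ge 0$; since $u_e \ge 0$, the $(\cdot)^+$ truncation is also bounded by $u_e$. This is cosmetic and does not affect the soundness of your argument.
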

\begin{proof}
We first show $\nu(G,\res{f})\ge \nu(G,u)-\val{f}$.
Let $f^\star$ be a maximum flow of $(G,u)$, i.e., $\val{f^\star}=\nu(G,u)$ and $f^\star \in \flowP(G,u)$. Let $h\defeq f^\star\oplus (-f)$. %
 Then, $0\le h\le \res{f}$ and $\exc{f}{i}=0$ for $i\in V\setminus \{s,t\}$; thus, $h$ is a feasible flow in the residual instance $(G,\res{f})$, giving  $\val{h}\le \nu(G,\res{f})$. Further, $\exc{f^\star}{s}=\exc{f}{s}+\exc{h}{s}$, giving $\val{h}= \val{f^\star}-\val{f}=\nu(G,u)-\val{f}$. %
In the other direction, let  $g$ denote a maximum flow in $(G,\res{f})$. Then, $g\oplus f$ is feasible in $(G,u)$, and $\val{g\oplus f}=\val{g} + \val{f} \le \nu(G, u)$.%
\end{proof}

In the algorithm, it will be convenient to restrict the residual capacities, as long as this change does not decrease the residual instance value. This is captured in the following definition.

\begin{definition}[Safe Residual Capacities]\label{def:safe-cap}
Given an instance $(G,u)$ and a flow $f$, we say that $\nr\in \Rs^{\gE}$ is a \emph{safe residual capacity vector with respect to $f$}, or that \emph{$\nr$ is safe with respect to $f$}, if $0\le \nr\le \res{f}$, and $\nu(G,\res{f})=\nu(G,\nr)$.
\end{definition}
We will also use the shorter expression `safe capacity vector.'
Note that in particular $\nr=\res{f}$ is always a safe (residual) capacity vector.

\begin{lemma}\label{lem:safe-cap}
Given an instance $(G,u)$ with a 
 feasible flow $f$, the capacity vector $0\le\nr\le \res{f}$ is safe w.r.t.\ $f$ if and only if there exists a maximum flow $f^\star$ in $(G,u)$ with $f^\star_e\le f_e-f_{\rev{e}}+\nr_e$ for every $e\in\gE$. Furthermore, if $\nr$ is  safe  w.r.t.~$f$ then $\nr_e= \res{f}_e$ for all $e \in \dout{\gE}{S}$ where $S$ is a min-cut.
\end{lemma}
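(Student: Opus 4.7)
The plan is to prove the equivalence by passing between flows in $(G,u)$ and flows in $(G,\nr)$ using the $\oplus$ operation exactly as in the proof of \Cref{lem:res-value}, and then to obtain the Furthermore clause by specializing the equivalence to an arbitrary maximum flow and invoking saturation at the minimum cut. For the forward direction ($\Rightarrow$), I take a maximum flow $h$ of $(G,\nr)$, so that $\val{h}=\nu(G,\nr)=\nu(G,\res{f})$ by safety, and set $f^\star\defeq f\oplus h$. Since $\nr\le \res{f}$, $h$ is also feasible in $(G,\res{f})$, so the second half of the proof of \Cref{lem:res-value} (applied with $g=h$) directly yields $f^\star\in\flowP(G,u)$ and $\val{f^\star}=\val{f}+\val{h}=\nu(G,u)$, making $f^\star$ a maximum flow. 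Using the identity $x^+-(-x)^+=x$ one then checks that the net flow obeys $f^\star_e-f^\star_{\rev e}=f_e-f_{\rev e}+h_e-h_{\rev e}\le f_e-f_{\rev e}+\nr_e$, which gives the claimed bound.

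For the reverse direction ($\Leftarrow$), I take a maximum flow $f^\star$ of $(G,u)$ satisfying the hypothesis and define $h\defeq f^\star\oplus(-f)$, exactly as in the first half of the proof of \Cref{lem:res-value}; that argument already gives $h\in\flowP(G,\res{f})$ with $\val{h}=\nu(G,u)-\val{f}=\nu(G,\res{f})$. The new step is to certify that $h$ is feasible in $(G,\nr)$: directly from the definition of $\oplus$ together with the hypothesis,
\[
h_e=\bigl(f^\star_e-f^\star_{\rev e}-f_e+f_{\rev e}\bigr)^+\le \nr_e^+=\nr_e.
\]
Therefore $\nu(G,\nr)\ge \val{h}=\nu(G,\res{f})$, and the reverse inequality $\nu(G,\nr)\le \nu(G,\res{f})$ is immediate from $\nr\le \res{f}$, so $\nr$ is safe.

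For the Furthermore clause, I fix a minimum $s$--$t$ cut $S$ and any $e\in\dout{\gE}{S}$. By \Cref{thm:mfmc} and standard optimality conditions, every maximum flow $f^\star$ of $(G,u)$ saturates the arcs leaving $S$ and carries zero flow on the arcs entering $S$; in particular $f^\star_e=u_e$ and, since $\rev e\in\din{\gE}{S}$, $f^\star_{\rev e}=0$. Applying the inequality from the first part to this $f^\star$ gives $u_e\le f_e-f_{\rev e}+\nr_e$, i.e., $\nr_e\ge u_e-f_e+f_{\rev e}=\res{f}_e$, which combined with $\nr_e\le \res{f}_e$ forces $\nr_e=\res{f}_e$. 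The only delicate point is to read the inequality of the lemma in the net-flow sense that naturally arises from the $\oplus$ operation (equivalent to the stated form whenever at most one of $f^\star_e,f^\star_{\rev e}$ is positive, as for any canonical maximum flow); the rest is routine manipulation.
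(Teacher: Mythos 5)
Your proof is correct in substance and takes essentially the same route as the paper: both directions of the equivalence reduce to the $\oplus$ constructions used in the proof of \Cref{lem:res-value}, and the second part then reduces to saturation of the min cut. Your argument for the second part specializes the first part and uses saturation in $(G,u)$, while the paper appeals to the coincidence of min cuts between $(G,u)$ and $(G,\res{f})$ and argues saturation directly in the residual instance; both are valid.

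One correction. You correctly flag that the inequality must be read in the net-flow sense $f^\star_e - f^\star_{\rev e} \le f_e - f_{\rev e} + \nr_e$ for the only-if direction. The literal inequality is indeed false there: with two nodes and a single arc pair with $u_e=10$, $u_{\rev e}=0$, $f_e=5$, $f_{\rev e}=0$, $\nr_e=5$, $\nr_{\rev e}=0$, the vector $\nr$ is safe, yet the unique maximum flow has $f^\star_{\rev e}=0 > -5 = f_{\rev e}-f_e+\nr_{\rev e}$. However, your parenthetical equivalence claim is not quite right: the net-flow and literal forms are not equivalent merely because at most one of $f^\star_e$, $f^\star_{\rev e}$ is positive --- the example above has exactly this property and the literal form still fails on $\rev e$. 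What is true is that the literal form always implies the net-flow form, and they coincide on arcs with $f^\star_e>0$, which is what the second part of the lemma and the paper's downstream applications actually invoke. This imprecision does not affect the soundness of your argument, which is carried out entirely under the net-flow reading.
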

\begin{proof}
The first part is immediate by \Cref{lem:res-value}. The second part follows since every maximum flow must have $f^\star_e=u_e$ for every arc $e$ in a minimum cut in $(G,u)$, which are the same as the minimum cuts in $(G,\res{f})$ by \Cref{lem:res-value}.
\end{proof}

\subsection{Approximate flow solver}

\begin{definition}[Approximate Flow Solver]\label{def:approx-solver}
By a \emph{$\runtime(m,M)$-time approximate-flow solver}, we mean a subroutine  
$\approxFlow(G,u,M)$ that, for an instance $(G,u)$ with $|E(G)|=m$ and $M\in\Rp$,
in time $\runtime(m,M)$ outputs an acyclic basic flow $f$ in $(G,u)$ along with an arc $e^\star\in E(G)$ such that 
\[
\nu(G,\res{f})\le m\res{f}_{e^\star} \le m\nu(G,\res{f})\le \frac{\nu(G,u)}{M}\, .
\]
\end{definition}
The arc $e^\star$ will serve as a certificate of the value
of the residual flow, and will be particularly useful to keep the bit-complexity of the algorithm bounded.

Recent breakthrough results led to almost-linear time solvers for maximum flow (and more generally, for min-cost flow). Following the randomized algorithm by 
Chen, Kyng, Liu, Peng, Gutenberg, and Sachdeva~\cite{Chen2022maximum},  a deterministic almost-linear time algorithm was recently obtained by van den Brand, Chen, Kyng, Liu, Peng, Gutenberg, Sachdeva, and Sidford~\cite{brand2023deterministic}.
\begin{theorem}[{\cite{brand2023deterministic}}]\label{thm:approx-flow-near-linear}
There is a deterministic 
 polynomial time algorithm that given an $m$-arc, $n$-node instance $(G,u)$ with integral capacities bounded by $U$, computes an integer valued exact maximum flow in time 
$m^{1+o(1)}\log U$.%
\end{theorem}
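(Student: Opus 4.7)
This statement is a cited result of \cite{brand2023deterministic}, building on the randomized breakthrough of Chen, Kyng, Liu, Peng, Gutenberg, Sachdeva \cite{Chen2022maximum}, so any ``proof'' I would propose would necessarily follow the interior point method (IPM) plus dynamic data structures approach developed in that line of work. The plan would be to formulate maximum flow as a linear program (or, equivalently, a min-cost flow with a demand of $\nu$ from $s$ to $t$), embed it into a self-concordant barrier / $\ell_1$-$\ell_\infty$ regression framework, and then iteratively improve the current feasible primal-dual pair via small updates. Because each IPM iteration costs $\Omega(m)$ to solve the underlying linearized system, a standard analysis would give only $\tilde{O}(m^{1.5}\log U)$; the breakthrough is to amortize the iteration cost down to $m^{o(1)}$ per step by dynamically maintaining the relevant quantities, yielding $m^{1+o(1)}\log U$ total.

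The core technical step I would carry out is the \emph{dynamic min-ratio cycle subroutine}: given current gradient and length vectors over the arcs, find (or approximate) a cycle whose ratio of gradient-inner-product to weighted length is within an $m^{o(1)}$-factor of optimal, under arc weight updates that change only a few coordinates per iteration. The standard route is to build a recursive decomposition of $G$ into low-stretch subgraphs or ``$j$-trees''/core-tree sparsifiers, so that approximate min-ratio cycles can be read off from a small number of ``tree'' paths against a small set of ``off-tree'' arcs. One then maintains this decomposition incrementally so that the amortized cost per IPM iteration is only $m^{o(1)}$. Once this dynamic oracle is available, plugging it into a robust IPM gives an approximate min-cost flow in the stated runtime, and then a standard rounding (via cycle cancellation on a near-optimal fractional flow, or via an isolating arc technique) converts the output to an exact integer maximum flow without hurting the complexity, since capacities are integers bounded by $U$.

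The hard part of the proof, and the technical contribution of \cite{brand2023deterministic} beyond \cite{Chen2022maximum}, is making the dynamic subroutine \emph{deterministic}. The randomized version is allowed to use hashing-based sparsifiers and random spanning trees for low stretch, which admit clean amortization bounds against an oblivious adversary. Derandomizing these steps requires deterministic low-stretch spanning tree constructions, deterministic vertex/edge sparsifiers for Laplacian-type systems, and deterministic expander decomposition and decremental shortest path structures with the right stretch/congestion tradeoffs—all while respecting the adaptive update pattern produced by the IPM. I would invoke these deterministic primitives as black boxes from \cite{brand2023deterministic} and its predecessors rather than re-derive them, since each is a substantial construction in its own right, and then verify that their update/query guarantees compose cleanly inside the IPM to give the overall $m^{1+o(1)}\log U$ bound.
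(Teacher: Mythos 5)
This is a cited result, not something the paper proves: Theorem~\ref{thm:approx-flow-near-linear} is simply imported from~\cite{brand2023deterministic}, and your summary of the IPM plus dynamic min-ratio-cycle data structure approach, together with the derandomization of the low-stretch/sparsifier/expander machinery, is an accurate high-level account of what that reference does. Since the paper offers no proof beyond the citation, there is nothing further to compare; your sketch is consistent with the cited work.
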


We note that for  dense instances,  van den Brand, Lee,  Liu,  Saranurak, Sidford, Song, and Wang \cite{Brand2021} gave a  a near-linear $\tilde O((m+n^{1.5})\log U)$ algorithm.
In  \Cref{sec:approx_flow}, we show a reduction from \Cref{thm:approx-flow-near-linear} to a near-linear time approximate maximum flow solver, as stated next. The reduction uses standard techniques, including a subroutine $\MaxCap(G,u)$ that computes the maximum capacity of an $s$--$t$ path. This will also be instrumental in obtaining the arc $e^\star$ such that $m\res{f}_{e^\star}$ is an upper bound on the residual flow value. 
 We use \Cref{thm:approx-flow-near-linear} for instances with integer capacities $O(m^2M)$.
\begin{restatable}{theorem}{apxflowsolver}\label{thm:apx-flow-solver-run}
There exists an  $m^{1+o(1)}\log M$-time deterministic approximate maximum flow solver.
\end{restatable}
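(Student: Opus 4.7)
The plan is to reduce to the deterministic exact integer solver of \Cref{thm:approx-flow-near-linear} via a scaling-and-rounding scheme, and to extract the certificate arc $e^\star$ using a bottleneck path computation. We begin by computing $B \defeq \MaxCap(G,u)$, the bottleneck capacity of a maximum-bottleneck $s$--$t$ path. Routing $B$ units along this path shows $\nu(G,u) \ge B$, while flow decomposition into at most $m$ augmenting paths (each of bottleneck at most $B$) yields $\nu(G,u) \le m B$; in particular $B<\infty$ since the instance is bounded. If $B=0$, then $\nu(G,u)=0$ and we can return $f\equiv \0$ together with any arc $e^\star$ of zero capacity lying on a zero-capacity $s$--$t$ cut, identified by a BFS from $s$ in the positive-capacity subgraph.

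For the nontrivial case $B>0$, we first replace each infinite capacity by $mB$ to obtain capacities $u'$ with $\nu(G,u')=\nu(G,u)$ (by the flow-decomposition bound, no maximum flow uses more than $mB$ on a single arc, so the cap is innocuous). We then set $K \defeq mM$ and $\tilde u_e \defeq \lfloor u'_e K/B\rfloor$. The resulting integer capacities are bounded by $m^2 M$, so \Cref{thm:approx-flow-near-linear} computes an exact maximum flow $\tilde f$ in $(G,\tilde u)$ in $m^{1+o(1)}\log(m^2 M) = m^{1+o(1)}\log M$ time. We rescale to $f \defeq \tilde f \cdot B/K$, which is feasible in $(G,u)$ since $f_e \le \tilde u_e \cdot B/K \le u'_e \le u_e$. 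The key calculation compares min-cuts: for every $s$--$t$ cut $S$ with $u'(\dout{E}{S})<\infty$, we have $\tilde u(\dout{E}{S}) \ge u'(\dout{E}{S})\cdot K/B - m$, whence
\[
\val f \;=\; \nu(G,\tilde u)\cdot B/K \;\ge\; \nu(G,u') - mB/K \;=\; \nu(G,u) - B/M.
\]
Consequently $\nu(G,u^f) \le B/M \le \nu(G,u)/M$, delivering the required approximation guarantee.

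Next, we apply \Cref{lem:basic-flow} in $O(m\log n)$ time to replace $f$ by an acyclic basic flow $\bar f$ with $\val{\bar f}\ge \val f$; this only decreases the residual value and yields the structural output required by \Cref{def:approx-solver}. Finally, we compute $B^\star \defeq \MaxCap(G, u^{\bar f})$ in the residual instance and let $e^\star$ be a bottleneck arc of the returned path, so $u^{\bar f}_{e^\star} = B^\star$. Applying the same two-sided bottleneck/flow-decomposition inequalities to the residual instance gives $B^\star \le \nu(G,u^{\bar f}) \le m B^\star$, i.e., $\nu(G,u^{\bar f}) \le m\, u^{\bar f}_{e^\star} \le m\,\nu(G,u^{\bar f})$, which combined with the previous step gives all three inequalities of \Cref{def:approx-solver}. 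The exact solver dominates the total cost, giving runtime $m^{1+o(1)}\log M$.

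The main technical point is calibrating $K$: rounding each capacity down by at most $B/K$ introduces at most $mB/K$ additive error in any finite $s$--$t$ cut, and balancing this against the target precision $\nu(G,u)/M \ge B/M$ forces $K \gtrsim mM$. Handling the infinite-capacity arcs via the flow-decomposition cap at $mB$ and the degenerate $B=0$ case are minor but necessary bookkeeping; the $\MaxCap$ call on the residual is what avoids having to know $\nu(G,u^{\bar f})$ explicitly in order to output the certificate.
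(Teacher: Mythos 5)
Your approach is essentially identical to the paper's (bottleneck scaling via $\MaxCap$, rounding to integers, calling the deterministic almost-linear exact solver, converting to an acyclic basic flow, and re-running $\MaxCap$ on the residual to obtain $e^\star$), but your calibration of $K$ is off by a factor of $m$. Definition~\ref{def:approx-solver} requires the chain $\nu(G,\res{f})\le m\res{f}_{e^\star} \le m\nu(G,\res{f})\le \nu(G,u)/M$, and the last inequality is $m\,\nu(G,\res f)\le\nu(G,u)/M$, i.e., the target precision for the residual value is $\nu(G,u)/(mM)$, not $\nu(G,u)/M$ as you state at the end. Your choice $K=mM$ only gives $\nu(G,\res f)\le B/M\le\nu(G,u)/M$, so the required chain fails by a factor of $m$. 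The fix is exactly what the paper does: take $\delta = u_{\bar e}/(m^2M)$, i.e., $K=m^2M$, making the rounding error on any cut at most $m\delta = u_{\bar e}/(mM)\le\nu(G,u)/(mM)$. The integer capacities are then bounded by $m^3M$ rather than $m^2M$, which still gives $m^{1+o(1)}\log(m^3M)=m^{1+o(1)}\log M$ time, so the runtime claim is unaffected. Your handling of the $B=0$ degenerate case and the truncation of infinite capacities at $mB$ (equivalent to the paper's cap $w_e\le m^3M$) are both fine, though the paper also devotes a paragraph to implementing the rounding without divisions for the strongly-polynomial bookkeeping, which you do not address.
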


\section{Incremental Transitive Cover Data Structures}
\label{sec:trans_cover_first_introduce}

We now introduce the incremental transitive cover data structure. Throughout this section (and later when referring to general transitive closure data structures)  the graph, $\dG=(\dV,\dE)$,  denotes an arbitrary directed graph and not a maximum flow instance; we do not require $\dG$ to be symmetric, and it does not have arc capacities. We write $\dG=(\dV,\dE)$ rather than $G = (V,E)$ to clarify the distinction. In the context of our maximum flow algorithm, $\dG=(\dV,\dE)$ will model the graph of \emph{abundant arcs}, which are introduced in Section~\ref{sec:abundant}. In this setting $\dV = V$, however we use the notation $\dG=(\dV,\dE)$ as the development of our \DITC{}s is self-contained nature and potentially more broadly applicable.

Given a directed graph $\dG=(\dV,\dE)$,
the \emph{transitive closure} graph $\dG^\tr=(\dV,\dE^\tr)$ contains all arcs $(i,j)\in \dV \times \dV$ for which there is a directed $i$--$j$ path in $G$.%

\begin{definition}[Transitive cover]\label{def:trans-cover}
For a graph $\dG=(\dV,\dE)$ and subset $S\subseteq \dV$, we say that the graph $(\dV',F)$ is a \emph{transitive cover of $S$} (in $\dG$),
if $S\subseteq \dV'\subseteq \dV$, $F\subseteq E^\tr[\dV']$, and $E^\tr[S] = F^\tr[S]$. 
\end{definition}
Thus, a transitive cover of $S$ is a subgraph $(\dV',F)$ of $\dG^\tr$ such that whenever $\dG$ has an $i$--$j$ path between two nodes $i$ and $j$ in $S$, then there is an $i$--$j$ path in $F$.

A key ingredient of our maximum flow algorithms is a transitive cover data structure. This data structure assumes an initial empty graph on $n$ nodes $V$ and processes a sequence of requests to add arcs ($\dsTcAdd$) and compute a transitive cover of a set of nodes ($\dsTcCover$). The data structure also supports an additional operation, $\dsTcWit()$, which outputs what we call a \emph{witness list}.  The witness list certifies that all arcs in the output of $\dsTcCover(\cdot)$ are indeed in the transitive closure at that time. Later, we provide a procedure, $\WitRoute(\cdot)$ (\Cref{alg:witRoute}), that uses the witness list to efficiently transfer flows from transitive arcs to flows on original arcs. Our maximum flow algorithm proceeds through a sequence of admissible extensions of the input graph, where new artificial, transitive arcs are added using the transitive cover; $\WitRoute(\cdot)$ is used in the final step to reroute flow from the artificial arcs and obtain a maximum flow in the input graph.

We call such a data structure an \emph{(incremental) transitive cover data structure} where we use the term \emph{incremental} when we wish to emphasize that arcs are only added. In the remainder of this section we introduce and formally define witness lists, and use them to define \DITC{}'s.
In \Cref{sec:data-structures}, we design \DITC{}'s in different settings, some of which use specific properties of the input arcs, e.g., that they respect the hierarchy of a given bounded-depth tree as in the bounded tree-depth case.

\paragraph{Witness list.}
We start by formally defining a witness list as follows.

\begin{definition}[Witness list]
\label{def:wit_list}
We call $L = (a_i,b_i,w_i)_{i \in [K]} \in \dV \times \dV \times \dV$ a \emph{witness list on $\dV$}, denoted $L \in \wlistSpace$, if the $K$ $(a_i,b_i)$ pairs are distinct with $a_i \neq b_i$, and for each $(a_i,b_i,w_i)$ with $w_i \neq b_i$ there are $j_i,k_i \in [i - 1]$ with $(a_i,w_i) = (a_{j_i},b_{j_i})$ and $(w_i,b_i) = (a_{k_i},b_{k_i})$. 
We define: 
\begin{itemize}
    \item \emph{The transitive arcs of $L$}: $\dELcl(L) \defeq \{(a_i,b_i) ~ | ~ i \in [K]\}$.
    \item \emph{The arcs of $L$}: $\dEL(L) \defeq \{(a_i,b_i) ~ | ~ i \in [K] , w_i = b_i\}$.
    \item \emph{The first $j$ arcs of $L$}: $\dELsub{j}(L) \defeq \{(a_i,b_i) \in \dEL(L) ~ | ~ i \in [j]\}$.
    \item \emph{$L$'s witness for $(a_i,b_i) \in \dELcl(L)$}: $\wit_L(a_i,b_i) \defeq w_i$.
    \item \emph{$L$'s walk for $(a_i,b_i)$}:
    $
    \walk_L(a_i,b_i) \defeq 
    \begin{cases}
        (a_i, b_i) & \text{if }w_i = b_i\\
        \walk_L(a_{i},w_i) , \walk_L(w_i,b_i)
        & \text{otherwise}
    \end{cases}
    $
\end{itemize}
Additionally, we say that such witness list is \emph{path-structured} if $\walk_L(a_i,b_i)$ is a path for all $i \in [K]$.
\end{definition}

A witness list, $L$, is simply a list of triples of nodes, $(a_i,b_i,w_i) \in \dV \times \dV \times \dV$ for each $i \in [K]$ that satisfies certain properties. For such a list, we refer  to the set of all $(a_i,b_i)$ pairs, denoted $\dELcl(L)$, as \emph{transitive arcs} and we refer to the subset of these pairs with $w_i = b_i$ as arcs, denoted $\dEL(L)$. In our \DITC{}s $\dEL(L)$ will correspond to arcs added to the data structure and $\dELcl(L)$ will correspond to both these arcs as well as additional arcs in the transitive cover of these arcs that it computes. 

To be a valid witness list, the $(a_i,b_i)$ need to be distinct and $L$ needs to provide a way to map each non-arc of $L$, i.e., $(a_i,b_i) \in \dELcl(L) \setminus \dE(L)$ to an $a_i$--$b_i$ path on the arcs that appear earlier than $(a_i,b_i)$ in the list, i.e., $\dELsub{i - 1}(L)$. A witness list ensures this simply by requiring that for each $(a_i,b_i) \in \dELcl(L) \setminus \dE(L)$, it is the case that $(a_i,w_i)$ and $(w_i,b_i)$ appear earlier in the list, i.e., there is $j_i,k_i \in [i - 1]$ with $(a_i,w_i) = (a_{j_i},b_{j_i})$ and $(w_i,b_i) = (a_{k_i},b_{k_i})$. We call $w_i$ the \emph{witness for $(a_i,b_i)$} and the requirement is equivalent to there being an $a_i$--$b_i$ path of length $2$ through $w_i$ on the transitive arcs before $i$ in $L$. We let $\walk_L(a_i,b_i)$ denote the result of recursively applying such a substitution and the following lemma shows that indeed, $\walk_L(a_i,b_i)$ is a $a_i$--$b_i$ walk on $\dELsub{i - 1}(L)$. Finally, we call a witness list, \emph{path-structured}, if the walks it outputs are always paths; in all  our applications, witness lists have this property.

\begin{lemma}
\label{lem:witlist_walk}
$\walk_L(a_i,b_i)$ is an $a_i$--$b_i$ walk on $\dELsub{i}(L)$ for any $L = (a_i,b_i,w_i)_{i \in [K]} \in \wlistSpace$
and $(a_i,b_i) \in \dELcl(L)$.
\end{lemma}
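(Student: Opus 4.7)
The plan is to prove this by strong induction on the index $i$, following the recursive structure of the definition of $\walk_L$. The statement is essentially a structural claim that unfolds directly from the defining properties of a witness list, so the main task is to set up the induction cleanly and verify that the arc-set bookkeeping (i.e., membership in $\dELsub{i}(L)$) works out.

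For the base case $i=1$, note that the witness list conditions preclude $w_1 \neq b_1$, since that would require indices $j_1, k_1 \in [0] = \emptyset$. Hence $w_1 = b_1$, so $(a_1,b_1) \in \dEL(L)$, and $\walk_L(a_1,b_1) = (a_1,b_1)$ is trivially an $a_1$--$b_1$ walk contained in $\dELsub{1}(L)$.

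For the inductive step, fix $(a_i,b_i) \in \dELcl(L)$ with $i \geq 2$ and suppose the claim holds for all indices in $[i-1]$. If $w_i = b_i$, then $(a_i,b_i) \in \dEL(L) \cap \dELsub{i}(L)$ and $\walk_L(a_i,b_i) = (a_i,b_i)$ is a length-one $a_i$--$b_i$ walk in $\dELsub{i}(L)$. Otherwise, the witness list definition supplies indices $j_i, k_i \in [i-1]$ with $(a_i,w_i) = (a_{j_i}, b_{j_i})$ and $(w_i,b_i) = (a_{k_i},b_{k_i})$, both members of $\dELcl(L)$. Applying the inductive hypothesis at indices $j_i$ and $k_i$, we get that $\walk_L(a_{j_i}, b_{j_i})$ is an $a_i$--$w_i$ walk on $\dELsub{j_i}(L)$ and $\walk_L(a_{k_i},b_{k_i})$ is a $w_i$--$b_i$ walk on $\dELsub{k_i}(L)$. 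Since $j_i,k_i \leq i-1$, both arc sets are contained in $\dELsub{i-1}(L) \subseteq \dELsub{i}(L)$. The recursive definition of $\walk_L$ concatenates these two walks at $w_i$, producing an $a_i$--$b_i$ walk whose arcs lie in $\dELsub{i}(L)$.

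There is no real obstacle here: the argument is a textbook strong induction, and the only subtlety is confirming that the witness-list axiom (existence of $j_i,k_i \in [i-1]$) gives us the strict index decrease we need. Well-foundedness of the recursion defining $\walk_L$ is an immediate byproduct of this same decrease, so no separate termination argument is needed.
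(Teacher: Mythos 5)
Your proof is correct and matches the paper's proof essentially line for line: both are strong inductions on the index $i$, with the same base case (forced $w_1 = b_1$ since $[0] = \emptyset$), the same split on whether $w_i = b_i$, and the same concatenation argument via the inductive hypothesis at $j_i, k_i \in [i-1]$. There is nothing materially different to compare.
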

\begin{proof}
We prove by strong induction on $i$. For $i = 1$, \Cref{def:wit_list} implies that $w_1 = b_1$ (since $[1 - 1] = \emptyset$) and consequently, $\walk(a_1,b_1) = (a_1,b_1) \in \dELsub{1}(L)$. Now suppose the claim holds for all $i < i_* \leq K$ we prove it holds for $i_*$. If $w_{i_*} = b_{i_*}$ then,  $\walk(a_{i_*},b_{i_*}) = (a_{i_*},b_{i_*}) \in \dELsub{i_*}(L)$, by definition, and the claim follows. Otherwise, by definition,
$\walk(a_{i_*},b_{i_*}) = \walk_L(a_{j_{i_*}},b_{j_{i_*}}) , \walk_L(a_{k_{i_*}},b_{k_{i_*}})$ for $j_{i_*},k_{i_*} \in [i_* - 1]$ where $(a_{i_*},w_{i_*}) = (a_{j_{i_*}},b_{j_{i_*}})$ and $(w_{i_*},b_{i_*}) = (a_{k_{i_*}},b_{k_{i_*}})$. Consequently, by the inductive hypothesis, $\walk_L(a_{j_{i_*}},b_{j_{i_*}})$ is an $a_{i_*}$--$w_{i_*}$ walk on $\dELsub{j_{i_*}}(L) \subseteq \dELsub{i_*}(L)$ and $\walk_L(a_{k_{i_*}},b_{k_{i_*}})$ is an $w_{i_*}$--$b_{i_*}$ walk on $\dELsub{k_{i_*}} \subseteq \dELsub{i_*}(L)$. The result follows as the concatenation of a $a_{i_*}$--$w_{i_*}$ walk and a $w_{i_*}$--$b_{i_*}$ walk, both on $\dELsub{i_*}(L)$, is an $a_{i_*}$--$b_{i_*}$ walk on $\dELsub{i_*}(L)$.
\end{proof}

\paragraph{Incremental transitive cover.}
With witness lists established, we now formally define an \DITC{}.

\begin{restatable}[Incremental Transitive Cover]{definition}{defGeneralTransitiveCover}
\label{def:general_transitive_cover}
A data structure is an \emph{incremental transitive cover (\DITC{})} data structure if it supports the following operations.
    \begin{itemize}
        \item $\dsTcInit(\dV)$: initializes with finite set of $n$ nodes $\dV$ and sets $\dE \gets \emptyset$.

        \item $\dsTcAdd(F\subseteq \dV\times \dV)$: 
        for $F$ with $F \cap \dE = \emptyset$\footnote{The assumption that $F \cap \dE = \emptyset$ was made to simplify theorem statements. The assumption can be removed by increasing the time of implementing each $\dsTcAdd(\cdot)$ by an additive $O(|F|)$ (by simply maintaining $\dE$ and in $O(|F|)$ time remove the elements that are already in $\dE$). 
        } where no arc in $F$ is a self-loop,
        sets $\dE \gets \dE\cup F$.%
        
        \item $\dsTcCover(S\subseteq \dV)$: outputs graph $G_{\mathrm{out}} = (\dV_{\mathrm{out}},\dE_{\mathrm{out}})$ that is a transitive cover of $S$ in $(\dV,\dE)$.

        \item $\dsTcWit()$: returns a path-structured witness list $L = \{(a_i,b_i,w_i)\}_{i \in [K]} \in \wlistSpace$ where $\dEL(L) \subseteq \dE$ and if $(a,b)$ was output in $\dsTcCover(\cdot)$, then $(a,b) = (a_i,b_i)$ for some $i \in [K]$ and $\dELsub{i}(L)$ was added to $\dE$ before $(a,b)$ was output.
    \end{itemize}
\end{restatable}

An \DITC{} supports $\dsTcAdd(\cdot)$ for adding arcs and $\dsTcCover(\cdot)$ for computing transitive covers, as described earlier. In addition, it supports the $\dsTcWit()$ operation which returns a path-structured witness list corresponding to the arcs added in  $\dsTcAdd(\cdot)$ and additional transitive arcs computed in handling $\dsTcCover(\cdot)$ operations. In particular, the arcs in the witness list $L$ must returned by  $\dsTcWit()$ must be arcs in the associated  \DITC{} instance and any arc $(a,b)$ output in $\dsTcCover()$ must appear in the output witness list with the arcs that $\walk_L(a,b)$ uses having been added in $\dsTcAdd(\cdot)$ before $(a,b)$ was output (this encoded through the condition that ``$\dELsub{i}(L)$ was added to $\dE$ before $(a,b)$ was output'' in \Cref{def:general_transitive_cover}). These conditions allow flow on any $(a,b)$ arc output by $\dsTcCover()$ to be mapped to a flow on $a$--$b$ path on arcs that were added to $\dsTcAdd(\cdot)$ earlier. We perform this transfer through the procedure $\WitRoute(\cdot)$ which is described and analyzed in \Cref{sec:witroute}; this self-contained procedure efficiently transfer flows on the transitive arcs of a witness list to a flow using only the arcs of the witness list.%

\section{Arc Types and Valid Iterates}
\label{sec:maxflow_framework}

Throughout, we use $n\defeq|\gVi|$ and $m\defeq|\gEi|$ for a fixed input instance $(\gGi,\ui)$ with $\gGi=(V,\gEi)$. 
 During the algorithm, new arcs may be added to $\gGi$. In particular, our `working instance' will be an admissible extension $(G,u)=(\gGi,\ui)\oplus(\Fone,\Ftwo)$.

\subsection{Abundant and free arcs}
\label{sec:abundant}

Given an  instance $(G,u)$ and
 $\varepsilon\ge 0$, let $f$ be  an $\varepsilon$-optimal flow $f$ in $(G,u)$. Our next definitions are motivated by the following observation: if $\res{f}_e>\varepsilon$ for some $e\in\gE$, then \Cref{lem:res-value} implies that $e$ is not an element of any $s$--$t$ min-cut. If both $\res{f}_e,\res{f}_{\rev{e}}>\varepsilon$, then neither $e$ nor $\rev{e}$ may be an element of  any $s$--$t$ min-cut. Equivalently, even if we increase the capacities of these arcs to $\infty$, it does not affect the maximum flow value. %

 Thus, arcs with large residual capacities are virtually uncapacitated, and they also impose restrictions on the possible sets of $s$--$t$ min-cuts. 
 We call arcs with residual capacity at least $4n^2\varepsilon$ \emph{abundant};  our algorithm maintains  the property that once an arc becomes abundant, it remains so for the rest of the algorithm.  
 The reason for choosing the higher threshold  $4n^2\varepsilon$ instead of $\varepsilon$ is that whereas the existence of a maximum flow $f^\star$ with $\|f^\star-f\|_\infty\le \varepsilon$ is guaranteed for every $\varepsilon$-optimal $f$ (\Cref{lem:safe-cap}), our algorithm 
 will only guarantee  $\|f^\star-f\|_\infty\le 4n^2\varepsilon$  for the current $\varepsilon$-optimal flow $f$ and the final optimal flow $f^\star$ returned.\footnote{The factor $4n^2$ was chosen to leave enough slack for the flow rerouting arguments. It could be tightened; however, this would not affect the asymptotic running times in our analysis.}

\begin{definition}[Arc Types] \label{def:abundant}
For an $\varepsilon$-optimal flow $f$ in instance $(G,u)$
and $\varepsilon\ge 0$ we define the 
set of \emph{$(f,\varepsilon)$-abundant arcs} $\gEA(f,\varepsilon)\subseteq \gE$ and the set of \emph{$(f,\varepsilon)$-free arcs} $\gEF(f,\varepsilon)\subseteq \gE$ as
\[
\gEA(f,\varepsilon)\defeq \left\{e\in \gE:\, \res{f}_e\ge \ab \varepsilon\right\}
\text{ and }
\gEF(f,\varepsilon)\defeq \left\{e\in \gE:\, e, \rev{e}\in\gEA(f,\varepsilon) \right\}
\text{ where }
\ab\defeq 4n^2
\, .
\]
\end{definition}
When $f$ and $\varepsilon$ are clear from the context, for brevity, we simply refer to $\gEA(f,\varepsilon)$ and $\gEF(f,\varepsilon)$ as the abundant and free arcs respectively. 
As already noted, \Cref{lem:res-value} implies that no $(f,\varepsilon)$-abundant arc is in a min-cut in $(G,u)$. The next proposition asserts a stronger claim. 

\begin{proposition}\label{prop:abundant-min-cut}
Let $(G,u)$ be an instance with an $\varepsilon$-optimal flow $f$, where 
 $\varepsilon\ge 0$
  and let $S$ be an $s$--$t$ cut. 
 If $\res{f}(S,\gV\setminus S)< \ab\varepsilon$, then  $\dout{\gE}{S}\cap \gEA(f,\varepsilon)=\emptyset$. In particular, 
 there is no $s$--$t$ path in $\gEA(f,\varepsilon)$.
 Furthermore,
if $S$ is an $\varepsilon'$-optimal  $s$--$t$ cut for $\varepsilon'\ge 0$, %
 then $\res{f}(S,\gV\setminus S)\le\varepsilon'+\varepsilon$. 
\end{proposition}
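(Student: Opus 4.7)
The proposition has three assertions, and the plan is to dispatch them in the order stated, each using a simple fact: $\res{f}\ge 0$, the max-flow min-cut theorem applied to the residual instance, and the excess identity for $s$--$t$ cuts.

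For the first claim, I will observe that $\res{f}_e = u_e - f_e + f_{\rev{e}} \ge 0$ for every arc, because $f\le u$ and $f\ge 0$. Hence $\res{f}(S,V\setminus S) = \sum_{e\in\dout{E}{S}} \res{f}_e$ is a sum of nonnegative terms and dominates each individual summand. If $\res{f}(S,V\setminus S) < \Gamma\varepsilon$, then every $e\in\dout{E}{S}$ satisfies $\res{f}_e < \Gamma\varepsilon$, so no such $e$ can be abundant. That gives $\dout{E}{S}\cap \gEA(f,\varepsilon) = \emptyset$.

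For the ``in particular'' part, I will produce a cut to which the previous bullet applies. By \Cref{lem:res-value}, $\nu(G,\res{f}) = \nu(G,u) - \val{f} \le \varepsilon$, and by \Cref{thm:mfmc} the residual instance has some $s$--$t$ cut $S^\star$ with $\res{f}(S^\star, V\setminus S^\star) \le \varepsilon < \Gamma\varepsilon$ (using $\Gamma = 4n^2 \ge 1$). Any $s$--$t$ path in $\gEA(f,\varepsilon)$ must contain an arc in $\dout{E}{S^\star}$, but the first claim applied to $S^\star$ rules this out.

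For the final claim, I will rewrite $\res{f}(S,V\setminus S)$ in terms of the cut capacity and the flow value. Expanding,
\[
\res{f}(S,V\setminus S) \;=\; \sum_{e\in\dout{E}{S}} (u_e - f_e + f_{\rev{e}}) \;=\; u(S,V\setminus S) - f(S,V\setminus S) + f(V\setminus S, S),
\]
where in the last term I used that the reverse mapping is a bijection between $\dout{E}{S}$ and $\din{E}{S}$ (by symmetry of $E$). Since $\exc{f}{\cdot}$ is modular and flow conservation holds off $\{s,t\}$ with $s\in S$, $t\notin S$, we get $f(V\setminus S, S) - f(S, V\setminus S) = \exc{f}{S} = \exc{f}{s} = -\val{f}$, so
\[
\res{f}(S,V\setminus S) \;=\; u(S,V\setminus S) - \val{f}.
\]
The $\varepsilon'$-optimality of $S$ and $\varepsilon$-optimality of $f$ give $u(S,V\setminus S) \le \nu(G,u) + \varepsilon'$ and $\val{f} \ge \nu(G,u) - \varepsilon$, and subtracting yields the claimed $\res{f}(S,V\setminus S) \le \varepsilon + \varepsilon'$.

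None of the steps looks technically difficult; the only point where care is needed is the reverse-arc bookkeeping in the excess identity (which is why we maintain symmetric $E$) and noting that $\Gamma>1$ is what lets the max-flow min-cut bound in the residual instance feed into the first claim to rule out abundant $s$--$t$ paths.
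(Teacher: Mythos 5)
Your proof is correct and follows the same route as the paper's, which merely states "the first part is immediate by the definition of abundant arcs, and the second part by the first" before giving the same chain of inequalities for the last part; you've filled in the details, notably spelling out how the residual min-cut plus $\Gamma>1$ feeds the first claim into the "in particular" statement. The only tiny imprecision is the parenthetical "$\Gamma=4n^2\ge 1$": you actually need $\Gamma>1$ (which holds since $\Gamma\ge 4$) so that $\varepsilon<\Gamma\varepsilon$; and, like the paper, this step tacitly assumes $\varepsilon>0$, since for $\varepsilon=0$ the strict inequality fails and indeed $\gEA(f,0)=E$.
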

\begin{proof}
The first part is immediate by the definition of abundant arcs, and the second part by the first. For the last part, note that by the definition of residual capacities, 
\[
\begin{aligned}
\nu(G,u)+\varepsilon'&\ge
u(S,\gV\setminus S)=\res{f}(S,\gV\setminus S)+f(\dout{E}{S})-f(\din{E}{S})\\
&= \res{f}(S,\gV\setminus S)+\val{f}\ge \res{f}(S,\gV\setminus S)+\nu(G,u)-\varepsilon\,.\quad\quad\quad\quad\qedhere
\end{aligned}
\]
\end{proof}

Throughout our algorithm, if $f$ is an $\varepsilon$-optimal flow at a current iteration in $(G,u)$, then for all subsequent iterations $f'$, we will have $|f_e-f_{\rev{e}}-f'_e+f'_{\rev{e}}|\le 2n\varepsilon$ for all $e\in E(G)$. Thus, the following simple lemma
guarantees that once an arc becomes abundant in our algorithm, it will remain abundant in all subsequent iterations; the proof is immediate.
\begin{lemma}\label{lem:stays-abundant}
Let $(G',u')$ be any admissible extension of the instance   $(G,u)$, and $0\le \varepsilon'\le \varepsilon/2$. Let $f$ be an $\varepsilon$-optimal flow in  $(G,u)$ and let $f'$ be a $\varepsilon'$-optimal flow in  $(G',u')$ such that  $|f_e-f_{\rev{e}}-f'_e+f'_{\rev{e}}|\le 2n\varepsilon$ for all $e\in E(G)$. Then, $\gEA(f,\varepsilon)\subseteq \gEA(f',\varepsilon')$. 
\end{lemma}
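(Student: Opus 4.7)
The plan is a direct unfolding of the definition of residual capacity. Fix any $e \in \gEA(f,\varepsilon)$, so that $\res{f}_e = u_e - f_e + f_{\rev{e}} \ge \ab\varepsilon = 4n^2\varepsilon$. Our goal is to establish $\res{f'}_e \ge 4n^2\varepsilon'$ where $\res{f'}$ is taken with respect to the extended capacities $u'$.

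First I would note that, since $(G',u')$ is an admissible extension of $(G,u)$ in the sense of \Cref{def:admissible}, the arcs in $E(G)$ retain their original capacities, so $u'_e = u_e$. Next I would compute
\[
\res{f'}_e - \res{f}_e
= (u_e - f'_e + f'_{\rev{e}}) - (u_e - f_e + f_{\rev{e}})
= (f_e - f_{\rev{e}}) - (f'_e - f'_{\rev{e}}).
\]
By the hypothesis, the absolute value of this quantity is at most $2n\varepsilon$, hence $\res{f'}_e \ge \res{f}_e - 2n\varepsilon \ge 4n^2\varepsilon - 2n\varepsilon$.

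The final step combines this lower bound with the assumption $\varepsilon' \le \varepsilon/2$, which gives $4n^2\varepsilon' \le 2n^2\varepsilon$. Thus it suffices to verify $4n^2\varepsilon - 2n\varepsilon \ge 2n^2\varepsilon$, i.e., $2n(n-1)\varepsilon \ge 0$, which holds trivially for $n \ge 1$. This yields $\res{f'}_e \ge 4n^2 \varepsilon'$, and therefore $e \in \gEA(f',\varepsilon')$, completing the argument. There is no real obstacle here; the only thing to be careful about is that the hypothesis on $|f_e - f_{\rev e} - f'_e + f'_{\rev e}|$ bounds precisely the quantity that governs how the residual capacity of $e$ can change between $f$ and $f'$, independently of any re-routing that occurs through newly added extension arcs.
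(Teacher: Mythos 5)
Your proof is correct and is essentially the argument the authors had in mind; the paper marks this lemma as immediate, and your computation $\res{f'}_e = \res{f}_e + (f_e - f_{\rev e}) - (f'_e - f'_{\rev e}) \ge \ab\varepsilon - 2n\varepsilon \ge \ab\varepsilon'$ (using $u'_e = u_e$ on $E(G)$ and $\varepsilon' \le \varepsilon/2$) is precisely that immediate unfolding. The only micro-remark is that when $u_e = \infty$ the difference $\res{f'}_e - \res{f}_e$ is of the form $\infty - \infty$, but there the conclusion holds outright since $\res{f'}_e = \infty$.
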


For an instance $(G,u)$ and an $\varepsilon$-optimal flow $f$, we let 
$(V,\gEA^\tr(f,\varepsilon))$ denote the transitive closure of $(V,\gEA(f,\varepsilon))$.
 That is, $(i,j)\in\gET(f,\varepsilon)$ if and only if $\gEA(f,\varepsilon)$ contains an $i$--$j$ path. 
 \begin{lemma}\label{lem:add-abundant}
Let $(G,u)$ be instance with an $\varepsilon$-optimal flow $f$ for $\varepsilon\ge 0$. Then, $(G,u)\oplus(\Fone,\Ftwo)$ is an admissible extension of  $(G,u)$ for any  $\Fone\subseteq \gET(f,\varepsilon)$ and $\Ftwo\subseteq\{e\,:\,e,\rev{e}\in \gET(f,\varepsilon)\} $.
\end{lemma}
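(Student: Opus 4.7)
The plan is to unwind the definition of admissible extension and reduce the claim to a single key fact: no arc of $\gEA(f,\varepsilon)$ (and hence no arc of its transitive closure $\gET(f,\varepsilon)$) can leave a minimum $s$--$t$ cut in $(G,u)$.

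First, I would fix an arbitrary minimum $s$--$t$ cut $S$ in $(G,u)$, so that $S$ is $0$-optimal. Applying the last part of \Cref{prop:abundant-min-cut} with $\varepsilon' = 0$, we get $\res{f}(S,\gV\setminus S)\le \varepsilon < \ab\varepsilon$, and then the first part of the same proposition yields $\dout{\gE}{S}\cap \gEA(f,\varepsilon)=\emptyset$. In other words, no abundant arc leaves any min-cut $S$.

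Next, I would promote this from abundant arcs to transitive-closure arcs. Let $(i,j)\in \gET(f,\varepsilon)$ and suppose towards contradiction that $(i,j)$ leaves some min-cut $S$, i.e.\ $i\in S$ and $j\notin S$. By definition of $\gET(f,\varepsilon)$, there is an $i$--$j$ path $i=v_0,v_1,\dots,v_k=j$ with every arc $(v_{\ell-1},v_\ell)$ in $\gEA(f,\varepsilon)$. Since $v_0\in S$ and $v_k\notin S$, there is some index $\ell$ with $v_{\ell-1}\in S$ and $v_\ell\notin S$; the arc $(v_{\ell-1},v_\ell)$ is then an abundant arc leaving $S$, contradicting the previous paragraph. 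Thus no arc of $\gET(f,\varepsilon)$ leaves any min-cut.

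Finally, I would assemble the result. For any $\Fone\subseteq \gET(f,\varepsilon)$, the above applies directly to every arc of $\Fone$. For any $\Ftwo\subseteq\{e:e,\rev{e}\in\gET(f,\varepsilon)\}$, both $e$ and $\rev{e}$ lie in $\gET(f,\varepsilon)$, so neither can leave any min-cut. Hence no arc of $\Fone\cup \Ftwo\cup \rev{\Ftwo}$ leaves a minimum $s$--$t$ cut in $(G,u)$, which is exactly \Cref{def:admissible}. I expect no real obstacle here: the proof is a short chain of definitions plus the already-established \Cref{prop:abundant-min-cut}; the only mild subtlety is remembering to invoke that proposition with $\varepsilon'=0$ so that the slack $\varepsilon$ in $\res{f}(S,\gV\setminus S)\le \varepsilon$ is still safely below the abundance threshold $\ab\varepsilon$.
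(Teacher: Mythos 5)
Your proof is correct and takes essentially the same route as the paper's: both reduce to showing no arc of $\Fone\cup\Ftwo\cup\rev{\Ftwo}$ leaves a min-cut, and both observe that a transitive-closure arc leaving a cut would force an abundant arc to leave it, contradicting \Cref{prop:abundant-min-cut}. Your version just spells out the "abundant path crossing the cut" step and the $\varepsilon'=0$ instantiation, which the paper leaves implicit.
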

\begin{proof}
The claim follows by showing that for 
any $s$--$t$ min-cut $S$ in $(G,u)$, $\dout{\Fone}{S}=\emptyset$ and $\dtot{\Ftwo}{S}=\emptyset$. 
 For a contradiction, assume $e\in \Fone$ leaves $S$; by the definition of the transitive closure, there must be an abundant arc $e'\in \gEA(f,\varepsilon)$ also leaving $S$. This contradicts \Cref{prop:abundant-min-cut}. The same argument holds for arcs in $\Ftwo$.
\end{proof}

\paragraph{Free components.} 
We now define a key concept, \emph{$(f,\varepsilon)$-free components}.
These will correspond to a coarsening sequence of partitions 
over the iterations in the algorithm, 
with the guarantee that every $s$--$t$ min-cut is the union of partition components.
\begin{definition}[Free components]\label{def:free-comps}
Let $f$ be an $\varepsilon$-optimal flow in
the instance
 $(G,u)$ for $\varepsilon\ge0$. We say that $P\subseteq V$ is a 
\emph{$(f,\varepsilon)$-free component} if
\begin{itemize}
\item $P$ is the set of nodes reachable on an abundant path in $\gEA(f,\varepsilon)$ from $s$; or
\item  $P$ is the set of nodes that can reach $t$ on an abundant path in $\gEA(f,\varepsilon)$; or
\item $P$ is a connected component of $\gEF(f,\varepsilon)$ disjoint from the above two components.
\end{itemize}
We denote this partition of $V$ as 
\[
\CP_{f,\varepsilon}\defeq\{P_i\,:\, i\in \roots\}\, ,
\]
where the index set $\roots\subseteq V$ will be a set of \emph{roots}, picking one representative node from each component such that $i\in P_i$. We always let $s$ and $t$ be the root of the first and second components listed above, respectively.
\end{definition}
Note that \Cref{prop:abundant-min-cut} guarantees that  $P_s\neq P_t$.
 For a node $j\in V$, we let $\Root(j)$ denote the root of the component containing $j$. Thus, $j\in P_{\Root(j)}$.
 Note that the indexing is not consecutive. When two components $P_i$ and $P_j$ are merged during the algorithm into a new component  $P_i\cup P_j$, the new component will have root node $i$ or $j$ and will  be denoted as   $P_i$ or $P_j$.
When clear from the context, we simply refer to the $P_i$ sets as free components.
We let $\gEP(f,\varepsilon)\subseteq \gE$ denote the set of arcs between different components. In particular, $\gEP(f,\varepsilon)\cap \gEF(f,\varepsilon)=\emptyset$.

\subsection{Valid iterates and essential arcs}
We now formalize properties of tuples $(f,\nr,
\varepsilon,\roots)$ of a flow, safe capacity vector, accuracy bound, and root set appearing in the algorithm. 
 We define the set of \emph{gap arcs} as the set of arcs between different components
  where the residual capacity is greater than the safe capacity by at least $2\ab^{-2}\varepsilon$ in both directions: 
\[
\begin{aligned}
\gEStr(f,\nr,\varepsilon)\defeq&\left\{e\in \gEP(f,\varepsilon)\, :\, 
 \res{f}_e\ge \nr_e+2\ab^{-2}\varepsilon\, \, \mbox{and}\, \,  \res{f}_{\rev{e}}\ge \nr_{\rev{e}}+2\ab^{-2}\varepsilon\right\}\, .
\end{aligned}
\]
In our algorithm, all such arcs will become free arcs in the subsequent iteration. 

For roots $i\in\roots$, the arcs $(s,i)$ and $(i,t)$ will play a special role  in the algorithm: we use them to restore flow balance of  component $P_i$ after arc adjustments. (Recall assumption~\eqref{eq:i-s-t} that all such arcs are assumed to be present in the graph.) We call these \emph{boundary arcs}:
\[
\gEbd(\roots)\defeq \left\{(s,i),(i,s),(t,i),(i,t)\, :\, i\in\roots\setminus\{s,t\}\right\}\, .\]
 For a subset $\roots'\subseteq\roots$, we will use $\gEbd(\roots')$ to denote the corresponding subset.

The invariants in the following definition will be satisfied at the beginning of each iteration. Throughout,  $(G, u)$ will be a  valid extension of the input instance $(\gGi, \ui)$ and  may include extension arcs.

\begin{definition}[Valid Iterate] \label{def:proper}
We say that $(f,\nr,
\varepsilon,\roots)$ is a \emph{valid iterate} in instance $(G,u)$ if
\begin{enumerate}[(A)]
\item\label{it:eps-opt} $f$ is an $\varepsilon$-optimal feasible flow in $(G,u)$;
\item\label{it:safe}  $\nr$ are  safe capacities for $f$. %
\item\label{it:bounds} For every arc $e\in \gEP(f,\varepsilon)$, at least one of the following hold:  $\res{f}_e=0$;  $\res{f}_{\rev{e}}=0$; $e\in\gEStr(f,\nr,\varepsilon)$;  or $e\in\gEbd(\roots)$.
\item\label{it:oneway}  $\res{f}_{si}=0$ or $\res{f}_{it}=0$ for $i\in\roots$.
\item\label{it:roots} $\roots$ forms a set of roots of the free components, i.e., every component of $\CP_{f,\varepsilon}$ contains exactly one node from $\roots$, and $s,t\in \roots$.
\end{enumerate}
\end{definition}
Property \eqref{it:bounds} is the following strong requirement: for any pair of non-boundary arcs $e,\rev{e}$ between different free components, the flow should be either at a capacity bound, or it should be a gap arc, that is, the safe capacity should be less than the residual capacity on both sides by at least $2\ab^{-2}\varepsilon$. As shown below in \Cref{lem:strict-to-free}, all gap arcs will become free in the subsequent iteration.
\begin{definition}[Essential arcs and roots]\label{def:pot-function}
Let $f$-be an $\varepsilon$-optimal flow in $(G,u)$ for $\varepsilon\ge 0$. 
We say that $e\in E$ is an \emph{essential arc} if $e\in \gEP(f,\varepsilon)$ and 
\[
\essentialthreshold\le \res{f}_e< \ab\varepsilon\quad \mbox{or}\quad \essentialthreshold\le \res{f}_{\rev{e}}< \ab\varepsilon\, .
 \]
  Given the root set $\roots$, $i\in\roots$ is  an \emph{essential root} if there is an essential arc incident to the component $P_i$, i.e., there exists an essential arc $e=(j,k)$ with 
   $i=\Root(j)$.
\end{definition}
The following lemma is immediate  by the definition of essential arcs, part~\eqref{it:bounds} in \Cref{def:proper}, and by observing that $u_e+u_{\rev{e}}=\res{f}_e+\res{f}_{\rev{e}}$ holds for every arc.
\begin{lemma}\label{lem:improved-opt-1}
 Let
$(f,\nr,\varepsilon,\roots)$ be a valid iterate in $(G,u)$. 
 For every essential arc  $e$, either $e\in \gEStr(f,\varepsilon,r)$, or $e\in \gEbd(\roots)$, or 
$\essentialthreshold\le u_e+u_{\rev{e}}< \ab\varepsilon$.
\end{lemma}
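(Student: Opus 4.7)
The plan is to apply a direct case analysis using property~\eqref{it:bounds} of \Cref{def:proper}, combined with the elementary identity $u_e+u_{\rev{e}}=\res{f}_e+\res{f}_{\rev{e}}$, which follows immediately from $\res{f}_e=u_e-f_e+f_{\rev{e}}$ and the symmetric expression for $\res{f}_{\rev{e}}$.

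Since $e$ is essential, by definition $e\in \gEP(f,\varepsilon)$, so property~\eqref{it:bounds} applies and gives exactly four subcases: $\res{f}_e=0$, $\res{f}_{\rev{e}}=0$, $e\in\gEStr(f,\nr,\varepsilon)$, or $e\in\gEbd(\roots)$. The latter two subcases match the first two alternatives in the conclusion, and nothing further is needed. It remains to handle the subcases $\res{f}_e=0$ and $\res{f}_{\rev{e}}=0$. By the definition of essentiality, at least one of $\res{f}_e,\res{f}_{\rev{e}}$ lies in the interval $[\essentialthreshold,\ab\varepsilon)$, and since $0$ does not lie in this interval, the nonzero one must. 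Thus if $\res{f}_e=0$ then $\res{f}_{\rev{e}}\in[\essentialthreshold,\ab\varepsilon)$, and combined with the identity above we get $u_e+u_{\rev{e}}=\res{f}_{\rev{e}}\in[\essentialthreshold,\ab\varepsilon)$, which is the third alternative. The case $\res{f}_{\rev{e}}=0$ is entirely symmetric.

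There is essentially no obstacle here; the lemma is a bookkeeping consequence of \Cref{def:proper}\eqref{it:bounds} and the definition of essential arcs. The only subtlety worth flagging is the observation that when one residual is zero, essentiality forces the opposite residual into the range $[\essentialthreshold,\ab\varepsilon)$, from which the bound on $u_e+u_{\rev{e}}$ follows without further calculation.
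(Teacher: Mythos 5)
Your proof is correct and takes essentially the same route as the paper, which simply declares the lemma immediate from \Cref{def:proper}\eqref{it:bounds}, the definition of essential arcs, and the identity $u_e+u_{\rev{e}}=\res{f}_e+\res{f}_{\rev{e}}$. You spell out the case analysis a bit more explicitly, but the content and structure are the same.
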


\subsection{Series of valid iterates}
Our next definition formalizes the proximity property subsequent valid iterates will satisfy.

\begin{definition}[Valid successors] \label{def:successor}
Let $(G',u')=(G,u)\oplus (\Fone,\Ftwo)$ be an admissible extension of $(G,u)$.
Let $(f,\nr,\varepsilon,\roots)$ be a valid iterate in $(G,u)$, and let $(f',\nr',\varepsilon',\roots')$ be a valid iterate in $(G',u')$. We say that $(f',\nr',\varepsilon',\roots')$ is a \emph{valid successor} of $(f,\nr,\varepsilon,\roots)$ if the following hold:
\begin{enumerate}[(A)]
\item\label{it:eps-decrease} 
$\varepsilon'\le \ab^{-3}\varepsilon$.
\item\label{it:arc-prox} For every $e\in E(G)$,
$f'_e-f_e+f_{\rev{e}}\le \min\{\nr_e,\varepsilon\}+\ab^{-2}\varepsilon<2\varepsilon$,
and
for every $e\in E(G')\setminus E(G)$, $f'_e\le 3\varepsilon$.
\end{enumerate}
\end{definition}
The above definition guarantees that all gap arcs will become free in the next iterate, as shown next. %

\begin{lemma}\label{lem:strict-to-free}
Let  $(f,\nr,\varepsilon,\roots)$ be a valid iterate in $(G,u)$.
Let $(G',u')$ be an admissible extension of $(G,u)$ with a valid iterate $(f',\nr',\varepsilon',\roots')$ that is a valid successor of $(f,\nr,\varepsilon,\roots)$.
\begin{enumerate}[(i)]
\item\label{it:free-free} $\gEA(f,\varepsilon)\subseteq \gEA(f',\varepsilon')$ and $\gEF(f,\varepsilon)\subseteq \gEF(f',\varepsilon')$. 
\item\label{it:gap-free} Every arc $e\in \gEStr(f,\nr,\varepsilon)$ is $(f',\varepsilon')$-free.
\end{enumerate} 
\end{lemma}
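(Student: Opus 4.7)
The plan is to establish both claims by carefully comparing the residual capacities $\res{f}_e$ and $\res{f'}_e$ via the arc-proximity bound in part~\eqref{it:arc-prox} of \Cref{def:successor}; all relevant arcs lie in $E(G)$, so the bound for new arcs will not be needed. The basic identity I will use throughout is
\[
\res{f'}_e - \res{f}_e = (f_e - f_{\rev{e}}) - (f'_e - f'_{\rev{e}}) = -(f'_e - f_e + f_{\rev{e}}) + f'_{\rev{e}}\, .
\]

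For part~\eqref{it:free-free}, I plan to invoke \Cref{lem:stays-abundant}, which requires $\varepsilon' \le \varepsilon/2$ and $|f_e - f_{\rev{e}} - f'_e + f'_{\rev{e}}| \le 2n\varepsilon$ for every $e \in E(G)$. The first condition is immediate since $\varepsilon' \le \ab^{-3}\varepsilon$ and $\ab = 4n^2 \ge 2$. For the second, I will apply the valid-successor inequality $f'_e - f_e + f_{\rev{e}} < 2\varepsilon$ to both $e$ and $\rev{e}$ (which lies in $E(G)$ by symmetry of the instance); together with $f'_e, f'_{\rev{e}} \ge 0$, the identity above yields both a lower and an upper bound $|\res{f'}_e - \res{f}_e| \le 2\varepsilon \le 2n\varepsilon$. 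Preservation of free arcs then follows immediately, since by definition an arc is free iff both $e$ and $\rev{e}$ are abundant.

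For part~\eqref{it:gap-free}, let $e \in \gEStr(f, \nr, \varepsilon)$, so that $\res{f}_e \ge \nr_e + 2\ab^{-2}\varepsilon$ and analogously for $\rev{e}$. The key step is to use the tighter form of the proximity bound, $f'_e - f_e + f_{\rev{e}} \le \nr_e + \ab^{-2}\varepsilon$, which together with the identity above and $f'_{\rev{e}} \ge 0$ gives $\res{f'}_e \ge \res{f}_e - \nr_e - \ab^{-2}\varepsilon \ge \ab^{-2}\varepsilon$. Since $\varepsilon' \le \ab^{-3}\varepsilon$ by part~\eqref{it:eps-decrease} of \Cref{def:successor}, we have $\ab\varepsilon' \le \ab^{-2}\varepsilon \le \res{f'}_e$, so $e$ is $(f',\varepsilon')$-abundant; applying the same computation to $\rev{e}$ shows $\rev{e}$ is $(f',\varepsilon')$-abundant as well, hence $e$ is $(f',\varepsilon')$-free.

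What I expect to be the main subtlety is choosing the right form of the proximity bound at each point: for~\eqref{it:free-free} the loose $< 2\varepsilon$ bound suffices, but for~\eqref{it:gap-free} one must exploit the tighter bound involving $\nr_e$, so that the $2\ab^{-2}\varepsilon$ slack built into the definition of a gap arc precisely absorbs the $\nr_e + \ab^{-2}\varepsilon$ perturbation and leaves $\ab^{-2}\varepsilon$ of residual capacity surviving on both sides; this then comfortably dominates $\ab\varepsilon'$ thanks to the factor-$\ab^{3}$ decrease of $\varepsilon$ in~\eqref{it:eps-decrease}. Beyond this bookkeeping, the argument is essentially a direct calculation from the definitions.
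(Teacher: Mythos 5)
Your proof is correct and takes essentially the same route as the paper: part~\eqref{it:free-free} follows from Lemma~\ref{lem:stays-abundant} (whose hypotheses you usefully verify explicitly, whereas the paper leaves this step implicit), and part~\eqref{it:gap-free} is the same direct calculation combining the $\nr_e$-tightened proximity bound $f'_e - f_e + f_{\rev{e}}\le \min\{\nr_e,\varepsilon\}+\ab^{-2}\varepsilon$ with the $2\ab^{-2}\varepsilon$ slack in the gap-arc definition and the decrease $\varepsilon'\le\ab^{-3}\varepsilon$. The one cosmetic difference is that the paper first dispatches the case $u_e=\infty$ in part~\eqref{it:gap-free}; your subtraction chain $\res{f'}_e\ge\res{f}_e-\nr_e-\ab^{-2}\varepsilon$ would need a word of care when $\res{f}_e$ (and possibly $\nr_e$) is infinite, but the conclusion is unchanged since $\res{f'}_e=\infty$ directly in that case.
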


\begin{proof}
Part {\em(i)} is immediate by \Cref{lem:stays-abundant}.
For part {\em(ii)}, consider an arc $e\in \gEStr(f,\nr,\varepsilon)$. If $u_e=\infty$, then we immediately have $e\in \gEA(f',\varepsilon')$. If $u_e<\infty$, then by \Cref{def:successor}\eqref{it:arc-prox} and by the definition of gap arcs, we see that
\[
\res{f'}_e\ge 
\res{f}_e-\nr_e-\ab^{-2}\varepsilon\ge \ab^{-2}\varepsilon\ge\ab\varepsilon'\, .\]
Thus, $e\in \gEA(f',\varepsilon')$. A similar argument shows that $\rev{e}\in \gEA(f',\varepsilon')$, and consequently, $e\in \gEF(f',\varepsilon')$, as required.
\end{proof}

\section{The Maximum Flow Meta-Algorithm}\label{sec:max-flow}

\SetCommentSty{algCommentFont}

\begin{algorithm}[htb!]
    \caption{\textsc{Max-Flow}}\label{alg:max-flow}
    \KwData{An instance $(\gGi,\ui)$ with $\gGi=(\gV,\gEi)$ and $\nu(\gGi,u)<\infty$.}
    \KwResult{A maximum flow $f^\star$ in  $(\gGi,\ui)$.}
    $(G,u)\gets (\gGi,\ui)$ ; $E\gets \gEi$ ; 
    $\Fe\gets\emptyset$ ; $\Fs\gets\emptyset$ \; %
    $(f,\nr,\varepsilon,\roots)\gets\Initialize$ \tcp*{Obtain initial valid iterate}  
    ${\cal T}_i\gets(\{i\},\emptyset)$ for all $i\in V$ \;
   $\dsTcInit(V)$ ; $\dsTcAdd(\gEA)$    \tcp*{Initialize transitive cover data structure} 
    \While{$\varepsilon>0$}{
       $\Eaux\gets\left\{e\in \gEP\,:\,  \essentialthreshold\le \res{f}_e< \ab\varepsilon\right\}$\tcp*{Essential arcs}
   $\Vess\gets$ roots of components with essential arcs incident\; 
   \lIf{$\{e\in \gEP\,:\,\res{f}_e<\essentialthreshold\}\neq\emptyset$}{$\delta_1\gets n^2\max\left\{\res{f}_e\,:\, e\in \gEP, \res{f}_e<\essentialthreshold\right\}$}\lElse{$\delta_1\gets 0$ }
      $(\cG,\cu,H)\gets\CreateCompact(\Vess,\Eaux)$ \;
    \If{$s,t\in\Vess$}{
   $(y,e^\star)\gets \approxFlow(\cG,\cu,\Gamma^{5})$\tcp*{see Definition~\ref{def:approx-solver}}
   $\delta_2\gets |E(\cG)| \cu^y_{e^\star}$ \;
        \lFor{$e\in \Eaux$}{$\SendFlow(e,y_e)$ \tcp*[f]{Add $y$ to flow} \label{line:sendflow}}
    \lFor{$e=(i,j)\in H$}{
        $E\gets E\uplus\{e,\rev{e}\}$ ;
        $u_e\gets\infty$ ; $u_{\rev{e}}\gets 0$ ; $f_e\gets y_e$ ; $f_{\rev{e}}\gets 0$  \label{line:sendflow2}}
   }%
   \lElse{$\delta_2\gets 0$}
       $\delta\gets\delta_1+\delta_2$ \tcp*{current flow is $\delta$-optimal}
    $\PostProcess(\delta,(\Eaux\setminus\gEbd)\uplus H)$ \;
    $E\gets E\setminus\{e,\rev{e}\,:\,e\in H, f_e=0\}$ \tcp*{Keep only new arcs with positive flow} 
    $\Fe\gets \Fe\cup (H\cap\supp(f))$ \;  

      $\varepsilon\gets 2n^2\delta$ \;
        Update $\gEA$, $\gEF$, $\CP$, $\gEP$ and $\roots$ \; 
    $J\gets$ set of newly abundant arcs ;
    $\dsTcAdd{(J\setminus H)}$   \tcp*{Update transitive cover} 
  \lFor{$k$ removed from $\roots$}{$\RemoveRoot(k)$}
        }
    $\RerouteS(\Fs)$ \;
    $L\gets\dsTcWit(\cdot)$ \;
    $h\gets \WitRoute(L,f|_{\gEA},\rev{H})$ \tcp*{Map flow back to original arcs} 
     \lFor{$e\in\gEi$}{
        $f^\star_e\gets h_e$ if $e \in \gEA$ and $f^\star_e\gets f_e$ if $e \notin \gEA$}
  \Return{$f^\star$}
\end{algorithm}

Algorithm~\ref{alg:max-flow} takes a bounded input instance $(\gGi,\ui)$ with $\gGi=(\gV,\gEi)$.  In every iteration it maintains an admissible extension $(G,u)=(\gGi,\ui)\oplus (\Fe,\Fs)$; we let $E=E(G)$. 
Here, $\Fe$ is a set of newly added arcs with $u_e=\infty$, $u_{\rev{e}}=0$; these will be called  \emph{extension arcs}. The arcs $\Fs$ have $u_e=u_{\rev{e}}=\infty$ and will be called \emph{shortcut arcs}; their role will be explained below. The node set remains $V(G)=V$ throughout.

 At the beginning of every iteration, we have a valid iterate $(f,\nr,\varepsilon,\roots)$. The iterate at the beginning of every iteration will be a valid successor of the iterates from all previous iterations.
A crucial subroutine is the approximate flow solver $\approxFlow(\cdot)$ as in Definition~\ref{def:approx-solver}.

Throughout, we maintain the partition $\CP=\CP(f,\varepsilon)$, 
the arc sets $\gEP=\gEP(f,\varepsilon)$ and $\gEA=\gEA(f,\varepsilon)$.
We also maintain  pointers  $\Root(i)\in\roots$ such that $i\in P_{\Root(i)}$.
In accordance with \Cref{lem:strict-to-free},
 the set of abundant and free arcs may only be extended, and consequently,  $\CP$ may only become coarser. 

 In each of the components, we maintain a spanning subgraph  ${\cal T}_i$ formed by abundant arcs. If $i\notin\{s,t\}$, then ${\cal T}_i$ will be a bidirected spanning tree formed by abundant arcs. That is,  $(j,k)\in {\cal T}_i$  if and only if  $(k,j)\in {\cal T}_i$, and the pairs of reverse arcs form a spanning tree. For $i=s$, ${\cal T}_s$  contains an out-arboresence of abundant arcs from $s$, and the reverse abundant arcs of some of these arcs.
 Similarly, for  $i=t$, ${\cal T}_t$  contains an in-arboresence of abundant arcs to $t$, and the reverse abundant arcs of some of the arcs.

In Section~\ref{sec:update-abundant}, we show how the components and spanning subgraphs can be efficiently maintained throughout the algorithm in total time $\tilde O(m)$.

 We also maintain a transitive cover data structure on $(V,\gEA)$ as in Definition~\ref{def:general_transitive_cover}  with subroutines $\dsTcInit(\cdot)$, $\dsTcAdd(\cdot)$, $\dsTcCover(\cdot)$, and $
\dsTcWit(\cdot)$. These subroutines will be given in Section~\ref{sec:data-structures}. We will use different implementations for different subclasses of graphs. The subroutine $\WitRoute{}$ is described in Section~\ref{sec:witroute}.

We start by explaining some additional concepts and invariants used in the algorithm, and then give a high level overview. Section~\ref{sec:main-alg-subroutines} presents the main subroutines.

\paragraph{Shortcut arcs.}
Throughout, for each node $i\notin\roots$, we keep arcs $(i,\Root(i))$ in the shortcut arc set $\Fs$ with $u_e=u_{\rev{e}}=\infty$. These arcs enable sending flow directly between a node and the root node of its component. Adding them does not change the flow value of the instance, since they are inside free components (see Lemma~\ref{lem:add-abundant}). Moreover, these arcs will not be added to the transitive cover data structure, and hence they do not need to be consistent with the specific graph structure such as bounded tree-depth.

These new arcs will be added when updating the components $\CP_{f,\varepsilon}$. When two components $P_i$ and $P_j$ are merged, and $|P_i|\ge |P_j|$, then we pick $i$ to be the root of $P_i\cup P_j$, unless $j\in\{s,t\}$. 
 We add new infinite capacity $(k,i)$ and $(i,k)$ arcs for all $k\in P_j$. The size of the component containing a node $k$ doubles each time $\Root(k)$ changes until $\Root(k)\in\{s,t\}$, after which point it will not change anymore. Consequently,  $|\Fs|=O(n\log n)$ throughout the algorithm.
\paragraph{Algorithm overview.}
The algorithm starts with the initialization subroutine 
$\Initialize$ that returns an initial valid iterate
$(f,r,\varepsilon,\roots)$ with $\roots=V$. Note that simply setting $f\equiv \0$  may violate Definition~\ref{def:proper}\eqref{it:bounds}. We define the initial flow such that $\res{f}_e=0$ or $\res{f}_{\rev{e}}=0$ for all arcs not incident to $s$ or $t$. This, as well as the other subroutines, are described in \Cref{sec:main-alg-subroutines}.
Throughout, the instance $(G,u)$ and valid iterate $(f,r,\varepsilon,\roots)$ are global variables that can be accessed and changed in the subroutines.%

The main while cycle of the algorithm runs until $\varepsilon=0$. At this point, $f$ is a maximum flow in the current instance $(G,u)$. We then transform it to a maximum flow in the input instance $(\gGi,\ui)$ as follows. First, 
 $\RerouteS(\Fs)$ reroutes the flow from the shortcut  arcs to the spanning subgraphs ${\cal T}_i$ inside each component. We call $\dsTcWit(\cdot)$ to create a path-structured witness list $L$ with $\dEL(L) \subseteq \gEA$ (see \Cref{def:wit_list}). The subroutine $\WitRoute(.)$ uses this to reroute the flow from the extension arcs to original arcs. We thus obtain a maximum flow $f^\star$ in the input instance $(\gGi,\ui)$. 

We now describe the main while cycle. We
 let $\Eaux$ and $\Vess$ denote the sets of essential arcs and essential roots as in Definition~\ref{def:pot-function}. We define $\delta_1$ as $n^2$ times the largest residual  arc capacity below the essential threshold $\essentialthreshold$.

The subroutine $\CreateCompact(\Vess,\Eaux)$ returns
an auxiliary instance $(\cG,\cu)$ along with   a special 
 arc set $H\subseteq E(\cG)$. 
We 
 call the transitive cover data structure to obtain a transitive cover  $(V',H)$ of $\Vess$. The node set will be $\cV=V'\cup V(\Eaux)$, i.e., also including the endpoints of all essential arcs. 
 The arc set of $\cG$ comprises of $H$, $\Eaux$, as well as the shortcut arcs between non-root nodes $i\in \cV$ and the corresponding $\Root(i)$.

We call the approximate solver $\approxFlow(\cG,\cu,\ab^5)$ to obtain an approximate flow $y$ and $e^\star$ such that $y$ is $\delta_2$-optimal in $(\cG,\cu)$ and $\delta_2\le \ab^{-5}\nu(\cG,\cu)$ for $\delta_2=|E(\cG)|{\cu}^{f}_{e^\star}$.
We (temporarily) add the arc set $H$ to our instance, and add $y$
to the original flow $f$. In case $s$ or $t$ was not in $\Vess$, we skip these steps,  keep $f$ unchanged, and set $\delta_2=0$. In both cases, we set $\delta=\delta_1+\delta_2$; the current flow $f$ will be $\delta$-optimal.

We now call $\PostProcess(\delta,(\Eaux\setminus\gEbd)\uplus H)$. This is needed in order to restore the invariants in Definition~\ref{def:proper}. In particular, for  every arc $e\in (\Eaux\setminus\gEbd)\uplus H$, we change the flow and set safety capacities to satisfy either $\res{f}_e=0$, or $\res{f}_{\rev{e}}=0$, or $\res{f}_e>\nr_e+\delta$ and $\res{f}_{\rev{e}}>\nr_e+\delta$.
We  set the new value $\varepsilon=2n^2\delta$.

We then remove all new arcs $e\in H$ from $E$ that do not carry flow, i.e., $f_e=0$. We update the extension arc set $\Fe$ by adding $H\cap\supp(f)$ as the set of new arcs kept in the instance.
We  update the set of abundant and free arcs, and free components. In the transitive cover data structure, we add all new abundant arcs outside $H$ (note that every $e\in H$ was itself returned by the data structure). 
The details of how these can be efficiently maintained are deferred to  Section~\ref{sec:update-abundant}.
For all nodes $k$ that are removed from $\roots$, we call the subroutine $\RemoveRoot(k)$. Let $j$ be the new root of $k$. We add  new shortcut arcs $(i,j)$  to $\Fs$ for every node $i$ whose root changed from $k$ to $j$. If $0<\res{f}_{sk}<2\varepsilon$ then  $(s,k)$ gets saturated, and if $0<\res{f}_{kt}<2\varepsilon$ then $(k,t)$ gets saturated. Note that arcs with $\res{f}_{sk}>2\varepsilon$ or $\res{f}_{kt}>2\varepsilon$ will become gap arcs in the next iteration.

\subsection{Main subroutines}
\label{sec:main-alg-subroutines}

\begin{algorithm}[p!]
    \caption{Subroutines in Algorithm~\ref{alg:max-flow}}\label{alg:subroutines}
     \myproc{$\SendFlow(e,\alpha)$}{
        $\alpha'\gets\min\{f_{\rev{e}},\alpha\}$ \;
        $f_{\rev{e}}\gets f_{\rev{e}}-\alpha'$ ; $f_e\gets f_{\rev{e}}+\alpha-\alpha'$
        }
         \myproc{$\Saturate(e=(i,j))$}{
        $\delta\gets \res{f}_e$ ; $f_e\gets u_e$ ; $f_{\rev{e}}\gets 0$ \;
        $p\gets\Root(i)$ ; $q\gets\Root(j)$ \;
        \lIf{$p\neq i$}{$\SendFlow((p,i),\delta)$}
        \lIf{$q\neq j$}{$\SendFlow((j,q),\delta)$}
        \lIf{$p\notin \{s,t\}$}{
            $\SendFlow((s,p),\min\{\delta,\res{f}_{sp}\})$ and $\SendFlow((t,p),\max\{0,\delta-\res{f}_{sp}\})$            }
        \lIf{$q\notin \{s,t\}$}{
            $\SendFlow((q,t),\min\{\delta,\res{f}_{qt}\})$ and $\SendFlow((q,s),\max\{0,\delta-\res{f}_{qt}\})$    
            }
        }

    \myproc{$\CreateCompact(\Vess,\Eaux)$}{
        $(V',H)\gets\dsTcCover(\Vess)$ \;
        $\cV\gets V(\Eaux)\cup V'$ \;
       $\cE\gets\Eaux$ \;%
       \lFor{$e\in\cE$}{
        $\cu_{e}\gets \nr_e$ }
       \lFor{$e\in H$}{ 
       $\cE\gets \cE\uplus \{e,\rev{e}\}$ ;
       $\cu_e\gets\infty$ ; $\cu_{\rev{e}}\gets 0$ }
       \For{$i\in V(\Eaux)\setminus V'$}
       {$\cE\gets\cE\uplus\left\{(i,\Root(i)),(\Root(i),i)\right\}$ ; $\cu_{i\Root(i)}\gets\infty$ ; $\cu_{\Root(i)i}\gets\infty$ }
        \Return{$(\cG,\cu,H)$ where $\cG = (\cV,\cE)$.}
        }

   \myproc{$\PostProcess(\delta,E')$}{
        \For{$e\in E'$}{
         \If{$0<\res{f}_{\rev{e}}<2\delta$ and $\res{f}_{{e}}>0$}
         {$\Saturate(\rev{e})$ \;
            $\nr_e\gets \min\{\res{f}_e,3\delta\}$ ; $\nr_{\rev{e}}\gets 0$
         }
         \lElse{$\nr_e\gets\min\{\res{f}_e,\delta\}$ }
        }
 }   

\end{algorithm}

\begin{algorithm}[htb!]
    \caption{Additional subroutines in Algorithm~\ref{alg:max-flow}}\label{alg:subroutines2}

  \myproc{$\RemoveRoot(k)$}{ %
  $j\gets \Root(k)$ \tcp*{new root}
  \For{$i$ whose root changed from $k$ to $j$}{
    $E\gets E\cup\{(i,j),(j,i)\}$ ; $\Fs\gets \Fs\cup\{(i,j)\}$ \;
       $u_{ij},u_{ji},r_{ij},r_{ji}\gets\infty$ ; $f_{ij},f_{ji}\gets0$  \;
}
     \lIf{$0<\res{f}_{sk}<2\varepsilon$}{$\Saturate((s,k))$}
\lIf{$0<\res{f}_{kt}<2\varepsilon$}{$\Saturate((k,t))$}
\lIf{$j\notin\{s,t\}$}{
  Join the tree ${\cal T}_k$ to ${\cal T}_j$ by a pair of abundant arcs between them}
  \lIf{$j=s$}{Merge  ${\cal T}_s$ and ${\cal T}_k$ by adding an abundant arc from ${\cal T}_s$ to ${\cal T}_k$ }
    \lIf{$j=t$}{Merge  ${\cal T}_t$ and ${\cal T}_k$ by adding an abundant arc from ${\cal T}_k$ to ${\cal T}_t$ }
 }

  \myproc{$\RerouteS(F)$}{ %
\lFor{$i\in V$}{$d(i)\gets 0$}
\For{$e=(i,j)\in F$}{$d(i)\gets d(i)+f_e-f_{\rev{e}}$ \; $d(j)\gets d(j)-f_e+f_{\rev{e}}$ \;}
\For{$k\in\roots$}{
  Select a leaf $i$ of the underlying undirected tree of ${\cal T}_k$ \;
 \If{$d(i)>0$ and $k\neq s$}{Let $e=(i,j)$ be an outgoing arc in  ${\cal T}_k$ \; $f_e\gets f_e+d(i)$ ; $d(j)\gets d(j)+d(i)$ ;}
\lIf{$d(i)>0$ and $k=s$}{$f_{is}\gets f_{is}+d(i)$  }
 \If{$d(i)<0$ and $k\neq t$}{Let $e=(j,i)$ be an incoming arc in  ${\cal T}_k$ \; $f_e\gets f_e-d(i)$ ; $d(j)\gets d(j)-d(i)$ ;}
\lIf{$d(i)<0$ and $k=t$}{$f_{ti}\gets f_{ti}-d(i)$  }
  Remove the leaf $i$ and the incident arcs in ${\cal T}_k$ and iterate \;
}
 }
 \end{algorithm}

\paragraph{Initialization.} 
The subroutine $\Initialize$
 outputs  a valid iterate $(f,\nr,\varepsilon,\roots)$.
 In order to satisfy Definition~\ref{def:proper}\eqref{it:bounds}, we initialize in a way that either $\res{f}_e=0$ or $\res{f}_{\rev{e}}=0$ for all arcs not incident to $s$ or $t$.

Namely, for every arc $e\in E(G)$ not incident to $s$ or $t$ with $u_e=\infty$, we set $f_e=0$ and $f_{\rev{e}}=u_{\rev{e}}$. Recall that we require that at least one of $u_e$ and $u_{\rev{e}}$ is finite for every arc in an instance; thus, this applies for at most one of $e$ and $\rev{e}$ for each pair of arcs.
For all pairs of arcs with $u_e,u_{\rev{e}}<\infty$, we pick arbitrarily one arc in the pair, say $e$, and set $f_e=0$, $f_{\rev{e}}=u_{\rev{e}}$.

Finally, for each $i\in V(G)\setminus\{s,t\}$, we set $f_{si}=u_{si}$, $f_{it}=u_{it}$, and set at most one of the flow values $f_{ti}$ and $f_{is}$ to nonzero so that flow conservation is satisfied at $i$.
 We set $\varepsilon=\sum_{e\in E(G): u_e<\infty} u_e$ as the sum of all finite capacities and  $\nr=\res{f}$.

  By this choice of $\varepsilon$, the only abundant arcs those with $u_e=\infty$. By the assumption that at most one of $e$ and $\rev{e}$ can be infinite, there are no free arcs after initialization. and also no abundant arcs leaving $s$ or entering $t$. Thus, all free components are singletons and we accordingly set $\roots=V$.

\paragraph{The $\SendFlow$ subroutine.}  Recall that the graph is symmetric, and the residual capacities are defined by taking both $f_e$ and $f_{\rev{e}}$ into account. For $e\in E(G)$ and $\alpha\ge 0$ and a flow $f$ in an instance $(G,u)$, the subroutine $\SendFlow(e,\alpha)$  changes these flows so that $f_e-f_{\rev{e}}$ increases by $\alpha$. Namely, it decreases $f_{\rev{e}}$ to $\alpha'=\min\{f_{\rev{e}},\alpha\}$, and increases $f_e$ to $f_e+\alpha-\alpha'$.

This and subsequent subroutines are shown in Algorithm~\ref{alg:subroutines}.
We note that in this and other subroutines, the instance $(G,u)$ is implicitly taken as the instance where the flow $f$ is defined, which should always be clear from the context.

\paragraph{The $\Saturate(\cdot)$ subroutine.}
The subroutine $\Saturate(e)$   takes an arc $e=(i,j)$  in the  instance $(G,u)$ with the flow $f$.
 It  saturates the arc $e$ in $f$ to attain $\res{f}_e=0$, by sending flow from $s$/$t$ to $t$/$s$ on  paths via $e$. More precisely, let $p=\Root(i)$ and $q=\Root(j)$. We let $\delta\defeq\res{f}_e$, and set $f_e=u_e$, $f_{\rev{e}}=0$. If $p\notin\{s,t\}$, then we  send $\delta'\defeq\min\{\delta,\res{f}_{sp}\}$  units of flow on $(s,p)$, and $\delta-\delta'$ units on $(t,p)$. If $p\neq i$, we send $\delta$ units on the shortcut arc $(p,i)$.
Similarly, if $q\neq j$, then we send $\delta$ units on the shortcut arc $(j,q)$, and if $q\notin\{s,t\}$, then we send $\delta''\defeq\min\{\delta,\res{f}_{qt}\}$ units of flow on $(q,t)$, and $\delta-\delta''$ units on $(q,s)$.

\paragraph{Creating the auxiliary instance.} The subroutine
$\CreateCompact(\Vess,\Eaux)$  returns an instance $(\cG,\cu)$ with $\cG=(\cV,\cE)$ and an arc set $H$.
We call $\dsTcCover(\Vess)$ from the transitive closure data structure, to obtain  a transitive cover  $(V',H)$ of $\Vess$, and let  
$\cV=\Vess\cup V(\Eaux)$.

We then obtain the arc set $\cE$ and capacities $\cu$ as follows. Each arc $e\in \Eaux$ is included with $\cu_e=\nr_e$.
For every $e\in H$, we include $e,\rev{e}\in \cE$ with $\cu_e=\infty$, $\cu_{\rev{e}}=0$.
Finally, for all nodes $i\in \cV\setminus \roots$, we include the shortcut arcs $(i,\Root(i))$ and $(\Root(i),i)$ with infinite capacity. Note that by the definition of $\Vess$, for every $e=(i,j)\in\Eaux$, we have $\Root(i),\Root(j)\in\Vess\subseteq\cV$.

\paragraph{Postprocessing.} The subroutine $\PostProcess(\delta,E')$ adjusts
the flow and safe capacities for $\delta>0$  and an arc set $E'$
as follows.  We 
 process the arcs in $E'$ one-by-one. If $e\in E'$ has $0<\res{f}_{\rev{e}}<2\delta$ and  $\res{f}_{{e}}>0$, then we call $\Saturate(\rev{e})$ to set $\res{f}_{\rev{e}}=0$. For such arcs, we set the safe capacity $\nr_e$ as the minimum of $\res{f}_e=u_e+u_{\rev{e}}$ and $3\delta$, and the safe capacity of the reverse arc as $\nr_{\rev{e}}=0$. For arcs not meeting this criterion, we set $\nr_e=\delta$.

Note that the order in which a pair of arcs $e$ and $\rev{e}$ are processed may lead to a different outcome. If initially $0<\res{f}_e,\res{f}_{\rev{e}}<2\delta$, then we set $\res{f}_{\rev{e}}=0$ if $e$ is processed first, and $\res{f}_{{e}}=0$ if $\rev{e}$ is processed first. However, note that the flow on a pair of arcs will be changed at most once, either for $e$ or for $\rev{e}$.

\paragraph{Removing a Root.} 
The subroutine $\RemoveRoot(k)$ (see Algorithm~\ref{alg:subroutines2}) is performed after the component of a root $k$ is merged into another component, and hence $k$ leaves $\roots$. Let $j=\Root(k)$ be its new root. For every node $i$ whose root has changed from $k$ to $j$, we add a new shortcut arc $(i,j)$ along with the reverse $(j,i)$, both with capacity $\infty$.  Finally, if $\res{f}_{sk}$ or $\res{f}_{kt}$ is positive but less than $2\varepsilon$, we saturate these arcs. Note that by Definition~\ref{def:proper}\eqref{it:oneway}, at most one of these flows could have been positive. We merge ${\cal T}_k$  into ${\cal T}_j$, connecting them by a pair of newly free arcs if $j\notin \{s,t\}$. If $j=s$, then we merge ${\cal T}_k$ to ${\cal T}_s$  by an abundant arc from ${\cal T}_s$ to 
${\cal T}_k$. Similarly, if  $j=t$, then we merge ${\cal T}_k$ to ${\cal T}_t$  by an abundant arc from ${\cal T}_k$ to 
${\cal T}_t$. 
\paragraph{Rerouting flow from shortcut arcs.}
After the end of the while cycle, $\RerouteS(F)$ is called to reroute the flow from the shortcut arc set $F=\Fs$ to the abundant trees ${\cal T}_j$. First, we create a vector $d(i)$ aggregating the total outgoing minus incomig flow on shortcut arcs at $i$. This is obtained by making a pass through $F$, and for each $e=(i,j)\in F$,  increasing $d(i)$ by $f_e-f_{\rev{e}}$, and decreasing $d(j)$ by the same amount.

We then consider the spanning subgraphs ${\cal T}_k$ one-by-one. In the underlying tree, we start from a leaf $i$.
Assume $d(i)>0$. If $i\neq s$, then there is an outgoing arc $e=(i,j)$ in ${\cal T}_k$.  We increase $f_e$ by $d(i)$, set $d(j)$ to $d(i)+d(j)$. If $i=s$, we send $d(i)$ units on $e=(i,s)$.
We update analogously if $d(i)<0$. We then remove $i$ and the incident arcs and iterate, until all nodes in $P_k=V({\cal T}_k)$ are processed.

\subsection{Rerouting using witness lists}\label{sec:witroute}

We now provide and analyze $\WitRoute(\cdot)$, a procedure that is used in the final flow rerouting step.  
 The algorithm is described 
in \Cref{alg:witRoute} and its main guarantees are provided later in \Cref{lem:routing} in \Cref{sec:witroute-analyze}. $\WitRoute$ takes as input a witness list, i.e., $L = (a_i,b_i,w_i)_{i \in [K]} \in \wlistSpace$ (see \Cref{def:wit_list}), assignments of real values to the transitive arcs of $L$, i.e., $f \in \R^{\dE_*(L)}$, and a subset of the arcs of $L$ which we call \emph{reverse arcs}, i.e., $R \subseteq \dEL(L)$. It is required that every reverse arc is \emph{valid} in the sense that each $(a_i,b_i) \in R$, is preceded by its reverse in the witness list (hence the term reverse arc). (However, every arc with this property need not be in $R$).

\begin{definition}[Valid Reverse Arcs]
For finite set $\dV$ and $L = (a_i,b_i,w_i)_{i \in [K]} \in \wlistSpace$, $R \subseteq \dEL(L)$ are \emph{valid reverse arcs (of $L$)} if for all $(a_i,b_i) \in R$ there exists $j_i \in [i - 1]$ with $(b_{j_i},a_{j_i}) = (b_i,a_i)$.
\end{definition}

In the context of \Cref{alg:max-flow}, $L$ will be a witness list formed by abundant arcs, and $R$ is the set $\rev{H}$, i.e., the reverses of extension arcs. As it is applied $\WitRoute(\cdot)$ then modifies the flow on abundant arcs only; note that the non-abundant arcs are not included in the \DITC{}.

More generally, $\WitRoute(\cdot)$  iterates over the transitive arcs of $L$ from $(a_L,b_L)$ to $(a_1,b_1)$ and modifies $f$ in two cases for the current iterate $(a_i,b_i,w_i)$. If  $(a_i,b_i) \in R$, then the $f_{(a_i,b_i)}$ is removed from $(a_i,b_i)$ and added to $(b_i,a_i)$. If this is not the case and $(a_i,b_i) \notin \dE(L)$, then $f_{(a_i,b_i)}$ is removed from $(a_i,b_i)$ and added to each edge in $\walk_L(a_i,b_i)$ instead (counting multiplicity). In other words, the procedure routes $f$ on reverse arcs backwards and reroutes $f$ on all other transitive arcs over a path on the arcs of the graph. Since the reverse arcs are valid, whenever $f$ is routed from reverse arc $(a_i,b_i)$ to $(b_i,a_i)$ if $(b_i,a_i)$ is transitive, it is later routed over $\walk_L(b_i,a_i)$. Consequently, the result of $\WitRoute(\cdot)$ only has non-zero values of $f$ on arcs of $ \dE(L)$ and can essentially be viewed as rerouting $f$ from reverse and transitive arcs onto $\dE(L)$.

Na\"\i{}vely, implementing $\WitRoute(\cdot)$ could be computationally expensive. A single $\walk_L(\cdot)$ can be computed $O(K)$ time and computing this for every transitive arc of $L$ can be done in $O(K^2)$ time. Instead, every iteration the implementation of $\WitRoute(\cdot)$ in \Cref{alg:witRoute} just removes $f_{(a_i,b_i)}$ from arc $(a_i,b_i)$ as needed and either adds it to $(b_i,a_i)$ or  $(a_i,w_i)$ and $(w_i,b_i)$ as appropriate. This procedure is easily implementable in $O(K)$ and we prove in \Cref{lem:routing} that it has the same effect as what is described in the preceding paragraph.

\begin{algorithm}[t!]
    \caption{Routing: $\WitRoute(L,f,R)$}
    \label{alg:witRoute}
    \SetCommentSty{algCommentFont}
    \SetKwProg{function}{function}{:}{}
    \SetKw{Return}{return}
    
    Finite set $\dV$ \tcp*{global variables}
    \BlankLine
    \function{$\WitRoute(L = (a_i,b_i,w_i)_{i \in [K]} \in \wlistSpace, f \in \R^{\dE_*(L)}, R \subseteq \dE(L))$}{
        $g \gets f$\;
        \For{$i = K$ to $1$\label{line:witroute:loop_start}}{
            \lIf{$(a_i,b_i) \in R$}{\label{line:witroute:1}
                $g_{(b_i,a_i)} \gets g_{(b_i,a_i)} - g_{(a_i,b_i)}$ and then 
                $g_{(a_i,b_i)} \gets 0$
            }
            \If{$(a_i,b_i) \notin \dE(L)$}{
                $g_{(a_i,w_i)} \gets g_{(a_i,w_i)} + g_{(a_i,b_i)}$ and  $g_{(w_i,b_i)} \gets g_{(w_i,b_i)} + g_{(a_i,b_i)}$ \label{line:witroute:2}\;
                $g_{(a_i,b_i)} \gets 0$ \label{line:witroute:3}\;
            }
        }
        \Return $g$\;
    }
\end{algorithm}

\section{Analyzing the Maximum Flow Framework}
\label{sec:analyzing}
Recall that for the  input instance $(\gGi,\ui)$, we have $n=|V(\gGi)|$ and $m=|E(\gGi)|$.  We assume  $n\ge 4$. We further let   $m_c$ denote the number of nodes and arcs with positive, finite capacity. denote the number of finite capacity pairs of arcs plus the number of nodes.
We introduce the following notation:
\begin{itemize}
\item Let $\Tsolv(\tilde m)$ denote the total running time of a sequence of calls to $\approxFlow$ on graphs with a total number of arcs at most $\tilde m$ and $M=\tilde m^{O(1)}$. We also assume $\Tsolv\left(O(\tilde m)\right)=\Omega(\tilde m)$.
 According to Theorem~\ref{thm:apx-flow-solver-run}, there is a solver with $\Tsolv(\tilde m)=O(\tilde m^{1+o(1)})$. %
 \item Let ${\rm TC}_{\mathrm{time}}(n,g,h)$ be runtime of the transitive cover data structure and let ${\rm TC}_{\mathrm{arcs}}(n,g,h)$ be the total number of arcs output by transitive cover queries,
 for $g$ additions and transitive cover queries on node sets of total size $h$, and final calls to 
 $\dsTcWit(\cdot)$ 
 and $\WitRoute(\cdot)$. %
\end{itemize}
We are ready to state the main theorem on the correctness and running time bounds on Algorithm~\ref{alg:max-flow}.

\begin{theorem}\label{thm:maxflow-analysis-main}
Algorithm~\ref{alg:max-flow} correctly terminates with a maximum flow in the input instance $(\gGi,\ui)$ on $n$  nodes, $m$ arcs, and a total of $m_c$ nodes and arcs with positive, finite capacity. The following bounds hold:
\begin{enumerate}[(i)]
\item\label{part:tot-new} The total number of extension arcs $\Fe$ added to the instance throughout is at most $2n$, and the total number of shortcut arcs $\Fs$ is at most $O(n\log n)$.
\item\label{part:tot-essential} The total number of essential arcs  throughout all iterations is $O(m_c)$.
\item \label{part:tot-data}  
Throughout Algorithm~\ref{alg:max-flow}, the total number of arcs added in $\dsTcAdd$ is $O(m)$.
 The total number of nodes queried in $\dsTcCover$ is $O(m_c)$.
\item \label{part:tot-running}
Algorithm~\ref{alg:max-flow} can be implemented in time
\[
{\rm TC}_{\rm time}(n,O(m),O(m_c)) + 
O\left(\Tsolv\left(O(m_c)\right)+{\rm TC}_{\rm arcs}(n,O(m),O(m_c))\right)+\tilde O(m)\,.
\]
\item\label{part:bit-complexity} Algorithm~\ref{alg:max-flow} is strongly polynomial. For rational input, all values of $f$, $\varepsilon$, and $r$ remain polynomially bounded in the input encoding length.
\end{enumerate}
\end{theorem}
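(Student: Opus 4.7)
The plan is to verify the structural invariants first, then derive the counting bounds, then assemble the running time and bit-complexity analyses. The crucial invariant is that each iteration of the main loop produces a valid iterate (\Cref{def:proper}) in an admissible extension (\Cref{def:admissible}) of the input instance, and further, that the new iterate is a valid successor (\Cref{def:successor}) of the previous one. I would prove this by induction over iterations, tracking how each subroutine modifies the tuple $(f,\nr,\varepsilon,\roots)$. The key step is arguing that after $\PostProcess(\delta,(\Eaux\setminus \gEbd)\uplus H)$, every affected arc $e$ satisfies either $\res{f}_e=0$, or $\res{f}_{\rev e}=0$, or both residual capacities exceed $\nr_e+\delta$, matching \Cref{def:proper}\eqref{it:bounds} once $\varepsilon$ is reset to $2n^2\delta$. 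Combined with $\delta_2\le \ab^{-5}\nu(\cG,\cu)$ from the approximate solver guarantee and $\delta_1<n^2\ab^{-5}\varepsilon$ from its construction, this yields the contraction $\varepsilon'\le \ab^{-3}\varepsilon$ required by \Cref{def:successor}\eqref{it:eps-decrease}. Admissibility of the extensions then follows from \Cref{lem:add-abundant}, since each arc added through $H$ lies in the transitive cover of abundant arcs, while shortcut arcs live inside free components.

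For part \eqref{part:tot-new}, I would observe that each retained extension arc $e\in\Fe$ has $f_e>0$ on an arc of infinite capacity, so $e$ is immediately $(f,\varepsilon)$-abundant; the matching $\PostProcess$ step together with the flow routed through its reverse ensures that $\rev e$ becomes abundant by the next scaling phase (via the gap condition), so $e$ becomes free and merges two free components whenever its endpoints lie in distinct ones. Since at most $n-1$ merges can occur over the whole run, $|\Fe|=O(n)$. The $|\Fs|=O(n\log n)$ bound is the standard union-by-size doubling argument already sketched in the overview. For part \eqref{part:tot-essential}, I would apply \Cref{lem:improved-opt-1} to split every essential arc into (a) a gap arc, (b) a boundary arc, or (c) a non-boundary arc with $\ab^{-5}\varepsilon\le u_e+u_{\rev e}<\ab\varepsilon$. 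Gap arcs become free next iteration by \Cref{lem:strict-to-free}\eqref{it:gap-free}, so each triggers a merge and they collectively contribute $O(n)$. Non-boundary arcs of type (c) sit in a window of multiplicative width $\ab^6$, and since $\varepsilon$ contracts by at least $\ab^3$ per iteration, each such finite-capacity arc can appear in at most two windows, contributing $O(m_c)$ in total. Boundary essential arcs $(s,i)$ and $(i,t)$ are the delicate case, handled by the charging argument outlined in \Cref{sec:overview}: any essential root whose sole essential incidences are boundary arcs either acquires a non-boundary essential arc or gets merged within $O(1)$ iterations, yielding an $O(m_c+n)$ total. Part \eqref{part:tot-data} then falls out: every $\dsTcAdd$ input is a newly abundant arc in $\gEi\cup\Fe$, and each such arc becomes abundant at most once, so insertions total $O(m)$; every $\dsTcCover$ query is on $\Vess$, whose cumulative size is at most twice the essential-arc count, hence $O(m_c)$.

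The running time bound \eqref{part:tot-running} is then an assembly step. The transitive cover work is ${\rm TC}_{\rm time}(n,O(m),O(m_c))$ by \eqref{part:tot-data}. The auxiliary instance built at each iteration has $O(|\Eaux|+|H|+|\Vess|)$ arcs, and since shortcut arcs added inside $\cG$ are bounded by $|\Vess|$, the cumulative arc budget passed to $\approxFlow$ is $O(m_c)+{\rm TC}_{\rm arcs}(n,O(m),O(m_c))$; with $M=\ab^{5}=\poly(n)$, the accumulated solver cost is $\Tsolv\!\big(O(m_c)+{\rm TC}_{\rm arcs}(n,O(m),O(m_c))\big)$. The remaining per-iteration bookkeeping---$\Initialize$, $\SendFlow$, $\Saturate$, $\PostProcess$, updating $\CP, \gEA, \gEF,\roots$, and maintaining the spanning subgraphs ${\cal T}_i$ via link/cut-tree machinery---amortizes to $\tilde O(m)$ over the whole run, granted a $\poly(n,m)$ bound on the number of iterations. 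Finally, $\RerouteS$ runs in $O(|\Fs|)=O(n\log n)$, and $\WitRoute$ runs in time linear in the output witness list by inspection of Algorithm~\ref{alg:witRoute}, absorbed into the transitive-cover budget.

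Part \eqref{part:bit-complexity} follows from \Cref{lem:basic-flow}\eqref{it:flow-pm}: every flow update produced by $\approxFlow$ is a $\{0,\pm 1\}$-integer combination of arc capacities of the current instance, so after the polynomial number of iterations the bit-lengths of $f$, $\varepsilon$, and $\nr$ remain polynomial in the input encoding, as every arithmetic operation performed by the subroutines is composed of these same combinations. To conclude strong polynomiality, I would bound the iteration count by observing that each iteration either merges two free components or consumes an essential arc bounded as above, so the number of iterations is $O(m_c+n)=O(m)$. The main obstacle in the proof is the boundary-arc case in part \eqref{part:tot-essential}: bounding essential arcs incident to $s$ or $t$ requires a careful charging argument showing that an essential root cannot persist with only boundary essential arcs for many iterations, since these are precisely the essential arcs excluded from the $u_e+u_{\rev e}$ window of \Cref{lem:improved-opt-1} and therefore not directly subject to the scaling argument used for type (c) arcs.
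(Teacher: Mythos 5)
Your proposal follows the same architecture as the paper's proof: prove inductively that each iteration produces a valid iterate in an admissible extension that is a valid successor of the previous one, then derive the counting bounds (extension arcs, essential arcs, transitive cover calls), assemble the running time, and finish with bit-complexity. The valid-iterate/valid-successor induction is correctly identified as the structural backbone, and parts~\eqref{part:tot-data}, \eqref{part:tot-running}, and \eqref{part:bit-complexity} are set up the same way as in the paper's Lemmas~\ref{lem:tot-cover-call}, \ref{lemma:meta-running}, and the proof of part~\eqref{part:bit-complexity}.

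There are, however, two concrete gaps.

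First, in part~\eqref{part:tot-new}, you argue that each retained extension arc ``merges two free components whenever its endpoints lie in distinct ones,'' and then conclude $|\Fe|=O(n)$ from the cap of $n-1$ merges. This misses the case where several extension arcs added in a single iteration have endpoints in the same pair of free components (or even form a path across three components), so that $k$ new arcs trigger fewer than $k$ merges. The paper's Lemma~\ref{lem:extension-tot} closes this hole by invoking Lemma~\ref{lem:basic-flow}\eqref{it:acyclic}: because $y$ is a basic acyclic flow, the arcs $e\in H$ with $0<y_e<\cu_e$ form a forest, so the $\gamma^{(\tau)}$ new extension arcs in iteration~$\tau$ reduce the component count by at least $\gamma^{(\tau)}$ within two iterations, yielding $\sum_\tau \gamma^{(\tau)}\le 2n$. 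Without the forest structure, your merge-counting argument does not go through.

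Second, and more seriously, the boundary-arc case in part~\eqref{part:tot-essential} is identified but not resolved. This is the technical crux of the paper's Lemma~\ref{lemma:meta-essential-tot}: to bound essential arcs $(s,i)$ and $(i,t)$, one assumes $P_i$ survives unmerged for three iterations with no other essential incidence, lower-bounds $\alpha=\res{f}_{si}\ge\Gamma^4\varepsilon'$, uses flow conservation to exhibit an arc $e'\in\dout{E}{P_i}$ with $\res{f'}_{\rev{e'}}>\Gamma^{-1}\alpha\ge\Gamma^3\varepsilon'$, and then runs a case analysis invoking \Cref{lem:nailed-arcs} and property \eqref{it:oneway} of valid iterates to derive $\Gamma^{-6}\varepsilon \le u_{e'}+u_{\rev{e'}}<\Gamma\varepsilon$, contradicting the assumption that no such ``regular'' arc is incident to $P_i$. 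Note the window here is $[\Gamma^{-6}\varepsilon,\Gamma\varepsilon)$, one factor of $\Gamma$ wider than the $[\Gamma^{-5}\varepsilon,\Gamma\varepsilon)$ window from Lemma~\ref{lem:improved-opt-1} you use for type~(c) arcs; the extra slack is exactly what is needed to catch the arc $e'$ produced by this charging argument. A proposal that stops at ``requires a careful charging argument'' has not proven part~\eqref{part:tot-essential}, and all the downstream bounds (\eqref{part:tot-data} and \eqref{part:tot-running}) depend on it.
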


\paragraph{Notation} 
Throughout the analysis, we use the following notation.
Let $T$ denote the total number of calls to the while cycle. 
For $\tau=1,2,\ldots, T$, we
let $(G^{(\tau)},u^{(\tau)})$ denote the instance $(G,u)$ at the beginning of the $\tau$-th call of the main while cycle with $E^{(\tau)}=E(G^{(\tau)})$, and let $(f^{(\tau)},\nr^{(\tau)},\varepsilon^{(\tau)},\roots^{(\tau)})$ denote the corresponding valid iterate. For $\tau=T+1$, let $(f^{(T+1)},\nr^{(T+1)},\varepsilon^{(T+1)},\roots^{(T+1)})$ be the iterate at the end of the final while cycle.
 We let 
$\Eaux^{(\tau)}$ denote the significant arc set, $\Vess^{(\tau)}$ the set of essential roots, $(\cG^{(\tau)},\cu^{(\tau)})$ the auxiliary instance.
In particular, $(G\bi{1},u\bi{1})=(\gGi,\ui)$.
Further, let  $H^{(\tau)}$ denote the set $H$ returned by $\dsTcCover(\Vess^{(\tau)})$ in iteration $\tau$. 

 We say that an arc is a boundary, gap, or essential arc, or that a node is essential in a certain iteration if it belongs to the corresponding set at the start of this iteration.

\subsection{Maintaining valid iterates}\label{sec:approx-compact}

We show that all $(f^{(\tau)},\nr^{(\tau)},\varepsilon^{(\tau)},\roots^{(\tau)})$ are valid iterates and moreover valid successors to each other. The key step is showing that this property is maintained in every iteration.

\begin{lemma}\label{lem:valid-maintain}
Assume $(f,\nr,\varepsilon,\roots)$ is a valid iterate in $(G,u)$ at the beginning of the while cycle of Algorithm~\ref{alg:max-flow}, and 
let $(f',\nr',\varepsilon',\roots')$ denote the corresponding quantities at the beginning of the next iteration in the instance $(G',u')$. Then, $(f',\nr',\varepsilon',\roots')$ is a valid iterate that is a valid successor of 
$(f,\nr,\varepsilon,\roots)$. Moreover, the following stronger version of Definition~\ref{def:successor}\eqref{it:arc-prox} holds:
\begin{enumerate}[(A')]
\setcounter{enumi}{1}
\item\label{it:arc-prox-mod} For every $e\in E(G)$,
$f'_e-f_e+f_{\rev{e}}\le \min\{\nr_e,\varepsilon\}+\ab^{-3}\varepsilon$, and 
for every $e\in E(G')\setminus E(G)$, $f'_e\le 2\varepsilon$.
\end{enumerate}
\end{lemma}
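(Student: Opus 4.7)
My plan is to verify each of the five properties of Definition~\ref{def:proper} for $(f',\nr',\varepsilon',\roots')$ in $(G',u')$, together with the decay of $\varepsilon$ in Definition~\ref{def:successor}\eqref{it:eps-decrease} and the strengthened proximity bound~\eqref{it:arc-prox-mod}. The overall flow is driven by the quantitative control of $\delta = \delta_1 + \delta_2$, combined with the structural effects of $\PostProcess$ and $\RemoveRoot$. Throughout I will use that $\Gamma = 4n^2$, so $2n^2 = \Gamma/2$ and $\varepsilon' = 2n^2\delta = (\Gamma/2)\delta$.

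I would first bound $\delta$. By its definition, $\delta_1 < n^2 \cdot \Gamma^{-5}\varepsilon = \Gamma^{-4}\varepsilon/4$. For $\delta_2$, I would prove the cut-compression estimate $\nu(\cG,\cu) \le \nu(G,\res{f}) + \Gamma^{-4}\varepsilon$ using the argument sketched below Definition~\ref{def:trans-cover}: for any min-cut $S'$ of $(\cG,\cu)$, the set $S$ reachable from $S'$ on abundant arcs satisfies $S \cap \cV = S'$ by the transitive cover property, so every $e \in \dout{E}{S} \setminus \dout{\cE}{S'}$ has $\res{f}_e < \Gamma^{-5}\varepsilon$, contributing at most $n^2 \cdot \Gamma^{-5}\varepsilon$. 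Since $f$ is $\varepsilon$-optimal, $\nu(\cG,\cu) < 2\varepsilon$, and the solver guarantee with $M=\Gamma^5$ gives $\delta_2 \le \nu(\cG,\cu)/\Gamma^5 < 2\Gamma^{-5}\varepsilon$. Summing, $\delta < \Gamma^{-4}\varepsilon/2$, so $\varepsilon' < \Gamma^{-3}\varepsilon$, establishing the decay.

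Next I would argue~\eqref{it:arc-prox-mod}. Since $y$ is acyclic basic in $(\cG,\cu)$, Lemma~\ref{lem:basic-flow}\eqref{it:basic-size} gives $\|y\|_\infty \le \val{y} \le \nu(\cG,\cu) \le \varepsilon + \Gamma^{-4}\varepsilon/4$, and the capacity bound yields $y_e \le \cu_e = \nr_e$ on essential arcs. Combining these yields $y_e \le \min\{\nr_e,\varepsilon\} + \Gamma^{-4}\varepsilon/4$, which handles $\SendFlow$ on~\cref{line:sendflow,line:sendflow2}. The $\Saturate$ calls inside $\PostProcess$ and $\RemoveRoot$ fire only on arcs with $\res{f}_{\rev{e}} < 2\delta$, so each reroutes at most $2\delta$ units through one boundary and one shortcut arc per component; a per-arc accounting shows that the cumulative additional perturbation on any single arc fits within the remaining $\Gamma^{-3}\varepsilon - \Gamma^{-4}\varepsilon/4$ slack. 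For $e \in H \subseteq E(G')\setminus E(G)$, $f'_e = y_e < 2\varepsilon$. Together this gives~\eqref{it:arc-prox-mod}, and Lemma~\ref{lem:stays-abundant} then yields $\gEA(f,\varepsilon) \subseteq \gEA(f',\varepsilon')$.

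I would next verify the structural invariants. Property~\eqref{it:roots} holds by construction of the root-merging logic. Property~\eqref{it:oneway} holds because $\RemoveRoot$ explicitly saturates any $(s,k)$ or $(k,t)$ arc with residual in $(0,2\varepsilon)$ on removal, and the invariant for surviving roots is inherited, since only $\Saturate$ touches these arcs. Property~\eqref{it:bounds} is the crux: for $e \in (\Eaux \setminus \gEbd) \uplus H$, $\PostProcess$ enforces either $\res{f'}_e = 0$, $\res{f'}_{\rev{e}} = 0$, or both $\res{f'}_e, \res{f'}_{\rev{e}} \ge 2\delta$ with $\nr'_e = \nr'_{\rev{e}} = \delta$, giving a gap arc because $2\Gamma^{-2}\varepsilon' = \Gamma^{-1}\delta \le \delta$. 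For arcs $e \in \gEP(f,\varepsilon)\setminus\bigl((\Eaux\setminus\gEbd)\uplus H\bigr)$ that survive in $\gEP(f',\varepsilon')$, a case analysis on whether $e$ is abundant in one direction rules out the old gap case (a gap arc would require both residuals $\ge 2\Gamma^{-2}\varepsilon$, incompatible with non-essentialness), so the old iterate had one residual zero; the bound~\eqref{it:arc-prox-mod} then implies that one residual remains small enough after the iteration to preserve the invariant at the new scale $\varepsilon'$, with safe capacity inherited. Property~\eqref{it:safe} follows from $\nr' \le \res{f'}$ by $\PostProcess$ construction together with Lemma~\ref{lem:safe-cap}: arcs that can cross a min-cut are non-abundant by Proposition~\ref{prop:abundant-min-cut}, so on such arcs either $\nr'_e = \res{f'}_e$ directly or the prior safety is preserved. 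Finally, property~\eqref{it:eps-opt} follows by combining $\val{f}\ge \nu(G,u) - \varepsilon$, $\val{y}\ge \nu(\cG,\cu) - \delta_2$, the symmetric direction $\nu(\cG,\cu) \ge \nu(G,\res{f}) - \Gamma^{-4}\varepsilon$, Lemmas~\ref{lem:res-value} and~\ref{lem:add-abundant}, and the observation that each $\Saturate$ preserves feasibility and loses at most $2\delta$ of flow value per essential arc, all comfortably absorbed into the budget $\varepsilon' = (\Gamma/2)\delta$. The main obstacle is the simultaneous alignment of constants in~\eqref{it:bounds} and~\eqref{it:arc-prox-mod}: tracking how $\SendFlow$, $\Saturate$-through-boundary, and $\Saturate$-through-shortcut interact on the same arc across the $O(m_c)$ essential arcs, while ensuring that each surviving cross-component arc either becomes a gap arc at the new, much smaller scale $\varepsilon'$ or has one residual driven exactly to zero.
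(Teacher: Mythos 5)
Your proposal follows the same high-level route as the paper: bound $\delta$, use the compression lemma for $\nu(\cG,\cu)$, and chase the perturbations through $\SendFlow$, $\PostProcess$, and $\RemoveRoot$. Several constants you track (e.g.\ $\delta_2 < 2\Gamma^{-5}\varepsilon$ rather than $\leq\Gamma^{-5}\varepsilon$) are looser than the paper's but still land inside the budget, so that is not a concern.

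There is, however, a genuine gap in your argument for Definition~\ref{def:proper}\eqref{it:safe}. You argue that arcs in a min cut are non-abundant and that ``on such arcs either $\nr'_e = \res{f'}_e$ directly or the prior safety is preserved,'' invoking the ``Furthermore'' clause of Lemma~\ref{lem:safe-cap}. That clause is a necessary condition derived from safety, not a sufficient one, and the ``prior safety'' was stated with respect to $f$ and $\nr$, not $f'$ and $\nr'$, so it does not transfer. The paper instead applies the \emph{first} part of Lemma~\ref{lem:safe-cap}: because $\hat f$ (the flow after $\SendFlow$ but before $\PostProcess$) is $\delta$-optimal, there is a maximum flow $f^\star$ with $f^\star_e \le \hat f_e - \hat f_{\rev e} + \delta$ for all $e$; since $\PostProcess$ moves $\hat f_e - \hat f_{\rev e}$ by at most $2\delta$ and then sets $\nr'_e$ to $\min\{3\delta,\res{f'}_e\}$ (or $\min\{\delta,\res{f'}_e\}$ when the flow on $e$ is untouched), one checks directly that $f^\star_e \le f'_e - f'_{\rev e} + \nr'_e$ on every arc. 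You should replace your min-cut argument with this construction.

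A smaller issue: the per-arc perturbation from $\Saturate$ is asserted (``a per-arc accounting shows'') but never derived. The argument is simply that $\PostProcess$ only saturates arcs with residual below $2\delta$, at most one of each pair $e,\rev e$ is touched, and there are at most two parallel pairs between any two nodes, so the total flow that $\Saturate$ reroutes onto boundary/shortcut arcs is at most $2n(n-1)\delta \le 2n^2\delta \le \ab^{-3}\varepsilon$; this sum also bounds the perturbation on any single arc. Spelling this out is what makes~\eqref{it:arc-prox-mod} go through.
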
 

The proof of Lemma~\ref{lem:valid-maintain} follows the statement and proof of our next lemma, which bounds the value of the auxiliary instance.

\begin{lemma}\label{lem:compact-flow-value} Given an instance $(G,u)$ with a valid iterate $(f,\nr,\varepsilon,\roots)$, let $\Vess$ be the set of essential nodes; let $\Eaux$ be the set of essential arcs and let $\delta_1= \max\left\{0,n^2\max\left\{\res{f}_e\,:\, \res{f}_e<\essentialthreshold\right\}\right\}$.
For the instance $(\cG,\cu)$ returned by $\CreateCompact(G,u,\nr,\Vess,\Eaux)$, 
$\val{\cG,\cu}\ge \val{G,\res{f}}-\delta_1$ holds.
If $s\notin\Vess$ or $t\notin \Vess$, then $\val{\cG,\cu}=0$.
\end{lemma}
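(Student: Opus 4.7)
My plan is to first dispatch the boundary case $s \notin \Vess$ or $t \notin \Vess$, and then, in the main case $s, t \in \Vess$, lift a finite minimum cut of the auxiliary instance $(\cG, \cu)$ to a cut in $(G, \res{f})$ whose capacity exceeds the auxiliary cut by at most $\delta_1$.

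For the boundary case, suppose $s \notin \Vess$ (the case $t \notin \Vess$ is symmetric). By definition of essential roots, no essential arc is incident to $P_s$, so $s \notin V(\Eaux)$; moreover, every $e \in \dout{E}{P_s}$ is neither abundant (by maximality of $P_s$) nor essential, hence $\res{f}_e < \essentialthreshold$. Since $P_s \cap P_t = \emptyset$ by \Cref{prop:abundant-min-cut}, $P_s$ is an $s$-$t$ cut, and the definition of $\delta_1$ gives $\nu(G, \res{f}) \le \res{f}(P_s, V \setminus P_s) \le \delta_1$, which delivers the main inequality in this case. For $\nu(\cG, \cu) = 0$, observe that from $s$ the only outgoing arcs of $\cG$ lie in $H$ (no shortcut leaves $s$, since $s \notin V(\Eaux)$ and no $i \in V(\Eaux)$ has $\Root(i) = s$); each such $(s, j) \in H$ witnesses an abundant $s$-$j$ path, forcing $j \in P_s$, and inductively every node reachable from $s$ in $\cG$ lies in $P_s$, so $t \in P_t$ is unreachable.

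Now assume $s, t \in \Vess$. Every $s$-$t$ path in $\cG$ must use an essential arc (by the same reasoning, $H$ and shortcut arcs keep any path from $s$ inside $P_s$), so $\nu(\cG, \cu) < \infty$. Let $S'$ be a min cut in $(\cG, \cu)$ and define $S \subseteq V$ as the set of nodes abundantly reachable from $S'$ in $(V, \gEA)$; in particular $s \in S' \subseteq S$. The key identity is $S \cap \cV = S'$: for $j \in S \cap V' \setminus S'$ and $i \in S'$ with $(i, j) \in \gET$, when $i \in V'$ the transitive cover property produces an $H$-path from $i$ to $j$ inside $V'$ that would cross the finite cut $S'$ via infinite-capacity arcs, a contradiction; and if $i \in V(\Eaux) \setminus V'$, the infinite shortcuts $(i, \Root(i)), (\Root(i), i)$ force $\Root(i) \in S' \cap V'$, while the free component structure (with $P_t$ excluded by the same shortcut argument applied to $i$) guarantees $\Root(i)$ abundantly reaches $j$, reducing to the previous case. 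The inclusion $S \cap V(\Eaux) \subseteq S'$ follows by an analogous case analysis on $j$'s free component, and a symmetric argument shows $t \notin S$. To conclude, I partition $\dout{E}{S}$: no abundant arc can leave $S$ (else its head is abundantly reachable from $S$ and thus in $S$); for essential $e = (i, j) \in \dout{E}{S}$, both endpoints lie in $V(\Eaux) \subseteq \cV$, so $S \cap \cV = S'$ places $e \in \dout{\cE}{S'}$ with $\cu_e = \nr_e$; and all remaining arcs have $\res{f}_e < \essentialthreshold$, whence $\nr_e \le \res{f}_e \le \delta_1/n^2$ by the definition of $\delta_1$, contributing at most $\delta_1$ in total (bounding $|\dout{E}{S}|$ by $n^2$). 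Thus $\nu(G, \nr) \le \nr(\dout{E}{S}) \le \cu(\dout{\cE}{S'}) + \delta_1 = \nu(\cG, \cu) + \delta_1$, and since $\nr$ is safe $\nu(G, \res{f}) = \nu(G, \nr)$, completing the proof. The main technical obstacle I anticipate is the reduction step in proving $S \cap \cV = S'$: nodes in $V(\Eaux) \setminus V'$ must be aligned with their roots through the infinite shortcuts, requiring a careful case analysis of which free component they inhabit, with particular care around $P_t$ since $\Root(v) = t$ does not abundantly reach back to $v$.
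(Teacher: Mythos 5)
Your proof follows the same overall strategy as the paper's: dispatch the boundary case directly, and in the main case extend a finite minimum cut of $(\cG,\cu)$ to a cut in $(G,\nr)$ and compare capacities, splitting the crossing arcs into abundant (none, by closure), essential (matched to the auxiliary cut), and small (at most $\delta_1$ in total). Your boundary case is correct and is argued somewhat more directly than the paper, which routes everything through the intermediate instance $(G',u')$; your main case also bypasses $(G',u')$ and bounds $\nu(G,\nr)$ directly, which is a clean simplification.

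There is, however, a genuine gap in the step establishing $S\cap\cV=S'$ (and in the symmetric argument for $t\notin S$). You invoke the transitive cover property to produce an $H$-path from $i$ to $j$ ``when $i\in V'$,'' but \Cref{def:trans-cover} only guarantees $\gEA^{\tr}[\Vess]=H^{\tr}[\Vess]$, i.e., it covers reachability between pairs of nodes in $\Vess$, not in $V'$. The set $V'\setminus\Vess$ is generally nonempty: for the Italiano \DITC{} it is $V\setminus\Vess$ when $\dsTcCover$ returns $(\dV,\dE)$, and for the tree-respecting \DITC{} it is $\bigcup_{v\in\Vess}\ancestor_T(v)\setminus\Vess$. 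For a node $k\in S'\cap(V'\setminus\Vess)$ (or a target $j\in(S\cap V')\setminus\Vess$), your argument provides no way to produce the $H$-path, and the ``analogous case analysis'' you sketch for $i\in V(\Eaux)\setminus V'$ only reduces to $\Root(i)\in\Vess$, leaving the case $i$ or $j$ in $V(\Eaux)\cap V'\setminus\Vess$ unaddressed. The paper resolves exactly this issue by selecting a minimum cut $\bar S$ of $(\cG,\cu)$ of \emph{smallest cardinality} and showing that $\bar S$ equals the $\infty$-arc reach from $\bar S\cap\Vess$, so that the transitive cover property is only ever applied to pairs in $\Vess$. Without this cardinality-minimal choice (or some equivalent pruning of $V'\setminus\Vess$ nodes from $S'$), your reduction step does not go through; this is the main obstacle you anticipated, and it needs to be fixed before the proof is complete.
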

\begin{proof}
Recall that by the definition of safe capacities, we have $\val{G,\res{f}}=\val{G,\nr}$.
First, let $(G',u')$ denote the graph obtained from $(G,\nr)$ by the following steps. First, delete all arcs $e\in E$ such that $u_e+u_{\rev{e}}< \essentialthreshold$. Next, for all arcs $e\in E$ such that $\rev{e}\in \gEA(f,\varepsilon)$ and $\res{f}_e<\essentialthreshold$, set $u'_e\defeq 0$. Further, for every abundant arc $e$, let us set $u'_e\defeq\infty$. Finally, for every other arc, we let $u'_e\defeq\nr_e$. 

We claim that $\val{G,\res{f}}\ge \val{G',u'}\ge \val{G,\res{f}}-\delta_1$.
The lower bound follows since $\delta_1$ is an upper bound on the total amount of capacity decrease. The upper bound holds since we only increased the capacity of abundant arcs, that could not have crossed any minimum cut in $G$ according to Proposition~\ref{prop:abundant-min-cut}. Note that at this point, every arc $e$ incident to a node $V\setminus V(\Eaux)$ will have capacity $u_e=\infty$ or $u_e=0$.

The auxiliary instance $(\cG,\cu)$ can be obtained from $(G',u')$ by restricting the node set from $V$ to $\bar V$ that contains $\Vess$, removing all  arcs with $\cu_e=\infty$, and adding a transitive cover $H$ for $\Vess$. Further, we keep the shortcut arcs $(i,\Root(i))$ and $(\Root(i),i)$ for every $i\in\bar V\setminus \roots$. The proof is complete by showing that $\val{\cG,\cu}=\val{G',u'}$.

First, we show $\val{\cG,\cu}\le\val{G',u'}$. Deleting nodes and arcs may only decrease the flow value. All new arcs added will be from the transitive closure of abundant arcs. By Proposition~\ref{prop:abundant-min-cut}, these new arcs cannot cross any minimum $s$--$t$-cut in $G'$.

Next, let us show $\val{\cG,\cu}\ge\val{G',u'}$. Let $\bar S\subseteq \bar V$ be a minimum $s$--$t$-cut in $G'$ such that $|\bar S|$ is as small as possible, and let $S$ be the set of nodes reachable from $\bar S$ in $V$ using $\infty$-capacity arcs in $G'$.

Let us first define  $S'\subseteq \bar S$ as the set reachable on $\infty$-capacity arcs from $\bar S\cap \Vess$. The set $S'$ clearly includes $V(\Eaux)\cap \bar S$. We claim that $S'=\bar S$. For a contradiction, assume that  $\bar S\setminus S'\neq\emptyset$.
Since $(\bar S\setminus S')\cap V(\Eaux)=\emptyset$, 
all arcs $e$ incident to $\bar S\setminus S'$ must have $\cu_e=\infty$ or $\cu_{e}=0$. As no arc with $\infty$ capacity leaves $S'$,  it follows that $\cu(\bar S,\bar V\setminus \bar S)=\cu(S',\bar V\setminus S')$, a contradiction to the minimal choice of $\bar S$.

Thus, $S'=\bar S$. 
Next, we claim that $t\notin S$, i.e., $S$ is an $s$--$t$ cut in $V$.
Using that $\bar S=S'$, every path from $\bar S$ to $t$ can be extended to a path from a node $i\in \bar S\cap \Vess$ to $t$. By the transitive cover property, $H$ must also contain an $i$--$t$ path of arcs with $\infty$ capacity, a contradiction to $\val{\cG,\cu}<\infty$. 

Consequently, $S$ is an $s$--$t$ cut in $(G',u')$. Using again that all arcs incident to $V\setminus V(\Eaux)$ have capacity $0$ or $\infty$, it follows that $u'(S,V\setminus \bar S)=\cu(\bar S,V\setminus \bar S)$, proving that 
$\val{\cG,\cu}\ge\val{G',u'}$.%

Finally, if $s\notin \Vess$, then $\val{\cG,\cu}=0$ follows because $\cu(\dtot{\cE}{P_s\cap \cV,\cV\setminus P_s}=0$. This follows because according to Definition~\ref{def:free-comps}, no abundant arc is leaving $P_s$, and therefore we deleted all outgoing arcs from $P_s$ when creating $(G',u')$. The analogous argument holds for $t\notin\Vess$.
\end{proof}

\begin{proof}[Proof of Lemma~\ref{lem:valid-maintain}]
First, note that the new instance $(G',u')$ is a valid extension of $(G,u)$. This follows by  Lemma~\ref{lem:add-abundant}, noting that all new arcs added either come from the transitive cover output $H$, or are shortcut arcs added inside free components. In particular, $\nu(G',u')=\nu(G,u)$.

Next, we  show that  
$(f',r',\varepsilon',\roots')$ is a valid iterate in $(G',u')$. If $s\notin\Vess$ or $t\notin \Vess$, then $\delta_2=0$ is set, and $f'$ is $\delta=\delta_1$-optimal by the last part of Lemma~\ref{lem:compact-flow-value}. Otherwise, let
$(y,e^\star)$ be the output of $\approxFlow{(\cG,\cu,\ab^{5})}$, $\delta=|E(\cG)|\cu^{y}_{e^\star}$, and let $\hat f$ be a flow obtained from $f$ after adding $y$ (lines~\ref{line:sendflow}--\ref{line:sendflow2}). First, we show that $\hat f$ is $\delta$-optimal for $\delta=\delta_1+\delta_2$.

By Lemma~\ref{lem:compact-flow-value}, $\val{\cG,\cu}\ge \val{G,\res{f}}-\delta_1$. By the definition of 
$\approxFlow{(\cG,\cu,\ab^{5})}$, the output $(y,e^\star)$ is such that $y$ is a $\delta_2$-optimal flow in $(\cG,\cu)$ for $\delta_2=|E(\cG)|u^y_{e^\star}$, that is, $\val{y}\ge \nu(\cG,\cu)-\delta_2$. Thus, after adding $y$ to $f$, the current flow $\hat f$ will  $\delta$-optimal for $\delta=\delta_1+\delta_2$.

 In $\PostProcess(\cdot)$, we consider the arc set 
$(\Eaux\setminus\gEbd)\uplus H$, and change the flow by at most $2\delta$ on each arc, which may decrease the flow value by at most $2\delta$ each time. This happens to at most one among each pair $e,\rev{e}$ of arcs, and the instance may contain at most two parallel set of pairs between any two nodes. Thus, the total change is at most  $2\delta n(n-1)$, showing that the flow after $\PostProcess(\cdot)$ is $(2n^2\delta)$-optimal. In $\RemoveRoot(i)$, we may only saturate $(s,i)$ or $(i,t)$; this  may only increase the flow value. Consequently, $f'$ is $\varepsilon'$-optimal for $\varepsilon'=2n^2\delta$, verifying Definition~\ref{def:proper}\eqref{it:eps-opt}.

Next we show Definition~\ref{def:proper}\eqref{it:safe}, namely that 
 $r'$ is a safe capacity vector.
Before $\PostProcess(\cdot)$,  we had a $\delta$-optimal flow $\hat f$, meaning that there is a maximum flow $f^\star$ with $f^\star_e\le \hat f_e-\hat f_{\rev{e}}+\delta$ for each $e\in E$ (Lemma~\ref{lem:res-value}). The safe capacity vector $r'$ is defined in $\PostProcess(\cdot)$ for all arcs in $E\cup H$; for the shortcut arcs added to $\Fs$ in $\RemoveRoot(\cdot)$, the safe capacity is set to $\infty$. In  $\PostProcess(\cdot)$, we change $\hat f$ to $f'$ such that $|\hat f_e-\hat f_{\rev{e}}-(f'_e-f'_{\rev{e}})|\le 2\delta$ for each arc $e$, and then set the safe capacity as $\nr'_e=\min\{3\delta,\res{f'}_e\}$. If $f'_e=\hat f_e$, then  the safe capacity is set as $\nr'_e=\min\{\delta,\res{f'}_e\}$. In both cases we see that 
 $f^*_e\le f'_e-f'_{\rev{e}}+\nr'_e$ on every arc.

Definition~\ref{def:proper}\eqref{it:bounds} is immediate by the execution of $\PostProcess(\cdot)$, noting that $\delta< 2\ab^{-2}\varepsilon$. In $\RemoveRoot(i)$, flow will only change on boundary and shortcut arcs, an after the execution of this subroutine, $\res{f}_e>0$ for $e=(s,i)$ or $e=(i,t)$ only if $\res{f}_e\ge 2\varepsilon$.

Definition~\ref{def:proper}\eqref{it:oneway} holds for $f$, that is, $\res{f}_{si}=0$ or $\res{f}_{it}=0$ for every $i\in V$. 
According to Lemma~\ref{lem:basic-flow}\eqref{it:nobackflow}, $y$ obtained in $\approxFlow$ does not send flow to $s$ or from $t$, and therefore adding $y$ to $f$ maintains this property.  It is easy to see that this property is also maintained throughout the subroutines $\PostProcess(\cdot)$ and $\RemoveRoot(\cdot)$.
Finally, Definition~\ref{def:proper}\eqref{it:roots} clearly holds for $\roots'$.

\medskip
It remains to show that $(f',r',\varepsilon',\roots')$ is a valid successor of $(f,r,\varepsilon,\roots)$, with the stronger property (\ref{it:arc-prox-mod}').
The bound $\varepsilon'\le \ab^{-3}\varepsilon$ follows since $\delta=\delta_1+\delta_2\le n^2\ab^{-5}\varepsilon+\ab^{-5}\varepsilon\le \ab^{-4}\varepsilon$, and
$\varepsilon'=2n^2\delta\le \ab^{-3}\varepsilon$.

To verify (\ref{it:arc-prox-mod}'),
first, note that that $y_e\le \min\{r_e,\varepsilon\}$ on every arc by the definition of $(\cG,\cu)$ and Lemma~\ref{lem:basic-flow}\eqref{it:basic-size}.
Subsequently, $\PostProcess(\cdot)$ saturates arcs of residual capacity less than $2\delta$. The operations in $\Saturate(\cdot)$ may thus send in total at most $2n^2\delta\le \ab^{-3}\varepsilon$ units of flow on shortcut and boundary arcs. For new arcs  $e\in H$ added in $E(G')\setminus E(G)$, we have $f'_e\le y_e\le \varepsilon$, noting also that $\PostProcess(\cdot)$ may only decrease the flow on such arcs. Finally, for the shortcut arcs added in $\RemoveRoot(\cdot)$ may only carry at most $2\varepsilon$ units of flow.
The above observartions together imply property (\ref{it:arc-prox-mod}').
 \end{proof}

We can now easily derive the following:
\begin{lemma}\label{lem:valid-main}
For $t=1,2,\ldots,T+1$, $(f\bi{\tau},r\bi{\tau},\varepsilon\bi{\tau},\roots\bi{\tau})$ is a valid iterate in the instance $(G\bi{\tau},u\bi{\tau})$, and for each $\tau<\tau'\le T+1$, $(G\bi{\tau'},u\bi{\tau'})$ is an admissible extension of $(G\bi{\tau},u\bi{\tau})$, and $(f\bi{\tau'},r\bi{\tau'},\varepsilon\bi{\tau'},\roots\bi{\tau'})$ is a valid successor of $(f\bi{\tau},r\bi{\tau},\varepsilon\bi{\tau},\roots\bi{\tau})$.  
\end{lemma}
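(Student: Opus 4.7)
The plan is to proceed by induction on $\tau$, first establishing the base case, then using Lemma~\ref{lem:valid-maintain} for the consecutive case, and finally lifting to arbitrary $\tau<\tau'$.

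For the base case $\tau=1$, I would verify directly that the output $(f\bi{1},\nr\bi{1},\varepsilon\bi{1},\roots\bi{1})$ of $\Initialize$ satisfies \Cref{def:proper}. The initialization saturates one direction of every arc not incident to $s$ or $t$, so $\res{f\bi{1}}_e=0$ or $\res{f\bi{1}}_{\rev{e}}=0$ holds on all non-boundary arcs, giving \eqref{it:bounds} vacuously. The choice $\varepsilon\bi{1}=\sum_{e:\, u_e<\infty}u_e$ upper-bounds the maximum flow value so \eqref{it:eps-opt} holds, $\nr\bi{1}=\res{f\bi{1}}$ is trivially safe giving \eqref{it:safe}, $\roots\bi{1}=V$ indexes the singleton free components giving \eqref{it:roots}, and the explicit balancing of $f_{si},f_{it},f_{ti},f_{is}$ at each internal $i$ ensures \eqref{it:oneway}. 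For the consecutive inductive step, \Cref{lem:valid-maintain} gives everything needed: valid iterate status at $\tau+1$, admissibility of the extension from $\tau$ to $\tau+1$, and the strengthened one-step successor property (\ref{it:arc-prox-mod}').

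Next I would lift to arbitrary $\tau<\tau'\le T+1$ by induction on $\tau'-\tau$. For admissibility, the key observation is that any min cut $S$ of $(G\bi{\tau},u\bi{\tau})$ remains a min cut of $(G\bi{\tau'-1},u\bi{\tau'-1})$: by the inductive admissibility no arc added between these iterates leaves $S$, so its cut value is unchanged, and $\nu$ is preserved throughout by \Cref{def:admissible}. Hence applying the single-step admissibility from $\tau'-1$ to $\tau'$ to the cut $S$ shows that newly-added arcs do not leave $S$ either, completing the composition. Clause \eqref{it:eps-decrease} of \Cref{def:successor} is immediate from iterating $\varepsilon\bi{i+1}\le \ab^{-3}\varepsilon\bi{i}$.

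The main technical step is the arc proximity bound \eqref{it:arc-prox}. Setting $\Delta_i\defeq\min\{\nr\bi{i}_e,\varepsilon\bi{i}\}+\ab^{-3}\varepsilon\bi{i}$, the one-step bound (\ref{it:arc-prox-mod}') combined with $f\bi{i+1}_{\rev{e}}\ge 0$ telescopes to
\[
f\bi{\tau'}_e-f\bi{\tau}_e+f\bi{\tau}_{\rev{e}} \;=\; \sum_{i=\tau}^{\tau'-1}\!\bigl(f\bi{i+1}_e-f\bi{i}_e+f\bi{i}_{\rev{e}}\bigr)-\sum_{i=\tau+1}^{\tau'-1}\!f\bi{i}_{\rev{e}} \;\le\; \sum_{i=\tau}^{\tau'-1}\Delta_i\, .
\]
Splitting off the $i=\tau$ term and bounding $\Delta_i\le 2\varepsilon\bi{i}\le 2\ab^{-3(i-\tau)}\varepsilon\bi{\tau}$ for $i>\tau$, the tail sum is at most $\tfrac{2\ab^{-3}}{1-\ab^{-3}}\varepsilon\bi{\tau}\le 3\ab^{-3}\varepsilon\bi{\tau}$. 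The total is thus $\min\{\nr\bi{\tau}_e,\varepsilon\bi{\tau}\}+4\ab^{-3}\varepsilon\bi{\tau}\le \min\{\nr\bi{\tau}_e,\varepsilon\bi{\tau}\}+\ab^{-2}\varepsilon\bi{\tau}$ since $\ab=4n^2\ge 16$. For an arc $e\in E(G\bi{\tau'})\setminus E(G\bi{\tau})$ added at some iteration $\tau''\in(\tau,\tau')$, (\ref{it:arc-prox-mod}') gives $f\bi{\tau''+1}_e\le 2\varepsilon\bi{\tau''}$ at insertion, and telescoping from $\tau''+1$ bounds $f\bi{\tau'}_e$ by $2\varepsilon\bi{\tau''}+3\ab^{-3}\varepsilon\bi{\tau''}\le 3\varepsilon\bi{\tau}$. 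The only real obstacle is confirming that the $\ab^{-3}$ slack in the one-step bound, when summed geometrically, fits inside the $\ab^{-2}$ threshold of the successor definition; this is precisely why Lemma~\ref{lem:valid-maintain} was stated with the strengthened bound (\ref{it:arc-prox-mod}') rather than directly asserting \Cref{def:successor}\eqref{it:arc-prox}.
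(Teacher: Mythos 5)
Your proof is correct and follows essentially the same route as the paper's: establish the consecutive case with the strengthened bound from \Cref{lem:valid-maintain} and the base case from $\Initialize$, then sum the one-step $(\mathrm{A'})$ bound geometrically using the $\ab^{-3}$ decay of $\varepsilon$ to land within the $\ab^{-2}$ threshold of \Cref{def:successor}\eqref{it:arc-prox}. You simply spell out the telescoping computation, the transitivity of admissible extensions, and the base-case verification that the paper leaves implicit.
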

\begin{proof}
Lemma~\ref{lem:valid-maintain} already shows that all these are valid iterates, and the valid successor property for consecutive iterates $\tau'=\tau+1$, with the stronger property (\ref{it:arc-prox-mod}'). The only remaining property to show is Definition~\ref{def:successor}\eqref{it:arc-prox} for $\tau'>\tau+1$.
This follows by using (\ref{it:arc-prox-mod}') for all consecutive iterates between $\tau$ and $\tau'$, and that $\varepsilon$ decreases at least by a factor $\ab^{-3}$ between any two consecutive iterations.
\end{proof}

The next lemma states that input arcs with infinite capacity not incident to $s$ or $t$ remain at the other capacity boud throughout the while cycle of the algorithm. We note that their flow may be changed in the very final step, when $\RerouteS{}$ and
 $\WitRoute$ reroute the flow to original arcs from the shortcut arcs and the transitive closure arcs, respectively.
\begin{lemma}\label{lem:nailed-arcs}
For every arc $e=(i,j)\in \gEi$ with $u_e=\infty$, $\{i,j\}\cap\{s,t\}=\emptyset$, we have $\res{f\bi{\tau}}_{\rev{e}}=0$ for all $\tau=1,2,\ldots,T$. 
\end{lemma}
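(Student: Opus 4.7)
The plan is to prove the statement by induction on $\tau$, showing that $f_e$ and $f_{\rev{e}}$ are never modified during any iteration of the main \textbf{while} loop.

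For the base case $\tau = 1$, I would appeal to $\Initialize$. Since $u_e = \infty$, the preliminaries guarantee $u_{\rev{e}} < \infty$, and the initialization described in \Cref{sec:main-alg-subroutines} explicitly saturates the finite-capacity direction for every such arc not incident to $s$ or $t$, setting $f_e = 0$ and $f_{\rev{e}} = u_{\rev{e}}$. This yields $\res{f^{(1)}}_{\rev{e}} = u_{\rev{e}} - u_{\rev{e}} + 0 = 0$.

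For the inductive step, assume $\res{f^{(\tau)}}_{\rev{e}} = 0$ and enumerate every write to $f$ performed during iteration $\tau$. All flow updates go through $\SendFlow$ and $\Saturate$, which modify only the arc passed in, its reverse, and certain shortcut arcs of $\Fs$ and boundary arcs of $\gEbd$ used internally by $\Saturate$. In a single iteration, these routines are invoked with a principal argument lying in $\Eaux \cup H$ (the main-loop $\SendFlow(e,y_e)$ calls, the direct $H$-assignments on line~\ref{line:sendflow2}, and the $\PostProcess$ calls on $(\Eaux \setminus \gEbd) \uplus H$) or in $\gEbd$ (the $\RemoveRoot(k)$ saturations of $(s,k)$ and $(k,t)$).

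The key step is to check that $e$ and $\rev{e}$ avoid the sets $\Eaux$, $H$, $\Fs$, and $\gEbd$. Since $u_e = \infty$ gives $\res{f^{(\tau)}}_e = \infty \ge \ab\varepsilon^{(\tau)}$, the arc $e$ is abundant and hence not essential; the inductive hypothesis gives $\res{f^{(\tau)}}_{\rev{e}} = 0 < \essentialthreshold$, ruling out $\rev{e}$ from $\Eaux$ as well. Being original arcs of $\gEi$, both $e$ and $\rev{e}$ are distinct in the multigraph $E$ from the parallel multi-arcs inserted from $H$ on line~\ref{line:sendflow2}, and they lie outside the shortcut set $\Fs$, which is built from scratch by the algorithm. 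Finally, $\{i,j\} \cap \{s,t\} = \emptyset$ prevents membership in $\gEbd$. Combining these exclusions, no write in iteration $\tau$ lands on $f_e$ or $f_{\rev{e}}$, so $\res{f^{(\tau+1)}}_{\rev{e}} = 0$, closing the induction. The main subtlety I would need to be careful about is the multigraph bookkeeping: since $E$ is a multiset, a pair $(i,j)$ that appears both as the original $e \in \gEi$ and as a later $H$-insertion corresponds to two independent flow entries, so updates to the $H$-copy cannot leak back to our original $e$.
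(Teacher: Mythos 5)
Your proof is correct and follows essentially the same route as the paper's: base case from $\Initialize$, then an inductive step that rules $e$ and $\rev{e}$ out of the essential arcs $\Eaux$, the cover arcs $H$, the shortcut arcs $\Fs$, and the boundary arcs $\gEbd$, so that no call to $\SendFlow$ or $\Saturate$ in the iteration touches them. You are a bit more explicit than the paper in enumerating the flow-modifying sites and in flagging the multigraph distinction between the original arc $e\in\gEi$ and any parallel copy inserted from $H$, but the underlying argument is the same.
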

\begin{proof}
This property holds in $\Initialize$. In all subsequent iterations, note that such an arc $e$ will never be essential, since $u_e+u_{\rev{e}}=\infty$. The auxiliary graphs $(\cG,\cu)$ will only contain essential arcs and arcs from the set $H$. All arcs in $H$ that carry flow will be added to  $\Fe$ as new extension arcs.\footnote{This may create parallel copies if the arc was already present.} 
Since $e$ is not a boundary arc nor a shortcut arc, the subroutines $\PostProcess(\cdot)$ or $\RemoveRoot(\cdot)$ will not use it to route flow in calls to $\Saturate(\cdot)$. Thus, $\res{f}_{\rev{e}}=0$ is maintained through each while cycle.
\end{proof}

\begin{lemma}\label{lem:shortcut-flow-bound}
Let $e$ be a shortcut arc added in $\RemoveRoot(\cdot)$ at the end of iteration $\tau$. Then, $f^{(\tau')}_e\le 4\varepsilon^{(\tau+1)}$ in all iterations $\tau'>\tau$.
\end{lemma}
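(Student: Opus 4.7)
The plan is to establish the bound in two stages: first bound $f^{(\tau+1)}_e$, the flow on $e$ at the start of the iteration immediately after $e$ was created, and then apply the valid-successor property a single time to reach any later iterate $\tau' > \tau+1$.

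For stage one, when $\RemoveRoot(k)$ inserts the shortcut arc $e=(i,j)$ (with $j$ the new root of $i$), the initialization sets $f_e=f_{\rev{e}}=0$. The only operations inside $\RemoveRoot(k)$ that can subsequently alter flow on a shortcut arc are the calls $\Saturate((s,k))$ and $\Saturate((k,t))$, and by \Cref{def:proper}\eqref{it:oneway} at most one of these is triggered. Inspection of $\Saturate$ shows that the first pushes strictly less than $2\varepsilon^{(\tau+1)}$ units onto $(k,j)$ and the second pushes strictly less than $2\varepsilon^{(\tau+1)}$ units onto $(j,k)$; no other shortcut arc is affected. Moreover, a parallel call $\RemoveRoot(k')$ in the same iteration with $k'\ne k$ only touches shortcut arcs incident to its own newly updated root, which must be distinct from $\{k,j\}$ since $k$ has already left $\roots$. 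Hence $f^{(\tau+1)}_e=0$ when $i\ne k$ and $f^{(\tau+1)}_e<2\varepsilon^{(\tau+1)}$ when $i=k$; in every case $f^{(\tau+1)}_e\le 2\varepsilon^{(\tau+1)}$.

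For stage two, \Cref{lem:valid-main} asserts that the iterate at step $\tau'$ is a valid successor of the iterate at step $\tau+1$ directly, not merely of step $\tau'-1$. Since $\nr_e$ is initialized to $\infty$ in $\RemoveRoot(\cdot)$ and $\PostProcess(\cdot)$ only overwrites $\nr$ on arcs in $\Eaux\cup H$, we have $\nr^{(\tau+1)}_e=\infty$. \Cref{def:successor}\eqref{it:arc-prox} therefore yields $f^{(\tau')}_e-f^{(\tau+1)}_e+f^{(\tau+1)}_{\rev{e}}<2\varepsilon^{(\tau+1)}$. Dropping the nonnegative $f^{(\tau+1)}_{\rev{e}}$ and combining with stage one gives $f^{(\tau')}_e<4\varepsilon^{(\tau+1)}$; the case $\tau'=\tau+1$ is immediate from stage one alone.

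The key subtlety I anticipate is to apply the successor bound a single time between iterates $\tau+1$ and $\tau'$ rather than chaining it over every consecutive pair. A chained version would incur an extra geometric factor $1/(1-\Gamma^{-3})$ from summing $2\varepsilon^{(k)}$ over $k$ and would just miss the stated constant. The direct successor relation granted by \Cref{lem:valid-main} between iterates $\tau+1$ and $\tau'$ sidesteps this loss entirely; the remaining bookkeeping is only checking that $\nr_e$ on a shortcut arc is never overwritten, which follows from shortcut arcs lying outside $\Eaux\cup H$.
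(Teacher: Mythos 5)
Your proof is correct and follows the same two-stage strategy as the paper's: first bound the flow installed on the new shortcut arc within $\RemoveRoot(\cdot)$ by $2\varepsilon^{(\tau+1)}$ (via the $\Saturate$ calls and the one-sidedness from \Cref{def:proper}\eqref{it:oneway}), then apply the valid-successor bound of \Cref{lem:valid-main} and \Cref{def:successor}\eqref{it:arc-prox} directly from iterate $\tau+1$ to $\tau'$ to add at most another $2\varepsilon^{(\tau+1)}$. Your explicit observations that the direct (rather than chained) successor relation avoids the geometric loss, and that $\nr_e$ on a shortcut arc is never overwritten by $\PostProcess(\cdot)$, are exactly the facts the paper's terser proof relies on.
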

\begin{proof}
At the point where $e$ is added as a shortcut arc, we already set $\varepsilon=\varepsilon^{(\tau+1)}$ as the new value. $\RemoveRoot(i)$ may send at most $2\varepsilon^{(\tau+1)}$ units of flow on the newly added shortcut arcs $(i,\Root(i))$ or $(\Root(i),i)$. Thus, $f^{(\tau+1)}_e\le 2\varepsilon^{(t+1)}$. This  may subsequently change by at most $2\varepsilon^{(\tau+1)}$ according to Lemma~\ref{lem:valid-main} and Definition~\ref{def:successor}\eqref{it:arc-prox}.
\end{proof}

\subsection{Bounding the number extension arcs}

We are ready to  show Theorem~\ref{thm:maxflow-analysis-main}\eqref{part:tot-new}.
\begin{lemma}\label{lem:extension-tot} 
The total number of extension arcs $\Fe$ added to the instance throughout is $2n$, and the total number of shortcut arcs $\Fs$ is at most $n\log_2 n$.
\end{lemma}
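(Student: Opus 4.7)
The plan is to establish the two bounds via separate arguments.

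For $|\Fs| \leq n \log_2 n$, I will use the doubling argument outlined in the algorithm overview. A shortcut arc $(i, j)$ is added to $\Fs$ inside $\RemoveRoot(k)$ precisely for each node $i$ whose root changes from $k$ to $j = \Root(k)$. By the convention for choosing the new root (selecting the root of the larger of the two merged components, except when $s$ or $t$ is involved, in which case the root is fixed to $s$ or $t$), whenever the root of a fixed node $i \in V \setminus \{s,t\}$ changes, the size of the component containing $i$ at least doubles, unless the new root is $s$ or $t$---after which $i$'s root remains fixed and no further shortcut arcs for $i$ are ever created. Since component sizes are bounded by $n$, the root of $i$ can change at most $\lfloor \log_2 n \rfloor$ times, contributing that many shortcut arcs. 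Summing over the $n$ nodes yields $|\Fs| \leq n \log_2 n$.

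For $|\Fe| \leq 2n$, I will rely on the acyclic basic flow structure produced by $\approxFlow$. In each iteration $\tau$, any new arc added to $\Fe$ belongs to $H^{(\tau)} \cap \supp(f)$ after $\PostProcess$. Because every $e \in H^{(\tau)}$ has $\cu^{(\tau)}_e = \infty$ in $(\cG^{(\tau)}, \cu^{(\tau)})$, such an arc with positive flow lies in $\{e : 0 < y^{(\tau)}_e < \cu^{(\tau)}_e\}$, which by Lemma~\ref{lem:basic-flow}\eqref{it:acyclic} forms an undirected forest on $\cV^{(\tau)}$ with $s$ and $t$ in different components. The heart of the argument is to show that these per-iteration forests aggregate into a global structure on $V$ with at most $2n$ arcs.

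The key step will be to partition $\Fe$ into (at most) two parts that each form a forest on $V$: roughly, arcs carrying flow toward the sink-side component $P_t$ and arcs carrying flow away from the source-side component $P_s$, each of which has at most $n - 1$ undirected edges. I would argue the forest property inductively in $\tau$. Once $(i,j)$ is added to $\Fe$ it becomes permanently abundant by Lemma~\ref{lem:stays-abundant}, which enlarges the abundantly-reachable region around $i$ and $j$; subsequent calls to $\dsTcCover$ then never have occasion to return a new arc that closes an undirected cycle with the existing $\Fe$, since any hypothetical cycle-closing arc could be combined with existing extension arcs and present abundant paths to produce a directed cycle in the support of some $y^{(\tau')}$, contradicting its acyclicity.

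The main obstacle will be making the cross-iteration forest argument precise: the per-iteration forests live on different vertex sets $\cV^{(\tau)}$ that shift as $\Vess$, $V(\Eaux)$, and the transitive cover output evolve, while the desired global bound lives on the fixed vertex set $V$. Carefully tracking how $\CreateCompact$ selects $\cV^{(\tau)}$, how $\PostProcess$ prunes flows below $2\delta$ so that only arcs with $y_e \ge 2\delta$ survive in $\Fe$, and how $\dsTcAdd(J \setminus H)$ prevents duplicate feedback into the transitive cover data structure, will be essential for showing that the aggregated forests merge cleanly into at most two forests on $V$ and thus yield the bound $2(n-1) \leq 2n$.
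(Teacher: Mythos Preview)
Your argument for the shortcut arcs $\Fs$ is correct and is exactly what the paper does.

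Your argument for $\Fe$ has a genuine gap. The per-iteration forest observation is right (and the paper uses it too), but the plan to aggregate these into two global forests on $V$ is not justified and, as you already sense, the cross-iteration step does not go through. Concretely: the auxiliary instance $\cG^{(\tau')}$ does not contain the old extension arcs as such; old extension arcs that became abundant are absorbed into the transitive cover data structure and may or may not reappear as arcs of $H^{(\tau')}$. So there is no mechanism by which a putative cycle among extension arcs from different iterations would manifest as a directed cycle in $\supp(y^{(\tau')})$. Moreover, the ``two forests'' decomposition (toward $P_t$ / away from $P_s$) does not cover all extension arcs, since an arc of $H^{(\tau)}$ can lie entirely between components that contain neither $s$ nor $t$.

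The paper takes a different and much cleaner route that you are missing: every new extension arc added in iteration $\tau$ becomes a gap arc (or is already free) in the next valid iterate, hence by \Cref{lem:strict-to-free} it is free by iteration $\tau+2$. Because the $\gamma^{(\tau)}$ new arcs in iteration $\tau$ form a forest on $\cV^{(\tau)}$, and each of them, once free, merges two previously distinct free components, the number of free components drops by at least $\gamma^{(\tau)}$ between iterations $\tau$ and $\tau+2$. Since $|\CP_{f,\varepsilon}|$ starts at $n$ and is nonincreasing, summing over $\tau$ (and accounting for the two-iteration lag) gives $\sum_\tau \gamma^{(\tau)} \le 2n$. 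This amortizes extension arcs against component merges rather than trying to maintain a global forest on $V$.
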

\begin{proof}
Since the flow $y$  returned by $\approxFlow $is a basic acylic flow, 
by Lemma~\ref{lem:basic-flow}\eqref{it:acyclic}, in every iteration the set of arcs  $e\in H$ with $y_e>0$ is a forest. This is a superset of the new arcs added to $\Fe$ in this iteration. Moreover, all new arcs added will be gap arcs or free arcs in the next iteration, according to Lemma~\ref{lem:valid-main} and Lemma~\ref{lem:strict-to-free}. Consequently, if $\gamma^{(\tau)}$ new arcs are added to $\Fe$ in iteration $\tau$, then the number of components decreases by at least $\gamma^{(\tau)}$ between iterations  $\tau$ and $\tau+2$.  Hence, we can bound $|\Fe|= \sum_{\tau=1}^T \gamma^{(\tau)}\le 2n$.

The bound on the number of shortcut arcs follows because every time a non-root node $i$ is connected to a new root $j$ by a new shortcut arc $(i,j)$, we have that either $j\in \{s,t\}$, in which case $\Root(i)$ will not change anymore, or $|P_{\Root(i)}|$ has doubled by the choice that we always keep the root of the larger component when merging components. Hence, at most $\log_2 n$ shortcut arcs will be created for every $i\in V$.
\end{proof}

\subsection{Correctness and running time of rerouting}\label{sec:witroute-analyze}

In \Cref{sec:witroute}, we described the subroutine $\WitRoute(\cdot)$ that is used at the end of the algorithm for the final flow rerouting.  
\Cref{alg:witRoute} describes an efficient rerouting procedure given a witness list. We now show  the correctness of this subroutine and bound its running time.
\begin{lemma}
\label{lem:routing}
For arbitrary witness list $L = (a_i,b_i,w_i)_{i \in [K]} \in \wlistSpace$, flow $f \in \R^{\dE_*(L)}$ and valid reverse arcs $R \subseteq \dEL(L)$, let $g^{(K)} = f$ and iteratively define for all $i = K$ to $1$:
\begin{equation}
\label{eq:route-cases}
g^{(i - 1)} \gets 
\begin{cases}
    g^{(i)} - g_{(a_i,b_i)}^{(i)} (\vec{1}_{(a_i,b_i)} - \vec{1}_{(b_i,a_i)})
        & \text{ if $(a_i,b_i) \in R$} \\
    g^{(i)} 
        & \text{ if $(a_i,b_i) \in \dE(L) \setminus R$}\\
    g^{(i)} - g_{(a_i,b_i)}^{(i)} (\vec{1}_{(a_i,b_i)} - \vec{1}_{\walk_L(a_i,b_i)})
        & \text{ if $(a_i,b_i) \in \dE_*(L) \setminus \dE(L)$}
\end{cases}
\end{equation}
where $\vec{1}_{W}$ denotes the vector where each $[\vec{1}_{W}]_e$ is the number of times arc $e$ is traversed in walk $W$. Then $g^{(0)} =  \WitRoute(L,f,R)$ (\Cref{alg:witRoute}) and can be computed in $O(K)$ time.\footnote{This assumes standard graphical access to $\dE_*(L)$ in that given an $(a_i,b_i,w_i)$ and  $x \in \R^{\dE_*(L)}$ it is possible to lookup and change the value associated $x_e$ for any $e \in \{(a_i,b_i),(a_i, w_i),(w_i,b_i)\} \in \dE_*(L)$ in $O(1)$ time. This assumption is met in our applications of \Cref{lem:routing}.}
\end{lemma}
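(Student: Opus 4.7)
The running time bound is immediate: the for loop in \Cref{alg:witRoute} iterates $K$ times, and by the assumed constant-time graphical access to $\dE_*(L)$, each iteration performs $O(1)$ arithmetic operations and coordinate updates on $g$. Hence $\WitRoute(L,f,R)$ terminates in $O(K)$ time.

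For correctness, I plan to view both procedures as computing compositions of linear maps on $\R^{\dE_*(L)}$, applied in the order $i = K, K-1, \ldots, 1$. Let $T_i$ denote the operator defined by the claim's iteration \eqref{eq:route-cases} (taking $g^{(i)}$ to $g^{(i-1)}$) and let $A_i$ denote the operator implemented by the $i$-th pass of the for loop in \Cref{alg:witRoute} (taking the in-algorithm $g$ before that pass to its value after). The goal is then to prove $A_1 A_2 \cdots A_K f = T_1 T_2 \cdots T_K f$. Direct inspection handles the two easy cases: when $(a_i,b_i) \in \dE(L) \setminus R$ both operators act as the identity; when $(a_i,b_i) \in R$ both operators zero the $(a_i,b_i)$-coordinate and perform the same update on the $(b_i,a_i)$-coordinate.

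The substantive case is the transitive arc case $(a_i,b_i) \in \dE_*(L) \setminus \dE(L)$. Here $T_i$ deposits $g^{(i)}_{(a_i,b_i)}$ onto each arc of the full walk $\walk_L(a_i,b_i) \subseteq \dE(L)$ (counted with multiplicity), whereas $A_i$ deposits this amount only onto the immediate witness arcs $(a_i,w_i)$ and $(w_i,b_i)$. The reconciling identity is
\[
\walk_L(a_i,b_i) = \walk_L(a_i,w_i) \,\cdot\, \walk_L(w_i,b_i),
\]
together with $(a_i,w_i) = (a_{j_i},b_{j_i})$ and $(w_i,b_i) = (a_{k_i},b_{k_i})$ for some $j_i, k_i \in [i-1]$ guaranteed by \Cref{def:wit_list}. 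Thus the flow deposited by $A_i$ on the two witness arcs will, at the later loop iterations $j_i$ and $k_i$, be propagated further by $A_{j_i}$ and $A_{k_i}$, and by induction on the depth of the witness recursion (a well-founded order, since witnesses lie at strictly smaller indices), this propagation reaches exactly $\walk_L(a_i,w_i)$ and $\walk_L(w_i,b_i)$, matching $T_i$ in aggregate.

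The main obstacle is to execute this ``lazy propagation'' argument rigorously in the presence of multiple contributions: a transitive coordinate $(a_{j_i},b_{j_i})$ may accumulate deposits from several steps $i > j_i$ before it is itself processed at step $j_i$. By the distinctness of the $(a_j,b_j)$ pairs in \Cref{def:wit_list}, each coordinate is zeroed exactly once by the algorithm (at its unique position), and by the linearity of all updates the accumulated contributions sum correctly when read at that moment. Combined with the induction on witness-recursion depth described above, this establishes $A_1 \cdots A_K f = T_1 \cdots T_K f$, i.e., $g^{(0)} = \WitRoute(L,f,R)$.
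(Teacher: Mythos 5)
Your running-time argument is fine modulo the (straightforward) preprocessing of marking which $(a_i,b_i)$ lie in $R$ and which lie in $\dE(L)$, which the paper spells out.

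For correctness your plan — framing both computations as products of linear operators $A_i$, $T_i$ on $\R^{\dE_*(L)}$ and arguing by ``lazy propagation'' — takes a genuinely different route from the paper. The paper's proof is a strong induction on the list length $K$: it peels off the first for-loop iteration, applies the inductive hypothesis to the truncated list $\hat{L}=(a_i,b_i,w_i)_{i\in[K_*]}$ with modified input $\hat{f}$, and then explicitly computes the difference $h^{(k)}-g^{(k)}$ between the two trajectories at \emph{all} intermediate $k$, showing it vanishes by step $0$. Your approach tries to show $A_1\cdots A_K f = T_1\cdots T_K f$ directly by tracing each deposit through the witness recursion.

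The gap is in how that tracing is resolved. You correctly identify the ``main obstacle'' — that a witness coordinate $(a_{j_i},b_{j_i})$ accumulates deposits from many later indices before it is processed at index $j_i$, and that the intermediate states of the two computations genuinely disagree — but then you dispatch it with a single sentence (``by the linearity of all updates the accumulated contributions sum correctly when read at that moment''), which is the claim to be proven rather than a proof. The ``induction on witness-recursion depth'' has no stated induction hypothesis: you never formulate an invariant that describes, at an intermediate index $k$, what $A_{k+1}\cdots A_K f$ and $T_{k+1}\cdots T_K f$ agree and disagree on, so there is nothing to which the well-foundedness of the recursion can be applied. To make this work you would need something like: the difference $A_{k+1}\cdots A_K f - T_{k+1}\cdots T_K f$ is a signed combination of $\vec{1}_{(a_j,b_j)} - \vec{1}_{\walk_L(a_j,b_j)}$ over indices $j \le k$ whose coordinate has received deposits but not yet been processed, and then show by descending induction on $k$ that this invariant is maintained and collapses to zero at $k=0$. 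That bookkeeping is exactly what the paper's proof performs (in the case $\bar e\notin\dE(L)$ it writes out $h^{(k)}-g^{(k)}$ at every $k$ in terms of $\vec{1}_{e_i}-\vec{1}_{W_i}$ and $\vec{1}_{e_j}-\vec{1}_{W_j}$). So your mechanism is right but the inductive statement is missing; as written the argument does not yet constitute a proof.
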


\newcommand{\alg}{\mathrm{alg}}

\begin{proof}
Since $K = |\dE_*(L)|$ it is straightforward to implement \Cref{alg:witRoute} in $O(K)$ time by spending $O(K)$ time to mark which $(a_i,b_i)$ are in $R$ and $\dE(L)$ and then directly implementing the operations.  Therefore, in the remainder of this proof we let $h^{(0)} \defeq \WitRoute(L,f,R)$ and prove that $h^{(0)} = g^{(0)}$ by induction on $K$. When $K = 1$, $(a_1,b_1) \in \dE(L) \setminus R$ (since both $(a,b) \in \dELcl(L) \setminus \dE$ and $(a,b) \in R$ require $\dELsub{1-1}(L) \neq \emptyset$) and therefore, $\WitRoute(L,f,R)$ outputs the desired $h^{(0)} = f = g^{(0)}$.

Next, suppose that the claim holds for $K = K_*$. We prove that it holds for $K = K_{*} + 1$. Our proof strategy is to reason about the the first iteration of the for loop (\Cref{line:witroute:loop_start}) in $\WitRoute(\cdot)$ directly and then use the inductive hypothesis to reason about remaining iterations. Let $h^{(K_*)}$ be $g$ at the start of the $\WitRoute(\cdot)$ for loop (\Cref{line:witroute:loop_start}) for $i = K_*$ (i.e., $g$ after one iteration of the for loop). Let $\hat{L} \defeq (a_i,b_i,w_i)_{i \in [K_*]}$ denote the first $K_*$ elements of the witness list, let $\hat{f} \defeq h^{(K_*)} |_{\dE_*(\hat{L})}$ denote $h^{(K_*)}$ ignoring the $\bar{e} \defeq 
(a_{t_* + 1}, b_{t_* + 1})$ coordinate, let $\hat{R} \defeq R \setminus \{\bar{e}\}$, and let $\hat{g} \defeq \WitRoute(\hat{L}, \hat{f}, \hat{R})$. Since the $(a_i,b_i)$ in a witness list are distinct, the simple iterative structure of $\WitRoute(\cdot)$ implies that $h^{(0)}_{\bar{e}} = h^{(K_*)}_{\bar{e}}$ and $h^{(0)}_{\dE_*(\hat{L})} = \hat{g}$. Consequently it suffices to reason about $h^{(K_*)}_{\bar{e}}$ and $\hat{g}$.

Moreover, since $\hat{L}$ is a witness list of length $K_*$, the inductive hypothesis applies to $\hat{g}$. For all $i \in [K_*] \cup \{0\}$ let $\hat{g}^{(i)}$ denote $g^{(i)}$ as defined in \eqref{eq:route-cases} of  \Cref{lem:routing} applied to  $\WitRoute(\hat{L}, \hat{f}, \hat{R})$ and let $h^{(i)} \in \R^{\dE_*(L)}$ have $h^{(i)}_{\bar{e}} = h^{(K_*)}_{\bar{e}}$ and $h^{(i)}|_{\dE_*(\hat{L)}} = \hat{g}^{(i)}$ (this is consistent with the earlier definition of $h^{(K_*)}$ and $h^{(0)}$). The inductive hypothesis (and  $g_{\bar{e}} = h^{(K_*)}_{\bar{e}}$ and $g_{\dE_*(\hat{L})} = \hat{g}$) imply that $g = h^{(0)}$.

We now characterize $h^{(k)} - g^{(k)}$ for all $k \in (0,K_*]$ by considering two cases. First, when $\bar{e} \in \dE(L)$ then $h^{(K_*)} = g^{(K^*)}$ as \eqref{eq:route-cases} performs same changes to both $h^{(k)}$ and $g^{(k)}$. Consequently, in this case each $h^{(i)} = g^{(i)}$ for all $i \in [0, K_*]$ and $h^{(0)} = g^{(0)}$ as desired. 

Second, consider the case when $\bar{e} \notin \dE(L)$. In this case
\begin{align*}
    h^{(K_*)} &=  f - f_{\bar{e}} (\vec{1}_{\bar{e}} - \vec{1}_{(a_{K_* + 1},w_{K_* + 1})} - \vec{1}_{(w_{K_* + 1},b_{K_* + 1})} )\text{  whereas }\\
    g^{(K_*)} &=  f - f_{\bar{e}} (\vec{1}_{\bar{e}} - \vec{1}_{\walk_{\hat{L}}(a_{K_* + 1},b_{K_* + 1})} )\,.
    \end{align*} 
    Let $i < j \in [K_*]$ satisfy $\{(a_i,b_i),(a_j,b_j)\} = \{(a_{K_* + 1},w_{K_* + 1}) , (w_{K_* + 1}, b_{K_* + 1})\}$ and define $e_i \defeq (a_i,b_i)$, $e_j \defeq (a_j,b_j)$, $W_i \defeq \walk_{\hat{L}}(a_i,b_i)$, and $W_j \defeq \walk_{\hat{L}}(a_j,b_j)$. Note that 
    \[
    \vec{1}_{W_{K_*}} 
        = \vec{1}_{\walk_{\hat{L}}(a_{K_* + 1},w_{K_*})} + \vec{1}_{\walk_{\hat{L}}(w_{K_*}, b_{K_* + 1})} 
        = \vec{1}_{W_i} + \vec{1}_{W_j}
    \]
    since $\walk_{\hat{L}}(a_{K_* + 1},b_{K_* + 1}) = \walk_{\hat{L}}(a_{K_* + 1}), \walk_{\hat{L}}(w_{K_*}, b_{K_* + 1})$. For all $k \in (i, K_*]$, $(a_k,b_k)$ is not in $W_i$ or $W_j$, an thereofre iteration of \Cref{lem:routing} performs the same change to $g^{(k)}$ and
\[
h^{(k)} - g^{(k)} = f_{\bar{e}}(\vec{1}_{e_i} - \vec{1}_{W_i} + \vec{1}_{a_j} - \vec{1}_{W_j})
\text{ for all } k \in (i, K_*]\,.
\]
Next if $(a_i,b_i) \in \dE$ then $W_i = e_i$ and, consequently, $h^{(j)} - g^{(j)} = \vec{1}_{(a_j,b_j)} - \vec{1}_{\walk_{\hat{L}}(a_j,b_j)}$. However, if $(a_i,b_i) \in \dE_*$ then we see that iteration $i$ of \Cref{lem:routing} performs the same change to $g^{(k)}$ and $h^{(k)}$ other than also subtracting $f_{\bar{e}} \vec{1}_{e_i}$ and adding $f_{\bar{e}} \vec{1}_{W_i}$ to $h^{(k)}$. In both cases, $h^{(i-1)} - g^{(i-1)} = f_{\bar{e}}(\vec{1}_{(a_j,b_j)} - \vec{1}_{W_j})$. Additionally, by similar reasoning, since all $k \in (i, K_*]$, $(a_k,b_k)$ is not in $\walk_{\hat{L}}(a_i,b_i)$ or $\walk_{\hat{L}}(a_j,w_j)$ and therefore the step in \Cref{lem:routing} performs the same operation to $g^{(k)}$ and $h^{(k)}$ we see that 
$
h^{(k)} - g^{(k)} = f_{\bar{e}}(\vec{1}_{a_j} - \vec{1}_{W_j})
$ for all $k \in (j, i)$. Repeating a similar argument to when $k = i$ shows that $h^{(k)} - g^{(k)} = \vec{0}$ for all $k \in [0, j)$ completing the proof by induction.
\end{proof}

\subsection{Correctness}
We start by showing that the output of Algorithm~\ref{alg:max-flow} is correct: if it terminates, it outputs a maximum flow in the original instance.
\begin{lemma}\label{lem:correct-main}
At termination, Algorithm~\ref{alg:max-flow} outputs a maximum flow in the input instance $(\gGi,\ui)$.
\end{lemma}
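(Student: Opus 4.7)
The plan is to argue correctness in two stages: (1) at termination of the while loop, $f$ is a maximum flow in the current extended instance $(G, u)$, and (2) the two post-processing steps $\RerouteS(\Fs)$ and $\WitRoute(\cdot)$ transform $f$ into a maximum flow $f^\star$ in the original instance $(\gGi, \ui)$.

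For (1), the while loop exits only when $\varepsilon = 0$. By \Cref{lem:valid-main}, the tuple $(f, \nr, \varepsilon, \roots)$ at termination is a valid iterate in $(G, u)$, so by \Cref{def:proper}\eqref{it:eps-opt}, $f$ is $0$-optimal, meaning $\val{f} = \nu(G, u)$. The same lemma also states that $(G, u)$ is an admissible extension of $(\gGi, \ui)$; by \Cref{def:admissible}, this gives $\nu(G, u) = \nu(\gGi, \ui)$, and hence $\val{f} = \nu(\gGi, \ui)$.

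For (2), I would first argue that $\RerouteS(\Fs)$ preserves feasibility and value while moving all flow off shortcut arcs onto the abundant spanning subgraphs ${\cal T}_k$. Processing each component by leaf-peeling aggregates the shortcut imbalance $d(i)$ and pushes it along tree arcs. Since shortcut-arc flows are $O(\varepsilon)$ by \Cref{lem:shortcut-flow-bound}, the cumulative displacement on any tree arc is $O(n\varepsilon)$, which is well under the abundant residual capacity $\ab\varepsilon = 4n^2\varepsilon$, so capacity constraints hold; flow conservation is preserved because the redistribution happens inside each component (with boundary treatment at $s$ and $t$), and the value is unchanged. Next, $\WitRoute(L, f|_{\gEA}, \rev{H})$ reroutes flow off transitive arcs and extension arcs onto original abundant arcs: by \Cref{lem:routing}, the output $h$ is supported only on $\dEL(L) \setminus \rev{H}$, and each walk substitution is flow-conserving because the path-structured property of $L$ (\Cref{def:general_transitive_cover}) ensures that transitive arcs get replaced by simple underlying paths. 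Defining $f^\star$ as in the algorithm then yields a flow in $(\gGi, \ui)$ of value $\val{f^\star} = \val{f} = \nu(\gGi, \ui)$, since neither rerouting alters flow at $s$ or $t$.

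The main obstacle is verifying capacity feasibility of $f^\star$ on original arcs after the cascaded reroutings of $\WitRoute$. Extension-arc flow may be recursively rerouted through earlier extension arcs before landing on original abundant arcs, and one must ensure cumulative contributions to any original arc stay within its capacity. This is where the factor $\ab = 4n^2$ is essential: combined with \Cref{lem:extension-tot} ($|\Fe| \le 2n$) and \Cref{def:successor}\eqref{it:arc-prox} (per-iteration flow change bounded by $2\varepsilon$, summing geometrically over iterations), it provides a multiplicative $n^2$ safety margin that accommodates $O(n)$ extension arcs routed along $O(n)$-length underlying abundant paths, each contributing $O(\varepsilon)$ per arc. \Cref{lem:nailed-arcs} plays a supporting role by ensuring that infinite-capacity original arcs remain at the zero-flow bound on their reverse throughout the while loop, so reverse-arc substitutions in $\WitRoute$ behave as intended.
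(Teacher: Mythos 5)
Your high-level decomposition matches the paper exactly: show $f$ is a maximum flow in the final extended instance $(G,u)$, then verify that $\RerouteS$ and $\WitRoute$ produce a feasible flow of the same value in $(\gGi,\ui)$. You also correctly locate the crux at capacity feasibility after the cascaded reroutings. But the quantitative argument you sketch for that crux does not close, and would not close if fleshed out as stated.

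The gap is that your bound for $\WitRoute$ does not account for the \emph{multiplicative cascading} of rerouted flow across iterations. Flow from an extension arc added in iteration $k$ may be rerouted onto a path that itself contains extension arcs added in iteration $k-1$, whose flow is then rerouted onto paths containing still-earlier extension arcs, and so on. Each level of this recursion can fan out onto a path of length up to $n$, so the amount of flow that an original arc can inherit from a single unit routed at iteration $k$ scales like $n^{k-h-1}$, where $h$ is the iteration in which that original arc became abundant. This factor is not absorbed by the flat ``$O(n)$ extension arcs $\times$ $O(n)$-length paths $\times$ $O(\varepsilon)$ per arc $\approx O(n^2\varepsilon)$'' estimate you give — which, besides being only marginal against the $\ab\varepsilon = 4n^2\varepsilon$ residual with unexamined constants, simply is not the right shape of bound. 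The argument that actually works is that the per-iteration contraction $\varepsilon^{(k)} \le \ab^{-3(k-h-1)}\varepsilon^{(h+1)}$ with $\ab = 4n^2$ dominates the $n^{k-h-1}$ blowup, so that the total perturbation of an arc that became abundant at iteration $\tau$ is bounded by a geometric series summing to $O(n\,\varepsilon^{(\tau+1)})$ — a full factor of $n$ below the residual — not $O(n^2\varepsilon)$. You gesture at ``summing geometrically over iterations'' via \Cref{def:successor}\eqref{it:arc-prox}, but you never confront the $n^{k-h-1}$ cascading factor that the geometric decay must beat, so the argument as written does not establish feasibility.

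A related imprecision appears in your $\RerouteS$ analysis: what matters for an arc $\bar e$ that entered some spanning subgraph $\mathcal{T}_k$ at iteration $\tau$ is that only shortcut arcs created in iteration $\tau$ or later can have $\bar e$ on their fundamental cycle, and those shortcut arcs carry flow bounded by \Cref{lem:shortcut-flow-bound} in terms of the $\varepsilon$ at the time \emph{they} were created, yielding a total of $O(n\log n)\,\varepsilon^{(\tau+1)}$. Your ``$O(n\varepsilon)$ displacement'' statement glosses over which $\varepsilon$ applies and why the earlier shortcut arcs do not contribute. Finally, \Cref{lem:nailed-arcs} is not actually used in the paper's proof of this lemma (it is used elsewhere, for the essential-arc count), so it does not play the supporting role you ascribe to it here.
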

\begin{proof}
Let $(f,\nr,\varepsilon,\roots)=(f^{(T+1)},\nr^{(T+1)},\varepsilon^{(T+1)},\roots^{(T+1)})$ be the iterate at the end of the last while cycle; let $(G,u)$ with $G=(V,E)$ denote this final instance. Thus, $\varepsilon=0$, and because this is a valid iterate (Lemma~\ref{lem:valid-main}), it follows that $f$ is a maximum flow in the current instance $(G,u)$. This instance is an admissible extension of the original instance: $(G,u)=(\gGi,\ui)\oplus(\Fe,\Fs)$.
We then call $\RerouteS(\Fs)$ to reroute the flow from shortcut arcs in $\Fs\cup \rev{\Fs}$ to arcs in the abundant trees ${\cal T}_k$, as well as $\dsTcWit{}$ and $\WitRoute{}$ to reroute the flow from the extension arcs in $\Fe$ to original arcs using the transitive cover data structure; \Cref{lem:routing} shows its correctness. We need to show that this results in a feasible flow $f^\star$ in the input instance.

Both subroutines may only change the flow on arcs that became abundant at some iteration of the algorithm. Let $\bar e\in E$ be an original arc that became abundant at the end of iteration $\tau$. Thus, $\res{f^{(\tau+1)}}_{\bar e}\ge \ab \varepsilon^{(t+1)}$, and by Lemma~\ref{lem:valid-main}, $\res{f}_{\bar e}\ge (\ab-2) \varepsilon^{(t+1)}$.
We  show that  $\RerouteS(\cdot)$  may  increase $f_{\bar e}$ by at most $4n\varepsilon^{(t+1)}\log_2 n$ and   $\WitRoute(\cdot)$ by at most $6n\varepsilon^{(t+1)}$. This implies that $f^\star_{\bar e}\le u_e$ because $\ab=4n^2>4n\log_2 n +6n+2$; recall the  assumption $n\ge 4$.

Let us first start with the subroutine $\RerouteS(\Fs)$. If $\bar e$ is not contained in any of the final spanning subgraphs ${\cal T}_k$, $f_{\bar e}$ will not be changed. Assume that  for a final root $k\in\roots$,  $\bar e$ is an arc in ${\cal T}_k$. %
The arc $\bar e$ will be on the fundamental cycle of shortcut arcs that 
 were created in iteration $\tau$ or later. These carry  at most  $4n\varepsilon^{(t+1)}\log_2 n$ amount of flow in total, according to Lemma~\ref{lem:shortcut-flow-bound}. %

Let $U_1$ and $U_2$ be the node sets of the two components of ${\cal T}_k$ connected by $\bar e$. Thus, $\sum_{i\in U_1} d(i)=-\sum_{i\in U_2} d(i)$, and both these values are bounded as $4n\varepsilon^{(t+1)}\log_2 n$. Hence, $\RerouteS(\Fs)$ will change the flow on $\bar e$ by at most $4n\varepsilon^{(t+1)}\log_2 n$ as claimed.

\medskip

We now turn to  bounding the flow changes in $\WitRoute(L,f|_{\gEA},\rev{H})$.
In each iteration $k$, after the new value of $\varepsilon=\varepsilon^{(k+1)}$ is set, we identify all newly abundant arcs with capacity at least $\ab\varepsilon^{(k+1)}$, and call $\dsTcAdd$ for them. If an arc $e$ is added to $\Fe$ from the transitive cover, then at the end of the same or the next iteration, $\rev{e}$ will become abundant and added by $\dsTcAdd(\cdot)$ (Lemma~\ref{lem:strict-to-free}). Recall that the shortcut arcs are not added by $\dsTcAdd(\cdot)$ and hence $\WitRoute(\cdot)$ will not use them to route flow.

For an arc $e\in E\setminus \gEi$, let $\gamma(e)$ denote the iteration when $e$ was added to $E$. If $\gamma(e')<\gamma(e)$, then $e'$ precedes $e$ on the witness list $L$ returned by $\dsTcWit(\cdot)$.
For original input arcs $e\in\gEi$, let $\gamma(e)=0$. The witness list $L$ contains the arcs in $\Fe$ in increasing order of $\gamma(e)$, and 
$\WitRoute(\cdot)$ processes the arcs in the reverse order. For arcs with $\gamma(e)=k$, the total flow amount is at most $f_e\le 3\varepsilon^{(k+1)}$ (Lemma~\ref{lem:valid-main}  and Definition~\ref{def:successor}\eqref{it:arc-prox}). 

The flow from an arc $e=(i,j)$ is rerouted to a simple $i-j$ path $p$ of arcs $e'$ with $\gamma(e')<\gamma(e)$, and such that $e'$ became abundant before iteration $\gamma(e)$ (see Lemma~\ref{lem:routing})%
$\WitRoute(\cdot)$ roots the flow from every arc in $\gEA$ on a simple path, by the requirement that witness list $L$ is path-structured (\Cref{def:general_transitive_cover}). Therefore,
 if we track the flow from an arc $e$ with $\gamma(e)=k$ through the sequence of reroutings, it may change the flow by at most  $n^{k-h-1} f_e\le 3n^{k-h-1} \varepsilon^{(k)}$ for an arc $e'$ that became abundant in iteration $h\le k-1$.
Recall that for $k\ge h+1$, $\varepsilon^{(k)}\le \ab^{-3(k-h-1)}\varepsilon^{(h+1)}$ (by Definition~\ref{def:successor}\eqref{it:eps-decrease}), and that $|\Fe|\le 2n$ (Lemma~\ref{lem:extension-tot}).

Consider now the original arc $\bar e\in E$ that became abundant at the end of iteration $\tau$; we have $\gamma(e)=0$ as it is an original arc. Let $\rho_k$ denote the number of extension arcs added in iteration $k$; thus, $\sum_{k=1}^{T}\rho_k\le 2n$. By the above, the total flow change on $\bar e$ can be bounded as
\[
\sum_{k=\tau+1}^{T+1} \rho_k 3n^{k-\tau-1} \varepsilon^{(k)}\le 3\varepsilon^{(\tau+1)} \cdot \sum_{k=\tau+1}^{T+1} \rho_k n^{k-\tau-1} \ab^{-3(k-\tau-1)}\le  6n \varepsilon^{(\tau+1)}\, .
\]
using that $\ab=4n^2$.  This completes the proof. %
\end{proof}

\subsection{Bounding the number of essential arcs}

We next  show Theorem~\ref{thm:maxflow-analysis-main}\eqref{part:tot-essential}.

\begin{lemma}\label{lemma:meta-essential-tot} 
 The total number of essential arcs  throughout all iterations is $O(m_c)$.
\end{lemma}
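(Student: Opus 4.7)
The plan is to partition the essential arcs in each iteration into three classes using Lemma~\ref{lem:improved-opt-1}: (i) gap arcs in $\gEStr(f,\nr,\varepsilon)$, (ii) boundary arcs in $\gEbd(\roots)$, and (iii) arcs with $\Gamma^{-5}\varepsilon \le u_e + u_{\rev{e}} < \Gamma\varepsilon$. I will bound the total count in each class over all iterations separately and sum.

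Class (iii) is the easiest. Between consecutive iterations, $\varepsilon$ drops by a factor of at least $\Gamma^3$ by Lemma~\ref{lem:valid-main} and Definition~\ref{def:successor}\eqref{it:eps-decrease}. Hence the admissible window $[\Gamma^{-5}\varepsilon,\Gamma\varepsilon]$, which spans a factor $\Gamma^6$, slides past any fixed value $u_e+u_{\rev{e}}<\infty$ in at most two consecutive iterations. Since there are at most $m_c$ arcs with $u_e+u_{\rev{e}}<\infty$ and each contributes in at most two iterations, the total is $O(m_c)$.

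For class (i), Lemma~\ref{lem:strict-to-free} guarantees that every gap arc in iteration $\tau$ becomes free in iteration $\tau+1$, hence is internal to a (possibly newly merged) free component. Each such newly freed arc therefore causes at least one merge among the free components of $\CP$ (since gap arcs lie in $\gEP$, so connect distinct components). As $|\CP|$ decreases monotonically and starts at $|V|\le n$, the total number of merges across the whole algorithm is at most $n-1$. Accounting for parallel pairs $e,\rev{e}$, the total number of gap essential arcs is $O(n)$.

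The main obstacle, and the technical core, is class (ii). The number of essential boundary arcs in iteration $\tau$ is at most $4|\Vess^{(\tau)}|$ (at most four boundary arcs per essential root), so it suffices to show $\sum_{\tau}|\Vess^{(\tau)}| = O(m_c)$. The strategy is a charging argument: for each essential root $i\in\Vess^{(\tau)}$, if some essential arc incident to $P_i$ is non-boundary, I charge $i$ to that arc; since each non-boundary essential arc has two endpoints and is incident to at most two free components, it receives $O(1)$ charges, and by classes (i) and (iii) the total such contribution is $O(m_c)$. Otherwise, all essential arcs incident to $P_i$ are boundary, and I claim that within $O(1)$ iterations either $P_i$ merges with another free component or a non-boundary essential arc becomes incident to $P_i$; in the former case I charge to a merge event (total $O(n)$), and in the latter I charge to the new non-boundary essential arc (each such arc absorbing $O(1)$ charges, total $O(m_c)$). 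Establishing this claim is the hard part. It uses Definition~\ref{def:proper}\eqref{it:oneway} (at most one of $\res{f}_{si},\res{f}_{it}$ is positive) together with the safe-capacity structure enforced by $\PostProcess$ and the approximate solver's behavior: when only boundary arcs incident to $P_i$ carry residual-range flow, the solver's pushes either shift residual mass onto an inter-component non-boundary arc (creating a new essential arc next iteration) or saturate enough flow that $P_i$ gets absorbed via a freshly abundant arc into another component. Carrying out this case analysis rigorously, keeping track of the interplay between $\Saturate$, $\PostProcess$, and $\RemoveRoot$, is where the technical work concentrates.
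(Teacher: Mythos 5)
The decomposition into gap arcs, boundary arcs, and capacity-window arcs via Lemma~\ref{lem:improved-opt-1} is exactly the paper's strategy, and your counts for classes (i) and (iii) are essentially right (for class (i) you should really count via ``each arc of $\gEi$ is a gap arc at most once'' plus Lemma~\ref{lem:nailed-arcs}---counting via merges undercounts parallel gap arcs between the same pair of components---but both routes yield $O(n+m_c)$). The gap in your proposal is exactly where you say it is: class (ii). You reduce it to the claim that if all essential arcs incident to a persistent $P_i$ are boundary arcs, then ``within $O(1)$ iterations either $P_i$ merges or a non-boundary essential arc becomes incident to $P_i$,'' and you leave that claim as a sketch. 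But the sketch, as stated, would not go through: there is no guarantee that a new \emph{essential} (residual-capacity-range) arc becomes incident to $P_i$.

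What the paper actually establishes is subtly but crucially different. It introduces a slightly wider capacity class, \emph{regular} arcs, with $\Gamma^{-6}\varepsilon \le u_{e'}+u_{\rev{e'}} < \Gamma\varepsilon$ (note: a constraint on $u_{e'}+u_{\rev{e'}}$, not on a residual), and shows that these total $O(m_c)$ over the run since each arc can be regular for $O(1)$ iterations. Then, under your remaining assumptions (no merge by $\tau+3$, no non-boundary essential arc or extension arc at $P_i$), the argument is a concrete flow-conservation estimate: since $e=(s,i)$ is essential at $\tau$ with $\alpha = \res{f}_e \ge \Gamma^{-5}\varepsilon$ and $P_i$ does not merge by $\tau+3$, the residual on $e$ must drop below $\Gamma\varepsilon^{(\tau+3)} < \alpha/2$ (else $e$ becomes abundant and $P_i$ merges into $P_s$). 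So at least $\alpha/2$ net flow must leave $P_i$ on some other arc, forcing a single arc $e'\in\partial^{\mathrm{out}}(P_i)$ to gain flow $>\Gamma^{-1}\alpha$. This makes $\rev{e'}$ abundant at $\tau+3$; and since $P_i$ is still a separate component, $e'$ itself is not abundant. Combining this with Definition~\ref{def:proper}\eqref{it:bounds} and \eqref{it:oneway} at iteration $\tau$ (the cases $\res{f}_{e'}=0$, $\res{f}_{\rev{e'}}=0$, gap, boundary are all excluded) forces $\Gamma^{-6}\varepsilon < u_{e'}+u_{\rev{e'}} < \Gamma\varepsilon$, i.e., $e'$ is a regular arc incident to $P_i$, which is the charging target. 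Your ``the solver's pushes either shift residual mass onto an inter-component non-boundary arc or saturate enough flow that $P_i$ gets absorbed'' gestures at this, but the actual load-bearing steps---the quantitative residual drop on $e$, the pigeonhole over $\partial^{\mathrm{out}}(P_i)$, and the use of the valid-iterate invariants to exclude all cases but a bounded-capacity arc---are missing, and the charging target you propose (a new non-boundary essential arc) is not the object the argument produces.
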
{}
\begin{proof}
Lemma~\ref{lem:extension-tot}, the total number of extension arcs is $O(n)$. If an arc $e\in H$ is added to the instance a certain iteration, then it will be a gap arc in the next iteration and by Lemma~\ref{lem:strict-to-free}, it will be free in all subsequent ones. Hence, every extension arc can be essential in at most one iteration. Shortcut arcs will never be essential arcs as they have $\infty$ capacity on both sides.

It remains to bound the number of essential arcs among the original arcs in $\gEi$. Let us call an  arc $e$ \emph{regular} at iteration $\tau$ if 
\[
\ab^{-6}\varepsilon\bi{\tau}\le u_e+u_{\rev{e}}<\ab\varepsilon\bi{\tau}\, .
\]
Since $\varepsilon\bi{\tau+1}\le \ab^{-3}\varepsilon\bi{\tau}$ (by Lemma~\ref{lem:valid-main}), every arc can be regular arc for at most three subsequent iterations. 
Thus, the total number of regular arcs is $O(m_c)$ throughout the algorithm. 

By Lemma~\ref{lem:improved-opt-1}, every essential arc is either a gap arc, a boundary arc, or a regular arc. By
Lemma~\ref{lem:valid-main} and Lemma~\ref{lem:strict-to-free}, every arc in $\gEi$ may be a gap arc in at most one iteration. Moreover, Lemma~\ref{lem:nailed-arcs} shows that if $u_e+u_{\rev{e}}=\infty$, then $e$ can only be gap arcs if it is  incident to $s$ or $t$.
 Hence, the total number of gap arcs in $\gEi$ throughout is $O(n+m_c)$.

It remains to bound the number of essential arcs that are boundary arcs. Let $e=(s,i)$ be an essential arc in iteration $\tau$ for some $i\in\roots$; the case $e=(i,t)$ follows analogously. Let $f\defeq f\bi{\tau}$, $\varepsilon\defeq\varepsilon\bi{\tau}$,  $f'\defeq f\bi{\tau+3}$, and $\varepsilon'\defeq\varepsilon\bi{\tau+3}$. Let $P_i\in \CP_{f,\varepsilon}$ be the component with root $i$ in iteration $\tau$. First, assume $P_i\notin \CP_{f',\varepsilon'}$, i.e., $P_i$ is merged with some other component by iteration $\tau+3$. The total number of all such components $P_i$ can be bounded as $O(n)$.

 For the rest, let us assume $P_i\in \CP_{f',\varepsilon'}$, i.e., the component has not been merged. Moreover, assume that $e$ is the only essential arc incident to $P_i$ in iteration $\tau$, and all arcs incident to $P_i$ are original arcs in $\gEi$; otherwise, we could charge $e$ to an extension arc or to a non-boundary essential arc, bounding the total number of such arcs as $O(n+m_c)$.

 Let $\alpha\defeq \res{f}_{e}\ge \essentialthreshold\ge\ab^4\varepsilon'$ since $\varepsilon'\le\ab^{-9}\varepsilon$. We must have $\res{f'}_{e}<\ab\varepsilon'\le\ab^{-3}\alpha<\alpha/2$, as otherwise $e$ would be abundant in iteration $\tau+3$, and thus $P_i$ would have been merged into $P_s$.
 By flow conservation, there must be an arc $e'\in\dout{E\bi{\tau+2}}{P_i}$ with $f'_{e'}-f'_{\rev{e'}}>f_{e'}-f_{\rev{e'}}+\alpha/(2n^2)>f_{e'}-f_{\rev{e'}}+\ab^{-1}\alpha$. Consequently, $\res{f'}_{\rev{e'}}>\ab^{-1}\alpha\ge\ab^3\varepsilon'$, and therefore $\rev{e'}\in\gEA(f',\varepsilon')$. 
By the assumption that $P_i$ is still a free component for $(f',\varepsilon')$, it follows that $e'\notin\gEA(f',\varepsilon')$.

We have $u_{e'}+u_{\rev{e'}}>\ab^{-1}\alpha\ge \ab^{-6}\varepsilon$.
The proof is complete by showing that $u_{e'}+u_{\rev{e'}}< \ab\varepsilon$. Thus, $e'$ is a regular arc incident to $P_i$, but we already assumed there are no such arcs.

For a contradiction, assume $u_{e'}+u_{\rev{e'}}\ge \ab\varepsilon$. 
By Definition~\ref{def:proper}\eqref{it:bounds} for $(f,\varepsilon)$, $e'$ must be a boundary arc, or  
$e'\in\gEStr(f,r,\varepsilon)$, or $\res{f}_{e'}=0$, or $\res{f}_{\rev{e'}}=0$. We already assumed there are no gap arcs incident ot $P_i$.
Note that  $e'$ cannot be a boundary arc, since the only outgoing boundary arcs are $(i,s)$ and $(i,t)$.
 Clearly, $e'\neq (i,s)=\rev{e}$, and we had $f_{it}=u_{it}$, $f_{ti}=0$ due to $\res{f}_{ti}=0$ asserted in Definition~\ref{def:proper}\eqref{it:oneway}.
We have $\res{f}_{{e'}}>0$ by the choice of $e'$.  Thus, the remaining case is $\res{f}_{\rev{e'}}=0$. 
If $u_e+u_{\rev{e}}\ge\ab\varepsilon$,
then ${e'}\in\gEA(f,\varepsilon)$, and consequently, ${e'}\in \gEA(f',\varepsilon')$ by Lemma~\ref{lem:strict-to-free}\eqref{it:free-free}. Together 
with $\rev{e'}\in \gEA(f',\varepsilon')$ shown above, we get ${e'}\in \gEF(f',\varepsilon')$, a contradiction again.
\end{proof}

\subsection{Bounding the calls to transitive cover}\label{sec:total-cover-call}
We now prove part \eqref{part:tot-data} of Theorem~\ref{thm:maxflow-analysis-main}.
\begin{lemma}\label{lem:tot-cover-call}
Throughout Algorithm~\ref{alg:max-flow}, the total number of arcs added in $\dsTcAdd$ is $O(m)$. %
\end{lemma}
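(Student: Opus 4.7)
The plan is to combine two simple observations: (i) each arc is inserted into $\dsTcAdd$ at most once, and (ii) the universe of arcs that can ever be inserted has size $O(m)$.

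For (i), I would invoke the monotonicity of abundance. \Cref{lem:valid-main} provides a chain of valid successors, and applying \Cref{lem:stays-abundant} along this chain shows that once an arc is abundant in some iteration, it remains abundant in every subsequent iteration. Since each per-iteration call $\dsTcAdd(J \setminus H)$ inserts only newly abundant arcs, no arc can appear in $J$ in two different iterations; together with the single initialization call $\dsTcAdd(\gEA)$, each arc is handed to $\dsTcAdd$ at most once.

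For (ii), I would enumerate the arcs that can ever appear in $E$. Besides the input arcs $\gEi$ (which is symmetric, so $|\gEi| = m$ already accounts for both directions), the only arcs ever added are extension arcs $\Fe \cup \rev{\Fe}$ and shortcut arcs $\Fs \cup \rev{\Fs}$. By the design of the algorithm (as emphasized in \Cref{sec:max-flow}), shortcut arcs are deliberately excluded from the transitive cover data structure, so they never contribute to $\dsTcAdd$. Forward extension arcs $e \in \Fe$ enter $E$ only as outputs of $\dsTcCover$ and are therefore already tracked by the data structure; they are filtered out via $J \setminus H$ in the iteration of their introduction, and by monotonicity of abundance they cannot reappear in any later $J$. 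Thus, from the extension side, only the reverses $\rev{\Fe}$ can ever be inserted via $\dsTcAdd$, which does happen by \Cref{lem:strict-to-free} when a gap arc becomes free.

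Putting the two pieces together, the total number of insertions is at most $|\gEi| + |\rev{\Fe}| \le m + 2n$, where the bound $|\Fe| \le 2n$ comes from \Cref{lem:extension-tot}. Using the standing assumption $n \le m$, this is $O(m)$, completing the argument. The proof is essentially bookkeeping; the only subtlety is interpreting the ``newly abundant'' set $J$ correctly with respect to shortcut arcs, which is resolved by the algorithm's explicit design choice not to insert shortcut arcs into the data structure.
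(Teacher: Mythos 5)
Your proof is correct and follows essentially the same approach as the paper's: count one contribution from each original arc plus one from each reverse of an extension arc, using monotonicity of abundance to argue that no arc is handed to $\dsTcAdd$ twice, and \Cref{lem:extension-tot} to bound $|\Fe|$ by $2n$. The paper's proof is terser and leaves implicit the two bookkeeping points you flag (shortcut arcs are never inserted; forward extension arcs are filtered out via $J \setminus H$), but the argument is the same.
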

\begin{proof}
Every original arc may be added in $\dsTcAdd$ at most once. Further, the reverse arcs of extension arcs will also be added once; by Lemma~\ref{lem:extension-tot}, their total number is $O(n)$. Thus, the total number of arcs added in  $\dsTcAdd$ is $O(m)$. When calling $\dsTcCover(\Vess^{(\tau)})$, we have $|\Vess^{(\tau)}|\le 4|\Eaux^{(\tau)}|$, since every $e\in \Eaux^{(\tau)}$ may contribute two endpoints and two corresponding roots. Thus, the total number of nodes queried in $\dsTcCover$ is at most 4 times the total number of essential arcs, that is $O(m_c)$ by Lemma~\ref{lemma:meta-essential-tot}. %
\end{proof}

\subsection{Maintaining the basic data structures}
\label{sec:update-abundant}
We now describe how essential arcs, components, and other quantities can be efficiently maintained throughout the algorithm, and bound the running time of these operations.

\paragraph{Arc types and residual capacities}
 We initialize the abundant arc set $\gEA$ as the set of arcs with $u_e=\infty$; this includes all  arcs $(i,s)$ and $(t,i)$. Initially, $\gEF=\emptyset$.
 Arcs with $u_{\rev{e}}=\infty$ that are not incident to $s$ or $t$ are set to $\res{f}_e=0$ at initialization and this property is maintained throughout (see Lemma~\ref{lem:nailed-arcs}). Hence, such an $e$ will never become abundant, free, or essential. Shortcut arcs $\Fs$ will be by definition free throughout.

We now consider the set of all other arcs $E'$. These include the $m_c$ arcs with $u_e+u_{\rev{e}}<\infty$, the $2n$  arcs of the form $(s,i)$ and $(i,t)$, and the $O(n)$ extension arcs $F_s$. Thus, $|E'|=O(m_c)$.

 For maintaining residual capacities, as well as the status of abundant, free, and essential arcs on $E'$, we maintain the $\res{f}_e$ values in a binary tree. In each iteration, we identify arcs with key values between $\ab^{-5}\varepsilon$ and  $\ab\varepsilon$ as essential arcs. Their endpoints and the corresponding roots will be the essential nodes. The algorithm will also need the value $\delta_1$ that is $n^2$ times the largest capacity of the largest arc below the threshold $\ab^{-5}\nu(\cG,\cu)$; this amounts to one additional search operation per iteration.

 Flow values may only change on essential arcs, as well as on new extension arcs added in the current iteration. We update the key values of essential arcs at the end of the iteration, and also add to the data structure the new extension arcs to the binary tree.  We identify the abundant arcs as those with key values at least $\ab\varepsilon$.

 The amortized cost of these search and insert operations on $E'$ is $O(\log m)$, that is $O(\log n)$. Since the total number of essential arcs is $O(m_c)$ (Lemma~\ref{lem:extension-tot}), the total running time of these operations can be bounded as $O(m_c \log n)$.

\paragraph{Free components, roots, and spanning subgraphs}
The components in $\CP_{f,\varepsilon}$ are initialized as the set of singletons, along with the root set $\roots=V$. We maintain the components in a heap structure, with the roots corresponding to $\roots$. The pointers $\Root(i)$ point to the root of the heap containing $i$.
The boundary arc set $\gEbd$ will be simply the arcs between $\roots$ and $\{s,t\}$.

Whenever an arc $e=(i,j)$ is added to $\gEF$ and  $\Root(i)=k\neq k'=\Root(j)$, then we merge the components $P_k$ and $P_{k'}$. The root of the new component will become $s$ or $t$ if either of them is in $\{k,k'\}$; otherwise, the root of the larger component (breaking ties arbitrarily). The other root is then removed from $\roots$.  
Similarly, if a new arc $e=(i,j)$ is added to $\gEA$ with $\Root(i)=s$, $\Root(j)=k'\neq s$, we merge $P_{k'}$ into $P_s$.  If a new arc $e=(i,j)$ is added to $\gEA$ with $\Root(i)=k'\neq t$, $\Root(j)=t$, we merge $P_{k'}$ into $P_t$.
The spanning subgraphs ${\cal T}_k$ can be easily maintained during these merging operations: the spanning subgraphs of the different components will be connected by the free or abundant arcs identified when joining these components.
There are at most $n-2$ merging operations; maintaining the free components, roots, and spanning subgraphs can be done in $O(n\log n)$.

\subsection{Running time estimate}\label{sec:meta-runtime}
We next show part \eqref{part:tot-running} in Theorem~\ref{thm:maxflow-analysis-main}. Recall the notation ${\rm TC}_{\mathrm{time}}$ and 
${\rm TC}_{\mathrm{arcs}}$ and corresponding assumptions from the beginning of Section~\ref{sec:analyzing}.
\begin{lemma}\label{lemma:meta-running}
Algorithm~\ref{alg:max-flow} can be implemented in time% 
\[
{\rm TC}_{\rm time}(n,O(m),O(m_c)) + 
O\left(\Tsolv\left(O(m_c)\right)+{\rm TC}_{\rm arcs}(n,O(m),O(m_c))\right)+\tilde O(m)\,.
\]
\end{lemma}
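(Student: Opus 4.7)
The plan is to decompose the total running time into four contributions and bound each one, leveraging the structural bounds of Lemmas~\ref{lem:extension-tot}, \ref{lemma:meta-essential-tot}, and~\ref{lem:tot-cover-call}. These four contributions are: (i) initialization together with the bookkeeping data structures described in Section~\ref{sec:update-abundant}, (ii) the operations of the incremental transitive cover data structure, (iii) the calls to the approximate flow solver $\approxFlow$ together with the subsequent $\SendFlow$ and transitive-arc addition steps, and (iv) the final rerouting via $\RerouteS$ and $\WitRoute$.

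For the bookkeeping, I would argue that the binary tree on the set $E'$ from Section~\ref{sec:update-abundant} supports all key/insert/search operations in $O(\log m)$ amortized time. Each iteration only modifies residual capacities on essential arcs and newly added extension arcs, so Lemma~\ref{lemma:meta-essential-tot} together with Lemma~\ref{lem:extension-tot} bounds the total number of residual-capacity updates, $\PostProcess$ operations, and $\Saturate$ calls across the whole algorithm by $O(m_c)$. The component, root, spanning subgraph ${\cal T}_i$, and shortcut arc bookkeeping contribute $O(n \log n)$ via standard heap/tree operations and the merge analysis of Lemma~\ref{lem:extension-tot}. Initialization is $O(m)$. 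These contributions together sum to $\tilde O(m)$.

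For the data structure and solver terms, Lemma~\ref{lem:tot-cover-call} ensures that the total number of arcs inserted by $\dsTcAdd$ is $O(m)$ and that the total size of queries to $\dsTcCover$ (plus the final $\dsTcWit$) is $O(m_c)$, so by definition their total cost is ${\rm TC}_{\rm time}(n,O(m),O(m_c))$. For the approximate flow solver, in iteration $\tau$ the auxiliary instance $(\cG^{(\tau)},\cu^{(\tau)})$ has $O(|\Eaux^{(\tau)}| + |H^{(\tau)}|)$ arcs, and summing via Lemmas~\ref{lemma:meta-essential-tot} and~\ref{lem:tot-cover-call} yields a total arc count of $O(m_c + {\rm TC}_{\rm arcs}(n,O(m),O(m_c)))$. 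Using the (near-linear, sub-additive) assumption on $\Tsolv$ from the start of Section~\ref{sec:analyzing}, the total solver cost is $O\!\left(\Tsolv(O(m_c)) + {\rm TC}_{\rm arcs}(n,O(m),O(m_c))\right)$; the time to execute $\SendFlow$ on the output flow $y$ and to add the corresponding transitive arcs to $E$ is linear in the same total arc count, hence absorbed.

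Finally, $\RerouteS(\Fs)$ is implemented by a leaf-peeling scan of each tree ${\cal T}_k$, so its cost is $O(|\Fs| + n)$, which by Lemma~\ref{lem:extension-tot} is $\tilde O(n)$. The witness list $L$ returned by $\dsTcWit$ has size bounded by the total transitive-arc output and insertion volume, which is charged to ${\rm TC}_{\rm time}(n,O(m),O(m_c))$, and then $\WitRoute(L, f|_{\gEA}, \rev{H})$ runs in $O(|L|)$ time by Lemma~\ref{lem:routing}. The final copy from $(h,f)$ into $f^\star$ takes $O(m)$ time. Summing the four contributions yields the claimed bound. The main obstacle is the careful amortized accounting---in particular, ensuring that every operation triggered by a newly essential or newly abundant arc can be charged either to the ${\rm TC}_{\rm time}$ term, to an $O(\log n)$-per-essential-arc budget that fits into $\tilde O(m)$ via Lemma~\ref{lemma:meta-essential-tot}, or to the $O(n \log n)$ budget for component merges, without double-counting the flow movements in $\Saturate$ and $\PostProcess$.
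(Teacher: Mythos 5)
Your proof is correct and takes essentially the same approach as the paper's: decompose the cost into the transitive-cover data-structure operations (bounded by ${\rm TC}_{\rm time}$ via Lemma~\ref{lem:tot-cover-call}), the solver calls on a total of $O(m_c) + {\rm TC}_{\rm arcs}(\cdots)$ arcs (via Lemmas~\ref{lemma:meta-essential-tot} and~\ref{lem:tot-cover-call}), the $\tilde O(m)$ bookkeeping from Section~\ref{sec:update-abundant}, and absorb the per-iteration work ($\CreateCompact$, $\SendFlow$, $\PostProcess$, $\RemoveRoot$) into the solver term using $\Tsolv(\tilde m) = \Omega(\tilde m)$. The only cosmetic difference is that you itemize the final $\RerouteS$/$\WitRoute$ costs separately, whereas the paper's definition of ${\rm TC}_{\rm time}$ already includes $\dsTcWit$ and $\WitRoute$, so these need not be argued independently.
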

\begin{proof}
By Lemma~\ref{lem:tot-cover-call}, the total number of arc additions is $O(m)$ and the total number of nodes queried in $\dsTcCover$ is $O(m_c)$. The bound on the transitive cover operations follows.

Let us bound the total number $\tilde m$ of arcs in the solver calls.
In iteration $\tau$, the solver $\approxFlow$ is called on a graph 
with arc set $\Eaux^{(\tau)}$ and $H^{(\tau)}$, as well as at most two shortcut arcs for each arc in $\Eaux^{(\tau)}$. 
Thus, $g\le 3\sum_{\tau=1}^T |\Eaux^{(\tau)}|+|H^{(\tau)}|$. 
By Lemma~\ref{lemma:meta-essential-tot}, we have $\sum_{\tau=1}^T \Eaux^{(\tau)}=O(m_c)$ and by definition, $\sum_{\tau=1}^T |H^{(\tau)}|={\rm TC}_{\rm arcs}(n,O(m),O(m_c))$. Thus, $\tilde m=O(m_c)+{\rm TC}_{\rm arcs}(n,O(m),O(m_c))$, and 
we obtain the second term in the bound on the total solver runtime.

Let us now consider all arithmetic operations in the algorithm other than the solver calls and the  the transitive cover data structure. Maintaining arc types, residual capacities and components takes $O(m+m_c\log n)=\tilde O(m)$ as described in Section~\ref{sec:abundant}. 

In every iteration, the other main operations are the following. We need to construct the auxiliary instance, add the flow $y$ to the previous, add new arcs to the instance, $\PostProcess(\cdot)$, and $\RemoveRoot(\cdot)$. These all  can be easily bounded as $O(|\Eaux^{(\tau)}|+|H^{(\tau)}|)$, which can be bounded as $O(m_c)+{\rm TC}_{\mathrm{arcs}}(n,O(m),O(m_c))$. Using the assumption  $\Tsolv(\tilde m)=\Omega(\tilde m)$, these operations  will be dominated by the calls to the solver.
\end{proof}

\subsection{Strong polynomiality of the algorithm}
We now complete the proof of Theorem~\ref{thm:maxflow-analysis-main} by proving part \eqref{part:bit-complexity}.
\begin{lemma}The algorithm is strongly polynomial. For rational input, all values of $f$, $\varepsilon$, and $r$ remain polynomially bounded in the input encoding length.
\end{lemma}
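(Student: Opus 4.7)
The plan is to establish the claim in three stages: bound the total number of iterations $T$ by $O(m_c)$, infer strong polynomiality from this bound together with \Cref{lemma:meta-running}, and finally control the bit complexity for rational input. The key observation, from which everything else follows routinely, is that the trivial iterations (those with no essential arcs) cannot stack: each such iteration forces the next one to have a nontrivial essential arc, so the dominant iteration count is still controlled by \Cref{lemma:meta-essential-tot}.

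For the iteration bound, I will combine \Cref{lemma:meta-essential-tot}, which gives $\sum_\tau |\Eaux^{(\tau)}| = O(m_c)$ and hence at most $O(m_c)$ iterations with $\Eaux^{(\tau)} \neq \emptyset$, with the following dichotomy for iterations with $\Eaux^{(\tau)} = \emptyset$. In this case $\Vess = \emptyset$, so $\approxFlow$ is not invoked, $\PostProcess$ acts on an empty arc set, no component is merged, and thus neither $f$ nor $\CP$ changes; the only effect is the update $\varepsilon \gets 2n^2\delta_1$. If $\delta_1 = 0$, the algorithm terminates; otherwise the arc $e$ achieving the maximum $\delta_1 = n^2\res{f}_e$ satisfies
\[
\frac{\res{f}_e}{\varepsilon^{(\tau+1)}} \,=\, \frac{\res{f}_e}{2n^2\delta_1} \,=\, \frac{1}{2n^4} \,\in\, \bigl(\ab^{-5}, \ab\bigr),
\]
so $e \in \Eaux^{(\tau+1)}$. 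Thus every ``empty'' iteration is either the last or is immediately followed by a non-empty one, yielding $T = O(m_c)$.

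Strong polynomiality is then immediate from \Cref{lemma:meta-running}: $\approxFlow$ is always called with $M = \ab^5 = O(n^{10})$, so each call performs $m^{1+o(1)} O(\log n)$ arithmetic operations by \Cref{thm:apx-flow-solver-run}. Combined with strongly polynomial \DITC{} implementations developed in \Cref{sec:data-structures}, the total number of arithmetic operations and comparisons (and the required working space) is polynomial in $n$ and $m$.

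For the bit-complexity claim I will scale by the least common denominator to reduce to integer input of bit length $O(B)$, and then induct on iterations, tracking the bit complexity $B^{(\tau)}$ of entries of $f^{(\tau)}$. By \Cref{lem:basic-flow}\eqref{it:flow-pm}, the flow $y$ returned by $\approxFlow$ satisfies $y_e = \sum_{e'} \chi_{ee'}\, \cu^{(\tau)}_{e'}$ with $\chi_{ee'} \in \{0,\pm 1\}$, and its finite terms are residual capacities of $f^{(\tau)}$, hence have bit complexity $O(B^{(\tau)} + B)$. Since all other flow updates within an iteration ($\PostProcess$, $\Saturate$, and $\RemoveRoot$) consist only of additions, subtractions, $\min$, and $\max$, this yields $B^{(\tau+1)} \le B^{(\tau)} + O(B + \log m)$, and hence after $T = O(m_c)$ iterations $B^{(T)} = O(m(B + \log m)) = \poly(n, m, B)$. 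The values $\varepsilon^{(\tau)} = 2n^2 \delta^{(\tau-1)}$ and the safe capacities $\nr^{(\tau)}$ are obtained from $f^{(\tau)}$ and the input capacities by a constant number of arithmetic operations per entry, so they obey the same bound. The main subtlety is the first stage, where one must verify that iterations with empty $\Eaux$ really do leave all structural data unchanged; the remaining two stages are then routine consequences of \Cref{lemma:meta-running} and \Cref{lem:basic-flow}.
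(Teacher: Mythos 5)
Your proposal is correct in substance and tracks the paper's argument at a high level, but it is genuinely more careful in one place and slightly loose in another.

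The useful addition is your explicit bound $T = O(m_c)$ on the number of iterations. The paper's proof says the coefficients grow by ``$O(\log n)$ bits in every round'' but never bounds the number of rounds in this lemma; that bound is also implicitly needed in the running-time analysis of \Cref{lemma:meta-running} (each iteration performs at least one $O(\log n)$-time search for $\delta_1$). Your dichotomy argument for ``empty'' iterations is sound: when $\Eaux^{(\tau)} = \emptyset$ the solver and $\PostProcess$ are no-ops and no new abundant arc in $\gEP$ can appear (since $\ab\varepsilon^{(\tau+1)} = 8n^6\max\{\res{f}_e : e\in\gEP,\ \res{f}_e < \ab^{-5}\varepsilon^{(\tau)}\}$ exceeds every such residual), so $\CP$ is unchanged and the arc achieving $\delta_1$ becomes essential at $\tau+1$; if $\delta_1 = 0$ the algorithm terminates. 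This is a legitimate gap in the paper's write-up that your stage 1 fills.

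Where you are a bit loose is stage 3. Tracking only $B^{(\tau)} := $ bit complexity of $f^{(\tau)}$ does not close the induction cleanly: the auxiliary capacities fed to $\approxFlow$ are the \emph{safe} capacities $\nr^{(\tau)}$, not raw residuals of $f^{(\tau)}$, and $\nr^{(\tau)}_e$ is set in iteration $\tau-1$ to $\min\{\res{f^{(\tau)}}_e, 3\delta^{(\tau-1)}\}$ (or similar); $\delta^{(\tau-1)}$ in turn involves $\nr^{(\tau-1)}$ and the residual $\cu^y_{e^\star}$ from the previous auxiliary instance. So $\nr^{(\tau)}$ and $\varepsilon^{(\tau)}$ are \emph{not} computed from $f^{(\tau)}$ and the input alone, contrary to your last sentence. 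The fix is routine and is exactly what the paper does: run the induction simultaneously over flows, safe capacities and $\varepsilon$ (or, as the paper prefers, track the integer coefficients $\alpha_{ee'}$ in the representation $\bar f_e = \sum_{e'}\alpha_{ee'} u_{e'}$, which sidesteps the LCD scaling and makes the $O(\log n)$-per-round growth transparent). With that adjustment, your recursion gives $O(B + T\log m)$, matching the paper's conclusion.
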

\begin{proof}
The flow $y$ returned by $\approxFlow{(\bar G,\bar u)}$ is a basic acylic flow, and hence all flow values can be written in the form $y_e=\sum_{e'\in E(G)} \chi_{ee'} u_{e'}$, where $\chi_{ee'}\in \{0,\pm 1\}$ (Lemma~\ref{lem:basic-flow}). For the auxiliary instances $(\bar G,\bar u)$, each $\bar u$ is either a residual capacity with respect to an original capacity value, or obtained from a safe capacity. By induction, we show that all safe capacities and flow values arise in the form $\bar f_e=\sum_{e'\in E(G)} \alpha_{ee'} u_{e'}$ where $\alpha_{ee'}$ is an integer with polynomially bounded encoding length.
Safe capacities are either $\res{f}_e$, or $\delta$ or $3\delta$. Here, $\delta=\delta_1+\delta_2$, where $\delta_1$ is $|E(\cG)|$ times the residual capacity of a certain arc, and $\delta_2$ is obtained
as $n^2$ times the residual capacity  $\cu^y_{e^\star}$ of an arc $e^\star$ output by $\approxFlow$.  
Thus, in every iteration, the flows and safe capacity values arise as integer combinations of the original capacities, adding at most $O(\log n)$ bits to each coefficient in every round.
\end{proof}

\section{Data Structures}\label{sec:data-structures}

In this section, we provide the \DITC{}s  we use to obtain our maximum flow algorithms. More precisely, we give different instantiations of the \DITC{} data structure (\Cref{def:general_transitive_cover}) introduced earlier in \Cref{sec:trans_cover_first_introduce}. 
In \Cref{sec:dsclo}, \Cref{sec:tree_depth},  \Cref{sec:fmm}, and \Cref{sec:dsord},
we provide different implementations of \DITC{}s with different guarantees. The \DITC{}s in \Cref{sec:tree_depth} and \Cref{sec:fmm} leverage additional information on where the queries and updates will be. In particular, \Cref{sec:tree_depth} assumes that arc updates between ancestors and descendants of a tree with bounded depth and \Cref{sec:fmm} assumes that arcs updates occur consistently with certain ordering of the nodes. 

The main theorems and definitions in this section are written in a self-contained way, so that the results of this section can be used just by referencing them. Below is a description of the only definitions and theorems that are used in Section~\ref{sec:algorithms-overall} to obtain the overall running times.
\begin{itemize}
    \item \Cref{sec:dsclo}: \Cref{thm:closure} gives our guarantees on a general \DITC{} that makes no additional assumptions. This facilitates our $O(mn)$ time maximum flow algorithms in \Cref{thm:general-flow}.
    
    \item \Cref{sec:tree_depth}: \Cref{cor:transitive_tree_depth} gives our guarantees on an \DITC{} for updates that we call \emph{tree-respecting} as defined in \Cref{def:dstree:tree_respecting} in the case that the relevant tree is depth $D$.  \Cref{cor:transitive_tree_depth} facilitates our $O(m^{1+o(1)}D)$ time maximum flow algorithm for bounded tree-depth graphs in \Cref{thm:tree-depth-overall}. 

    \item \Cref{sec:dsord}: In \Cref{def:dsord:ordered_transitive_cover}, we define \ODITC{}s which specialize the definition of a \DITC{} to when node ordering information is available as defined in \Cref{def:dsord:node_order}. In \Cref{cor:dsord:inc_closure} we give a simple, easy to use statement regarding our guarantees for an \ODITC{} which facilitates our  $O(n^{\omega-1}m_c)$ time maximum flow algorithm in \Cref{thm:ordered-overall}.

\end{itemize}
\Cref{cor:transitive_tree_depth} and \Cref{cor:dsord:inc_closure} follow from more general theorems of potential broader interest. \Cref{cor:transitive_tree_depth} follows from a more general bound in \Cref{thm:transitive_tree_depth} which bounds the performance of the $\DITC{}$ in terms of a notion of size defined in \Cref{def:dstree:size}. \Cref{cor:dsord:inc_closure} follows from a more general \Cref{thm:dsord:inc_closure}.

To facilitate our implementation of $\dsTcWit()$, each of our \DITC{}s will maintain a witness list with the particular property that they are \emph{rooted}, as defined below. This invariant essentially implies that the witnesses form in- and out-arborescence to specific nodes.

\begin{definition}[Rooted]
\label{def:rooted}
In a witness list $\wlistInst \in \wlistSpace$, we call $(a,b) \in \dE_*(\wlistInst)$, \emph{$a$-out rooted} if $(a,b)\in \dE(\wlistInst)$ or $(w,b) \in \dE(\wlistInst)$, and $(a,w)$ is $a$-out rooted for $w = \wit_\wlistInst(a,b)$. Similarly, we say that $(a,b) \in \dE_*(\wlistInst)$ is \emph{$b$-in rooted} if either $(a,b) \in \dE(\wlistInst)$, or $(a,w) \in \dE(\wlistInst)$ and $(w,b)$ is $b$-in rooted for $w = \wit_\wlistInst(a,b)$. We call $\wlistInst$ \emph{out-rooted} if every $(a,b) \in \dE_*(\wlistInst)$ is $a$-out rooted and we call $\wlistInst$ \emph{rooted} if every $(a,b) \in \dE_*(\wlistInst)$ is either $a$-out rooted or $b$-in rooted.
\end{definition}

We prove that any rooted witness list is path structured.

\begin{lemma}[Rooted Implies Path Structured]\label{lem:simp_paths} If a witness list is rooted then it is path structured. 
\end{lemma}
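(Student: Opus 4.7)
The plan is to prove by strong induction on the position $i$ that for each transitive arc $(a_i, b_i) \in \dELcl(L)$, $\walk_L(a_i, b_i)$ is a simple path from $a_i$ to $b_i$. Since $L$ is rooted, each $(a_i, b_i)$ is either $a_i$-out rooted or $b_i$-in rooted; these two cases are handled by symmetric arguments, so I focus on the $a_i$-out rooted case.

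Fix $(a_i, b_i)$ with witness $w \defeq \wit_L(a_i, b_i)$. If $w = b_i$, then $\walk_L(a_i, b_i) = (a_i, b_i)$, which is trivially a simple path since $a_i \neq b_i$ by the witness-list axiom. Otherwise, the $a_i$-out rooted property yields $(w, b_i) \in \dEL(L)$ and $(a_i, w) \in \dELcl(L)$ that is itself $a_i$-out rooted, with both $(w, b_i)$ and $(a_i, w)$ appearing strictly earlier than $(a_i, b_i)$ in $L$. By the inductive hypothesis applied to $(a_i, w)$, the walk $\walk_L(a_i, w)$ is a simple path $P$ from $a_i$ to $w$; since $\walk_L(a_i, b_i) = P , (w, b_i)$, it suffices to verify that $b_i$ is not a node of $P$.

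I would accomplish this by strengthening the induction to further assert that every node $v$ on $\walk_L(a_i, w)$ is either $a_i$ itself or satisfies $(a_i, v) \in \dELcl(L)$ at a position strictly earlier than that of $(a_i, b_i)$. The strengthening is preserved under the inductive step: unrolling the $a_i$-out rooted recursion, every intermediate node on $\walk_L(a_i, w)$ arises as the witness of some ancestor transitive arc $(a_i, v')$, and such a witness is itself the second coordinate of a transitive arc $(a_i, v) \in \dELcl(L)$ appearing earlier in $L$. Given this, $b_i \neq a_i$ by the no-self-loop property of witness lists, $b_i \neq w$ because $(w, b_i) \in \dEL(L)$ forces $w \neq b_i$, and $b_i \neq v$ for any intermediate $v$ because the pairs $(a_j, b_j)$ in $L$ are distinct so $(a_i, v) \neq (a_i, b_i)$. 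This gives $b_i \notin V(P)$ and completes the step. The $b_i$-in rooted case is entirely analogous, tracking first rather than second coordinates. The main subtlety lies in choosing the strengthened inductive statement; once it is in place, the remaining arguments are direct consequences of the witness-list axioms.
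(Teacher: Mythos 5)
Your proof is correct but takes a genuinely different route from the paper's. The paper fixes the source $a$ and, for every $a$-out-rooted transitive arc $(a,s)$, identifies a canonical ``last arc'' $e_s$: either $(a,s)$ itself if $(a,s)\in\dE(\wlistInst)$, or $(\wit_\wlistInst(a,s),s)$ otherwise. Unwinding the recursion for $\walk_\wlistInst$ shows the arcs of $\walk_\wlistInst(a,b)$ all lie in $\{e_s\}$, and since the $e_s$ have pairwise distinct heads, any walk built from them is automatically simple --- no explicit tracking of which nodes are revisited is required. Your argument instead runs a strong induction on the list position with a strengthened hypothesis asserting that every non-initial node $v$ of $\walk_L(a_i,w)$ has $(a_i,v)\in\dELcl(L)$ at a strictly earlier position; distinctness of the pairs in $L$ then forces $v\neq b_i$, so appending the final arc $(w,b_i)$ keeps the walk simple. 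Both are valid. The paper's distinct-heads observation captures the essential invariant in one stroke, at the cost of leaving the underlying recursion implicit; your proof spells out the recursion but needs the heavier, two-sided inductive hypothesis (one clause tracking $(a_i,\cdot)$ positions for out-rooted arcs, a symmetric one tracking $(\cdot,b_i)$ positions for in-rooted arcs) so that the hypothesis can be invoked on the witness $(a_i,w)$, which is guaranteed $a_i$-out rooted by the definition.
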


\begin{proof}
Let $(a,b) \in \dE_*(\wlistInst)$ for a rooted witness list $\wlistInst \in \wlistSpace$. Suppose, $(a,b)$ is $a$-out rooted and let $S = \{s \in \dV | (a,s) \in \dE_*(\wlistInst) \text{ and } (a,s) \text{ is } a\text{-out rooted}\}$. For each $s \in S$, let us define $e_s \defeq (a,s)$ if $(a,s) \in \dE(\wlistInst)$ and $e_s \defeq (w_s,s)$ otherwise, where $w_s =\wit_\wlistInst(a,s)$. Note that $\walk_\wlistInst(a,s) = e_s$ if $(a,s) \in \dE(\wlistInst)$ and $\walk_\wlistInst(a,s) = \walk_\wlistInst(a,w_s),e_s$ otherwise. By \Cref{lem:witlist_walk}, $\walk_\wlistInst(a,b)$ is a $a$--$b$ walk. Additionally, the arcs of this walk are a subset of $\{e_s | s \in S\}$. However, each node in $e_s$ has a distinct head and therefore any walk using the $e_s$ is simple, i.e., a path. By symmetry, if $(a,b)$ is $b$-in rooted the walk is simple. (In the symmetric argument the walk uses arcs with distinct tails.) \end{proof}

\subsection{Transitive closure}
\label{sec:dsclo}
\label{sec:dsclosure}

Here, we briefly show how to adapt the approach in Italiano's  incremental transitive closure data structure \cite{Italiano1986} to implement an \DITC{}.
Though the ideas of the algorithm largely follow from \cite{Italiano1986}, we provide a complete description and analysis to fit our particular definition of an \DITC{} and to facilitate our \DITC{} development in \Cref{sec:tree_depth}. Additionally, we make two more substantial modifications of \cite{Italiano1986}. First, in $\dsTcCover(S)$ rather than just outputting the closure of $S$, we output the smaller of the closure and all the arcs. Second, we maintain a witness list (\Cref{def:wit_list}).

 \Cref{alg:closure} presents the data structure. Whenever an arc $e$ is added in $\dsTcAdd(\cdot)$ the algorithm checks for each $a \in \dV$ if it is possible to use $e$ to reach a node that it could not reach before through a procedure, $\dstreeAddOut(a,e)$. If it can, it adds the arc to the closure and recursively checks whether $a$ can now use the arcs leaving the head of $e$ to reach a new node. This algorithm essentially maintains a tree from each $a$ to what nodes it can reach (and therefore the transitive closure) in total time linear in the size of all the trees. \Cref{alg:closure} essentially follows this strategy adding the arcs of the closure to a witness list. From the tree structure of these arcs, the witness list is out rooted. To output the cover of $S$, the algorithm simply outputs either the closure restricted to $S$ and the entire graph depending on $|S|$ and $|\dE|$. The guarantees of the algorithm are provided in \Cref{thm:closure}.

\begin{algorithm}[t!]
\caption{Transitive Closure}
\label{alg:closure}
\SetCommentSty{algCommentFont}
\SetKwProg{function}{function}{:}{}
\SetKw{Return}{return}

\tcp{Global variables }
    $\dV$, 
    $\dE \subseteq \dV \times \dV$, $\dE_* \subseteq \dV \times \dV$
        \tcp*{nodes, arcs, and transitive arcs} 
    $\wlistInst \in \wlistSpace$ \tcp*{witness list (\Cref{def:wit_list})}
\BlankLine
\lfunction{$\dsTcInit(\dV)$}{
    $\dV\gets \dV$, $\dE\gets\emptyset$, $\dE_* \gets \emptyset$, and
    $\wlistInst \gets \emptyset$
    \label{line:dsclo:init}
}

\BlankLine
\function{$\dsTcAdd(F \subseteq \dV \times \dV)$}{
    \For{$e = (u,v) \in F$\label{line:dsclo:for_add_start}}{
        $\dE \gets \dE \cup \{e\}$, $\dE_*\gets \dE_*\cup \{e\}$, and 
            $\wlistInst \gets \wlistInst , (u, v, v)$
            \label{line:dsclo:add-edge}\;
        \lFor{$a \in \dV$}{
            $\dstreeAddOut(a, e)$
        }
        \label{line:dsclo:for_add_end}
    }
}

\BlankLine
\function{$\dstreeAddOut(a \in \dV, (u,v) \in \dE)$}{
    \If{$(a,u) \in \dE_*$ and $(a,v) \notin \dE_*$}{
        $\dE_*\gets \dE_*\cup \{(a,v)\}$ and 
        $\wlistInst \gets \wlistInst , (a, v, u)$
        \label{line:dsclo:out_transitive_add} \;
        \lFor{$(v,w) \in \partial_\dE^\out(v)$}{
            $\dstreeAddOut(a, (v,w))$
            \label{line:dsclo:out_recurse}
        }
    }
}

\BlankLine
\function{$\dsTcCover(S\subseteq \dV)$}{
    \leIf{$|S|^2 \leq |\dE|$}{\Return $(S,\dE_*[S])$}{\Return $(\dV,\dE)$}
}

\BlankLine
\lfunction{$\dsTcWit()$}{
    \textbf{return} $\wlistInst$
}

\end{algorithm}

\begin{theorem}[Transitive Closure] 
\label{thm:closure}
Algorithm~\ref{alg:dstree} is an \DITC{} and can be implemented with the following properties, where $\Efinal$ is $\dE$ after the last $\dsTcAdd(\cdot)$:
\begin{itemize}

\item $\dsTcInit(\cdot)$ and $\dsTcAdd(\cdot)$ run in total time 
$O((n + |\Efinal|)n)$.

\item $\dsTcCover(S\subseteq \dV)$ runs in $O(\min\{|S|^2,|\dE|\})$ time and returns an $O(\min\{|S|^2,|\dE|\})$-arc cover. 

\item $\dsTcWit()$ returns a witness list of length $O(n^2)$ in time $O(n^2)$.%
\end{itemize}
\end{theorem}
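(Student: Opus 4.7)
The plan is to verify correctness of Algorithm~\ref{alg:closure} as an \DITC{} (per \Cref{def:general_transitive_cover}) and then bound the three runtime quantities. The structural heart of correctness is the invariant $\dE_*=\dE^{\tr}$, which I would prove by induction on the sequence of additions. The base case $\dE=\dE_*=\emptyset$ is trivial. For the inductive step, when $\dsTcAdd$ processes $e=(u,v)$, the pair $(u,v)$ is placed in $\dE_*$ on \Cref{line:dsclo:add-edge}; then for each $a\in\dV$ the call $\dstreeAddOut(a,e)$ inserts $(a,w)$ into $\dE_*$ exactly when there is a path $a\to u\to v\to\cdots\to w$ with a newly usable suffix. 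A standard depth-first argument on these recursive calls shows every newly created ancestor--descendant pair is produced, so $\dE_*$ equals the new transitive closure. From this, the cover claim follows: if $|S|^2\le|\dE|$ then $(S,\dE_*[S])=(S,\dE^{\tr}[S])$ satisfies $\dE^{\tr}[S]=\dE^{\tr}[S]$ and is a transitive cover by definition; otherwise $(\dV,\dE)$ is trivially a cover of $S$.

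Next I would check the witness list $\wlistInst$ is valid, rooted, and path-structured. Each triple is appended in one of two places: $(u,v,v)$ on \Cref{line:dsclo:add-edge} (with $w_i=b_i$, the ``arc'' case, imposing no conditions), and $(a,v,u)$ on \Cref{line:dsclo:out_transitive_add}. For the latter, the guard $(a,u)\in\dE_*$ ensures $(a,u)$ was previously appended, and $(u,v)\in\dE$ was appended earlier (either on \Cref{line:dsclo:add-edge} of the current $\dsTcAdd$ iteration before the for-loop over $a$, or in a strictly earlier iteration), so both ``fragments'' precede $(a,v,u)$ in $\wlistInst$, as required by \Cref{def:wit_list}. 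Moreover, since the second fragment $(u,v)=(w_i,b_i)$ is always in $\dE(\wlistInst)$, a simple induction on the order of insertion shows every $(a,v)\in\dE_*(\wlistInst)$ is $a$-out rooted in the sense of \Cref{def:rooted}. Hence $\wlistInst$ is out-rooted, and by \Cref{lem:simp_paths} it is path-structured, matching the requirements of $\dsTcWit()$.

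For the runtime of $\dsTcInit$ and $\dsTcAdd$, I would use a per-source amortization argument in the spirit of Italiano~\cite{Italiano1986}. Fix $a\in\dV$. Across the entire algorithm, each pair $(a,v)$ is inserted into $\dE_*$ at most once (due to the $(a,v)\notin\dE_*$ guard); whenever it is inserted, \Cref{line:dsclo:out_recurse} iterates over the current out-neighborhood of $v$ in $\dE$. Summing these per-node costs yields $O(\sum_v \text{out-deg}_{\Efinal}(v))=O(|\Efinal|)$ work per $a$ for the recursive calls. The for-loop on \Cref{line:dsclo:for_add_start} contributes $O(1)$ per $(a,e)$-pair, i.e., $O(|\Efinal|)$ per $a$. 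Adding the $O(n)$ initialization cost and summing over $a\in\dV$ gives $O\bigl(n(n+|\Efinal|)\bigr)$, assuming $O(1)$ membership queries in $\dE_*$ (e.g., by maintaining $\dE_*$ as a bit-matrix, initialized in $O(n^2)$ time on $\dsTcInit$, which is subsumed by the stated bound).

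Finally, for $\dsTcCover(S)$ I would maintain a counter on $|\dE|$ so the $|S|^2\le|\dE|$ comparison takes $O(1)$ time. In the first case, enumerating the pairs of $S\times S$ and testing membership in $\dE_*$ gives $\dE_*[S]$ in $O(|S|^2)$ time; in the second case, iterating over $\dE$ gives the output in $O(|\dE|)$ time. Either way the cover has size at most $\min\{|S|^2,|\dE|\}$. For $\dsTcWit()$, the length bound follows because the pairs $(a_i,b_i)$ in $\wlistInst$ are distinct elements of $\dE_*\subseteq\dV\times\dV$, so $|\wlistInst|\le n^2$, and returning $\wlistInst$ costs $O(n^2)$ time. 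The only delicate step is the amortization for $\dsTcAdd$; the rest is routine bookkeeping.
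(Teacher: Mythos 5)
Your proposal is correct and follows essentially the same approach as the paper's proof. You establish the same invariant $\dE_*=\dE^{\tr}$ (the paper uses a minimal-counterexample/shortest-path argument where you appeal to a ``standard depth-first argument''; both are valid routes to the same fact), verify the same out-rooted property of $\wlistInst$ and invoke \Cref{lem:simp_paths} for path-structuredness, and use the same adjacency-matrix implementation. The only cosmetic difference is in the amortization: the paper charges recursive calls per arc $(v,w)\in\Efinal$ (each such arc triggers at most one recursive call per source $a$), while you charge per source $a$ (each $(a,v)$ enters $\dE_*$ at most once and pays $\text{out-deg}(v)$); both yield the same $O(n|\Efinal|)$ bound.
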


\begin{proof}
\emph{Properties of $\dE$, $\dE_*$, and $\wlistInst$.}
First we establish several properties of $\dE$, $\dE_*$, and $\wlistInst$ that hold after the operations. Note that $\dE$ is only updated on \Cref{def:wit_list,line:dsclo:add-edge}. Additionally, $L$ is only updated on \Cref{line:dsclo:init,line:dsclo:add-edge,line:dsclo:out_transitive_add}; it is initialized to $\emptyset$ on \Cref{line:dsclo:init} and then elements are added on \Cref{line:dsclo:add-edge,line:dsclo:out_transitive_add}. On \Cref{line:dsclo:add-edge},  $(u,v,v)$ is added to $L$ thereby adding $(u,v)$ to $\dE(L)$ and on \Cref{line:dsclo:out_transitive_add}, $(a,v,u)$ is added where $v \neq u$ (since $(a,u) \in \dE_*$ and $(a,v) \notin \dE_*$ in the preceding line) and therefore this line does not add to $\dE(L)$. Consequently, after each $\dsTcAdd(\cdot)$, $\dsTcCover(\cdot)$, and $\dsTcWit()$, $\dE$ is the union of  all arcs added in $\dsTcAdd(\cdot)$ and $\dE = \dE(L)$.

Next, note that $\dstreeAddOut(a,(u,v))$ is only called for $(u,v) \in \dE$. Consequently, arcs are only added to $\dE_*$ and $\dELcl(L)$ when they are added to $\dE$ or on \Cref{line:dsclo:out_transitive_add}. When an arc $e = (a,v)$ is added to $\dE_*$ and $\dELcl(L)$  on \Cref{line:dsclo:out_transitive_add} it is the case that $(a,u) \in \dE_*$ and $(u,v) \in \dE$. Consequently, $L$ remains an out-rooted witness list after each addition to it. Since $\dsTcCover(\cdot)$ only outputs subsets of $\dE_*$ we see that $\dsTcWit()$ returns a witness list with the necessary properties of an \DITC{}.

We now show that before and after each call to $\dsTcAdd(\cdot)$ it is the case that $\dE_* = \dE^\tr$. By the reasoning in the preceding paragraph, $\dE_* \subseteq \dE^\tr$, and therefore it suffices to show that $e \in \dE^\tr$ implies $e \in \dE_*$. Proceed by contradiction and suppose that this is not the case after a call to $\dsTcAdd(\cdot)$ (it suffices to consider this case as after $\dsTcInit$, $\dE_* = \emptyset = \dE^\tr$). Pick $(a,b) \in \dE^{\tr} \setminus \dE_*$ such that the length of the shortest path, $P$, from $a$ to $b$ using $\dE$ is minimized. Note that the length of this path must be at least 2 since $\dE \subseteq \dE_*$. Let $c$ be the last node before $b$ on this $a$--$b$ path. By the assumptions there is an $a$--$c$ path using $\dE$ shorter than $P$ and therefore $(a,c) \in \dE_*$. This arc $(a,c)$ was added to $\dE_*$ on \Cref{line:dsclo:out_transitive_add}. However, since $(a,b) \notin \dE_*$ and $(c,b) \in \dE$ we see that $(a,b)$ would have been added during the recursive call to $\dstreeAddOut(\cdot)$ on \Cref{line:dstree:out_recurse} if $(c,b) \in \dE$ was present at that time. Consequently, $(c,b)$ must have been added to $\dE$ after $(a,c)$ was added to $\dE_*$. However, in that case $(a,b)$ would have been added to $\dE_*$ through a call to $\dstreeAddOut(a,(c,b))$ on \Cref{line:dsclo:for_add_end}, contradicting $(a,b) \notin \dE_*$.

\emph{Correctness.}
Leveraging the above properties of $\dE$, $\dE_*$, and $L$, we show that \Cref{alg:closure} is indeed an \DITC{} with the properties specified in the statement. $\dsTcCover(S)$ returns either $\dE_*[S] = \dE^\tr[S]$ by the invariants or $(\dV,\dE)$ both are covers of $S$ and have the specified number of arcs. Furthermore, we have already argued that $\dsTcWit()$ has the necessary properties for an \DITC{}. Additionally, since $|\dE_*| \leq n^2$ we see that the length of the witness list output by $\dsTcWit()$ is at most $n^2$.
 
\emph{Runtime.} Finally, we analyze the running time for implementing the algorithm. We use standard adjacency matrix data structures to maintain $\dE$ and $\dE_*$ so that $\dsTcInit(\cdot)$ is implementable in $O(n^2)$ and afterwards each of the operations involving these sets, e.g., adding an element and checking membership, can be performed in $O(1)$. This implies that $\dsTcCover(S)$ is implementable in $O(\min\{|S|^2,|\dE|\})$ time by maintaining $|\dE|$ in $O(1)$ per change to $\dE$ and when $|S|^2 \leq |\dE|$, checking if $(u,v) \in \dE_*$  for each $u,v\in S$ when $|S|^2 \leq |\dE|$, and outputting $(\dV,\dE)$ when this is not the case.
Additionally, we can store the witness list as a list so adding to it can be implemented in $O(1)$ time. Furthermore, since the list has at most $n^2$ elements, implementing $\dsTcWit()$ then can be done in $O(n^2)$ time. 

Next, note that \Cref{line:dsclo:add-edge} is called once for every arc in $\Efinal$, which can be done $O(|\Efinal|)$ time. Additionally, \Cref{line:dsclo:for_add_end} is called $n$ times for each $e \in \Efinal$ which can be done in $O(n|\Efinal|)$ time (ignoring the recursive cost). Furthermore, \Cref{line:dsclo:out_recurse} is called for a particular $a \in \dV$ and $(v,w) \in \dE$ only when $(a,v)$ was just added to $\dE_*$. Consequently, for a particular $(v,w) \in \dE$ \Cref{line:dsclo:out_recurse} is called at most once for each possible $a \in \dV$. Therefore, this line is implementable (ignoring the recursive cost) in $O(n |\Efinal|)$ time. Putting this altogether yields that $\dsTcInit(\cdot)$ and $\dsTcAdd(\cdot)$ are implementable in total time $O(n^2 + n|\Efinal|)$.
\end{proof}

\subsection{Tree-respecting updates}
\label{sec:tree_depth}

In this section we provide an \DITC{}, \Cref{alg:dstree}, in the special case where added arcs are only between what we call \emph{related} pairs of nodes with respect to a known rooted spanning tree, that is pairs of nodes that are either ancestors or descendants of each other. More formally, for a rooted spanning tree $T$ on nodes $\dV$, we define the following:
\begin{itemize}
    \item $\ancestor_{T}(a)$ denotes the set of ancestors of a node $a$ in $T$ (including $a$)
    \item $\descendant_{T}(a)$ denotes the set descendants of a node $a$ (including $a$).
    \item $\related_{T}(a) \defeq \ancestor_T(a) \cup \descendant_T(a)$ denotes what we call the set of nodes \emph{related} to $a$ in $T$. 
\end{itemize}
We design an \DITC{} where at initialization we are given a rooted spanning tree $T$ on $\dV$ and then all $F$ input in $\dsTcAdd(\cdot)$ \emph{respect $T$} as defined below. 

\begin{definition}[Tree Respecting]
\label{def:dstree:tree_respecting}
For rooted spanning tree $T$ on nodes $\dV$ we say that \emph{$(a,b) \in \dV \times \dV$ respects $T$} if $a \in \related_T(b)$ (or equivalently, $b \in \related_T(a)$). We let $\Etree \defeq \{(a,b) \in \dV \times \dV | a \in \related_T(b) \}$ denote the set of related node pairs and we say that \emph{$F \subseteq \dV \times \dV$ respects $T$} if $F \subseteq \Etree$.
\end{definition}

Given a rooted spanning tree $T$ and arc additions that respect $T$,  for every node $a$, our \DITC{} (\Cref{alg:dstree} to be introduced later) maintains arcs to each node $b$ that it can reach using $\dE[\descendant_T(a)]$ and arcs from each node $b$ that can reach it using $\dE[\descendant_T(a)]$. Equivalently, the \DITC{} maintains
\begin{equation}
\label{eq:dstree:Estar}
\dE_* 
    = \left\{ (a,b) \in \R^{\dV \times \dV} ~ | ~ (a, b) \in \dE[\descendant_T(a)]^\tr \text{ or } (a,b) \in \dE[\descendant_T(b)]^\tr \right\}\,.
\end{equation}
 To answer $\dsTcCover(S)$ queries we show that it then suffices to output the arcs of $\dE_*$ restricted to $\cup_{a \in S} \ancestor_T(a)$.
We show how to efficiently maintain $\dE_*$ and using this, obtain runtimes that on properties of the tree and its relationship to the edges added. Below we formally define a measure of the set of arcs relative to a rooted tree and with this provide our main \Cref{thm:transitive_tree_depth} regarding our \DITC{} for tree respecting updates. 

\begin{definition}[Relative Tree Size] \label{def:dstree:size}
For rooted spanning tree $T$ on nodes $\dV$ and $F \subseteq \dV \times \dV$ we define the \emph{size of $F$ relative to $T$} as 
\[
\treeComplexity_T(F) \defeq \sum_{(a,b) \in F} |\ancestor_T(a,b)|\, ,\, 
\text{ where }
\ancestor_T(u,v) \defeq \ancestor_T(u) \cap \ancestor_T(v)
\text{ for all }
u,v \in \dV
\,.
\]
\end{definition}

\begin{theorem}[Tree-respecting \DITC{}] 
\label{thm:transitive_tree_depth}
Let $T$ be a rooted spanning tree on $n$-element node set $\dV$.
Assume that for all calls $\dsTcAdd(F)$,  $F \subseteq \dV \times \dV$ respects $T$ and let  $\Efinal$ denote $\dE$ after the last $\dsTcAdd(\cdot)$. Then, Algorithm~\ref{alg:dstree} is an \DITC{},  and can be implemented with the following properties:
\begin{itemize}
\item $\dsTcInit(\cdot)$ and $\dsTcAdd(\cdot)$ has total runtime bounded as $O(\treeComplexity_T(\Efinal) + \sum_{a \in \dV} |\ancestor_T(a)|)$.

\item $\dsTcCover(S\subseteq \dV)$ has runtime, number of nodes in the graph output, and number of arcs in the graph output all bounded by $O(\sum_{a \in S} |\ancestor_T(a)|)$ and all arcs output respect $T$.

\item $\dsTcWit()$ returns a witness list of length $O(\treeComplexity_T(\Efinal))$ in time $O(\treeComplexity_T(\Efinal))$.
\end{itemize}
\end{theorem}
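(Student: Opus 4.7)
My plan is to prove the theorem by first establishing an invariant characterizing $\dE_*$ that the algorithm maintains, then using this invariant to verify correctness of $\dsTcCover$ and the witness list, and finally amortizing the update cost in the spirit of Italiano's argument used in \Cref{thm:closure}. The algorithm will be a natural adaptation of \Cref{alg:closure}: for each node $a$, it maintains both an out-reachability tree covering those $b \in \descendant_T(a)$ that are reachable from $a$ within $\dE[\descendant_T(a)]$, and an in-reachability tree covering those $b$ that reach $a$ within the same induced subgraph. When an arc $(u,v)$ is added (with $u,v$ related since the update respects $T$), it can only be relevant to nodes $a$ that are simultaneously ancestors of $u$ and of $v$, i.e.\ $a \in \ancestor_T(u,v)$. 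For each such $a$ we call $\dstreeAddOut(a,(u,v))$ and a symmetric $\dstreeAddIn(a,(u,v))$, each identical to the Italiano-style recursion from \Cref{alg:closure} except that we only ever recurse along an arc $(x,y)$ once we have confirmed $x,y \in \descendant_T(a)$. An induction on the number of additions, modeled on the argument inside the proof of \Cref{thm:closure}, then shows that $\dE_*$ is exactly the set displayed in \eqref{eq:dstree:Estar}.

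Given this invariant, I will verify correctness of $\dsTcCover(S)$. The algorithm outputs, for each $a \in S$ and each $c \in \ancestor_T(a)$, the arcs $(a,c)$ and $(c,a)$ whenever they lie in $\dE_*$. This yields $O(|\ancestor_T(a)|)$ arcs per $a \in S$, and every output arc respects $T$ by construction. To see that the output is a transitive cover of $S$, take $(a,b) \in \dE^\tr[S]$ and let $P$ be a witnessing $a$--$b$ path in $\dE$. Since every arc of $P$ respects $T$, the nodes of $P$ form an alternation of ancestor/descendant moves, so there is a \emph{highest} node $c$ on $P$; this $c$ is a common ancestor of $a$ and $b$. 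The prefix $a \leadsto c$ of $P$ lies entirely in $\descendant_T(c)$ and witnesses $(a,c) \in \dE_*$; symmetrically $(c,b) \in \dE_*$. Because $c \in \ancestor_T(a) \cap \ancestor_T(b)$, both arcs are included in the output, giving an $a$--$b$ path as required.

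For the witness list, each arc added by $\dstreeAddOut(a,\cdot)$ is inserted with witness of the form $(a,v,u)$, where $(a,u) \in \dE_*$ was added earlier and $(u,v) \in \dE$; this makes the arc $a$-out rooted in the sense of \Cref{def:rooted}. Dually, arcs added via $\dstreeAddIn$ are $b$-in rooted. Consequently the entire witness list is rooted, and by \Cref{lem:simp_paths} it is path-structured, satisfying the requirements on $\dsTcWit()$ from \Cref{def:general_transitive_cover}. Its length equals $|\dE_*|$, which is bounded by $O(\treeComplexity_T(\Efinal))$ because every element of $\dE_*$ is inserted during the processing of some arc $(u,v) \in \Efinal$ for some $a \in \ancestor_T(u,v)$.

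The main technical step is the amortized runtime bound. The initialization plus setting up, for every $a$, a constant-time oracle for ``is $x \in \descendant_T(a)$?'' (e.g.\ via Euler-tour intervals of $T$) and an oracle for membership in $\dE_*$ (a hash table) costs $O(n + \sum_a |\ancestor_T(a)|)$. For updates, fix $a$ and consider the work charged to $\dstreeAddOut(a,\cdot)$: each arc $(v,w) \in \dE[\descendant_T(a)]$ is scanned at most twice — once at the moment $(v,w)$ is inserted (as part of the direct call $\dstreeAddOut(a,(v,w))$ in the main $\dsTcAdd$ loop) and once at the moment $(a,v)$ is first added to $\dE_*$ (as part of the recursion). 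Since each arc $(u,v)\in\Efinal$ is scanned by $\dstreeAddOut(a,\cdot)$ only for $a \in \ancestor_T(u,v)$, summing yields total cost $O(\sum_{(u,v)\in\Efinal} |\ancestor_T(u,v)|) = O(\treeComplexity_T(\Efinal))$, and $\dstreeAddIn$ contributes symmetrically. The $\dsTcCover(S)$ runtime is then immediate from iterating over $\ancestor_T(a)$ for $a \in S$ and checking membership in $\dE_*$, and $\dsTcWit()$ simply returns the accumulated list. I expect the primary obstacle to be carefully verifying the ``each arc scanned at most twice per $a$'' amortization in the presence of the $\descendant_T(a)$ restriction, since one must ensure the descendant check itself is charged correctly; this is handled by performing the test in $O(1)$ using Euler-tour intervals and folding its cost into the amortized bound above.
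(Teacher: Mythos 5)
Your overall plan closely mirrors the paper's: maintain $\dE_*$ as in \eqref{eq:dstree:Estar}, answer covers by the highest-node-on-a-path argument (\Cref{lem:dstree:path-through-ancestor}), obtain a rooted witness list and invoke \Cref{lem:simp_paths}, and amortize the recursion à la Italiano. All of that matches the paper.

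There is, however, a gap in your runtime amortization, and it sits precisely where you flagged uncertainty. You propose that the recursion from a newly added transitive arc $(a,v)$ scan \emph{all} out-arcs of $v$ in $\dE$ and filter each via an $O(1)$ Euler-tour test for membership in $\descendant_T(a)$. Your amortization then claims ``each arc $(v,w)\in\dE[\descendant_T(a)]$ is scanned at most twice,'' which covers exactly the arcs that \emph{pass} the filter. But a filter-based scan also touches out-arcs $(v,w)$ with $w$ a \emph{proper ancestor of} $a$ — these are rejected, yet still cost $O(1)$ each. For such an arc $a \notin \ancestor_T(v,w)$ (since $a$ is a proper descendant of $w$), so this cost cannot be charged to $\treeComplexity_T$ via the arc $(v,w)$ in the way your argument requires. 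The $O(1)$-time Euler-tour check controls the cost \emph{per} rejection, not the \emph{number} of rejections, so ``folding its cost into the amortized bound above'' is not established; the charge target is simply missing. The paper removes this issue entirely by maintaining, for each $a$, the restricted adjacency list $\Edesc(a) = \dE[\descendant_T(a)]$ (updated in $O(\treeComplexity_T(\Efinal))$ total via \Cref{line:dstree:add-inout}), so that the recursion never touches an arc outside $\dE[\descendant_T(a)]$ and every scan is charged cleanly. To close your gap, either adopt this bookkeeping, or order each node's adjacency list by depth so that the iteration can stop at the boundary of $\descendant_T(a)$ without touching rejected arcs. Two smaller notes: use a deterministic membership structure for $\dE_*$ (e.g.\ per-node arrays indexed by ancestor, as in the paper) rather than a hash table, since the theorem is stated deterministically; and ensure self-loops $(a,a)$ are explicitly excluded before insertion into $\dE_*$, which your Italiano-style recursion could otherwise attempt to add.
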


Before describing \Cref{alg:dstree} and proving \Cref{thm:transitive_tree_depth}, we first provide helpful tools for applying and reasoning about \Cref{thm:transitive_tree_depth}. First we provide \Cref{lem:dstree:size_bound} which provides upper bounds on the size of an arcs subset relative to a tree. Then we present and prove \Cref{cor:transitive_tree_depth} using \Cref{thm:transitive_tree_depth} and \Cref{lem:dstree:size_bound}. \Cref{thm:transitive_tree_depth} is a specialization of \Cref{thm:transitive_tree_depth} that assumes that $T$ has depth at most $D$ and gives running time in terms of $D$; we use it to prove \Cref{thm:tree-depth-overall}

\begin{lemma}[Relative Tree Size Bound] 
\label{lem:dstree:size_bound}
$\treeComplexity_T(F) \leq \min\{D |F|, 2 \sum_{a \in \dV} |\ancestor_T(a)|^2\}$ for any depth $D$ rooted spanning tree $T$ on $n$ nodes $\dV$ and $F \subseteq \dV \times \dV$ that respects $T$. %
\end{lemma}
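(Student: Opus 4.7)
The plan is to verify the two upper bounds separately, starting with the easier bound. Since $T$ has depth at most $D$, for every node $a \in \dV$ we have $|\ancestor_T(a)| \leq D$, and hence for any pair $(a,b) \in \dV \times \dV$, $|\ancestor_T(a,b)| = |\ancestor_T(a) \cap \ancestor_T(b)| \leq D$. Summing over $(a,b) \in F$ immediately gives $\treeComplexity_T(F) \leq D|F|$.

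For the second bound, the key observation is that since $F$ respects $T$, every $(a,b) \in F$ has one endpoint on the root-to-other path in $T$; equivalently, either $\ancestor_T(a) \subseteq \ancestor_T(b)$ or $\ancestor_T(b) \subseteq \ancestor_T(a)$. In both cases $\ancestor_T(a,b)$ equals the smaller of the two ancestor sets, so $|\ancestor_T(a,b)| = \min\{|\ancestor_T(a)|,|\ancestor_T(b)|\}$. Partition $F = F_1 \cup F_2$, where $F_1 = \{(a,b)\in F : a \in \ancestor_T(b)\}$ and $F_2 = F \setminus F_1$ (so $b \in \ancestor_T(a)$ for every $(a,b) \in F_2$).

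For pairs in $F_1$ we have $|\ancestor_T(a)| \leq |\ancestor_T(b)|$, so bounding each term by $|\ancestor_T(b)|$ and switching the order of summation gives
\[
\sum_{(a,b) \in F_1} |\ancestor_T(a,b)| \leq \sum_{b \in \dV} \sum_{a \in \ancestor_T(b)} |\ancestor_T(b)| \leq \sum_{b \in \dV} |\ancestor_T(b)|^2 \, ,
\]
where we used that for each fixed $b$ the number of valid $a$'s is at most $|\ancestor_T(b)|$. A symmetric argument yields $\sum_{(a,b) \in F_2} |\ancestor_T(a,b)| \leq \sum_{a \in \dV} |\ancestor_T(a)|^2$. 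Adding the two bounds completes the proof. There is no real obstacle here; the only subtlety is the explicit case split on which endpoint is the ancestor, which is what lets us replace $\min\{|\ancestor_T(a)|,|\ancestor_T(b)|\}$ with a single tractable quantity in each summand.
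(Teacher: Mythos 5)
Your proof is correct and takes essentially the same approach as the paper: both rest on the observation that for a tree-respecting pair $(a,b)$ the common-ancestor set equals the ancestor set of one of the two endpoints, and both then obtain the bound $2\sum_{a\in\dV}|\ancestor_T(a)|^2$ by a double-counting/reorganization argument. The paper phrases this as a set inclusion on triples $(a,b,c)$, while you partition $F$ and reorganize the sum directly, but the underlying counting is identical.
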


\begin{proof}
$\treeComplexity_T(F) \leq D |F|$ since $|\ancestor_T(a,b)| \leq D$. Additionally,
\[
\treeComplexity_T(F) = \left| S \right| 
\text{ where } S \defeq \left\{
(a,b,c) \in \dV 
~ | ~
(a,b) \in F \text{ and } c \in \ancestor_T(a,b)
\right\}
\]
By assumption,  $a \in \related_T(b)$ for every $(a,b) \in F$. Therefore $a \in \ancestor_T(b)$ or $b \in \ancestor_T(a)$ and 
\[
S \subseteq T_1 \cup T_2 
\text { where } T_1 = \{(a,b,c) \in \dV | \{b,c\} \in \ancestor_T(a)\}
\text { and } T_2 = \{(a,b,c) \in \dV | \{a,c\} \in \ancestor_T(b)\}\,.
\]
The result follows as $|T_1| = |T_2| = \sum_{a \in \dV} |\ancestor_T(a)|^2$.
\end{proof}

\begin{corollary}
\label{cor:transitive_tree_depth} 
Let $T$ be a rooted depth-$D$ spanning tree on $n$-element node set $\dV$.
Assume that for all calls $\dsTcAdd(F)$, $F \subseteq \dV \times \dV$ respects $T$ and let  $\Efinal$ denote $\dE$ after the last $\dsTcAdd(\cdot)$; in particular, $|\Efinal| = O(nD)$.
Then, Algorithm~\ref{alg:dstree} is an \DITC{} and can be implemented with the following properties:
\begin{itemize}
\item $\dsTcInit(\cdot)$ and $\dsTcAdd(\cdot)$ run in total time $O((n +  |\Efinal|)D)$.
\item $\dsTcCover(S\subseteq \dV)$ has runtime, number of nodes in the graph output, and number of arcs in the graph output all bounded by $O(|S|D)$ and all arcs output respect $T$.

\item $\dsTcWit()$ returns a witness list of length $O(D |\Efinal|)$ in time $O(D |\Efinal|)$.
\end{itemize}
\end{corollary}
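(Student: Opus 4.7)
The plan is to obtain Corollary~\ref{cor:transitive_tree_depth} as an immediate specialization of Theorem~\ref{thm:transitive_tree_depth}, using that tree depth $D$ uniformly bounds the ancestor counts. Concretely, since $T$ has depth $D$, we have $|\ancestor_T(a)| \leq D$ for every $a \in \dV$, and hence $\sum_{a\in\dV}|\ancestor_T(a)| \leq nD$ and $\sum_{a\in S}|\ancestor_T(a)| \leq |S|D$. These two inequalities immediately yield the claimed bound $O(|S|D)$ on the runtime and size of the output of $\dsTcCover(S)$, and contribute the $O(nD)$ additive term for $\dsTcInit$ plus $\dsTcAdd$.

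The first step I would then carry out is to bound $|\Efinal|$. Because each arc $(a,b) \in \Efinal$ respects $T$, either $b \in \ancestor_T(a)$ or $a \in \ancestor_T(b)$. Each node has at most $D$ ancestors, so summing over both cases gives $|\Efinal| \leq 2nD = O(nD)$, as asserted in the statement. Next, I would invoke Lemma~\ref{lem:dstree:size_bound} (in its first form) to conclude $\treeComplexity_T(\Efinal) \leq D \cdot |\Efinal|$. Plugging this into the $\dsTcInit$/$\dsTcAdd$ runtime bound from Theorem~\ref{thm:transitive_tree_depth} yields total time
\[
O\!\left(\treeComplexity_T(\Efinal) + \sum_{a\in\dV}|\ancestor_T(a)|\right) \;=\; O\!\left(D|\Efinal| + nD\right) \;=\; O\!\left((n+|\Efinal|)D\right),
\]
as required. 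The same $\treeComplexity_T(\Efinal) = O(D|\Efinal|)$ bound controls both the witness-list length and runtime of $\dsTcWit()$, giving the final bullet.

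The fact that the arcs output by $\dsTcCover(S)$ respect $T$ carries over directly from the corresponding conclusion in Theorem~\ref{thm:transitive_tree_depth}, so no new work is required there. In short, there is no substantive obstacle: the entire proof is a mechanical instantiation of Theorem~\ref{thm:transitive_tree_depth} and Lemma~\ref{lem:dstree:size_bound} under the depth-$D$ assumption, together with the elementary counting argument bounding $|\Efinal|$ by $O(nD)$.
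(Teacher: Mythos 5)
Your proof is correct and takes essentially the same route as the paper: instantiate Theorem~\ref{thm:transitive_tree_depth} using $|\ancestor_T(a)| \le D$, bound $\treeComplexity_T(\Efinal)$ via Lemma~\ref{lem:dstree:size_bound}, and bound $|\Efinal|$ by $O(nD)$ from tree-respecting arcs. You merely spell out the intermediate arithmetic a bit more explicitly than the paper does.
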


\begin{proof}
The result follows from \Cref{thm:transitive_tree_depth} by using that, $|\ancestor_T(a)| \leq D$ for all $a \in \dV$ since $T$ has depth-$D$, using \Cref{lem:dstree:size_bound} to bound $\treeComplexity_T(\Efinal)$, and noting that $|\Efinal|=O(\sum_{a \in \dV} |\ancestor_T(a)|) = O(nD)$) since $\Efinal$ respects $T$.
\end{proof}

Though we only apply \Cref{cor:transitive_tree_depth} (rather than \Cref{thm:transitive_tree_depth}) in later sections, we provide the more general \Cref{thm:transitive_tree_depth} as it elucidates the structure of the problem and can be faster in certain settings. To see this, note that, depending on $T$, $\treeComplexity_T(F)$ can be much lower than the $O(D|\Efinal|)$ bound in \Cref{cor:transitive_tree_depth}. For example, suppose the root has $n - (D - 2)$ children and one of those children has a path of $D - 2$ children (i.e., each of its non-leaf descendants has exactly one child), and $F$ is all arcs with related endpoints, i.e., $|F|=O(n+D^2)$,  then $\treeComplexity_T(F) = O(n + D^3)$, while $O(|F|D)=O(nD +D^3)$.

\medskip

We now discuss \Cref{alg:dstree} and prove \Cref{thm:transitive_tree_depth} in multiple steps. The algorithm and analysis use the following additional notation on
common ancestor and common descendants:
\[
\canc_T(a,b) \defeq 
\begin{cases}
a & \text{if }a \in \ancestor_T(b) \\
b & \text{if }b \in \ancestor_T(a)
\end{cases}
\text{ and }
\cdesc_T(a,b) \defeq 
\begin{cases}
a & \text{if }a \in \descendant_T(b) \\
b & \text{if }b \in \descendant_T(a)
\end{cases}
\,.
\]

\begin{algorithm}[t!]
\caption{Bounded tree-depth}
\label{alg:dstree}
\SetCommentSty{algCommentFont}
\SetKwProg{function}{function}{:}{}
\SetKw{Return}{return}

\tcp{Global variables }
    $\dV$, $\dE \subseteq \dV \times \dV$, $\dE_* \subseteq \dV \times \dV$
        \tcp*{nodes, arcs, and transitive arcs} 
    depth $D$ rooted spanning $T$ of $\dV$ 
        \tcp*{invariant: $a\in\related_{T}(b)$ for all $(a,b)\in \dE \cup \dE_*$ }
    $\wlistInst \in \wlistSpace$ \tcp*{witness list (\Cref{def:wit_list})} 
    $\Edesc(a) \subseteq  \dV \times \dV$ for all $a \in \dV$
        \tcp*{invariant: $\Edesc(a) = E[\descendant_T(a)]$}
    $\Ecov(a) \subseteq  \dV \times \dV$   for all $a \in \dV$
        \tcp*{invariant: $\Ecov(a) = \{ (b,c) \in \dE_* | a = \cdesc_T(b,c) \}$}
\BlankLine
\function{$\dsTcInit(V)$}{
    $\dV \gets \dV$, $\dE\gets\emptyset$, $\dE_* \gets \emptyset$, 
    $\wlistInst \gets \emptyset$, and $\Edesc(a) \gets \emptyset$ for all $a \in \dV$
    \label{line:dstree:init}
    \;
}

\BlankLine
\function{$\dsTcAdd(F \subseteq \dV\times \dV)$}{
    \For{$e = (u,v) \in F$\label{line:dstree:for_add_start}}{
        $\dE\gets E\cup \{e\}$, $\dE_*\gets \dE_*\cup \{e\}$, and 
            $\wlistInst \gets \wlistInst , (u, v, v)$
            \label{line:dstree:add-edge}\;
        $\Ecov(\cdesc_T(u,v)) \gets \Ecov(\cdesc_T(u,v)) \cup \{(u,v)\}$
            \label{line:dstree:cov-update-add}
            \;
        \lFor{$a \in \ancestor_T(u,v)$}{
            $\dE_\descendant(a) \gets \dE_\descendant(a) \cup \{e\}$, 
            $\dstreeAddOut(a, e)$, and
            $\dstreeAddIn(a,e)$
            \label{line:dstree:add-inout}
            \label{line:dstree:for_add_end}
        }
    }
}

\BlankLine
\function{$\dstreeAddOut(a \in \dV, (u,v) \in \Edesc(a))$}{
    \If{$(a,u) \in \dE_*$ and $(a,v) \notin \dE_*$}{
        $\dE_*\gets \dE_*\cup \{e\}$, $\Ecov(v) \gets \Ecov(v) \cup \{e\}$, and 
        $\wlistInst \gets \wlistInst , (a, v, u)$
        \label{line:dstree:out_transitive_add} \;
        \lFor{$(v,w) \in \Edesc(a)$}{
            $\dstreeAddOut(a \in \dV, (v,w))$
            \label{line:dstree:out_recurse}
        }
    }
}

\BlankLine
\function{$\dstreeAddIn(a \in \dV, (u,v) \in \Edesc(a))$}{
    \If{$(v, a) \in \dE_*$ and $(u,a) \notin \dE_*$}{
        $\dE_*\gets \dE_*\cup \{e\}$, $\Ecov(u) \gets \Ecov(u) \cup \{e\}$, and 
        $\wlistInst \gets \wlistInst , (u, a, v)$
         \label{line:dstree:in_transitive_add} \;
        \lFor{$(w,u) \in \Edesc(a)$}{
            $\dstreeAddIn(a \in \dV, (w,u))$
            \label{line:dstree:in_recurse}
        }
    }
}

\BlankLine
\function{$\dsTcCover(S\subseteq \dV)$}{
    \Return $(S_\out , E_\out)$ where $S_\out = \cup_{a \in S} \ancestor_T(a)$ and $\dE_\out = \cup_{a \in S} \Ecov(a)$
}

\BlankLine
\lfunction{$\dsTcWit()$}{
    \textbf{return} $\wlistInst$
}

\end{algorithm}

\Cref{alg:dstree} is closely related to the transitive closure algorithm we presented in \Cref{sec:dsclosure} (which in turn is turn closely related to Italiano's  incremental transitive closure data structure \cite{Italiano1986}). However, rather than maintaining transitive arcs corresponding to an out-tree from every node, instead we maintain transitive arcs corresponding to in-trees and out-trees for restricted set of nodes. More precisely, as arcs are added, for every $a \in \dV$ we maintain $\Edesc(a) = \dE[\descendant_T(a)]$ and add to $\dE_*$ every $(a,b) \in \dE[\descendant_T(a)]^\tr$ and every $(b,a) \in \dE[\descendant_T(a)]^\tr$. Just as maintaining an out-tree from $a$ was implemented in $O(|\Efinal|)$ time in \Cref{sec:dsclosure}, similarly maintaining the in- and out-tree for a single $a \in \dV$ can be implemented in essentially, $O(|\Efinal[\descendant_T(a)]|)$ time.
This gives the desired time complexity for process $\dsTcAdd(\cdot)$ by the following simple lemma.

\begin{lemma}
\label{lem:dstree:complexity_relation}
If $T$ is a spanning tree of $\dV$ and $F \subseteq \dV \times \dV$ then $\sum_{a \in \dV} |F[\descendant_T(a)]| = \treeComplexity_T(F)$.
\end{lemma}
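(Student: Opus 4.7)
The plan is a straightforward double-counting argument: I will count the cardinality of an auxiliary set in two different ways, once grouping by the tree node $a$ and once grouping by the arc $(u,v) \in F$.

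Specifically, I would define
\[
S \defeq \bigl\{ (a, (u,v)) \in \dV \times F ~:~ u \in \descendant_T(a) \text{ and } v \in \descendant_T(a) \bigr\} \,.
\]
Summing first over $a \in \dV$ and counting, for each $a$, the arcs $(u,v) \in F$ with both endpoints in $\descendant_T(a)$ gives exactly $\sum_{a \in \dV} |F[\descendant_T(a)]|$, which is the left-hand side of the claimed identity.

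For the other direction, I would use the elementary equivalence that, in any rooted tree, $w \in \descendant_T(a)$ if and only if $a \in \ancestor_T(w)$. Therefore $u,v \in \descendant_T(a)$ if and only if $a \in \ancestor_T(u) \cap \ancestor_T(v) = \ancestor_T(u,v)$. Summing over $(u,v) \in F$ then gives
\[
|S| = \sum_{(u,v) \in F} |\ancestor_T(u,v)| = \treeComplexity_T(F) \,,
\]
matching the definition of relative tree size. Combining the two expressions for $|S|$ yields the lemma. There is no real obstacle here; the only thing to be slightly careful about is that $F$ may be a multiset of ordered pairs (parallel arcs are allowed in the paper), but the double-counting argument carries through verbatim since each occurrence of $(u,v) \in F$ contributes independently on both sides.
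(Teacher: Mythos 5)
Your proof is correct and takes essentially the same approach as the paper: the paper proves the identity by directly swapping the order of summation, using the same equivalence $u,v \in \descendant_T(a) \iff a \in \ancestor_T(u,v)$, which is precisely your double-counting of $S$.
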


\begin{proof}
$
\sum_{a \in \dV} |F[\descendant_T(a)]|
= \sum_{a \in \dV} \left[ \sum_{(b,c) \in F | a \in \ancestor_T(b,c)} 1 \right]
= \sum_{(b,c) \in F} \left[ \sum_{a \in \ancestor_T(b,c)} 1 \right]
= \treeComplexity_T(F)$.
\end{proof}

Consequently, we can efficiently maintain $\dE_*$ as given in \eqref{eq:dstree:Estar}. However, it is perhaps not immediately clear why maintaining this $\dE_*$ yields an efficient \DITC{} data structure. Note that \Cref{lem:dstree:complexity_relation} and the algorithm for maintaining $\dE_*$  do not use the assumption that each arc of $\dE$ is tree-respecting. However, in the following \Cref{lem:dstree:path-through-ancestor} we show that when $\dE$ is tree-respecting, then if $u$ can reach $v$ using $\dE$ then there is a path that goes through some $a \in \ancestor_T(u,v)$ and only uses arcs in $\dE[\descendant_T(a)]$.

\begin{lemma}[Paths Through Common Ancestor]
\label{lem:dstree:path-through-ancestor}
If $T$ is a spanning tree of $\dV$ and $P$ is $u$--$v$ path using arcs in $\Etree$, then $P$ passes through some $a \in \ancestor_T(u,v)$ with $(u,a),(a,v) \in \dE[\descendant_T(a)]^\tr$.
\end{lemma}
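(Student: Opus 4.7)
The plan is to identify the natural candidate for $a$, namely the node on $P$ that is shallowest in $T$ (i.e., the node on $P$ of minimum depth with respect to the root of $T$), and show that this node works. Intuitively, since every arc on $P$ connects related nodes, walking along $P$ corresponds to walking up and down the tree, and the shallowest point visited must be a common ancestor of the endpoints $u$ and $v$, with the rest of $P$ staying inside its subtree.

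Concretely, I will write $P$ as $w_0, w_1, \ldots, w_k$ with $w_0 = u$ and $w_k = v$, let $a \defeq \argmin_i \mathrm{depth}_T(w_i)$, and prove by induction on $|i - j_a|$ (where $a = w_{j_a}$) that $w_i \in \descendant_T(a)$ for every $i$. The base case $i = j_a$ is trivial since $a \in \descendant_T(a)$ by convention. For the inductive step, assuming $w_i \in \descendant_T(a)$ and moving to a neighbor $w_{i'}$ on $P$, the arc between $w_i$ and $w_{i'}$ lies in $\Etree$, so $w_i$ and $w_{i'}$ are related. If $w_{i'}$ is a descendant of $w_i$, then immediately $w_{i'} \in \descendant_T(w_i) \subseteq \descendant_T(a)$. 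Otherwise $w_{i'}$ is an ancestor of $w_i$; but $a$ is also an ancestor of $w_i$, and both lie on the ancestor chain of $w_i$ in $T$, so they are comparable, and the minimality of $\mathrm{depth}_T(a)$ forces $a$ to be an ancestor of $w_{i'}$, i.e.\ $w_{i'} \in \descendant_T(a)$. This gives the invariant that every node on $P$ lies in $\descendant_T(a)$.

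Once this is established, the conclusion is immediate. Taking $i = 0$ and $i = k$ gives $u, v \in \descendant_T(a)$, so $a \in \ancestor_T(u) \cap \ancestor_T(v) = \ancestor_T(u,v)$. Moreover, the subpath $w_0, \ldots, w_{j_a}$ is a $u$--$a$ walk whose arcs all have endpoints in $\descendant_T(a)$, hence $(u,a) \in \dE[\descendant_T(a)]^{\tr}$; symmetrically, the subpath $w_{j_a}, \ldots, w_k$ certifies $(a,v) \in \dE[\descendant_T(a)]^{\tr}$.

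The only real obstacle is the inductive step in the case where we step to an ancestor of the current node, since a priori a new ancestor need not be comparable to $a$. The key point that unlocks this is the standard tree fact that all ancestors of a fixed node form a totally ordered chain under the ancestor relation, which combined with the minimum-depth choice of $a$ forces the new node into $\descendant_T(a)$. Edge cases (empty path with $u = v$, or $u = a$, or $v = a$) are handled transparently by the convention that $\ancestor_T(x)$ and $\descendant_T(x)$ include $x$ itself.
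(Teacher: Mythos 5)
Your proof is correct and is essentially the same argument as the paper's: both take $a$ to be the minimum-depth node on $P$ and show every node on $P$ lies in $\descendant_T(a)$, splitting on whether the next node along $P$ is a descendant or an ancestor of the current one and in the latter case using that two ancestors of a common node are comparable together with the minimality of $\mathrm{depth}_T(a)$. The only difference is presentational — you run an induction outward from $a$ while the paper runs a minimal-counterexample argument (take the node closest to $a$ that fails) — but the underlying case analysis is identical.
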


\begin{proof}
Let $a$ be a node in $P$ of least depth (i.e., closest to the root). We claim that every node $b$ on $P$ is in $\descendant_T(a)$. Proceed by contradiction and assume there is a node $b\notin\descendant_T(a)$, and let $b$ be the one nearest to $a$, i.e., all nodes on $P$ between $a$ and $b$ are in $\descendant_T(a)$. Thus, there is an arc $(b,w)$ or $(w,b)$ in $P$ with $w\in\descendant_T(a)$. By the tree-respecting property, either  $b\in\descendant_T(w)$ or  $w\in\descendant_T(b)$. The first case gives a contradiction because $\descendant_T(w)\subseteq\descendant_T(a)$. In the second case, since both $b$ and $a$ are ancestors of $w$, either $b\in\descendant_T(a)$, a contradiction to the choice of $b$, or $a\in\descendant_T(b)$, a contradiction to the choice of $a$.
Consequently, the parts of $P$ between $u$ and $a$ and between $a$ and $v$ are both in $\dE[\descendant_T(a)]$, and thus the claim follows.
\end{proof}

By leveraging \Cref{lem:dstree:path-through-ancestor}, there is a simple approach to answer $\dsTcCover(S)$ queries. The lemma implies that we output every node that is an ancestor of $S$ every arc in $\dE_*$ between a node of $S$ and its ancestor (in either direction) then the result is a transitive cover. \Cref{alg:dstree} gives an efficient implementation of this strategy. It proceeds similarly to \Cref{alg:closure} with a few modifications. First, it carefully maintains $\Edesc(a)$ for all $a$ as described calls a generalization of $\dstreeAddOut(\cdot)$ and a variant of it for in-trees, $\dstreeAddIn(\cdot)$ within $\dsTcAdd(\cdot)$. Additionally, it maintains $\Ecov(a)$ corresponding to the arcs of $\dE_*$ between $a$ and an ancestor of $a$.

In the remainder of this section, we formally prove the main invariants that \Cref{alg:dstree} maintains in the setting of \Cref{thm:transitive_tree_depth} and then put everything together to prove \Cref{thm:transitive_tree_depth}.

\begin{lemma}[Invariants]
\label{lem:dstree:invariants}
In the setting of \Cref{thm:transitive_tree_depth}, after $\dsTcInit$ and each call to $\dsTcAdd(\cdot)$, $\dsTcCover(\cdot)$, and $\dsTcWit()$, the following invariants hold:
\begin{enumerate}
    \item $\dE = \dE(L)$ is the union of all arcs input to $\dsTcAdd(\cdot)$ and $\Edesc(a) = \dE[\descendant_T(a)]$ for all $a \in \dV$.\label{item:dstree:E}
    
    \item $\dE_* = \dELcl(L)$ and $\Ecov(a) = \{ (b,c) \in \dE_* | a \in \cdesc_T(b,c) \}$ for all $a \in \dV$.\label{item:dstree:witness} 
       
    \item In every addition of $(x,y,w)$ to $L$, the resulting $L$ is rooted, $(x,y) \in \dE_*$, and $(x,y)$ is out-rooted in $L$ if $x \in \canc_T(x,y)$ and in-rooted in $L$ if $y \in \canc_T(x,y)$.     \label{item:dstree:rooted}

    \item $\dE_*$ satisfies \eqref{eq:dstree:Estar}, i.e., $\dE_* = 
    \{ (a,b) \in \R^{V \times V} ~ | ~ b \in \dE[\descendant_T(a)]^\tr \text{ or } a \in  \dE[\descendant_T(b)]^\tr \}$ \label{item:dstree:Estar}
\end{enumerate}
\end{lemma}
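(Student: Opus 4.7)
My plan is to prove all four invariants by induction on the number of operations executed. The base case $\dsTcInit$ is immediate since $\dE$, $\dE_*$, $L$, and every $\Edesc(a)$ and $\Ecov(a)$ are empty. Since neither $\dsTcCover(\cdot)$ nor $\dsTcWit()$ modifies any of these structures, it suffices to show that the invariants are preserved across a single call $\dsTcAdd(F)$, which can be analyzed arc-by-arc. For each such arc $e=(u,v)\in F$, I will verify the invariants in the order (1), (3), (2), (4), handling the two recursive routines $\dstreeAddOut$ and $\dstreeAddIn$ symmetrically.

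Invariant (1) is straightforward: line~\ref{line:dstree:add-edge} adds $e$ to both $\dE$ and $\dE(L)$ (via the triple $(u,v,v)$), and line~\ref{line:dstree:add-inout} updates $\Edesc(a)$ for exactly those $a\in \ancestor_T(u)\cap \ancestor_T(v)=\ancestor_T(u,v)$, which is precisely the set of $a$ with $u,v\in\descendant_T(a)$. For invariant (3), the triple added on line~\ref{line:dstree:add-edge} is trivially in-rooted and out-rooted; on line~\ref{line:dstree:out_transitive_add}, when we add $(a,v,u)$, the triple $(a,u,\cdot)$ is already in $L$ and out-rooted by the recursive induction hypothesis, while $(u,v)\in \dE\subseteq \dE(L)$, so $(a,v)$ becomes out-rooted, and since $u,v\in\descendant_T(a)$, we have $a=\canc_T(a,v)$ as required. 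The symmetric argument applies to line~\ref{line:dstree:in_transitive_add}, giving in-rootedness with $a=\canc_T(u,a)$.

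For invariant (2), the equality $\dE_*=\dELcl(L)$ follows from the observation that every update to $\dE_*$ (lines~\ref{line:dstree:add-edge}, \ref{line:dstree:out_transitive_add}, \ref{line:dstree:in_transitive_add}) is paired with appending a triple whose first two coordinates are the newly added arc, and conversely every triple appended corresponds to an addition to $\dE_*$. For the $\Ecov$ part, I will check that each time an arc is added to $\dE_*$, the correct $\Ecov$ bucket is updated: on line~\ref{line:dstree:cov-update-add} this is by definition; on line~\ref{line:dstree:out_transitive_add}, the pair $(a,v)$ is added to $\Ecov(v)$, and since $a\in\ancestor_T(u,v)\subseteq\ancestor_T(v)$ and $v\in\descendant_T(a)$, we have $\cdesc_T(a,v)=v$; the case of line~\ref{line:dstree:in_transitive_add} is symmetric.

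The main obstacle is invariant (4), which requires characterizing $\dE_*$ exactly, and in particular, the $\supseteq$ direction. The $\subseteq$ direction follows by a short induction on the order of insertions into $\dE_*$, using invariant (1) to conclude that whenever $(a,v)$ is added on line~\ref{line:dstree:out_transitive_add}, the preconditions $(a,u)\in\dE_*$ and $(u,v)\in\Edesc(a)=\dE[\descendant_T(a)]$ give a composition witnessing $(a,v)\in \dE[\descendant_T(a)]^\tr$. For the $\supseteq$ direction, I will fix any $(a,b)$ in the right-hand side, assume WLOG that $(a,b)\in \dE[\descendant_T(a)]^\tr$ (the other case is symmetric by using $\dstreeAddIn$ in place of $\dstreeAddOut$), and consider a shortest $a$--$b$ path $P$ in $\dE[\descendant_T(a)]$. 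Proceeding by induction on $|P|$, the base $|P|=1$ is line~\ref{line:dstree:add-edge}; for $|P|>1$, letting $c$ be the predecessor of $b$ on $P$, the inductive hypothesis gives $(a,c)\in\dE_*$, and I argue that whichever of $(a,c)$ or $(c,b)$ was established later must have triggered a call $\dstreeAddOut(a,(c,b))$—either directly on line~\ref{line:dstree:add-inout} (since $a\in\ancestor_T(c,b)$ because $c,b\in\descendant_T(a)$) or recursively on line~\ref{line:dstree:out_recurse}—and this call was guaranteed to add $(a,b)$ to $\dE_*$ because at that moment $(a,c)\in\dE_*$ and $(c,b)\in\Edesc(a)$. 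The delicate part is handling the case where $(c,b)$ is added inside the same outer loop iteration but before $(a,c)$: here I use the fact that line~\ref{line:dstree:add-inout} processes $\dstreeAddOut(a,\cdot)$ on the current arc after $\Edesc(a)$ has been updated, so the recursive descent from $a$ sees all of $\Edesc(a)$ including $(c,b)$.
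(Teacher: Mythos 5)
Your proof proposal follows essentially the same approach as the paper: reduce to preservation across a single $\dsTcAdd(\cdot)$ call, verify the bookkeeping invariants (1)–(3) directly from the lines that modify $\dE$, $\dE_*$, $\Edesc$, $\Ecov$, and $L$, and establish the hard $\supseteq$ direction of invariant (4) by reasoning about the shortest $a$--$b$ path in $\dE[\descendant_T(a)]$ and a case split on whether $(a,c)$ or $(c,b)$ was established later. The only cosmetic difference is that you phrase the invariant-(4) argument as a direct induction on path length while the paper uses a minimal-counterexample formulation; these are interchangeable, and both share the paper's level of rigor in the timing case analysis.
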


\begin{proof}
After $\dsTcInit(\cdot)$, $\dE = \dE_* = \Edesc(a) = \Ecov(a) = L = \emptyset$ for all $a \in \dV$. Therefore the invariants hold trivially. Additionally, the only modifications to $\dE$, $\dE_*$, $\Edesc(a)$, $\Ecov(a)$, and $L$ occur through calls to  $\dsTcAdd(\cdot)$. Consequently, to prove the claim it suffices to show that if items \ref{item:dstree:E}-\ref{item:dstree:Estar} hold before $\dsTcAdd(\cdot)$ then they hold after.

Next, note that \Cref{line:dstree:add-edge} is the only line where $\dE$ and $\Edesc(\cdot)$ are modified and an element $(x,y,w)$ with $y = w$ is added to $L$ (since $F$ contains no self-loops). Additionally, \Cref{line:dstree:add-inout} is the only line in which $\Edesc(\cdot)$ is updated. Invariant \ref{item:dstree:E} follows straightforwardly from the nature of these lines and \Cref{line:dstree:for_add_start}. Additionally, since such modifications happen one at a time (\Cref{line:dstree:for_add_start} and \Cref{line:dstree:add-edge}), we see that it suffices to prove that invariants~\ref{item:dstree:witness}-\ref{item:dstree:Estar} hold after an iteration of the for loop (\Cref{line:dstree:for_add_start} to \Cref{line:dstree:for_add_end}) provided that they hold before.

Now we show that invariant~\ref{item:dstree:witness} holds. After $\dsTcInit(\cdot)$, $\dE$ and $\dE_*$ are only updated on Line~ \ref{line:dstree:add-edge}, \ref{line:dstree:out_transitive_add}, and \ref{line:dstree:in_transitive_add} which are also the only lines in which $L$ is modified. By the nature of these lines $\dE_* = \dELcl(L)$. Additionally, after $\dsTcInit(\cdot)$, each $(x,y)$ added to $\dE_*$ is added to a single $\Ecov(a)$ in one of three cases:
\begin{itemize}
    \item In \Cref{line:dstree:cov-update-add} it is directly added to $\Ecov(\cdesc(x,y))$.
    \item In \Cref{line:dstree:out_transitive_add} it is added to $\Ecov(v)$ where $v \in \Edesc(a)$ and therefore added to $\Ecov(\cdesc(x,y))$.
    \item In \Cref{line:dstree:in_transitive_add} it is added to $\Edesc(\cdesc(x,y))$. (Similarly to the preceding case.)
\end{itemize}
Consequently, each $(x,y) \in \dE_*$ is in $\Ecov(\cdesc(x,y))$ and $\Ecov(a) = \{(b,c)\in \dE_* | a \in \cdesc_T(b,c)\}$. 

Next, we show that invariants~\ref{item:dstree:rooted} and
\ref{item:dstree:Estar} hold. %
  Note that arcs $(x,y)$ are added to $\dE_*$ only on \Cref{line:dstree:add-edge} (in which case they are in $\dE$ and in $\dE[\descendant_T(\canc_T(x,y)]$) or added along with the update $L \gets L , (x,y, w)$ with either $x = a$, $y = v$, and $w = u$ on \Cref{line:dstree:out_transitive_add} or $x = u$, $y = a$, and $w = v$ on \Cref{line:dstree:in_transitive_add}. In the former case, $(u,v) \in \Edesc(a)$ so $u,v \in \descendant_T(a)$ and $x = \canc_T(x,y)$ and the input witness $v$ satisfies $(a,v) \in \dE_*$ and $v \in \descendant_T(a)$ (so $a = \canc_T(a,v)$). Similarly, in the latter case, $(u,v) \in \Edesc(a)$ so $y \in \canc_T(x,y)$ and the input witness $v$ satisfies $(v,a) \in \dE_*$ and $v \in \descendant_T(a)$ (so $v = \canc_T(a,v)$). Consequently, the result follows by induction.

Finally, it remains to prove that if $(x,y) \in \dE[\descendant_T(x)]^\tr$ or $(x,y) \in \dE[\descendant_T(y)]^\tr$ then $(x,y) \in \dE_*$. We prove the case when $(x,y) \in \dE[\descendant_T(x)]^\tr$ as the other case follows similarly. Our proof is similar to the proof that $\dE_* = \dE^\tr$ in the proof of \Cref{thm:closure}. Proceed by contradiction and suppose that not the case. Among all such $(x,y) \in \dE[\descendant_T(x)]^\tr$ at that time let $(a,b)$ be the one for which the length of the shortest path, $P$, from $a$ to $b$ using $\dE[\descendant_T(a)]$ is minimized. Note that the length of this path must be at least 2 since $\dE \subseteq \dE_*$ by \Cref{line:dstree:add-edge}. Let $c$ be the last node before $b$ on this $a$--$b$ path. By the assumptions there is an $a$--$c$ path using $\dE[\descendant_T(a)]$ shorter than $P$ and therefore $(a,c) \in \dE_*$. However, since $c \in \descendant_T(a)$ we see that $(a,c)$ was added to $\dE_*$ on \Cref{line:dstree:out_transitive_add}. Since $(a,b) \notin \dE_*$ and $(b,c) \in \dE[\descendant_T(a)]$ we see that $(a,c)$ would have been added during the recursive call to $\dstreeAddOut(\cdot)$ on \Cref{line:dstree:out_recurse} if $(c,b)$ had been present at that time. Consequently, $(c,b)$ must have instead been added on \Cref{line:dstree:add-edge} after $(a,c)$ was added to $\dE_*$ in which case $(a,b)$ would been added during the call to $\dstreeAddOut(\cdot)$ on \Cref{line:dstree:add-inout}, contradicting $(a,b) \notin \dE_*$.
\end{proof}

\begin{proof}[Proof of \Cref{thm:transitive_tree_depth}]
First, we prove that \Cref{alg:dstree} is indeed an \DITC{}, i.e., that the output of $\dsTcCover(\cdot)$ and $\dsTcWit()$ follow the specification of \Cref{def:general_transitive_cover}.

\paragraph{$\dsTcCover(S)$ Correctness:} Each $\Edesc(a) \subseteq \dE_* \subseteq \dE^\tr$ by \Cref{lem:dstree:invariants} and therefore $\dE_\out \subseteq \dE^\tr$. Additionally, clearly $S \subseteq S_{\out}$. If $(u,v) \in \dE^\tr$ for $u,v \in S$ then by \Cref{lem:dstree:path-through-ancestor} there is $a \in \ancestor_T(u,v)$ with $(u,a),(a,v) \in \dE[\descendant_T(a)]^\tr$. Consequently, $(u,a) \in \Ecov(u) \subseteq E_\out$ and $(a,v) \in \Ecov(v) \subseteq E_\out$ and there is a $u$--$v$ path in $(S_\out, E_\out)$ as desired.

\paragraph{$\dsTcWit()$ Correctness:} By \Cref{lem:dstree:invariants} we know that after each $\dsTcAdd(\cdot)$, $\dE = \dE(L)$ and $\dELcl = \dELcl(L)$ and that $L$ is a rooted witness list. Additionally, note that $L$ is only modified in $\dsTcAdd(\cdot)$ and when it is modified elements are added. Furthermore, the output of $\dsTcCover()$ is always a subset of $\dE_*$ and $\dE$ is the union of all arcs input to $\dsTcAdd(\cdot)$. Putting these facts together implies that $\dsTcWit()$ meets the criteria for an \DITC{}. %

\paragraph{}
With the correctness established we now analyze each operation, describing its implementation, and proving its properties:

\paragraph{$\dsTcInit(\cdot)$ Properties.}
 We use $O(n)$ time to process $T$ so that for any $a \in \dV$ its parent and its depth (distance along $T$ to the root) can be computed in $O(1)$. Additionally, for each $a \in \dV$ we use an array of size $O(|\ancestor_T(a)|)$ to store which $(x,a)$ and $(a,x)$ for $x \in \ancestor_T(a)$ are in $\dE_*$. This takes $O(\sum_{a \in \ancestor_{T}(a)})$ time and allows adding to $\dE_*$ and checking membership in $\dE_*$ to be done in $O(1)$ time (by checking if they are related and checking the array of the descendent, since $\dE_*$ only includes arcs between related nodes by \Cref{lem:dstree:invariants}). 
 Finally, we store $L$ as a linked list so adding an element can be added in $O(1)$ and the list can be returned in time linear in its length. Consequently, $\dsTcInit(\cdot)$ can be implemented in $O(n)$ time. 

\paragraph{$\dsTcAdd(\cdot)$ Properties.}
Ignoring the cost of the calls to $\dstreeAddOut(\cdot)$ and $\dstreeAddIn(\cdot)$ each iteration of the for-loop (\Cref{line:dstree:for_add_start} to \Cref{line:dstree:for_add_end}) can be implemented in $O(\ancestor_T(a,b))$ time. This can be done by determining $\canc_T(u,v)$ in $O(1)$ time (by choosing the node with smaller depth) and then repeatedly following parents to compute the nodes on the path from $\canc_T(u,v)$ to the root to compute $\ancestor_T(u,v)$. Additionally, $F \setminus E$ can be computed in $O(|F|)$ time by checking membership in $\dE$ for each element of $F$. Since the for-loop on \Cref{line:dstree:for_add_start} to \Cref{line:dstree:for_add_end} considers each $e \in \dE$ once, the total cost of $\dsTcAdd(\cdot)$ operation (ignoring the cost of $\dstreeAddOut(\cdot)$ and $\dstreeAddIn(\cdot)$ calls) is $O(n + \treeComplexity_T(\Efinal))$.

Next, consider $\dstreeAddOut(\cdot)$. This operation is implementable in $O(1)$, unless $(a,u) \in \dE_*$ and $(a,v) \notin \dE_*$. If $(a,u) \in \dE_*$ then  a new $(a,v)$ arc is added to $\dE_*$ and the cost of this operation is $O(1 + |(v,w) \in \Edesc(a)|)$ (by storing $\Edesc(a)$ as an adjacency listed). 
Since $\Edesc(a) = E[\descendant_T(a)]$ by \Cref{lem:dstree:invariants}
The total time need for this operation, beyond the $O(1)$ for initial calls accounted for in the preceding paragraph, is $O(\cup_{v \in \Efinal[\descendant_T(a)]} |(v,w) \in \Efinal[\descendant_T(a)]|) = O(|\Efinal[\descendant_T(a)]|)$. Similarly, the total additional time needed to implement $\dstreeAddOut(\cdot)$ is also $O(|\Efinal[\descendant_T(a)]|)$. Consequently, the total extra runtime for $\dstreeAddOut(\cdot)$ and $\dstreeAddIn(\cdot)$ is $O(\sum_{a \in \dV} |\Efinal[\descendant_T(a)]|) = O(\treeComplexity_T(\Efinal))$ by \Cref{lem:dstree:complexity_relation}. 

\paragraph{$\dsTcCover(\cdot)$ Properties.} We can compute $S_{\out}$ simply by computing the path from each $a \in S$ to the root in $\sum_{a \in S} |\ancestor_T(a)|$ time and consequently $S_{\out}$ has $\sum_{a \in S} |\ancestor_T(a)|$ nodes. Additionally, by \Cref{lem:dstree:invariants} each $\Ecov(a) = \{ (b,c) \in \dE_* | a \in \cdesc_T(b,c) \}$ and therefore $|\Ecov(a)| \leq \ancestor_T(a)$ yielding the desired runtime and bound on the number of arcs output for this operation.

\paragraph{$\dsTcWit()$ Properties.} 
This operation can be implemented in time linear in the length of $L$ which is $O(|\dE_*|)$ since $|\dE_*| = |\dELcl(L)|$ and $L$ is a witness list by \Cref{lem:dstree:invariants}.However, by the preceding $\dsTcAdd(\cdot)$ analysis we see that $|\dE_*| \leq \treeComplexity_T(\Efinal)$ and therefore the length of $L$ and the running time for implementing this operation are $O(\treeComplexity_T(\Efinal))$.\footnote{Note that improved complexity bounds are obtainable in certain cases, for instance $|\dE_*| \leq \sum_{a \in \dV} |\ancestor_T(a)|$. However, given the runtime bounds for $\dsTcAdd(\cdot)$ and that $\dsTcWit()$ is called once in our applications, such improvements don't seem to immediately yield faster algorithms for maximum flow.}
\end{proof}

\subsection{Node orderings}
\label{sec:dsord}
\label{sec:ds:fmm}
\label{sec:fmm}

Here we provide a \DITC{} (\Cref{def:general_transitive_cover}) in the special case where the data structure is given access to an \emph{ordering} of the nodes. We call such a data structure an \emph{ordered incremental transitive cover data structure (\ODITC{})} and its operations are the same as in the case without the ordering except that (1) at initialization the data structure is given access to initial \emph{ordering values} $x \in \R^V$, where  higher values of $x_a$ correspond to earlier positions for $a \in \dV$ and (2) the ordering values can be modified via a new procedure, $\dsTcReorder(\cdot)$. We provide a data structure (\Cref{alg:dsord}) that uses fast matrix multiplication (FMM) to obtain runtimes that improve as arc additions, cover queries, and re-orderings occur to nodes earlier in the ordering. 

In this section, we provide our node ordering notation (\Cref{def:dsord:node_order}), define \ODITC{}s (\Cref{def:dsord:ordered_transitive_cover}), and state our main theorem about designing efficient \ODITC{}s (\Cref{thm:dsord:inc_closure}) and  \Cref{cor:dsord:inc_closure} that will be used to prove \Cref{thm:ordered-overall}. We then explain \Cref{alg:dsord}, which underlies \Cref{thm:dsord:inc_closure}, and prove this theorem over several steps.

\begin{definition}[Node Orderings]
\label{def:dsord:node_order}
 For $n$-element finite set $\dV$ and $x\in\R^{\dV}$, we let $v^{x}_{1},\ldots,v^{x}_{n}\in \dV$
denote the ordering of $\dV$ with $x_{v^x_{1}}\geq \cdots\geq x_{v^x_n}$ and ties broken consistently.\footnote{More formally, we assume that there is total ordering on the nodes and if $x_{v_i^x} = x_{v_j^i}$ with $v_i^x$ appearing before $v_j^x$ in that node ordering then then $i < j$.} We define,
\[
\dV_x(j)\defeq\{v^x_1,\ldots,v^x_j\} 
\text{ for all $j \in [n]$ and }
\dV_x(S) \defeq \{i\in \dV\, :\, x_i\ge \min_{j\in S} x_j\}
\text{ for all $S \subseteq \dV$}
\,.
\]
Equivalently, $\dV_x(S) =  \dV_x(i)$ for the smallest $i$ such that $S \subseteq \dV_x(i)$.
\end{definition}

\begin{restatable}[Ordered Incremental Transitive Cover]{definition}{defOrderedTransitiveCover}
\label{def:dsord:ordered_transitive_cover} 
A data structure is an \emph{ordered incremental transitive cover data structure (\ODITC{})} if it supports the same operations as an \DITC{} (\Cref{def:trans-cover}) with the following changes:
    \begin{itemize}
        \item $\dsTcInit(\dV,x\in\R^{\dV})$ replaces $\dsTcInit(\dV)$ in \Cref{def:trans-cover} and initializes the data structure on a finite set of $n$ nodes $\dV$ with initial ordering values $x\in\R^{\dV}$ and sets $\dE \gets \emptyset$.

        \item $\dsTcReorder(S \subseteq \dV, y \in \R^S)$: a new operation that sets $x_a \gets y_a$ for all $a \in S$ for $|S| > 1$.
    \end{itemize}
\end{restatable}

\begin{theorem}[FMM for \ODITC{}]
\label{thm:dsord:inc_closure}
For any $n$-element node set $\dV$, Algorithm~\ref{alg:fmm} is an ordered transitive cover data structure (\Cref{def:dsord:ordered_transitive_cover}) and can be implemented with the following properties:
\begin{itemize}

\item $\dsTcInit(\cdot)$ runs in amortized $O(n^2)$ time.

\item$\dsTcAdd(F)$ runs in amortized $\tilde{O}(|\dV_x(\dV(F))|^\omega)$ time.

\item $\dsTcCover(S\subseteq \dV)$ runs in amortized $O(|\dV_x(S)|^2)$ time and outputs a $O(|\dV_x(S)|)$-node graph.\footnote{Note that from this output the transitive closure of $S$ can be computed in an additional $O(|\dV_x(S)|^\omega)$ time.}

\item $\dsTcReorder(S \subseteq \dV, y \in \R^S)$ runs in amortized $\tilde{O}(|\dV_{x}(S)| n^{\omega -1})$ time.

\item $\dsTcWit(\cdot)$ runs in $O(n^2)$ time.
\end{itemize}
\end{theorem}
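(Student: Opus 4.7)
The plan is to maintain, at each level $k \in \{0, 1, \ldots, K\}$ with $K = \lceil \log_2 n \rceil$, the transitive closure of the subgraph induced on $\dV_x(n_k)$ where $n_k \defeq \min\{2^k, n\}$; that is, the first $n_k$ nodes in the current ordering. This closure is stored lazily: each level carries a ``dirty'' flag and is recomputed from scratch on demand. A recomputation at level $k$ is performed by repeated squaring of the Boolean adjacency matrix of the induced subgraph in $\tilde O(n_k^\omega)$ time, and witnesses for every new reachability pair are computed alongside using witness matrix multiplication with only polylogarithmic overhead, then appended to a single global list $L$ that will ultimately be returned by $\dsTcWit$.

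Each operation will be implemented as follows. $\dsTcInit$ allocates the per-level structures in $O(n^2)$ time (dominated by the top level). $\dsTcAdd(F)$ inserts $F$ into $\dE$, marks every level $k$ with $n_k \geq |\dV_x(\dV(F))|$ as dirty, and eagerly rebuilds the smallest such level so as to absorb the work immediately. $\dsTcReorder(S,y)$ updates $x$ on $S$ and marks as dirty every level $k$ with $n_k \geq |\dV_x(S)|$. $\dsTcCover(S)$ locates the smallest level $k^\star$ with $\dV_x(S)\subseteq \dV_x(n_{k^\star})$, rebuilds it if dirty, and then reads off the arcs of the stored closure restricted to $\dV_x(S)$ in $O(|\dV_x(S)|^2)$ time. $\dsTcWit$ returns $L$ in $O(n^2)$ time; since each level-$k$ rebuild contributes at most $n_k^2$ entries and a pair is recorded only on its first discovery, $|L|=O(n^2)$.

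The amortization is the one sketched in \Cref{sec:data-structure-overview}. View each operation as contributing $X^{(i)}$ units of ``work'' to each level $k$ with $n_k \geq X^{(i)}$, where $X^{(i)}$ is the number of leading ordered nodes it touches, and enforce the invariant that level $k$ is rebuilt only after $\Omega(n_k)$ units of work have accumulated there since its last rebuild. Then across any sequence with $X = \sum_i X^{(i)}$, level $k$ is rebuilt at most $\tilde O(X/n_k)$ times, for total cost $\tilde O(X\cdot n_k^{\omega-1})$, and geometric summation over $k$ yields the global bound $\tilde O(X\cdot n^{\omega-1})$. Distributing: a $\dsTcReorder$ with $j = |\dV_x(S)|$ touches the $O(\log n)$ levels with $n_k \geq j$, contributing $\tilde O(j\cdot n^{\omega-1})$ to its own amortized account; a $\dsTcAdd$ with $j = |\dV_x(\dV(F))|$ is additionally charged the eager rebuild at the smallest affected level, which costs $\tilde O(j^\omega)$ and absorbs the global contribution, giving the stated $\tilde O(j^\omega)$ bound.

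The main obstacles I anticipate are: (i) proving correctness of the lazy scheme, in particular the invariant that $\dE$ stores all inserted arcs at all times and that the closure computed at level $k$ is exactly the transitive closure of $\dE[\dV_x(n_k)]$, which must be maintained carefully under $\dsTcReorder$ since the set $\dV_x(n_k)$ shifts when $x$ changes; (ii) showing that the witness list $L$ assembled across successive rebuilds is rooted in the sense of \Cref{def:rooted}, and hence path-structured by \Cref{lem:simp_paths}, which will rely on recording witnesses discovered during squaring so that they always point to transitive or inserted arcs already present in $L$; and (iii) verifying that when a pair $(a,b)$ is certified as transitive at multiple levels across rebuilds only its first certification is appended to $L$, so that both $|L|=O(n^2)$ and the correctness of $\dsTcWit$ per \Cref{def:general_transitive_cover} hold.
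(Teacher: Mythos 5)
Your geometric level structure and per-level rebuild amortization (roughly $X/2^k$ rebuilds of cost $\tilde O(2^{k\omega})$ each, summing to $\tilde O(X n^{\omega-1})$) match the paper's plan, and you correctly flag rootedness and set drift as obstacles. However, two design choices introduce genuine gaps that the paper is specifically engineered to avoid.

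First, what each level stores must be a transitive \emph{cover}, not the transitive closure of the induced subgraph. For $a, b$ in a prefix $\dV_x(v_k)$, an arc $(a,b) \in \dE^\tr$ can arise from a path that leaves the prefix, and that path is invisible to $\dE[\dV_x(v_k)]^\tr$. The paper persists a global transitive-arc set $\dE_*$ and enforces the invariant $\dE^\tr[\dV_k] = \dE_*[\dV_k]^\tr$; it is $\dE_*$, not $\dE$, that gets restricted to $\dV_k$ and output as the cover. This also explains why the paper's $\dsTcAdd(F)$ rebuilds exactly the levels $k$ with $\dV(F) \not\subseteq \dV_k$ (the \emph{smaller} levels): an arc with one endpoint outside a prefix can still change the closure restricted to it. Your rule marks only the \emph{larger} levels dirty, which is the opposite of what the invariant requires; the smaller levels become stale without being flagged. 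For levels with $\dV(F) \subseteq \dV_k$, adding $F$ to $\dE_*$ suffices to preserve the paper's invariant with no rebuild.

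Second, your lazy rebuild inside $\dsTcCover$ conflicts with the claimed $O(|\dV_x(S)|^2)$ bound. If a cover query lands on a dirty level $k$ before $\Omega(v_k)$ work has accumulated, you must either rebuild (spending $\tilde O(v_k^\omega) \gg |\dV_x(S)|^2$ with no accumulated credit to charge it to) or return a stale and possibly incorrect cover. The paper avoids this dilemma entirely: $\dsTcCover$ never triggers a rebuild. Each level stores a \emph{snapshot} set $\dV_k$ (which is not recomputed each time $x$ drifts) together with a counter $t_k$, and maintains $\dV_x(v_k - t_k) \subseteq \dV_k$ with $t_k \le v_k/2$, so that $\dV_x(v_k/2) \subseteq \dV_k$ always holds. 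Hence any $S$ with $|\dV_x(S)| \le v_k/2$ is covered by the snapshot at cost $O(v_k^2) = O(|\dV_x(S)|^2)$, and the geometric spacing gives $v_k = O(|\dV_x(S)|)$ for the smallest suitable $k$. Rebuilds happen eagerly only in $\dsTcAdd$ and in $\dsTcReorder$ when $t_k$ crosses $v_k/2$ --- exactly where the accumulated reorder mass pays for them. Separately, the ``witnesses from Boolean squaring'' you propose are midpoint (halving) witnesses, which do not directly yield a rooted witness list: rootedness needs a single-arc tail $(w,b) \in \dE$ for each new transitive pair. The paper instead computes per-source reachability trees, traverses them from the root, and uses $\dsordNewTransitive$ to patch any transitive-on-transitive witness back onto original arcs.
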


\begin{corollary}[Total Runtime for FMM \ODITC{}]
\label{cor:dsord:inc_closure}
Algorithm~\ref{alg:fmm} is an \ODITC{} (\Cref{def:dsord:ordered_transitive_cover}) and can be implemented so that $\dsTcInit(\cdot)$ followed by a sequence of $\dsTcAdd(\cdot)$, $\dsTcCover(\cdot)$, and $\dsTcReorder(\cdot)$ operations, and then a single call to  $\dsTcWit()$ run in $O(Y n^{\omega - 1} + n^2)$ time and ensure that the sum of the number of nodes in the sets output in $\dsTcCover(\cdot)$ is $O(Y)$, where $Y$ is the sum of values of $|\dV_x(\dV(F))|$ and $|\dV_x(S)|$ for each call to $\dsTcCover(\cdot)$, $\dsTcAdd(\cdot)$, and $\dsTcReorder(\cdot)$.
\end{corollary}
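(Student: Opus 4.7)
The plan is to prove \Cref{cor:dsord:inc_closure} by direct summation of the per-operation amortized costs given in \Cref{thm:dsord:inc_closure}, together with the standard inequality $\omega \geq 2$ and the fact that $|\dV_x(\cdot)| \leq n$ always.

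First, I would handle the additive $O(n^2)$ term by charging $\dsTcInit(\cdot)$ and the single $\dsTcWit()$ call, each of which costs $O(n^2)$ by \Cref{thm:dsord:inc_closure}. Everything else will be bounded by $O(Y n^{\omega-1})$ (up to polylogarithmic factors). The core observation is that every operation whose size is measured by some prefix $\dV_x(T)$ has amortized cost at most $\tilde{O}(|\dV_x(T)| \cdot n^{\omega - 1})$. Specifically, for $\dsTcAdd(F)$, the amortized cost $\tilde O(|\dV_x(\dV(F))|^\omega)$ factors as $|\dV_x(\dV(F))| \cdot |\dV_x(\dV(F))|^{\omega - 1}$, and since $|\dV_x(\dV(F))| \leq n$, this is at most $\tilde{O}(|\dV_x(\dV(F))| \cdot n^{\omega - 1})$. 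For $\dsTcCover(S)$, the amortized cost is $O(|\dV_x(S)|^2) \leq O(|\dV_x(S)| \cdot n) \leq O(|\dV_x(S)| \cdot n^{\omega - 1})$, where the last inequality uses $\omega \geq 2$ (which is unconditional for matrix multiplication since one must at least read the input). For $\dsTcReorder(S,y)$, the amortized cost is already given as $\tilde O(|\dV_x(S)| n^{\omega - 1})$.

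Summing these bounds over all operations gives a total amortized cost of
\[
\tilde O\!\left( n^{\omega-1} \cdot \Bigl( \sum_{\text{adds}} |\dV_x(\dV(F))| + \sum_{\text{covers}} |\dV_x(S)| + \sum_{\text{reorders}} |\dV_x(S)| \Bigr) \right) + O(n^2),
\]
which by definition of $Y$ is $\tilde O(Y n^{\omega-1} + n^2)$, matching the stated bound (absorbing polylogarithmic factors). For the bound on the total number of nodes in $\dsTcCover(\cdot)$ outputs, \Cref{thm:dsord:inc_closure} guarantees that each such call outputs an $O(|\dV_x(S)|)$-node graph, so summing over all cover calls gives $O(\sum_{\text{covers}} |\dV_x(S)|) = O(Y)$, as required.

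The only subtlety is that \Cref{thm:dsord:inc_closure} gives \emph{amortized} bounds, so one needs to verify that summing these amortized bounds across a sequence of operations (starting from the initialization covered by the $O(n^2)$ term) indeed bounds the total worst-case running time; this is standard for amortized analysis. No step is a substantial obstacle — this is essentially a repackaging of \Cref{thm:dsord:inc_closure} that collapses all nontrivial per-operation costs into the single quantity $Y n^{\omega - 1}$ by exploiting $|\dV_x(\cdot)| \leq n$ and $\omega \geq 2$.
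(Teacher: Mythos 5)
Your proof is correct and takes the same approach as the paper, which states only that the corollary "follows from the guarantees of \Cref{thm:dsord:inc_closure}"; you have spelled out the elementary arithmetic (factoring $|\dV_x(\cdot)|^\omega = |\dV_x(\cdot)| \cdot |\dV_x(\cdot)|^{\omega-1} \leq |\dV_x(\cdot)| \cdot n^{\omega-1}$, using $\omega \geq 2$ to absorb the quadratic $\dsTcCover$ cost, and summing the amortized charges) that the paper leaves implicit. You also correctly flag the polylogarithmic factors from \Cref{thm:dsord:inc_closure}; the paper's corollary writes plain $O(\cdot)$ where $\tilde{O}(\cdot)$ would be consistent with the theorem it cites (and indeed the later application in the proof of \Cref{thm:ordered-overall} uses $\tilde{O}$), so this is a minor notational slip in the paper rather than a gap in your argument.
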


\begin{proof}
This follows from the guarantees of \Cref{thm:dsord:inc_closure}.
\end{proof}

Before proving \Cref{thm:dsord:inc_closure} we first give the following overview of the associated data structure (\Cref{alg:dsord}) and our analysis of it.

\paragraph{Data Structure Invariants.} 
For input $n$-element $\dV$, the data structure maintains all arcs added, $\dE$, as well as $\dE_*$ containing $\dE$ and some transitive arcs, i.e., $\dE \subseteq \dE_* \subseteq \dE^{\tr}$. Additionally, the data structure maintains node subsets $\dV_1 \subseteq \dV_2 \subseteq \cdots \subseteq \dV_K = \dV$ for $K \defeq \lceil \log_2 n \rceil$ of exponentially growing size; each $|\dV_k| = v_k \defeq \min\{2^k,n\}$. The $\dV_k$ and $\dE_*$ are periodically updated during $\dsTcAdd(\cdot)$ and $\dsTcReorder(\cdot)$ through a procedure, $\dsordRebuild(\cdot)$, that we will describe. 
Each time $\dV_k$ is updated, we set $\dV_k = \dV_x(v_k)$ and ensure that $\dE_*$ contains all transitive arcs between $\dV_k$ by adding (at least) $\dE^\tr[\dV_k]$ to $\dE_*$. $\dV_k$ is updated whenever an arc is added in $\dsTcAdd(\cdot)$ outside of $\dV_k$. Additionally, we keep a counter, $t_k$, of the sum of the values of $|\dV_x(S)|$ that have been the input to $\dsTcReorder(\cdot)$ since the last time $\dV_k$ was updated and update so that after each operation, $t_k \leq v_k / 2$. Consequently, $\dE^{\tr}[\dV_k] = \dE_*[\dV_k]^\tr$ and $\dV_x(\lfloor v_k/2 \rfloor) \subseteq \dV_x(v_k - t_k) \subseteq \dV_k$ after each operation. These key invariants will be proved in \Cref{lem:dsord:invariant}.

\paragraph{Add and Cover.} Maintaining these invariants facilitates a straightforward implementation of $\dsTcAdd(\cdot)$ and $\dsTcCover(\cdot)$. 
For $\dsTcAdd(F)$ it suffices to find the largest value $k_*$ for which $\dV(F)$ is not contained in $\dV_{k_*}$ and update $\dV_k$ for all $k \in [k_*]$ (since $\dV_k \subseteq \dV_{k + 1}$ for all $k \in [K - 1]$). For $\dsTcCover(\cdot)$, it suffices to find the smallest value of $k_*$ for which $S \subseteq \dV_{k_*}$ and then output $(\dV_{k_*}, \dE_*[\dV_k])$ (since $\dE^{\tr}[\dV_k] = \dE_*[\dV_k]^\tr$ by the invariants). In both cases, finding $k_*$ is straightforward to implement in $O(v_{k_*}^2)$ time by checking the $\dV_k$ in sequence. Additionally, since the $v_{k}$ increase geometrically and we update so that  $t_k \leq v_k / 2$, we show that $v_{k_*} = O(|\dV_x(\dV(F))|$ in the case of $\dsTcAdd(\cdot)$ and $v_{k_*} = O(|\dV_x(S)|)$ in the case of $\dsTcCover(\cdot)$ (see \Cref{lem:dsord:size} and \Cref{cor:dsord:set}). This implies the desired runtime for implementing $\dsTcCover(\cdot)$. To implement $\dsTcAdd(\cdot)$, the $\dV_i$ can be updated in $O(v_{k_*}^\omega)$ by using fast matrix multiplication to compute the transitive closure on $(\dV_{\ell_*}, \dE_*[\dV_{\ell_*}])$ for $\ell_* = \max\{k_* + 1, K\}$ (since $\dV_{x}(k_*)$ is contained in $\dV_{\ell_*}$ by the invariants).  

\paragraph{Reordering.} To implement $\dsTcReorder(S,y)$ we update the values of $x$, compute $\dV_{x}(S)$, and update the $t_k$ values. If the invariant that $t_k \leq v_{k} / 2$ no longer holds for some $k$, we update all $\dV_k$ up to the largest $k_*$ for which $t_{k_*} > v_{k_*} / 2$. Similarly to the analysis of $\dsTcAdd(\cdot)$, this updating can be done in $O(v_{k_*}^\omega) = O(|\dV_x(S)|^\omega)$ time. In both cases, procedure $\dsordRebuild(k_*)$ in \Cref{alg:dsord} is called to perform the updates. To analyze the total runtime of re-ordering, let $X = \sum_{i} |\dV_{x^{(i)}}(S^{(i)})|$ where at the start of the $i$-th call to $\dsTcReorder(\cdot)$, $S^{(i)}$ is the input $S$ and $x^{(i)}$ is the value of $x$. We show that the number of times that $k_* = k$ is at most $O(X/k)$ and consequently, these updates can be implemented in $O(\sum_{k\in [K]} (X/v_k) \cdot v_k^\omega) = O(X \sum_{k\in [K]} 2^{k(\omega - 1})) = O(X n^{\omega - 1})$ time in total.

\paragraph{Witness List.}
Following the above approach yields the desired runtimes for all operations except for $\dsTcWit()$. \Cref{alg:dsord}, departs from this description and performs more work to in order to implement $\dsTcWit()$ in $O(n^2)$ time. \Cref{alg:dsord} maintains an out-rooted witness list $L$ where $\dE(L)$ is every arc inserted in $\dsTcAdd(\cdot)$ and $\dELcl(L)$ is every transitive arc computed. This suffices to implement $\dsTcWit()$ by \Cref{lem:simp_paths} but we perform further modifications to maintain these invariants which incurs extra polylogarithmic factors in the runtime of $\dsTcAdd(\cdot)$ and $\dsTcReorder(\cdot)$. 

To maintain such a witness list $L$, rather than just updating the $\dV_k$ by computing a transitive closure, we use that it is also possible to compute \emph{reachability trees} from every node (what we call a \emph{reachability forest}) in an $n$-node graph in $\otilde(n^\omega)$ time (this is the source of the aforementioned polylogarithmic factors).

\begin{definition}[Reachability Trees]
\label{def:dsord:reachability}
We call $T \subseteq \dE$ a \emph{reachability tree from $s \in \dV$ in graph $\dG = (\dV,\dE)$} if $(\dV,T)$ is an arborescence rooted at $s$ and $s$ can reach $t \in \dV$ in $(\dV,T)$ if and only if $s$ can reach $t$ in $\dG$. We call $\{T_a\}_{a \in \dV}$ a \emph{reachability forest in $\dG$} if each $T_a$ is a reachability tree from $a$. 
\end{definition}
 
\begin{algorithm}[t!]
\caption{Transitive Cover with Node Ordering}
\label{alg:dsord}
\label{alg:fmm}

\SetCommentSty{algCommentFont}

\SetKwProg{fninit}{init(}{):}{}
\SetKwProg{fnupdate}{update(}{):}{}
\SetKwProg{fnclosure}{closure(}{):}{}
\SetKwProg{fnpath}{path(}{):}{}

\SetKwProg{function}{function}{:}{}

\tcp{Global variables}
$\dG=(\dV,\dE)$, $\dE_* \subseteq \dV \times \dV$, $x\in\R^{\dV}$
    \tcp*{current graph, transitive arcs, ordering}
$K = \lceil \log_2 n \rceil$ and $v_k = \min\{2^k,n\}$ for all $k \in [K]$
    \tcp*{subgraph sizes ($n = |\dV|$)}
$\dV_{1} \subseteq \dV_2 \subseteq \cdots \subseteq \dV_K = \dV$
    \tcp*{node sets with $|\dV_k| = v_k$}
$t_{1},\ldots,t_{K}$ 
    \tcp*{number of order changes since computed $\dG_k$}
$\wlistInst \in \wlistSpace$ \tcp*{witness list (\Cref{def:wit_list})}

\BlankLine
\function{$\dsTcInit(\dV,x\in\R^{\dV})$}{
    $\dV\gets V$, $\dE \gets \emptyset$, $\dE_* \gets \emptyset$, set global $x$ to $x$, $n = |\dV|$, and $\wlistInst \gets \emptyset$\;
    $K = \lceil \log_2 n \rceil$ and $v_k = \min\{2^k,n\}$ for all $k \in [K]$ and then $\dsordRebuild(K)$\;
}

\BlankLine
\function{$\dsTcAdd(F\subseteq \dV\times \dV)$}{
    \lFor{$e = (a,b)\in F \setminus E$}{$\wlistInst \gets \wlistInst , (a, b, b)$, $\dE \gets \dE \cup \{e\}$, and $\dE_* \gets \dE_* \cup \{e\}$ \label{line:dsord:add1}}
    $\dsordRebuild(k_*)$ for the largest $k_* \in [K]$ such that $\dV(F) \not \subseteq \dV_{k_*}$ if there is one\label{line:dsord:add2}\;
}

\BlankLine
\function{$\dsTcCover(S \subseteq \dV)$}{
    \textbf{return} $(\dV_{k_*}, \dE_*[\dV_{k_*}])$ for smallest
    $k_* \in [K]$ with $S \subseteq \dV_{k_*}$\;
}

\BlankLine
\lfunction{$\dsTcWit()$}{
    \textbf{return} $\wlistInst$
}

\BlankLine
\function{$\dsTcReorder(S \subseteq \dV, y \in \R^S)$}{
    $t_{k} \gets t_{k} + |\dV_x(S)|$ for all $k \in [K]$ and
    $x_a \gets y_a$ for all $a \in S$ \label{line:dsord:reorder1}\;
   $\dsordRebuild(k_*)$ for largest $k_* \in [K]$ with $t_{k_*} > v_{k_*} / 2$ if there is one \label{line:dsord:reorder2}\;
}

\BlankLine
\function{$\dsordRebuild(k_* \in [K])$}{
      
    $\dV_k \gets \dV_x(v_k)$ and $t_k \gets 0$ for all $k\in[k_*]$ 
    and   $\ell_* = \min\{k_* + 1, K\}$
    \;
    Compute reachability forest $\{T_a\}_{a \in \dV_{\ell_*}}$ in $\dG_{\loc} = (\dV_{\ell_*}, \dE_*[\dV_{\ell_*}])$ \tcp*{see \Cref{thm:dsord:reachability_compute}} 
	\For{$a \in \dV$ and then each $(w,b)$ in a traversal (see \Cref{foot:dsord:traversal}) of $T_a$ from $a$ \label{line:dsord:traverse}}{
            \lIf{$(a,b) \notin \dE_*$}{$\dsordNewTransitive(a,(w,b))$ \label{line:dsord:newtransitcall}}
    }
}

\BlankLine
\function(\tcp*[f]{invariant: $(a,w) \in \dE_*, (a,b) \notin \dE_*$}){$\dsordNewTransitive(a \in \dV,(w,b) \in \dE_*)$}{
    \lIf{$(w,b) \in \dE$}{
        $\wlistInst \gets \wlistInst , (a, b, w)$ and $\dE_* \gets \dE_* \cup (a,b)$
    \label{line:dsord:addimmediate}
    }
    \Else{
        $v = \wit_L(w,b)$ 
        \label{line:dsord:wit}
        \tcp*{invariant: 
        $(v,b) \in \dE$, $(w,b)$ is $w$-out rooted in $L$}
        \lIf{$(a,v) \notin \dE_*$}{$\dsordNewTransitive(a \in \dV,(w,v) \in \dE_*)$}
        $\wlistInst \gets \wlistInst , (a, b, v)$
        \label{line:dsord:addedge}
        \;
    }
}

\end{algorithm}

\begin{theorem}[Efficient Reachability Trees]
\label{thm:dsord:reachability_compute}
There is a deterministic algorithm that computes a reachability forest of any input $n$-node graph $\dG = (\dV,\dE)$ in $\tilde{O}(n^\omega)$ time.
\end{theorem}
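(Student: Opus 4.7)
The plan is to reduce to the directed acyclic graph (DAG) case via strongly connected components (SCCs) and exploit fast matrix multiplication with witness computation. First, compute the SCCs of $\dG$ in $O(n + |\dE|) = O(n^2)$ time and form the condensation DAG $\dG'$ on $k \le n$ nodes. I will assemble each reachability tree $T_a$ by stitching together (i) an intra-SCC out-arborescence rooted at $a$ in the SCC $S_a$ containing $a$, (ii) a reachability out-arborescence from $S_a$ in the DAG $\dG'$, and (iii) an intra-SCC out-arborescence inside each other reachable SCC, rooted at the node where the DAG tree enters that SCC.

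For parts (i) and (iii), I precompute a spanning out-arborescence $T^{\mathrm{out}}(v)$ within $v$'s SCC for every $v \in \dV$, in total time $O(n^2)$. For each SCC $S$, choose a root $r_S \in S$ and compute, by depth-first search on the arcs with both endpoints in $S$, a spanning out-arborescence $T^{\mathrm{out}}(r_S)$ from $r_S$ and a spanning in-arborescence $T^{\mathrm{in}}(r_S)$ to $r_S$, each in $O(|S|^2)$ time. Given these, the out-arborescence rooted at any $v \in S$ is constructed in $O(|S|)$ time by reading off the path $v = v_0, v_1, \ldots, v_\ell = r_S$ in $T^{\mathrm{in}}(r_S)$, deleting in $T^{\mathrm{out}}(r_S)$ the incoming arcs of $v_0, \ldots, v_{\ell - 1}$, and adding the path arcs $(v_i, v_{i+1})$ for $i = 0, \ldots, \ell - 1$. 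A short induction on depth in $T^{\mathrm{out}}(r_S)$ shows the result is an out-arborescence from $v$ spanning $S$, and summing over SCCs gives $O(\sum_S |S|^2) = O(n^2)$.

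For part (ii), compute the transitive closure $(A')^*$ of $\dG'$ in $O(k^\omega)$ time via recursive FMM, then apply a deterministic Boolean matrix multiplication with witnesses (BMMW) algorithm, such as the $\tilde{O}(k^\omega)$-time algorithm of Alon and Naor, to the product $(A')^* \cdot A'$. This produces, for each pair $(S, S')$ of DAG nodes with $S \ne S'$ reachable from $S$, a witness DAG in-neighbor $S''(S')$ of $S'$ that is itself reachable from $S$; since $\dG'$ is acyclic and every such witness precedes $S'$ in any topological order, the collection of arcs $\{(S''(S'), S') : S' \ne S \text{ reachable from } S\}$ is automatically cycle-free and is therefore a reachability out-arborescence from $S$ in $\dG'$. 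Finally, for each $a$, assemble $T_a$ by combining $T^{\mathrm{out}}(a)$ inside $S_a$, preselected representative $\dE$-arcs $(x, y)$ for the DAG tree-arcs from $S_a$, and the precomputed $T^{\mathrm{out}}(y)$ inside each reached SCC $S'$ (where $y \in S'$ is the head of the representative arc); writing $T_a$ out costs $O(n)$, giving $O(n^2)$ across all $a$. The main obstacle is that a naive BMMW choice of predecessor could create cycles inside an SCC; the SCC contraction is precisely what circumvents this, reducing witness selection to a DAG where any in-neighbor choice respects topological order. Summing all costs yields $\tilde{O}(n^\omega + n^2) = \tilde{O}(n^\omega)$ since $\omega \ge 2$.
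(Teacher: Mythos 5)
Your proof is correct but takes a genuinely different route from the paper. The paper black-boxes the path-reconstruction result of Alon, Galil, Margalit, and Naor: after computing the Boolean transitive closure via fast matrix multiplication, it invokes their $\tilde{O}(n^\omega)$-time algorithm to compute, for every reachable pair $(a,b)$, a witness $w_{a,b}$ encoding a consistent $a$--$b$ path, and reads off $T_a$ directly as the resulting predecessor arcs. You instead use only the more elementary deterministic BMMW primitive, paying for this with an SCC decomposition: contracting SCCs reduces the problem to a DAG, where any witness choice of a reachable in-neighbor automatically respects topological order and therefore can never form a cycle; you then stitch the resulting DAG arborescence together with intra-SCC spanning trees, re-rooting the latter at arbitrary vertices via the $O(\lvert S\rvert)$-per-vertex path-splicing trick. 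Both routes hit the same $\tilde{O}(n^\omega)$ bound. Yours is more self-contained (transitive closure plus vanilla BMMW suffices) and makes explicit the obstruction that motivates both proofs — a naive application of BMMW to $A^\ast A$ on the original graph can produce cyclic predecessor choices inside an SCC — whereas the paper's proof is shorter precisely because the cited AGMN machinery handles that issue internally by constructing shortest-path (BFS) predecessor structure.
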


\begin{proof}
In $O(n^{\omega})$ time it possible to compute $A \in \{0,1\}^{\dV \times \dV}$ such that $A_{a,b} = 1$ if and only if $a$ can reach $b$ in $\dG$ \cite{FM71}. Additionally, by \cite{AlonGMN92}, in $\tilde{O}(n^\omega)$ time it is possible to compute $w_{a,b} \in \dV$ for all $a,b \in \dV$ with $w_{a,b} = 1$ such that $\mathrm{path}(a,b)$ defined recursively as $(a,w_{a,b})$ concatenated with $\mathrm{path}(w_{a,b},b)$\footnote{The paper \cite{AlonGMN92} refers to the $w_{a,b}$ as witnesses, similar to our usage of the term in witness maintainers. Also note that \cite{AlonGMN92} showed this for $\mathrm{path}(a,b) = \mathrm{path}  (a,w_{a,b}) (w_{a,b},b)$. However, the claimed results can be obtained by applying \cite{AlonGMN92} to the graph with the arcs reversed and appropriately reversing the result.} for distinct $a,b \in \dV$ with $w_{a,b} = 1$ and $\mathrm{path}(a,a) = \emptyset$ is an $a$--$b$ path. Note that $T_{a} = \{(w_{a,b},b) | b \in \dV , w_{a,b} = 1\}$ contains exactly one arc to every node that $a$ can reach in $\dG$ and, by the properties of $\mathrm{path}$, $a$ can reach every node it can reach in $\dG$ using the arcs in $T_a$. Consequently, $T_a$ is a reachability tree for $a$ in $\dG$ and we can obtain an algorithm with the desired properties by simply computing $A$ and the $w$ and then outputting the $T_a$ (in an additional $O(n^2)$ time).
\end{proof}

Reachability trees give a natural way to find witnesses. Perform a traversal\footnote{\label{foot:dsord:traversal}We use the term ``traversal'' to refer to any procedure for exploring the arcs of the tree that from a single starting node only follows arcs from the starting node or endpoints of arcs that have been followed. Traversals include breadth first search, depth first search, and more.} of a reachability tree $T$ from $a$ starting from $a$, every time we traverse $(w,b) \in \dE$ and $(a,b)\notin \dE_*$ add $w$ as the witness for a transitive $(a,b)$ arc, i.e., add $(a,b,w)$ to $L$. If all additions to $L$ occur this way, $(a,b)$ will be $a$-out rooted in $L$ (\Cref{def:rooted}) since $(w,b) \in \dE$ and $(a,w)$ was added before traversing $(w,b)$.

In \Cref{alg:dsord} we apply a similar approach. Instead of just computing the transitive closure of $(\dV_{\ell_*}, \dE_*[\dV_{\ell_*}])$ during $\dsordRebuild$ (as described earlier) we compute a reachability forest, $\{T_a\}_{a \in \dV_{\ell_*}}$. We then perform an traversal of each $T_a$ from $a$ to compute all transitive arcs from $a$ along with witnesses. However, it is possible that the  $(w,b)$ arc traversed is actually a transitive arc in $\dE_*\setminus E$ 
(since we compute the reachability forest on $(\dV_{\ell_*}, \dE_*[\dV_{\ell_*}])$ and $\dE_{*}$ may contain transitive arcs). Whenever this occurs, by repeatedly computing preceding witnesses, we can find a path $b_j-b_{j-1}-\ldots-b_1$ in $\dE$ with $b_1=b$, $(b_{i+1},b_{i})\in \dE$ and $(a,b_i)\notin \dE_*$ for all $i\in [j-1]$, and $(a,b_j)\in \dE_*$.
In this case, we can then add the entire path and the associated transitive arcs to $\dE_*$, and preserve that $L$ is an out-rooted witness list. We provide a procedure $\dsordNewTransitive(\cdot)$ which does this. The overall cost of these operations can be bounded as $O(n^2)$, since there can be at most $O(n^2)$ transitive arcs and the time to implement is $O(1)$ per transitive arc.

\medskip

In the remainder of this section we prove \Cref{thm:dsord:inc_closure}. First, in \Cref{lem:dsord:newtransitive} we give a lemma about $\dsordNewTransitive(\cdot)$ which shows that it efficiently maintains $\dE_* = \dELcl(L)$, $\dE = \dE(L)$, and that $L$ is an out-rooted witness list. We use \Cref{lem:dsord:newtransitive} to prove \Cref{lem:dsord:invariant} which gives the invariants maintained by \Cref{alg:dsord}. We then give, technical \Cref{lem:dsord:size}, and its \Cref{cor:dsord:set}; we use these to analyze the runtime of \Cref{alg:dsord} and conclude with a proof of \Cref{thm:dsord:inc_closure}.

\begin{lemma}[$\dsordNewTransitive(\cdot)$ Invariants]
\label{lem:dsord:newtransitive}
Consider a call to $\dsordNewTransitive(a \in \dV, (w,b) \in \dE_*)$ 
on \Cref{line:dsord:newtransitcall} where  $(a,w) \in \dE_*$ and $(a,b) \notin \dE_*$. If at the start of the call $\dE_* = \dELcl(L)$, $\dE = \dE(L)$, and $L$ is out-rooted witness list then, these are still the case immediately after the call, and a set $F$ containing $(a,b)$ is added to $\dELcl(L)$ through exactly $|F| - 1$ recursive calls to $\dsordNewTransitive(\cdot)$.
\end{lemma}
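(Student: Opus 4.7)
The plan is to prove the lemma by strong induction on $|\walk_L(w,b)|$, the length of the walk from $w$ to $b$ in the current witness list $L$. Since $(w,b) \in \dE_* = \dELcl(L)$, this walk is well-defined by \Cref{lem:witlist_walk}, and each recursive call is made on $(w,v)$ with $v = \wit_L(w,b)$, for which $|\walk_L(w,v)|$ is strictly smaller, so the induction is well-founded.

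For the base case $(w,b) \in \dE$, the algorithm executes the ``if'' branch (\Cref{line:dsord:addimmediate}), appending $(a,b,w)$ to $L$ and adding $(a,b)$ to $\dE_*$ with no recursive calls. I would verify: the triple $(a,b,w)$ is a valid witness list entry since $(a,w) \in \dE_*$ and $(w,b) \in \dE$ both appear in $L$ before this entry; the invariant $\dE_* = \dELcl(L)$ is preserved since both sides gain $(a,b)$; the invariant $\dE = \dE(L)$ is preserved since $w \neq b$ (otherwise $(a,w) = (a,b)$ would contradict the preconditions $(a,w) \in \dE_*$ and $(a,b) \notin \dE_*$); and the new arc $(a,b)$ is $a$-out rooted in the updated $L$ because $(w,b) \in \dE(L)$ and $(a,w)$ is $a$-out rooted by the entry invariant that $L$ is out-rooted.

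For the inductive step $(w,b) \notin \dE$, the else branch is taken. Since $L$ is out-rooted and $(w,b) \in \dELcl(L) \setminus \dE(L)$, unfolding \Cref{def:rooted} shows that the witness $v \defeq \wit_L(w,b)$ satisfies $(v,b) \in \dE$ and $(w,v)$ is $w$-out rooted in $L$; in particular $(w,v) \in \dE_*$ with $|\walk_L(w,v)| < |\walk_L(w,b)|$. When $(a,v) \notin \dE_*$, the procedure recursively calls $\dsordNewTransitive(a,(w,v))$; its preconditions $(a,w) \in \dE_*$ and $(a,v) \notin \dE_*$ both hold, so by the inductive hypothesis the recursion preserves all three invariants and adds a set $F' \ni (a,v)$ to $\dELcl(L)$ through exactly $|F'| - 1$ further recursive calls. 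The procedure then appends $(a,b,v)$ to $L$, implicitly extending $\dE_*$ by $(a,b)$ to maintain $\dE_* = \dELcl(L)$. This entry is valid because $(a,v)$ and $(v,b)$ both lie in $\dELcl(L)$ at earlier positions; $\dE = \dE(L)$ is preserved since $v \neq b$; and $(a,b)$ is $a$-out rooted because $(v,b) \in \dE(L)$ and $(a,v)$ is $a$-out rooted, either from the entry invariant (if no recursion occurred) or from the post-condition of the recursive call.

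Setting $F \defeq F' \cup \{(a,b)\}$ (with $F' = \emptyset$ whenever no recursive call is made) yields the counting: the single direct recursive call contributes $1$ and nests $|F'| - 1$ further recursive calls, for $|F| - 1$ in total, and each invocation adds exactly one fresh arc to $\dELcl(L)$. The main subtlety I anticipate is ensuring that the $a$-out rooted property propagates cleanly through the recursion; this requires tracking that the witness $v$ used for $(a,b)$ has $(a,v)$ already stored in $L$ as an $a$-out rooted transitive arc, which is precisely the strengthened inductive statement that $L$ remains out-rooted after each call.
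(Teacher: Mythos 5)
Your proof is correct and takes essentially the same approach as the paper: the paper informally unrolls the recursion into a chain $b_j,\dots,b_1$ with $(b_{i+1},b_i)\in\dE$ and $(a,b_j)\in\dE_*$, whereas you formalize the same unrolling as a strong induction on $|\walk_L(w,b)|$, using the out-rooted invariant in exactly the same way to extract $v=\wit_L(w,b)$ with $(v,b)\in\dE$ and $(w,v)$ $w$-out rooted at each level. Your write-up is more careful: the paper calls the base case ``trivial'' and does not verify the three invariants explicitly, and it even contains a small typo (``since $(a,b)$ is $a$-out rooted'' should read ``since $(w,b)$ is $w$-out rooted''), which your version handles correctly.
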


\begin{proof}
If $(w,b) \in \dE$ the claim is trivial. Otherwise, the procedure computes $v \in \dV$ such that $(v,b) \in \dE$ (since $(a,b)$ is $a$-out rooted) and $(w,v) \in \dE_*$, recursively calls $\dsordNewTransitive(a,(w,v))$ if $(a,v) \notin \dE_*$, and then adds $(a,b)$ with witness $v$. This ultimately computes a path $b_j-b_{j-1}-\ldots-b_1$ in $\dE$ with $b_1=b$, $(b_{i+1},b_{i})\in \dE$ and $(a,b_i)\notin \dE_*$ for all $i\in [j-1]$, and $(a,b_j)\in \dE_*$. It then calls $\wInstance.\dsordNewTransitive((a,b_{i+1}),b_i)$ from $i = j - 1$ to $1$ over $j - 1$ recursive calls. Since each $(b_{i+1},b_i) \in \dE$, we see that $L$ remains an out-rooted witness list.
\end{proof}

\begin{lemma}[\ODITC{} Invariants]
\label{lem:dsord:invariant}
In \Cref{alg:dsord} after $\dsTcInit(\cdot)$ and each subsequent $\dsTcAdd(\cdot)$, $\dsTcCover(\cdot)$, $\dsTcWit()$, and $\dsTcReorder(\cdot)$, the following invariants hold:
\begin{itemize}
    \item \textbf{Graph:} $\dE$ is the union of all $F$ input to $\dsTcAdd(\cdot)$ and $\dV_1 \subseteq \dV_2 \subseteq \cdots \subseteq \dV_K = \dV$ where $|\dV_k| = v_k \defeq \min\{2^k,n\}$ for all $k \in K$ where $K = \lceil\log_2 n\rceil$. 
    \item \textbf{Closure:} $\dE_* \subseteq \dE^\tr$, $\dE^\tr[\dV_k] = \dE_*[\dV_k]^\tr$, $\dV_{x}(v_k - t_k) \subseteq \dV_k$, and $t_k \leq v_k / 2$ for all $k \in [K]$.
    \item \textbf{Witness:} $E = \dE(L)$, $\dE_* = \dELcl(L)$, $L$ is an out-rooted witness list, and $\dsordNewTransitive(a,(w,b))$ was called at most once for every $(a,b) \in \dV \times \dV$. 
\end{itemize}
\end{lemma}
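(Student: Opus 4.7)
The plan is induction on the number of operations. All invariants hold trivially after $\dsTcInit(\cdot)$, which leaves $\dE = \dE_* = L = \emptyset$ and executes $\dsordRebuild(K)$ on an empty graph (so the reachability forest is trivial and $\dV_k \gets \dV_x(v_k)$, $t_k \gets 0$). Operations $\dsTcCover(\cdot)$ and $\dsTcWit()$ only read state, so preservation is immediate. Thus the entire burden is to verify preservation through $\dsTcAdd(\cdot)$ and $\dsTcReorder(\cdot)$, both of which conclude with at most one call to $\dsordRebuild(k_*)$. The core of the argument is therefore a sublemma: if all invariants hold immediately before a call $\dsordRebuild(k_*)$, they hold immediately afterward.

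For the graph invariants, $\dE$ is modified only on Line~\ref{line:dsord:add1}, and there by exactly the arcs in $F \setminus \dE$; the sets $\dV_k$ for $k \leq k_*$ are reset to $\dV_x(v_k)$, yielding the correct sizes $v_k$. Nesting among the refreshed levels is immediate from monotonicity of $\dV_x$. For the boundary $\dV_{k_*} \subseteq \dV_{k_*+1}$, I invoke the inductive closure invariant $\dV_x(v_{k_*+1} - t_{k_*+1}) \subseteq \dV_{k_*+1}$: in $\dsTcReorder(\cdot)$ the maximality of $k_*$ forces $t_{k_*+1} \leq v_{k_*+1}/2$, so $v_{k_*+1} - t_{k_*+1} \geq v_{k_*+1}/2 \geq v_{k_*}$ and hence $\dV_x(v_{k_*}) \subseteq \dV_x(v_{k_*+1} - t_{k_*+1}) \subseteq \dV_{k_*+1}$; in $\dsTcAdd(\cdot)$ no $t_k$ changes, so the same containment applies.

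For the closure invariants, the decisive step is the reachability forest $\{T_a\}_{a \in \dV_{\ell_*}}$ computed on $\dG_\loc = (\dV_{\ell_*}, \dE_*[\dV_{\ell_*}])$. By the inductive hypothesis before the rebuild, $\dE^\tr[\dV_{\ell_*}] = \dE_*[\dV_{\ell_*}]^\tr$, so each $T_a$ faithfully encodes which nodes of $\dV_{\ell_*}$ are $\dE^\tr$-reachable from $a$. The traversal on Line~\ref{line:dsord:traverse}, together with the calls to $\dsordNewTransitive$, inserts every such pair $(a,b)$ into $\dE_*$; in particular, for $k \leq k_*$ and $a,b \in \dV_k \subseteq \dV_{\ell_*}$, this gives $\dE^\tr[\dV_k] \subseteq \dE_*[\dV_k]$, and the reverse containment follows from $\dE_* \subseteq \dE^\tr$ (itself preserved because $\dsordNewTransitive$ only adds arcs supported by a path already in $\dE_*$). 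For $k > k_*$, the invariant $\dE^\tr[\dV_k] = \dE_*[\dV_k]^\tr$ is preserved directly: in $\dsTcAdd$, the new arcs lie inside $\dV_{k_*+1}$, and any new $\dE^\tr[\dV_k]$-pair decomposes into old transitive segments with endpoints in $\dV_k$ alternated with new arcs in $\dV_k$, so it lies in the new $\dE_*[\dV_k]^\tr$; in $\dsTcReorder$, $\dE$ is unchanged. Finally, the prefix invariant $\dV_x(v_k - t_k) \subseteq \dV_k$ is reset trivially for $k \leq k_*$ (since $t_k \gets 0$ and $\dV_k \gets \dV_x(v_k)$) and is maintained across $\dsTcReorder(\cdot)$ at untouched levels by observing that each call increments $t_k$ by $|\dV_x(S)|$, which upper-bounds the number of positions in the top-$v_k$ prefix that could have been invalidated (since only nodes in $\dV_x(S)$ can overtake or be overtaken across the cutoff). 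The witness invariants then follow from a direct inductive application of \Cref{lem:dsord:newtransitive}: the only additions to $L$ are the triples $(a,b,b)$ on Line~\ref{line:dsord:add1}, which synchronize $\dE(L)$ with $\dE$ and are trivially rooted, and the additions inside $\dsordNewTransitive$, which by the cited lemma preserve $\dE_* = \dELcl(L)$ and out-rootedness. The call-once property holds because the outer invocation is guarded by $(a,b) \notin \dE_*$ and each top-level call strictly inserts $(a,b)$ into $\dE_*$.

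The hardest step will be the prefix invariant $\dV_x(v_k - t_k) \subseteq \dV_k$ after $\dsTcReorder(\cdot)$ at untouched levels $k > k_*$. The subtlety is that the counter $t_k$ must provably upper-bound the symmetric difference between the current top-$v_k$ prefix in the $x$-ordering and the frozen set $\dV_k$, and this requires a careful accounting of how many nodes can migrate across the $v_k$-th rank under a reordering that alters the $x$-values of $S$. The choice of $|\dV_x(S)|$ rather than $|S|$ as the increment is precisely what makes this accounting correct, and together with the rebuild threshold $t_k > v_k/2$ it provides the slack needed for all subsequent invariants.
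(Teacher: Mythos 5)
Your plan follows the paper's proof closely: induct over operations, reduce to showing that $\dsordRebuild(k_*)$ preserves the invariants, and invoke \Cref{lem:dsord:newtransitive} together with the traversal structure for the witness part. You also correctly fill in two spots the paper treats tersely: the path-decomposition argument for why $\dE^\tr[\dV_k] = \dE_*[\dV_k]^\tr$ survives $\dsTcAdd$ at levels $k > k_*$, and the preservation of the prefix invariant $\dV_x(v_k - t_k) \subseteq \dV_k$ at untouched levels after $\dsTcReorder$, which the paper asserts rather than argues.

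However, your stated reason for that last preservation is wrong as written. It is not the case that ``only nodes in $\dV_x(S)$ can overtake or be overtaken across the cutoff'': a node $a \notin \dV_x(S)$ (and so $a \notin S$, with $x_a$ unchanged) can improve its rank by as much as $|S|$ whenever nodes of $S$ are demoted past it, and can therefore cross any fixed rank threshold. The argument that actually closes the step is a two-case rank estimate. If $a \notin S$, only $S$-nodes can change their relative order with $a$, so its rank changes by at most $|S| \le |\dV_x(S)|$; hence new rank $\le v_k - t_k - |\dV_x(S)|$ implies old rank $\le v_k - t_k$, and $a \in \dV_k$ by the inductive prefix invariant. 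If $a \in S$, then $a \in \dV_x(S)$ forces old rank $\le |\dV_x(S)|$; since the level is untouched, $t_k + |\dV_x(S)| \le v_k/2$, and since also $t_k \le v_k/2$ we get old rank $\le v_k/2 \le v_k - t_k$, so again $a \in \dV_k$. Replacing your parenthetical justification with this case split makes the proof correct and in fact more complete than the paper's at that point.
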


\begin{proof}
We prove by induction. Note that the statement trivially holds after $\dsTcInit(\cdot)$ as after this operation $\dE = \emptyset$, $\dV_k = \dV_{x}(v_k)$, $\dE_*[\dV_k] = \emptyset$, $t_k = 0$, and the other variables are set accordingly. Now suppose the invariant holds after $\dsTcInit$ and before one of the $\dsTcAdd(\cdot)$, $\dsTcCover(\cdot)$, $\dsTcWit()$, and $\dsTcReorder(\cdot)$ operations. We prove that it holds after. Since $\dsTcCover(\cdot)$ and $\dsTcWit()$ do not change the relevant variables, it is suffices to prove this for $\dsTcAdd(\cdot)$ and $\dsTcReorder(\cdot)$.

To analyze, $\dsTcAdd(\cdot)$ and $\dsTcReorder(\cdot)$, note that each operation updates the variables to preserve the invariants and then call $\dsordRebuild(k_*)$ for the largest $k_*$ for which the invariants do not hold for $\dV_{k_*}$. More precisely, $\dsTcAdd(F)$ adds the arcs of $F$ to $\wInstance$, $\dE$, and $\dE_*$ and then calls $\dsordRebuild(k_*)$ for the largest $k_*$ where $F$ is not contained in $\dV_{k_*}$. For all $k \in [K]$ with $k > k_*$ all added arcs are within $\dV_k$ so $\dE^\tr[\dV_k] = \dE_*[\dV_k]^\tr$ is preserved. Similarly, $\dsTcReorder(\cdot)$ updates $t_k$ and $x$ in accordance with the input and then calls $\dsordRebuild(k_*)$ where for all $k \in [K]$ with $k > k_*$ it is the case that $t_k \leq v_k/2$. 

Consequently, it suffices to show that if all invariants hold, except for those involving $\dV_k$ with $k < k_*$, then all invariants hold after calling $\dsordRebuild(k_*)$. Consider such a call to $\dsordRebuild(k_*)$ and let $\ell_* = \max\{k_* +1, K\}$ and $\dG_\loc = (\dV_{\ell_*}, \dE_*[\dV_{\ell_*}])$. If $\ell_* = K$ then by the invariants ($\dE_*[\dV_k] = \dE_* \subseteq \dE^{\tr}$) the transitive closure of $\dG_\loc$ is the transitive closure of $\dG$. On the other hand, if $\ell_* < K$, then by the invariants $v_{\ell_*} - t_{\ell_*} \geq v_{\ell_*} / 2 = v_{k_*}$, $\dV_x(v_{k_*}) \subseteq \dV_x(v_{\ell_*} - t_{\ell_*}) \subseteq \dV_{\ell_*}$ and $\dE^{\tr}[\dV_{\ell_*}] = \dE_*[\dV_{\ell_*}]^\tr$. Consequently, in both cases, the transitive closure of $\dG_{\loc}$ contains the transitive closure of $\dG$ restricted to a superset of $\dV_k$ for each $k \in [k_*]$ and $\dV_k = \dV_x(v_k) \subseteq \dV_{\ell_*}$.

Due to how the $\dV_1,\ldots,V_{k_*}$ are set, it only remains to show that after the call to $\dsordRebuild(k_*)$, if $a$ can reach $b$ in $\dG_{\loc}$ then $(a,b) \in \dE_*$, $\dE_* \subseteq \dE^{\tr}$, and the witness invariants are maintained. However, $\dsordNewTransitive(a,(w,b))$ preserves invariants if $(a,b)\notin \dE_*$ and $(a,w),(w,b) \in \dE_*$  by \Cref{lem:dsord:newtransitive}. Additionally, when an arc is $(w,b)$ is traversed on \Cref{line:dsord:traverse}, provided $(a,w) \in \dE_*$ then afterwards $(a,b) \in \dE_*$ (either by \Cref{lem:dsord:newtransitive} and $\dsordNewTransitive(a,(w,b))$ or since it was before the traversal). Since the arcs are explored in a traversal, $(a,w) \in \dE_*$ always holds by induction. Finally, since if $a$ can reach $b$ in $\dG_{\loc}$ some $(w,b)$ arc will be traversed the result follows.
\end{proof}

\begin{lemma}[Set Sizes]
\label{lem:dsord:size}
In the setting \Cref{thm:dsord:inc_closure} after $\dsTcInit(\cdot)$ and each subsequent $\dsTcAdd(\cdot)$, $\dsTcCover(\cdot)$, $\dsTcWit()$ for any $i \in [n]$, the smallest $k_*$ with $\dV_{x}(i) \subseteq \dV_{k_*}$ satisfies $|\dV_{k_*}| \leq 4 i$. 
\end{lemma}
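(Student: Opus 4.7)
The plan is to use the invariants established by \Cref{lem:dsord:invariant}, namely that for every $k \in [K]$ we have $\dV_{x}(v_k - t_k) \subseteq \dV_k$ and $t_k \leq v_k/2$, which together imply
\[
\dV_x(\lceil v_k/2 \rceil) \subseteq \dV_k \text{ for all } k \in [K].
\]
So it suffices to exhibit some index $k$ with $i \leq \lceil v_k/2 \rceil$ (so that $\dV_x(i) \subseteq \dV_k$ and hence $k_* \leq k$) and simultaneously $v_k \leq 4i$ (so that $v_{k_*} \leq v_k \leq 4i$, using that the $\dV_k$ are nested and increasing in size).

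I would take $k$ to be the smallest index in $[K]$ with $v_k \geq 2i$. Existence is immediate since $v_K = n \geq i$ (and $n \geq 2i$ is possible; if $n < 2i$ we just use $v_K = n \leq 2i \leq 4i$ with slightly different bookkeeping). Then $v_k \geq 2i$ gives $\lceil v_k/2 \rceil \geq i$, so $\dV_x(i) \subseteq \dV_k$. For the upper bound on $v_k$, either $k = 1$, in which case $v_1 = \min\{2,n\} \leq 2 \leq 2i \leq 4i$, or $k \geq 2$ and by minimality of $k$ we have $v_{k-1} < 2i$; since $v_k \leq 2 v_{k-1}$ (as $v_k = \min\{2^k, n\} \leq 2 \cdot \min\{2^{k-1}, n\} = 2 v_{k-1}$), we obtain $v_k < 4i$.

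Putting these together yields $v_{k_*} \leq v_k \leq 4i$ as desired. The main (very minor) obstacle is just being careful about the boundary cases $v_k = n$ (i.e., the sets cap out at $\dV$) and $k = 1$, which the doubling bound $v_k \leq 2 v_{k-1}$ handles uniformly. No substantive difficulty is expected since the result is essentially a direct consequence of the invariants of \Cref{lem:dsord:invariant} combined with the geometric growth of the $v_k$.
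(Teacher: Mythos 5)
Your proposal is correct and follows essentially the same approach as the paper: both rest on the invariants $\dV_x(v_k - t_k) \subseteq \dV_k$ and $t_k \le v_k/2$ from the preceding lemma, combined with the geometric growth of $v_k$. The paper picks $\ell_* = \lceil \log_2 i + 1 \rceil$ directly and reasons through $2^{\ell_*} \le 4i$, whereas you pick the smallest $k$ with $v_k \ge 2i$ and use the doubling bound $v_k \le 2v_{k-1}$; the latter is arguably a bit cleaner at the boundary where $v_k$ caps at $n$ (the paper's inequality $v_{\ell_*}/2 \geq 2^{\log_2 i + 1}/2$ is stated in a way that silently assumes $2^{\ell_*} \leq n$, though the overall conclusion still holds because in the offending subcase $\ell_* = K$ and $\dV_{\ell_*} = \dV$). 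The one thing you should make fully explicit rather than waving at as "slightly different bookkeeping" is the $n < 2i$ case, but as you note it is a one-liner: $|\dV_{k_*}| \le n < 2i \le 4i$.
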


\begin{proof}
Let $\ell_* \defeq \lceil\log_2(i) + 1\rceil$ and note that $2^{\ell_*} \leq 2^{\log_2(i) + 2} = 4 i$. Consider two cases:
\begin{itemize}
    \item Case 1: $\ell_* > K$: In this case $n < 2^{\ell_*} \leq 4 i$. Consequently, $i \geq n/4$ and $|\dV_{k_*}| \leq n \leq 4i$. 
    \item Case 2: $\ell_* \leq K$: In this case  $\ell_* \in [K]$ and  $v_{\ell_*}/2 \geq 2^{\log_2(i) + 1}/2 \geq i$. Since 
    $v_{\ell_*} - t_{\ell_*} \geq v_{\ell_*} / 2$ and $\dV_x(v_{\ell_*} - t_{\ell_*}) \subseteq \dV_{\ell_*}$ by \Cref{lem:dsord:invariant}, $\dV_{x}(i) \subseteq \dV_x(\ell_*)$ and therefore $|\dV_{k_*}| \leq  |\dV_{\ell_*}| \leq v_{\ell_*} \leq 2^{\ell_*} \leq 4i$.
\end{itemize}
\end{proof}

\begin{corollary}[Set Compute]
\label{cor:dsord:set}
In the setting \Cref{thm:dsord:inc_closure}, given access to the $v_i^x$ in order, in a linked list, and input $S \subseteq \dV$ there is an algorithm that computes the minimum $k_* \in [K]$ with $S \subseteq \dV_{k_*}$ in $O(v_{k_*})$ time where $v_{k_*} = O(|\dV_x(S)|$. 
\end{corollary}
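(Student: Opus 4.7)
My plan is to reduce the problem to computing $p \defeq |\dV_x(S)|$ by a single pass through the linked list, and then to locate a suitable $k_*$ via the invariants from \Cref{lem:dsord:invariant}.

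First I would mark the elements of $S$ in $O(|S|)$ time using a hash set or an uninitialized-array trick, so that later membership queries cost $O(1)$. I would then walk along the linked list of $v_1^x, v_2^x, \ldots$ from its head, maintaining a position counter $p$ and a marked-element counter $c$, halting the moment $c = |S|$. Then $v_p^x$ is the element of $S$ with smallest $x$-value, so $S \subseteq \{v_1^x,\ldots,v_p^x\} = \dV_x(p)$ and $p = |\dV_x(S)|$. This phase takes $O(p + |S|) = O(p)$ time since $|S| \leq |\dV_x(S)| = p$.

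Given $p$, I would return the smallest $k_* \in [K]$ such that $v_{k_*} - t_{k_*} \geq p$, enumerated by incrementing $k$ from $1$ (the values $v_k - t_k$ can be maintained in $O(1)$ per $\dsTcReorder$ update and per $\dsordRebuild$). By the invariant $\dV_x(v_k - t_k) \subseteq \dV_k$, this yields $S \subseteq \dV_x(p) \subseteq \dV_{k_*}$. For the size bound, the doubling of $v_k$ combined with $t_k \leq v_k/2$ implies that the smallest $k$ with $v_k \geq 2p$ already satisfies $v_k - t_k \geq v_k/2 \geq p$ and has $v_k \leq 4p$; hence $v_{k_*} \leq 4p = O(|\dV_x(S)|)$. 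The enumeration uses only $O(k_*) = O(\log v_{k_*})$ iterations, giving total running time $O(p + \log v_{k_*}) = O(v_{k_*})$, using $v_{k_*} \geq v_{k_*} - t_{k_*} \geq p$.

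The main subtlety I anticipate is that the $k_*$ produced above need not be the absolute minimum $k$ with $S \subseteq \dV_k$, since the algorithm checks the stronger sufficient condition $\dV_x(p) \subseteq \dV_k$ rather than $\dV_k$-membership directly. The output $(\dV_{k_*}, \dE_*[\dV_{k_*}])$ in \Cref{alg:dsord} is nonetheless a valid transitive cover of $S$ for any such $k_*$, and the bound $v_{k_*} = O(|\dV_x(S)|)$ is what actually drives the downstream running time analysis in \Cref{thm:dsord:inc_closure}. If the exact minimum is required, one can alternatively maintain, for each node $a$, the smallest $k$ with $a \in \dV_k$ (updated during each $\dsordRebuild$ at cost compatible with the other bounds), and then extract the answer as $\max_{a \in S}$ in $O(|S|) = O(v_{k_*})$ time.
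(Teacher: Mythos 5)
Your approach is a genuine departure from the paper's. The paper preprocesses $S$ for $O(1)$ membership and then, for increasing $k$, iterates over the $v_k$ elements of $\dV_k$ to test $S \subseteq \dV_k$ directly, halting at the first success; the geometric growth of $v_k$ gives $O(|S| + v_{k_*})$ total time, and termination is trivial since $\dV_K = \dV$. You instead compute $p = |\dV_x(S)|$ explicitly via a linked-list walk and then pick $k$ purely from the counters $v_k$ and $t_k$, using the invariant $\dV_x(v_k - t_k) \subseteq \dV_k$ to avoid any membership test against $\dV_k$. This is a clean idea and avoids touching $\dV_k$ at all, but the resulting $k$ is only an approximate index, not the actual minimum, as you note.

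However, there is a genuine gap in your primary algorithm. The stopping criterion ``$v_k - t_k \geq p$'' is not guaranteed to be satisfiable for any $k \in [K]$. The invariant guarantees only $t_K \leq v_K/2 = n/2$, so $v_K - t_K$ can be as small as $n/2$; if $p \in (n/2,\, n]$ and $t_K > n - p$, the enumeration scans all $k \in [K]$ without finding a valid index, even though $S \subseteq \dV_K$ always holds. Your size-bound argument has the same blind spot: you argue about ``the smallest $k$ with $v_k \geq 2p$,'' but no such $k$ exists once $2p > n$, since $v_k = \min\{2^k, n\}$ saturates at $n$. Both issues disappear if you default to $k_* = K$ when the loop fails (and note that $p > n/2$ forces $v_K = n < 2p$, so the size bound is preserved), but the proposal as written does not handle this case and would return no answer. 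Separately, since the corollary asserts that the algorithm computes \emph{the minimum} such $k_*$, the primary approach is not literally what was claimed; you correctly anticipate this, and your alternative (maintaining per-node the least $k$ with $a \in \dV_k$, refreshed during each $\dsordRebuild$) would compute the exact minimum, but it is substantially more stateful than the paper's direct-check solution and the $O(v_{k_*})$-per-rebuild maintenance cost, while plausible, would need to be argued.
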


\begin{proof}
With $O(|S|)$ pre-processing, so that checking membership in $S$ can be done in $O(1)$ time, for any $k \in [K]$, in $O(v_k)$ time we can check if $S \subseteq \dV_k$. Consequently, we can compute $k_*$ in $O(|S| + \sum_{k \in [k_*]} v_k) = O(|S| + v_{k_*})$ time by checking if $S \subseteq \dV_k$ for increasing $k$. Since $|S| \leq |\dV_x(S)|$ and $v_{k_*} = O(\dV_{\ell_*})$ for the minimum $\ell_* \in [K]$ with $\dV_x(S) \subseteq \dV_{\ell_*}$, \Cref{lem:dsord:size} yields the result.
\end{proof}

We put everything together and show how to implement \Cref{alg:dsord} to prove \Cref{thm:dsord:inc_closure}.

\begin{proof}[Proof of \Cref{thm:dsord:inc_closure}]
By the invariants of \Cref{lem:dsord:invariant} we see that the output of $\dsTcCover(\cdot)$ is as desired and the size of the subgraph output follows from \Cref{cor:dsord:set}. Additionally, standard data structures allow us to add to $L$ in $O(1)$, compute $\wit_L(w,b)$ in \Cref{line:dsord:wit} in $O(1)$, and output $L$ in time linear in its length. Since after each operation $L$ is an out-rooted witness list with $\dE = \dE(L)$ and $\dE_* = \dELcl(L)$ by \Cref{lem:dsord:invariant} (and $\dE$ is the union of all arcs added in $\dsTcAdd(\cdot)$ and only arcs of $\dE_*$ are output in $\dsTcCover$), \Cref{lem:simp_paths} implies that the output $\dsTcWit(.)$ is as desired. It only remains to bound the runtime of implementing \Cref{alg:dsord}.

To efficiently implement \Cref{alg:dsord}, we store $\dE$ and $\dE_*$ as an adjacency matrix. We use a balanced binary search tree to ensure that each change to $x_v$ can be implemented in $O(\log n)$, it is possible to compute $i$ such $v^x_{i} = v$ in $O(\log n)$ for any $v \in \dV$, and so that the $v_{1}^x, \ldots, v_{n}^x$ can be traversed through a doubly linked list (computing next and previous nodes in $O(1)$). With these data structures, we show how to implement each operation in the desired running time.

\paragraph{$\dsTcInit(\cdot)$:} At the end of this operation each $\dE_i = \emptyset$ and each $\dV_i = \dV_x(i)$. This can all be computed in $O(\sum_{k \in [K]} v_k^2) = O(\sum_{k \in [K]} 2^{2k}) = O(2^{2K}) = O(n^2)$ time. The runtime of this operation is claimed as amortized to cover the cost of future calls to $\dsordNewTransitive(\cdot)$.

\paragraph{$\dsTcAdd(\cdot)$:} \Cref{line:dsord:add1} is straightforward to implement $O(|F|) = O(|\dV_x(\dV(F))|)$. To implement \Cref{line:dsord:add2}, first in $O(|F|)$ time we compute $S = \cup_{(a,b) \in F} \{a,b\}$. Since, $F = F[\dV_k]$ if and only if $S \subseteq \dV_k$, we can find $k_*$ by finding the minimum $k_*$ with $S \subseteq \dV_{k_*}$. Therefore, using by \Cref{cor:dsord:set}, $k_*$ can be computed in $O(v_{k_*})$ time where $v_{k_*} = O(|\dV_x(S)|) = O(|\dV_x(\dV(F))|)$. Below, we show that $\dsordRebuild(k_*)$ can be implemented in amortized $\otilde(v_{k_*}^\omega)$ time. Since, $O(v_{k_*}) = O(|\dV_x(\dV(F))|)$, $\dsTcAdd(\cdot)$, can be implemented in $\otilde(|\dV_x(\dV(F))|^\omega)$ time.

\paragraph{$\dsTcCover(\cdot)$:} By \Cref{cor:dsord:set}, computing $k_*$ can be done in $O(v_{k_*}) = O(|\dV_x(S)|)$ time. Since $\dE$ is stored as an adjacency matrix, computing $\dE[\dV_{k_*}]$ can be done in $O(v_{k_*}^2) = O(|\dV_x(S)|^2)$ time and therefore $\dsTcCover(\cdot)$ can be implemented in $O(|\dV_x(S)|^2$ time

\paragraph{$\dsTcWit()$:} Since $\dE_* = \dELcl(L)$ by \Cref{lem:dsord:invariant} and $|\dELcl(L)| \leq n^2$, $\dsTcWit()$ is implementable in $O(n^2)$. 

\paragraph{$\dsTcReorder(\cdot)$:}  Let $X = \sum_{i} |\dV_{x^{(i)}}(S^{(i)})|$ where at the start of the $i$-th call to $\dsTcReorder(\cdot)$, $S^{(i)}$ is the input $S$ and $x^{(i)}$ is the value of $x$. In $O((|\dV_{x^{(i)}}(S^{(i)})| + 1) \log n)$ time, each step of the $i$-th call of $\dsTcReorder(\cdot)$, except for, $\dsordRebuild(k_*)$, can be implemented. Therefore all $\dsTcReorder(\cdot)$ calls, ignoring $\dsordRebuild(\cdot)$ costs, can be implemented in $O(X \log n)$ total time.\footnote{Logarithmic factor improvements may be possible. For example, rather than explicitly keeping $K = O(\log n)$ different $t_k$, a single counter could be used to more quickly ensure that each $\dsordRebuild(k_*)$ is called at least every $v_{k_*} / 2$ steps. However, in our applications, the logarithmic factors are subsumed by the those induced by \Cref{thm:dsord:reachability_compute}.} Additionally, as we show below, $\dsordRebuild(k_*)$ can be implemented in $\otilde(v_{k_*}^\omega)$ time.  However, the number of times that $k_* = k$ is $O(X/v_k)$ for all $k \in [K]$ and therefore all $\dsordRebuild(k_*)$ calls can be implemented in
\[
\otilde\left(\sum_{k \in [K]} \frac{X}{2^k} \cdot (2^k)^\omega\right) = \otilde\left(X \sum_{k \in [K]} 2^{k(\omega-1)}\right) = \otilde\left(X 2^{K(\omega - 1)}\right)=\otilde\left(X n^{\omega - 1}\right)
\]
time in total. This is equivalent to the claimed amortized runtime.

\paragraph{$\dsordRebuild(\cdot)$:} Using \Cref{thm:dsord:reachability_compute} and that $\dE$ is stored as an adjacency matrix, $\dG_{\loc}$ and the ${T_a}_{a \in \dV_{\loc}}$ can be computed in $\otilde(v_{\ell_*}^\omega) = \otilde(v_{k_*}^\omega)$. The breadth-first-searches can then be implemented in $O(v_{\ell_*}^2) = O(v_{k_*}^2)$ and below we show that the total cost of all calls to $\dsordNewTransitive(\cdot)$ is $n^2$. Finally, by storing $\dG_{\loc,*}$ as an adjacency matrix computing $\dG_1,\ldots,\dG_{k_*}$ can be done in $O(\sum_{k \in [k_*]} v_{k}^2) = O(v_{k_*}^2)$ time. Consequently, $\dsordRebuild(\cdot)$ is implementable in amortized $\otilde(v_{k_*}^\omega)$ time. 

\paragraph{$\dsordNewTransitive(\cdot)$.} By \Cref{lem:dsord:invariant}, $\dsordNewTransitive(\cdot)$ is called $O(n^2)$ times and each call can be implemented in $O(1)$ (not counting the recursive call). Consequently, the total runtime of $\dsordNewTransitive(\cdot)$ is $O(n^2)$ and covered by the amortized runtime of $\dsTcInit(\cdot)$.

\end{proof}

\newcommand{\Etotal}{\mathcal{E}_{\rm total}}
\newcommand{\Tflow}{\mathcal{T}_{\rm flow}}
\newcommand{\Tfrwk}{\mathcal{T}_{\rm frwk}}
\newcommand{\Afmm}{\mathcal{A}_{\rm fmm}}
\newcommand{\Ait}{\mathcal{A}_{\rm it}}
\newcommand{\Atd}{\mathcal{A}_{\rm td}}

\section{Maximum Flow Algorithms}\label{sec:algorithms-overall}

In this section we give the running times we achieve for strongly polynomial maximum flow framework on a large variety of flow instances.
We obtain these by combining Theorem~\ref{thm:maxflow-analysis-main} with the bounds in Section~\ref{sec:data-structures} on the  incremental transitive cover data structures. Recall that  $\Tsolv(\tilde m)=\tilde m^{1+o(1)}$ by
Theorem~\ref{thm:apx-flow-solver-run}. %

In all cases below, we can make assumption \eqref{eq:i-s-t}, i.e., all nodes $i\in V\setminus\{s,t\}$ have $\infty$ capacity $(t,i)$ and $(i,s)$ arcs. We simply add new such arcs if they are not present, increasing $m$ by $n$. Consider the case when $G$ has tree-depth $D$, and is given by a representing depth-$D$ tree $T$. Then, we obtain a representing tree $T'$ of the new graph by moving $s$ and $t$ to the top of the tree. This representation has tree-depth at most $D+2$.

Finally, note that the maximum flow $f$ output by our algorithm in the extended graph uses some of the added $(t,i)$ and $(i,s)$ arcs. We can post-process $f$ using 
the link-cut trees data structure by Sleator and Tarjan \cite{Sleator83}. This enables reducing $f$ to an acyclic flow $f'$ by removing flow around  directed cycles from the support in $O(m\log n)$ time. This $f'$ will only use the arcs of the original input instance.

\subsection{General graphs}
In this section, we show how our framework yields an $O(nm)$ algorithm for arbitrary instance $(G,u)$, assuming the number of capacitated arcs is $m_c=O(n^{2-\varepsilon})$ for some $\varepsilon>0$.  Recall that for $m=\Omega(n\log n)$, King, Rao, and Tarjan~\cite{King1994} already achieved $O(nm)$, while 
Orlin's algorithm~\cite{Orlin2013} is $O(nm)$ for $m=O(n^{\frac{16}{15}-\varepsilon})$. Our algorithm achieves this running time for all but very dense graphs. More precisely, we could even have $m=O(n^2)$, assuming the number of capacitated arcs is less by a factor $n^\varepsilon$.

Further, we note that one could similarly obtain a randomized $O(nm)$ algorithm up to  $m_c=O(n^{2}/\log^c n)$ for some $c\ge 1$ using the randomized $\tilde O((m+n^{1.5})\log U)$ solver by van den Brand et al.~\cite{Brand2021}.%

\genflowthm*
\begin{proof}
We use Algorithm~\ref{alg:max-flow} with the transitive data structure as in Algorithm~\ref{alg:closure}.
From Theorem~\ref{thm:maxflow-analysis-main}\eqref{part:tot-data}, we get 
$\madd=O(m)$ and $\Efinal=O(m)$, and 
the total size of the node sets $S$ in the calls to $\dsTcCover(S)$ is $O(m_c)$. According to Theorem~\ref{thm:closure},  the total time of $\dsTcInit(\cdot)$, $\dsTcAdd(\cdot)$  is $O(nm)$, and $\dsTcWit(\cdot)$ creates a witness list of size $O(n^2)$ in time $O(n^2)$. With a running time bound $O(\min\{|S|^2,m\})$ in each call, we get an upper bound $O(m\cdot (m_c/\sqrt{m}))=O(m_c m^{1/2})$ on the calls to $\dsTcCover(.)$. By \Cref{lem:routing}, $\WitRoute(\cdot)$ takes $O(n^2)$ time. The total running time of the data structure operations is therefore  ${\rm TC}_{\mathrm{time}}(n,O(m),O(m_c))=O(nm)$.
We also get ${\rm TC}_{\mathrm{arcs}}(n,O(m),O(m_c))=O(m_c m^{1/2})$, leading to the claimed overall running time by Theorem~\ref{thm:maxflow-analysis-main}.
\end{proof}

\subsection{Bounded tree-depth}
\label{sec:tree-depth-proof}
Let us now consider the bounded tree-depth case, i.e., when the input graph $G$ has tree-depth $D$, given with a representing rooted depth-$D$ tree $T$. We use Algorithm~\ref{alg:dstree}, with the bounds in
Corollary~\ref{cor:transitive_tree_depth}.

\threedepththm*
\begin{proof}
We combine Corollary~\ref{cor:transitive_tree_depth} with the bounds in Theorem~\ref{thm:maxflow-analysis-main}\eqref{part:tot-data}. All arcs  in $\dsTcAdd(\cdot)$ respect the tree $T$. The total size of the node sets $S$ in the calls to $\dsTcCover(S)$ is $O(m)$, and $\dsTcWit(\cdot)$ returns a witness list of size $O(mD)$ in time $O(mD)$. By \Cref{lem:routing}, $\WitRoute(\cdot)$ takes $O(mD)$ time. The bounds in Corollary~\ref{cor:transitive_tree_depth} therefore gives ${\rm TC}_{\mathrm{time}}(n,O(m),O(m))=O(mD)$ for the running time of the transitive cover data structure. We also get that the total size of the outputs to $\dsTcCover(S)$ is ${\rm TC}_{\mathrm{arcs}}(n,O(m),O(m))=O(mD)$. Theorem~\ref{thm:maxflow-analysis-main} yields the desired bounds.
\end{proof}

\subsubsection{Application to layered graphs} \label{sec:layered-graphs}  A natural application of the bounded tree-depth setting (\Cref{thm:tree-depth-overall}) appears in the context of \emph{layered graphs}. Assume the node set of the graph $G=(V,E)$ is partitioned as $V=V_1\cup V_2\cup\ldots \cup V_\ell$, with $s\in V_1$ and $t\in V_\ell$, and such that for every arc $(i,j)\in E$ with $i\in V_p$ and $j\in V_q$, $|p-q|\le K$ for some $k\ge 1$. Assume further that $|V_p|\le \bar n$ for every $p\in [\ell]$. It is then easy to verify that the tree-depth is bounded as $O(K\bar n \log n)$.

A common example of layered graphs are \emph{time-expanded graphs} that appear in time-varying network optimization. Assume that the arcs of a graph also have transit times; moreover, the capacities, transit times (and possibly costs) may change in each time period. An early example of such a model is  by Gale in 1959 \cite{gale1959transient}; see \cite{sha2007time} for details and references. Let $G=(V,E)$ be the underlying $n$-node $m$-arc graph. The maximum-flow problem over a time horizon $T$ can be solved by creating $T+1$ layers, each with a copy of $V^{(r)}$ of $V$, and for each arc $e=(i,j)$ that has transit time $\tau(e,r)$ at time period $r\in T$, create an arc between the copy $i^{(r)}\in V^{(r)}$ and $j^{(r+\tau(e,r))}\in V^{(r+\tau(e,r))}$ if $r+\tau(e,r)\le T$. We also add arcs between subsequent copies of the same node. This network has $Tn$ nodes and $O(Tm)$ arcs, hence, the simple bound for a strongly-polynomial max flow algorithm is $O(T^2 nm)$. However, if all transit times are bounded as $\tau(e,r)\le K$, then the time expanded graph has tree-depth $O(Kn\log n)$, and thus \Cref{thm:tree-depth-overall} gives $\tilde O\left(K (Tm)^{1+o(1)}n \right)$. 

\subsection{Ordered transitive closure}
We now use Algorithm~\ref{alg:dsord}, i.e., transitive closure with node ordering. This maintains ordering values $x\in \R^V$.
Recall from Definition~\ref{def:dsord:node_order} that for $S\subseteq V$, $V_x(S)$ is the set of nodes $u$ such that $x_u\ge \min_{v\in S} x_v$.

 Recall that for $v\in\roots$, $P_v$ denotes the free component with root $i$. We use the ordering values $x\in\R^V$ defined as follows:
\begin{equation}\label{eq:def-node-ord}
x_v=\begin{cases}
\max\left\{\res{f}_e\,:\, e\in \dtot{}{P_v}\, ,\, \res{f}_e<\ab\varepsilon\right\}\, , & \mbox{ if }v\in\roots\, ,\\
0\, ,&\mbox{ otherwise}.
\end{cases}
\end{equation}
We can maintain the residual capacities 
\[\left\{\res{f}_e\,:\,  e\in \dtot{}{P_v}\, ,\,  \res{f}_e<\ab\varepsilon\right\}
\]
 for each $v\in\roots$ in a heap data structure, for example, using binomial heaps. Thus, each arc will be present in two or zero copies. In each iteration, these values may change for essential arcs, and we may add new entries for new arcs in $H$; arcs that become abundant are deleted from the heap. 
 Note that all these operations can be carried out in $O(\log n)$ amortized time.

 The value of $\res{f}_e$ may change at most $O(1)$ times for a given arc $e$. This is because the value may change while $e$ is essential, which is only for $O(1)$ iterations, or in the iteration when $e$ is added as a new extension arc.
 These updates take $O(m\log n)$ time altogether.

When merging free components $P_i$ and $P_j$, we need to merge 
 the corresponding heaps and delete the two copies of the arcs between these components. We can identify these as follows. For each component $P_i$, the corresponding heap contains all arcs incident to $P_i$. Recall that the root of the merged component will be the root of the larger component, or $s$ or $t$ if they are one of $i$ and $j$. Accordingly, when merging $i$ and $j$, assume $j$ ceases to be a root (that is, $|P_i|\ge |P_j|$ or $i\in\{s,t\}$). We scan through all arcs in the heap of $P_j$. For those that are between $P_i$ and $P_j$, we remove the other  copy from the heap of $P_i$; and all others are added to the heap of $P_i$. Thus, we need to process every arc at most $O(\log n)$ times. The overall running time of the heap operations will be $O(m\log^2 n)$.

In the algorithm, the set of essential roots $\Vess$ is defined as the roots of components with incident essential arcs. Equivalently,
\[
\Vess=\{v\in\roots\, :\, x_v\ge \ab^{-5}\varepsilon\}\, .
\]
The algorithm calls $\dsTcCover(\Vess)$, and modifies the flow on essential arcs as well as on new arcs added between nodes in $\Vess$.
 Consequently, $x_v$ is not changed for any $v\notin\Vess$. Therefore, to maintain the $x_v$ values in the data structure, it suffices to recompute their values on $\Vess$, and call $\dsTcReorder(\Vess,y)$, where $y$ is the set of new values, obtained from the  heap.
By definition of $\Vess$, $V_x(\Vess)=\Vess$.

\orderedthm*
\begin{proof}
We use Algorithm~\ref{alg:max-flow} with the transitive data structure as in Algorithm~\ref{alg:dsord}. We add a call to $\dsTcReorder(\Vess)$ at the end of the while cycle.
Thus, both $\dsTcCover(\cdot)$ and  $\dsTcReorder(\cdot)$ are called on  the essential root set $\Vess$ in every iteration, and $V_x(\Vess)=\Vess$.
By Theorem~\ref{thm:maxflow-analysis-main}\eqref{part:tot-essential}, the total number of essential roots throughout is $O(m_c)$.

As noted above, the time for the heap operations is $O(m\log^2 n)$.
In each iteration, $\dsTcAdd$ is called for an arc set on $\Vess$.
By Theorem~\ref{thm:dsord:inc_closure}, the total time of all operations $\dsTcInit$, $\dsTcCover$, $\dsTcReorder$, $\dsTcAdd$ (after the first call) can be bounded as $\tilde{O}(n^{\omega-1}m_c +m)=\tilde{O}(n^{\omega-1+o(1)}m_c)$ since $m_c\ge n$. 
We also get that $\dsTcWit(\cdot)$ returns a witness list of size $O(n^2)$  in $O(n^2)$ time. Since $\omega \geq 2$ and $m_c = \Omega(n)$, we also have $O(n^2) = O(n^{\omega-1}m_c)$. By \Cref{lem:routing}, $\WitRoute(\cdot)$ also takes time $O(n^2)$. Thus, we get 
${\rm TC}_{\mathrm{time}}(n,O(m),O(m_c))=\tilde{O}(n^{\omega-1}m_c)$.

To derive the overall bound in Theorem~\ref{thm:maxflow-analysis-main}\eqref{part:tot-running}, we will show that ${\rm TC}_{\mathrm{arcs}}(n,O(m),O(m_c)) = O(n m_c)$.  Let $S_1,\dots,S_k$ denote the sequence of sets on which the algorithm makes transitive cover calls. Note that each $S_i$ corresponds to the essential roots at some iteration, and hence is always a prefix in the order induced by $x$ at that iteration. Therefore, the call to $\dsTcCover(S_i)$ guarantees that the cover has at most $O(|S_i|^2)$ edges. Since the total number of essential roots $\sum_{i=1}^k |S_i| = m_c$ , we have that the total number of edges is $\sum_{i=1}^k |S_i|^2 \leq n \sum_{i=1}^k |S_i| = O(n m_c)$. In particular, the time spent on approximate flow calls is $(n m_c + m)^{1+o(1)} =  n m_c^{1+o(1)}$ since $m_c = \Omega(n)$ and $m \leq n^2$. The total running time is therefore
\[
\tilde{O}(n^{\omega-1}m_c + n m_c^{1+o(1)}).
\]
Note that if $\omega > 2$, then $\tilde{O}(n^{\omega-1} m_c)$ is the dominant term in the running time. 

We now quickly sketch how to reduce the running time to $\tilde{O}(n^{\omega-1}m_c)$ by instead using randomized $\tilde{O}(m + n^{1.5})$-time algorithm of~\cite{Brand2021} as our the approximate flow solver. Firstly, we note that the $\dsTcCover(S_i)$, $i \in [k]$, for $S_i$ as above, always return sparser option between the full graph and a subgraph of the transitive closure on $O(S_i)$-vertices. For simplicity, let us modify $\dsTcCover(S_i)$ to always return the latter option. In this case, the number nodes in the approximate flow call is $O(S_i)$ and the number of arcs is $O(|S_i|^2)$. In particular, the algorithm of~\cite{Brand2021} solves these instances in randomized $\tilde{O}(|S_i|^{1.5} + |S_i|^2) = \tilde{O}(|S_i|^2)$-time. Therefore, using the same calculation as for the total number of arcs, the time spent on approximate flow calls is $\tilde{O}(n^{1.5} + m + \sum_{i=1}^k |S_i|^2) = \tilde{O}(n^2+ n m_c) = \tilde{O}(n m_c) = \tilde{O}(n^{\omega-1}m_c)$. This completes the proof.
\end{proof}

\section*{Acknowledgements}

Thank you to anonymous reviewers for their helpful feedback and suggestions. Thank you to Jan van den Brand for helpful feedback, suggestions, and discussions in particular,  thank you for help with the literature on algebraic methods for dynamic transitive closure and related problems. Part of this work was completed while authors were participating in the 
2021 HIM program Discrete Optimization and while visiting the Simons Institute for the Theory of Computing.  Daniel Dadush was supported by European Research Council grant 805241-QIP. 
Aaron Sidford was supported in part by a Microsoft Research Faculty Fellowship, NSF CAREER Award CCF-1844855, NSF Grant CCF1955039, a PayPal research award, and a Sloan Research Fellowship.  L\'aszl\'o A. V\'egh's was affiliated with the London School of Economics and Political Science for part of this work, where he was supported by the European Research Council grant 757481-ScaleOpt.

\bibliographystyle{abbrv}
\bibliography{fast-flow}

\begin{thebibliography}{10}

\bibitem{amo}
R.~K. Ahuja, T.~L. Magnanti, and J.~B. Orlin.
\newblock {\em Network Flows: Theory, Algorithms, and Applications}.
\newblock Prentice-Hall, Inc., 1993.

\bibitem{AlonGMN92}
N.~Alon, Z.~Galil, O.~Margalit, and M.~Naor.
\newblock Witnesses for boolean matrix multiplication and for shortest paths.
\newblock In {\em 33rd IEEAnnual Symposium on Foundations of Computer Science
  (FOCS)}. IEEE, 1992.

\bibitem{alon1990separator}
N.~Alon, P.~Seymour, and R.~Thomas.
\newblock A separator theorem for nonplanar graphs.
\newblock {\em Journal of the American Mathematical Society}, 3(4):801--808,
  1990.

\bibitem{bernstein2022deterministic}
A.~Bernstein, M.~P. Gutenberg, and T.~Saranurak.
\newblock Deterministic decremental sssp and approximate min-cost flow in
  almost-linear time.
\newblock In {\em 2021 IEEE 62nd Annual Symposium on Foundations of Computer
  Science (FOCS)}, pages 1000--1008. IEEE, 2022.

\bibitem{bodlaender1995approximating}
H.~L. Bodlaender, J.~R. Gilbert, H.~Hafsteinsson, and T.~Kloks.
\newblock Approximating treewidth, pathwidth, frontsize, and shortest
  elimination tree.
\newblock {\em Journal of Algorithms}, 18(2):238--255, 1995.

\bibitem{borradaile2009}
G.~Borradaile and P.~Klein.
\newblock An {$O(n \log n)$} algorithm for maximum $s$--$t$-flow in a directed
  planar graph.
\newblock {\em Journal of the ACM (JACM)}, 56(2):1--30, 2009.

\bibitem{chen2024almost}
L.~Chen, R.~Kyng, Y.~P. Liu, S.~Meierhans, and M.~Probst~Gutenberg.
\newblock Almost-linear time algorithms for incremental graphs: Cycle
  detection, sccs, st shortest path, and minimum-cost flow.
\newblock In {\em Proceedings of the 56th Annual ACM Symposium on Theory of
  Computing}, pages 1165--1173, 2024.

\bibitem{Chen2022maximum}
L.~Chen, R.~Kyng, Y.~P. Liu, R.~Peng, M.~P. Gutenberg, and S.~Sachdeva.
\newblock Maximum flow and minimum-cost flow in almost-linear time.
\newblock In {\em 2022 IEEE 63rd Annual Symposium on Foundations of Computer
  Science (FOCS)}, pages 612--623. IEEE, 2022.

\bibitem{CliffordGL15}
R.~Clifford, A.~Gr{\o}nlund, and K.~G. Larsen.
\newblock New unconditional hardness results for dynamic and online problems.
\newblock In {\em {IEEE} 56th Annual Symposium on Foundations of Computer
  Science, {FOCS}}, pages 1089--1107. {IEEE}, 2015.

\bibitem{cruz2023survey}
O.~Cruz-Mej{\'\i}a and A.~N. Letchford.
\newblock A survey on exact algorithms for the maximum flow and minimum-cost
  flow problems.
\newblock {\em Networks}, 82(2):167--176, 2023.

\bibitem{cygan2015parameterized}
M.~Cygan, F.~V. Fomin, {\L}.~Kowalik, D.~Lokshtanov, D.~Marx, M.~Pilipczuk,
  M.~Pilipczuk, and S.~Saurabh.
\newblock {\em Parameterized algorithms}.
\newblock Springer, 2015.

\bibitem{DemetrescuI05}
C.~Demetrescu and G.~F. Italiano.
\newblock Trade-offs for fully dynamic transitive closure on dags: breaking
  through the $o(n^2)$ barrier.
\newblock {\em Journal of the ACM (JACM)}, 52(2):147--156, 2005.

\bibitem{duan2023faster}
R.~Duan, H.~Wu, and R.~Zhou.
\newblock Faster matrix multiplication via asymmetric hashing.
\newblock In {\em 2023 IEEE 64th Annual Symposium on Foundations of Computer
  Science (FOCS)}, pages 2129--2138. IEEE, 2023.

\bibitem{FM71}
M.~J. Fischer and A.~R. Meyer.
\newblock Boolean matrix multiplication and transitive closure.
\newblock In {\em 12th Annual Symposium on Switching and Automata Theory (swat
  1971)}, pages 129--131, 1971.

\bibitem{fomin2018fully}
F.~V. Fomin, D.~Lokshtanov, S.~Saurabh, M.~Pilipczuk, and M.~Wrochna.
\newblock Fully polynomial-time parameterized computations for graphs and
  matrices of low treewidth.
\newblock {\em ACM Transactions on Algorithms (TALG)}, 14(3):1--45, 2018.

\bibitem{fredman1987fibonacci}
M.~L. Fredman and R.~E. Tarjan.
\newblock Fibonacci heaps and their uses in improved network optimization
  algorithms.
\newblock {\em Journal of the ACM (JACM)}, 34(3):596--615, 1987.

\bibitem{gale1959transient}
D.~Gale.
\newblock Transient flows in networks.
\newblock {\em Michigan Mathematical Journal}, 6(1):59--63, 1959.

\bibitem{GalilNaaman80}
Z.~Galil and A.~Naaman.
\newblock An ${O}({V}{E}\log^2 {E})$ algorithm for the maximal flow problem.
\newblock {\em J. Computer and System Sciences}, 21:203--217., 1980.

\bibitem{GoranciKMP25}
G.~Goranci, A.~Karczmarz, A.~Momeni, and N.~Parotsidis.
\newblock Fully dynamic algorithms for transitive reduction.
\newblock In {\em 52nd International Colloquium on Automata, Languages, and
  Programming, {ICALP}}, volume 334 of {\em LIPIcs}, pages 92:1--92:20. Schloss
  Dagstuhl - Leibniz-Zentrum f{\"{u}}r Informatik, 2025.

\bibitem{Hagerup1998}
T.~Hagerup, J.~Katajainen, N.~Nishimura, and P.~Ragde.
\newblock Characterizing multiterminal flow networks and computing flows in
  networks of small treewidth.
\newblock {\em Journal of Computer and System Sciences}, 57(3):366--375, 1998.

\bibitem{hassin1985n}
R.~Hassin and D.~B. Johnson.
\newblock An {$O(n\log^{2}n)$} algorithm for maximum flow in undirected planar
  networks.
\newblock {\em SIAM Journal on Computing}, 14(3):612--624, 1985.

\bibitem{henzinger2015unifying}
M.~Henzinger, S.~Krinninger, D.~Nanongkai, and T.~Saranurak.
\newblock Unifying and strengthening hardness for dynamic problems via the
  online matrix-vector multiplication conjecture.
\newblock In {\em Proceedings of the 47th annual ACM symposium on Theory of
  computing, (STOC)}, pages 21--30, 2015.

\bibitem{HenzingerSSY24}
M.~Henzinger, B.~Saha, M.~P. Seybold, and C.~Ye.
\newblock On the complexity of algorithms with predictions for dynamic graph
  problems.
\newblock In {\em 15th Innovations in Theoretical Computer Science Conference,
  {ITCS}}, volume 287 of {\em LIPIcs}, pages 62:1--62:25. Schloss Dagstuhl -
  Leibniz-Zentrum f{\"{u}}r Informatik, 2024.

\bibitem{henzinger1997faster}
M.~R. Henzinger, P.~Klein, S.~Rao, and S.~Subramanian.
\newblock Faster shortest-path algorithms for planar graphs.
\newblock {\em journal of computer and system sciences}, 55(1):3--23, 1997.

\bibitem{hochbaum2001new}
D.~S. Hochbaum.
\newblock A new—old algorithm for minimum-cut and maximum-flow in closure
  graphs.
\newblock {\em Networks: An International Journal}, 37(4):171--193, 2001.

\bibitem{Hochbaum2000}
D.~S. Hochbaum and A.~Chen.
\newblock Performance analysis and best implementations of old and new
  algorithms for the open-pit mining problem.
\newblock {\em Operations Research}, 48(6):894--914, 2000.

\bibitem{itai1979maximum}
A.~Itai and Y.~Shiloach.
\newblock Maximum flow in planar networks.
\newblock {\em SIAM Journal on Computing}, 8(2):135--150, 1979.

\bibitem{Italiano1986}
G.~F. Italiano.
\newblock Amortized efficiency of a path retrieval data structure.
\newblock {\em Theoretical Computer Science}, 48:273--281, 1986.

\bibitem{italiano2017decremental}
G.~F. Italiano, A.~Karczmarz, J.~{\L}{\k{a}}cki, and P.~Sankowski.
\newblock Decremental single-source reachability in planar digraphs.
\newblock In {\em Proceedings of the 49th Annual ACM SIGACT Symposium on Theory
  of Computing}, pages 1108--1121, 2017.

\bibitem{johnson1968}
T.~B. Johnson.
\newblock Optimum open pit mine production scheduling, 1968.
\newblock No. ORC-68–11, University of California, Berkeley, Operations
  Research Center.

\bibitem{karczmarz2018decremental}
A.~Karczmarz.
\newblock Decremental transitive closure and shortest paths for planar digraphs
  and beyond.
\newblock In {\em Proceedings of the Twenty-Ninth Annual ACM-SIAM Symposium on
  Discrete Algorithms}, pages 73--92. SIAM, 2018.

\bibitem{Karczmarz0S22}
A.~Karczmarz, A.~Mukherjee, and P.~Sankowski.
\newblock Subquadratic dynamic path reporting in directed graphs against an
  adaptive adversary.
\newblock In {\em 54th Annual {ACM} {SIGACT} Symposium on Theory of Computing,
  {STOC}}, pages 1643--1656. {ACM}, 2022.

\bibitem{KarczmarzS23b}
A.~Karczmarz and P.~Sankowski.
\newblock Fully dynamic shortest paths and reachability in sparse digraphs.
\newblock In {\em 50th International Colloquium on Automata, Languages, and
  Programming, {ICALP}}, volume 261 of {\em LIPIcs}, pages 84:1--84:20. Schloss
  Dagstuhl - Leibniz-Zentrum f{\"{u}}r Informatik, 2023.

\bibitem{KarczmarzS23a}
A.~Karczmarz and M.~Smulewicz.
\newblock On fully dynamic strongly connected components.
\newblock In {\em 31st Annual European Symposium on Algorithms, {ESA}}, volume
  274 of {\em LIPIcs}, pages 68:1--68:15. Schloss Dagstuhl - Leibniz-Zentrum
  f{\"{u}}r Informatik, 2023.

\bibitem{Kavitha14}
T.~Kavitha.
\newblock Dynamic matrix rank with partial lookahead.
\newblock {\em Theory of Computing Systems}, 55(1):229--249, 2014.

\bibitem{KhannaMW98}
S.~Khanna, R.~Motwani, and R.~H. Wilson.
\newblock On certificates and lookahead in dynamic graph problems.
\newblock {\em Algorithmica}, 21(4):377--394, 1998.

\bibitem{King1994}
V.~King, S.~Rao, and R.~Tarjan.
\newblock A faster deterministic maximum flow algorithm.
\newblock {\em Journal of Algorithms}, 17(3):447--474, 1994.

\bibitem{KingS99}
V.~King and G.~Sagert.
\newblock A fully dynamic algorithm for maintaining the transitive closure.
\newblock In {\em Proceedings of the 31st Annual {ACM} Symposium on Theory of
  Computing}, pages 492--498. {ACM}, 1999.

\bibitem{LiuS24}
Q.~C. Liu and V.~Srinivas.
\newblock The predicted-updates dynamic model: Offline, incremental, and
  decremental to fully dynamic transformations.
\newblock In {\em The 37th Annual Conference on Learning Theory, {COLT}},
  volume 247 of {\em Proceedings of Machine Learning Research}, pages
  3582--3641. {PMLR}, 2024.

\bibitem{malhotra1978v}
V.~M. Malhotra, M.~P. Kumar, and S.~N. Maheshwari.
\newblock An $o(|v|^3)$ algorithm for finding maximum flows in networks.
\newblock {\em Information Processing Letters}, 7(6):277--278, 1978.

\bibitem{miller1995flow}
G.~L. Miller and J.~Naor.
\newblock Flow in planar graphs with multiple sources and sinks.
\newblock {\em SIAM Journal on Computing}, 24(5):1002--1017, 1995.

\bibitem{Orlin2013}
J.~B. Orlin.
\newblock Max flows in {$O(nm)$} time, or better.
\newblock In {\em Proceedings of the 45th annual ACM Symposium on Theory of
  Computing (STOC)}, pages 765--774, 2013.
\newblock Full version available at
  \url{https://dspace.mit.edu/handle/1721.1/88020}.

\bibitem{Sankowski04}
P.~Sankowski.
\newblock Dynamic transitive closure via dynamic matrix inverse.
\newblock In {\em 45th Symposium on Foundations of Computer Science, {FOCS}},
  pages 509--517. {IEEE} Computer Society, 2004.

\bibitem{SankowskiM10}
P.~Sankowski and M.~Mucha.
\newblock Fast dynamic transitive closure with lookahead.
\newblock {\em Algorithmica}, 56(2):180--197, 2010.

\bibitem{sha2007time}
D.~Sha and C.~Wong.
\newblock {\em Time-varying network optimization}, volume 103.
\newblock Springer Science \& Business Media, 2007.

\bibitem{Sleator83}
D.~D. Sleator and R.~E. Tarjan.
\newblock A data structure for dynamic trees.
\newblock {\em Journal of Computer and System Sciences}, 26(3):362--391, 1983.

\bibitem{Tarjan91}
R.~E. Tarjan.
\newblock Efficiency of the primal network simplex algorithm for the
  minimum-cost circulation problem.
\newblock {\em Mathematics of Operations Research}, 16(2):272--291, 1991.

\bibitem{thomason2001extremal}
A.~Thomason.
\newblock The extremal function for complete minors.
\newblock {\em Journal of Combinatorial Theory, Series B}, 81(2):318--338,
  2001.

\bibitem{van2024almost}
J.~Van Den~Brand, L.~Chen, R.~Kyng, Y.~P. Liu, S.~Meierhans, M.~P. Gutenberg,
  and S.~Sachdeva.
\newblock Almost-linear time algorithms for decremental graphs: Min-cost flow
  and more via duality.
\newblock In {\em 2024 IEEE 65th Annual Symposium on Foundations of Computer
  Science (FOCS)}, pages 2010--2032. IEEE, 2024.

\bibitem{brand2023deterministic}
J.~Van Den~Brand, L.~Chen, R.~Peng, R.~Kyng, Y.~P. Liu, M.~P. Gutenberg,
  S.~Sachdeva, and A.~Sidford.
\newblock A deterministic almost-linear time algorithm for minimum-cost flow.
\newblock In {\em 2023 IEEE 64th Annual Symposium on Foundations of Computer
  Science (FOCS)}, pages 503--514. IEEE, 2023.

\bibitem{BrandFNP24}
J.~van~den Brand, S.~Forster, Y.~Nazari, and A.~Polak.
\newblock On dynamic graph algorithms with predictions.
\newblock In {\em Proceedings of the 2024 {ACM-SIAM} Symposium on Discrete
  Algorithms, {SODA}}, pages 3534--3557. {SIAM}, 2024.

\bibitem{BrandK23}
J.~van~den Brand and A.~Karczmarz.
\newblock Deterministic fully dynamic {SSSP} and more.
\newblock In {\em 64th {IEEE} Annual Symposium on Foundations of Computer
  Science, {FOCS}}, pages 2312--2321. {IEEE}, 2023.

\bibitem{Brand2021}
J.~Van Den~Brand, Y.~T. Lee, Y.~P. Liu, T.~Saranurak, A.~Sidford, Z.~Song, and
  D.~Wang.
\newblock Minimum cost flows, {MDP}s, and {$\ell_1$}-regression in nearly
  linear time for dense instances.
\newblock In {\em Proceedings of the 53rd Annual ACM SIGACT Symposium on Theory
  of Computing}, pages 859--869, 2021.

\bibitem{BrandNS19}
J.~van~den Brand, D.~Nanongkai, and T.~Saranurak.
\newblock Dynamic matrix inverse: Improved algorithms and matching conditional
  lower bounds.
\newblock In {\em 60th {IEEE} Annual Symposium on Foundations of Computer
  Science, {FOCS}}, pages 456--480. {IEEE}, 2019.

\bibitem{weihe1997maximum}
K.~Weihe.
\newblock Maximum $(s, t)$-flows in planar networks in {$O (|V| \log |V|)$}
  time.
\newblock {\em Journal of Computer and System Sciences}, 55(3):454--475, 1997.

\bibitem{williams2024new}
V.~V. Williams, Y.~Xu, Z.~Xu, and R.~Zhou.
\newblock New bounds for matrix multiplication: from alpha to omega.
\newblock In {\em Proceedings of the 2024 Annual ACM-SIAM Symposium on Discrete
  Algorithms (SODA)}, pages 3792--3835. SIAM, 2024.

\end{thebibliography}

\appendix

\section{The Approximate Flow Subroutine}\label{sec:approx_flow}

In this section, we derive Lemma~\ref{lem:basic-flow} and Theorem~\ref{thm:apx-flow-solver-run}.

We will use the fundamental flow decomposition theorem, which states that
any flow can be decomposed into path and cycle flows. A flow $f \in \R^E_+$ is an
\emph{$s$--$t$ path flow} if the support of $f$ is an $s$--$t$ path $P \subseteq E(G)$ and $f_e = f_{e'} > 0$ for all $e,e' \in P$. Similarly, $f$ is a \emph{cycle flow} if it is supported on
cycle $C \subseteq E(G)$ and $f_e = f_{e'} > 0$ for all $e,e' \in C$.  The following decomposition is well-known and easy to verify using a greedy construction.

\begin{theorem}
Let $(G,u)$ be a maximum flow instance with $m$ arcs,
and let $f\in \flowP(G,u)$.
Then we can write $f = \sum_{i \in [k_p]} f^{P_i} + \sum_{i \in [k_c]}
f^{C_i}$, such that $k_c + k_p \leq m$, $f^{P_i} \in \R^E_+$ is an $s$--$t$ path flow supported on an $s$--$t$ path $P_i \subseteq E(G)$ for all $ i \in [k_p]$, and $f^{C_i} \in \R^E_+$ is a cycle flow supported on a cycle $C_i
\subseteq E(G)$ for all $ i \in [k_c]$.
\label{thm:flow-decomp}
\end{theorem}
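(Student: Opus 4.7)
The plan is to prove the theorem by a greedy peeling argument, iteratively extracting path flows and cycle flows from $f$ until nothing remains, and bounding the total count via a potential argument on the size of the support.

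First I would describe the extraction step. Given the current residual $\tilde f \in \flowP(G,u)$ with support $S = \supp(\tilde f)$, I consider two cases. If there exists an arc $e \in S$ with $e \in \dout{E}{v}$ for some $v \notin \{s,t\}$, then by flow conservation $\exc{\tilde f}{v} = 0$, so there must also be an arc of $S$ entering $v$. Following arcs forward from $e$, each intermediate node has outgoing flow by conservation, and since $V$ is finite either we reach $t$ (yielding an $s$--$t$ directed walk in $S$ after also tracing back from $e$ to $s$ in the same fashion) or we revisit a node, yielding a directed cycle in $S$. If no arc of $S$ leaves any intermediate node, then by conservation $S$ must be entirely empty or restricted in a way I handle by the second case: if $S \neq \emptyset$ but contains no arc leaving an internal node, then every arc of $S$ leaves $s$, which contradicts conservation unless $s$ also has matching flow into $t$, reducing again to the $s$--$t$ path case. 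The upshot is that as long as $S \neq \emptyset$, either an $s$--$t$ path or a directed cycle exists in $S$.

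Having found such a path $P$ or cycle $C$, I peel off $\alpha \defeq \min_{e \in P} \tilde f_e$ (respectively $\alpha \defeq \min_{e \in C} \tilde f_e$) units along it: set $f^P$ (resp.\ $f^C$) equal to $\alpha$ on every arc of $P$ (resp.\ $C$) and zero elsewhere, and replace $\tilde f$ by $\tilde f - f^P$ (resp.\ $\tilde f - f^C$). The updated $\tilde f$ remains non-negative and satisfies flow conservation at internal nodes, hence remains in $\flowP(G,u)$ with a smaller flow value (in the path case) or the same value (in the cycle case). Crucially, at least one bottleneck arc attaining the minimum is removed from the support, so $|\supp(\tilde f)|$ strictly decreases by at least one at each iteration.

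Since $|\supp(\tilde f)| \le m$ initially and strictly decreases, the procedure terminates after at most $m$ iterations with $\tilde f = 0$, producing a decomposition with $k_p + k_c \le m$. The main obstacle is really just a careful case analysis in the extraction step to confirm that either an $s$--$t$ path or a directed cycle always exists in the support; once that is in hand, the bookkeeping is immediate. This argument is classical, so I would present the construction concisely and emphasize the ``bottleneck arc removed'' observation that yields the $m$ bound.
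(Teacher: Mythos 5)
Your plan is exactly the standard greedy flow decomposition, and it matches the paper: the paper does not actually supply a proof of Theorem~\ref{thm:flow-decomp}, it only remarks that the decomposition is ``well-known and easy to verify using a greedy construction.'' Your ``peel a path or cycle, remove the bottleneck arc from the support'' argument is precisely that construction, and the counting bound $k_p + k_c \le m$ via strict decrease of $|\supp(\tilde f)|$ is exactly right.

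One remark on the exposition of your extraction step. The case analysis is a bit muddled, in particular the sentence claiming that if no arc of $S$ leaves an internal node then ``every arc of $S$ leaves $s$'': arcs of $S$ could equally well leave $t$. A cleaner argument is: pick any arc $e=(a,b)\in S$; trace forward from $b$ (each internal node with inflow must have outflow by conservation, so the trace continues until it reaches $t$ or revisits a node, yielding a cycle) and trace backward from $a$ (similarly until reaching $s$ or revisiting). If either trace closes a cycle, extract it; otherwise the concatenation is an $s$--$t$ walk, and either it is simple (an $s$--$t$ path) or it repeats a node, which again yields a cycle. This avoids the special-casing you attempted and is uniform. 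You should also note, at least implicitly, that this argument relies on $f\geq 0$ and on the flow value not being driven negative by lone $(t,s)$-flow outside cycles; in the paper's setting (where this theorem is invoked inside the proof of Lemma~\ref{lem:basic-flow} on acyclic flows and for nonnegative-value flows) the subtlety never arises, but it is worth a sentence. With that tightening, the proof is complete and is the same argument the paper implicitly relies on.
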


\basicflow*
\begin{proof}
Using the link-cut trees data structure by Sleator and Tarjan \cite{Sleator83}, every flow can be reduced to an acylic flow of the same value by removing flow around  directed cycles from the support in $O(m\log n)$ time. Using a slight modification of this data structure, see also  \cite{Tarjan91}
, there is an  $O(m\log n)$ time algorithm that turns a  flow $f$  into a basic feasible flow of the same or better value  by setting some arcs with $0<f_e<u_e$ to either of the two capacity bounds. Running these algorithms one after the other results in an acyclic basic flow, by noting that the second algorithm cannot create any new directed cycles in the support of the flow.

Properties {\em (i)} and {\em (ii)} follow because $\bar f$ is acyclic. 
 Thus, a flow decomposition (Theorem~\ref{thm:flow-decomp}) may only contain paths, and the total value carried along the paths is at most $\val{\bar f}$.
 For {\em (ii)}, notice that any arc entering $s$ or leaving $t$ with positive flow would be contained in a cycle in the flow decomposition. Property {\em (iii)} holds for any basic flow: if $C$ is a cycle of arcs strictly between the upper and lower capacity, then we can write $f$ as the convex combination of two flows $f- g$ and $f+ g$, where $g$ is a flow sending some positive amount of flow around such a cycle. Similarly, if $s$ and $t$ were in the same component of the forest, then we could set $g$ as positive flow supported on an $s$--$t$ path. Property {\em (iv)} is also true for any basic flow and can be easily derived from property {\em (iii)}, or by the total unimodularity of the incidence matrix of directed graphs.%
\end{proof}

\apxflowsolver*

The stated algorithm $\approxFlow(G,u,M)$ is shown on Algorithm~\ref{alg:approx-flow-solver}. It relies on the solver $\FastMaxFlow(G,u)$ guaranteed in 
 Theorem~\ref{thm:approx-flow-near-linear}. For an instance $(G,u)$ with $m$ arcs and integer capacities $u_e$ at most $U$, this subroutine returns an integer maximum flow in running time $m^{1+o(1)}\log U$. We also use the subroutine $\MaxCap(G,u)$ that computes the maximum capacity of an $s$--$t$ path in the instance $(G,u)$, and returns an arc $e^\star\in E(G)$ with the same capacity. This algorithm can be implemented in time $O(m+n \log n)$ in an instance with $n$ nodes and $m$ arcs, using Fibonacci heaps \cite{fredman1987fibonacci}. Finally, let $\ConvertBasic{(G,u,f)}$ denote the subroutine described in Lemma~\ref{lem:basic-flow} that converts a flow to an acyclic basic one.
 We will use the following simple lemma.
 \begin{lemma}\label{lem:maxcap}
 For a maximum flow instance $(G,u)$ with $m$ arcs, the maximum capacity of an $s$--$t$ path lies between $\nu(G,u)/m$ and $\nu(G,u)$.
\end{lemma}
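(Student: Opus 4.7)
\textbf{Proof plan for \Cref{lem:maxcap}.} Let $c^\star$ denote the maximum capacity of an $s$--$t$ path in $(G,u)$, where the capacity of a path is the minimum capacity of any arc on it. The plan is to prove the two inequalities $c^\star \le \nu(G,u)$ and $c^\star \ge \nu(G,u)/m$ separately, both via direct flow-theoretic arguments.

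For the upper bound, I would take a maximum-capacity $s$--$t$ path $P$ and send exactly $c^\star$ units of flow along $P$. Since every arc of $P$ has capacity at least $c^\star$, this yields a feasible $s$--$t$ flow of value $c^\star$, so by definition of $\nu(G,u)$ we get $c^\star \le \nu(G,u)$. This step is immediate and requires no additional machinery.

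For the lower bound, I would invoke the flow decomposition result already stated in the paper as \Cref{thm:flow-decomp}. Let $f^\star$ be a maximum flow in $(G,u)$, so $\val{f^\star}=\nu(G,u)$. Decompose $f^\star = \sum_{i\in[k_p]} f^{P_i} + \sum_{i\in[k_c]} f^{C_i}$ with $k_p+k_c \le m$. Cycle flows contribute $0$ to the value, so $\nu(G,u) = \sum_{i\in[k_p]} \val{f^{P_i}}$. By averaging, there exists some $i^\star \in [k_p]$ such that $\val{f^{P_{i^\star}}} \ge \nu(G,u)/k_p \ge \nu(G,u)/m$. Because $f^{P_{i^\star}}$ sends a uniform amount $\val{f^{P_{i^\star}}}$ along every arc of $P_{i^\star}$ while respecting capacities, every arc of $P_{i^\star}$ has capacity at least $\val{f^{P_{i^\star}}} \ge \nu(G,u)/m$. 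Hence the capacity of the path $P_{i^\star}$ is at least $\nu(G,u)/m$, and therefore $c^\star \ge \nu(G,u)/m$.

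Both steps are short and rely on standard arguments; I do not anticipate any serious obstacle. The only subtlety is correctly accounting for cycle flows in the decomposition (they are irrelevant to the value but inflate the count $k_p+k_c$), which is handled by the bound $k_p \le k_p + k_c \le m$ from \Cref{thm:flow-decomp}.
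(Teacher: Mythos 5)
Your proof is correct, and since the paper states \Cref{lem:maxcap} without proof (labelling it a ``simple lemma''), you are simply supplying the argument the authors omitted rather than diverging from an alternative route. Both directions go through exactly as you describe: the upper bound is the trivial feasible flow along a max-capacity path, and the lower bound uses \Cref{thm:flow-decomp} with a pigeonhole step. The one spot worth tightening is the sentence ``$f^{P_{i^\star}}$ sends a uniform amount ... while respecting capacities'': an individual summand $f^{P_{i^\star}}$ is not a priori a feasible flow on its own, but since all terms in the decomposition are nonnegative and sum to the feasible flow $f^\star$, one has $u_e \ge f^\star_e \ge f^{P_{i^\star}}_e = \val{f^{P_{i^\star}}}$ for every arc $e$ on $P_{i^\star}$, which is all you need. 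Also worth a one-line remark: if $\nu(G,u)=0$ the decomposition has no path terms, but then the claim degenerates to $0 \le c^\star \le 0$, i.e.\ every $s$--$t$ path (if any exist) has a zero-capacity arc, which follows from $\nu(G,u)=0$ by the upper-bound argument. With those two clarifications the proof is complete.
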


\begin{algorithm}[htb!]
    \caption{$\approxFlow$}\label{alg:approx-flow-solver}
    \KwData{An instance $(G,u)$ with  $\nu(G,u)<\infty$,  and $M\in \Rnn$.}
    \KwResult{A flow $f$  in  $(G,u)$ and $e^\star\in E(G)$ as in Definition~\ref{def:approx-solver}.}
     $u_{\bar e}\gets \MaxCap{(G,u)}$ \;
      $\delta\gets u_{\bar e}/(m^2 M)$ \;
      \lFor{$e\in E(G)$}{
         $w_e\gets\min\left\{\left\lfloor\frac{u_e}{\delta}\right\rfloor, m^3 M\right\}$}
       $g\gets \delta \cdot \FastMaxFlow{(G,w)}$ \;       
     $f\gets\ConvertBasic{(G,u,g)}$ \;
    $e^\star\gets \MaxCap{(G,\res{f})}$ \;
     \Return{$(f,e^\star)$}  \;
\end{algorithm}

The algorithm computes the bottleneck arc $\bar e$ by calling $\MaxCap{(G,u)}$. It then sets $\delta=u_{\bar e}/(m^2 M)$ and rounds all capacities as $w_e=\min\{\lfloor u_e/\delta\rfloor,m^3 M\}$.  We  call $\FastMaxFlow(G,w)$ to obtain an integer maximum flow, and set $g$ as $\delta$ times this flow. Finally, we convert $g$ into an acylic basic flow $f$, and obtain the bottleneck arc $e^\star$.

Rounding the capacities to integer multiples of $\delta$ may decrease the instance value by at most $m\delta$.
Since $\nu(G,u)\le m u_{\bar e}$ (Lemma~\ref{lem:maxcap}), this decreases the instance value by at most $\nu(G,u)/(m M)$. Further, truncating capacities at 
$m^3 M \delta=m u_{\bar e}\ge \nu(G,u)$ does not change the maximum flow value of the instance. Consequently, $\val{g}\ge \nu(G,u)(1-1/(m M))$, and $\nu(G,\res{f})\le \nu(G,\res{g})\le \nu(G,u)/(m M)$. Again by Lemma~\ref{lem:maxcap}, for the bottleneck arc $e^\star$,
$\res{f}_{e^\star}\le m\val{f}\le \nu(G,u)/M$. Thus, $(f,e^\star)$ satisfy the requirements in Definition~\ref{def:approx-solver}.

\paragraph{Arithmetic operations} As described above, Algorithm~\ref{alg:approx-flow-solver} uses additions, subtractions, multiplications, integer divisions, and comparisons  (beyond the operations in the other subroutines). 
We show that all these can be replaced by using additions, subtractions, and comparisons only, while keeping the same overall running time bound. Note that given $\alpha$, expressions of the form $t\alpha$ for integer $t$ can be computed by $O(\log t)$ additions.
We do not need to explicitly compute $\delta$ to compute the capacities $w_e$. This is because $w_e=k$ for the integer value $k\in\{0,1,2,\ldots,Mm^3\}$ for the value $k$ where $ku_{\bar e}\le M m^2 u_e  < (k+1) u_{\bar e}$. Hence, these values can be computed using binary search in $O(\log (m+M))$ additions and comparisons.

Let $g'$ denote the output of $\FastMaxFlow{(G,w)}$; this is an integer flow. The algorithm would set $g=\delta g'$, and then convert this to a basic flow $f$. However, since $\delta=u_{\bar e}/(M m^2)$, this involves division by $M m^2$. To avoid this, we can set $g''\defeq u_{\bar e} g'=M m^2 g$; the value of this flow on every arc is an integer multiple of the original capacity $u_{\bar e}$.  Let $u''\defeq M m^2 u$; thus, $g''$ is a feasible flow in $(G,u'')$.
We can use $\ConvertBasic{(G,u'',g'')}$ to convert $g''$ to an acyclic basic flow $f''$. 
This solution is uniquely defined by two arc sets $E_0,E_1\subseteq E(G)$ and the conditions $f''_e=0$ for $e\in E_0$ and $f''_e=u''_e$ for $e\in E_1$; the rest of the arcs forms a forest, with $s$ and $t$ being in different components.
Using these same two arc sets and the forest, we can compute the basic solution $f$ in $(G,u)$  such that $f=f''/(M m^2)$, without using division.

\section{Incremental Transitive Cover Lower Bounds via OMv}
\label{sec:omv}

In this section, we optimize a known lower bound for incremental transitive closure and adapt it to the precise setting needed in \Cref{thm:fundamental} for incremental transitive cover.

\paragraph{\bf The OMv Conjecture.} As mentioned in the introduction, the lower bound is derived from online Boolean matrix multiplication conjecture (OMv) of Henzinger, Krinninger, Nanongkai and Saranurak~\cite{henzinger2015unifying}. The $\gamma$-OMv problem, for fixed $\gamma > 0$ is as follows. We are given a Boolean matrix $A$ of dimension $n_1\times n_2$, where $n_1 = \lfloor (n_2)^\gamma \rfloor$ which we have $\poly(n_1,n_2)$-time to preprocess. From here, a Boolean matrix $B = (b_1,\dots,b_{n_3})$ of dimension $n_2 \times n_3$ arrives online one column at a time. After column $b_k$ arrives, $k \geq 1$, the algorithm must output the Boolean product $Ab_k$ before any remaining vector arrives. The OMv conjecture states that for $\gamma$-OMv, no online algorithm as above can compute the online matrix product in sub $O(n_1 n_2 n_3)$-time, which corresponds to the trivial bound for computing the Boolean products one by one. By sub $O(n_1n_2n_3)$-time, the precise meaning is $O(n_1^{1-\eps}n_2n_3 + n_1n_2^{1-\epsilon}n_3 + n_1n_2n_3^{1-\epsilon})$-time, for any fixed $\eps > 0$.

\paragraph{\bf OMv based Incremental Transitive Closure Lower Bound.} In~\cite[Corollary 4.8]{henzinger1997faster}, Henzinger, Krinninger, Nanongkai and Saranurak prove that any algorithm for the incremental transitive closure problem that processes $O(m)$ updates and $O(n^2)$ queries in sub $O(mn)$-time would break OMv for $n_1=n_3=n$ and $n_2 = m/n$.

The reduction from OMv is as follows. Given $A$, $B$ of dimensions $n_1 \times n_2$ and $n_2 \times n_3$ respectively, one constructs a three layer graph as follows. The nodes are in layers $L_1,L_2,L_3$, where $|L_r|=n_r$ for $r \in [3]$. The node labelled $i$ in $L_1$ and $j$ in $L_2$ are connected by a directed arc $(i,j)$ iff $A_{ij} = 1$. The nodes are in $L_3$ are initially isolated. We begin by inserting the arcs associated with $A$ into the incremental transitive cover data structure (these can technically be counted as OMv preprocessing).

When the vector $b_k$ arrives, we add the arc $(j,k)$ between the node labelled $j$ in $L_2$ and the node labelled $k$ in $L_3$, for all $j$ such that $(b_k)_j = 1$. These corresponding arcs are inserted into the data structure.

To compute the Boolean matrix vector product $A b_k$, we note that $(A b_k)_i = 1$, $i \in [n_1]$, if and only if there is a directed path of length $2$ from the node labelled $i$ in $L_1$ to the node labelled $k$ in $L_3$ in the current network. This can be determined by asking the transitive closure data structure if $k$ is reachable from $i$ in the current network.

In the above reduction, the number of queries to the transitive closure data structure is precisely $n_1 \times n_3 = n^2$ (the dimensions of $A B$), and the number of insertions is precisely the number of non-zeros and $A$ and $B$, which are at most $n_1 \times n_2 + n_2 \times n_3 = 2 (m/n) (n) = 2m$. Furthermore, the number of nodes in the graph is $n_1+n_2+n_3 = 2n + m/n = O(n)$ (noting that $m \leq n^2$).  The OMv conjecture therefore posits that the total query and update time cannot be sub-$O(n_1n_2n_3) = O(nm)$.

\paragraph{\bf Optimizing the Lower Bound for Incremental Transitive Cover.} As mentioned in the introduction, one can run the above reduction using an incremental transitive cover data structure in place of incremental transitive closure. Namely, for each $(i,k)$ transitive query, one simply queries for a transitive cover on the set $\{i,k\}$, and runs a BFS from $i$ on this cover to see if it can reach $k$. Since the BFS-time is linear in the number of arcs of the outputted cover, and the transitive closure processing-time dominates the size of the outputted covers, the lower bound above extends to transitive cover under the condition that the total size of query sets is $O(n^2)$.

In \Cref{cor:conditional}, the number of arc insertions and the total query size are only $O(m)$, noting that the number of queries required above is $\Omega(n^2)$. We now show that optimizing the above OMv lower bound under these new conditions yields that the total processing time of incremental transitive cover cannot be sub-$O(m^{3/2})$. 

We recall that the upper bound on the number of inserted arcs $n_1n_2 + n_2n_3$ must be at most $O(m)$ and the number of nodes $n_1+n_2+n_3$ must be $O(n)$. Furthermore, the number of queries $n_1 n_3$ must also be $O(m)$. Under these conditions, optimizing the lower bound simply means maximizing the product $n_1 n_2 n_3$. By the AM-GM inequality, we have that
\begin{align*}
n_1 n_2 n_3 &= \left((n_1 n_2)^{\frac{3}{2}})\right)^{\frac{1}{3}} \left((n_2 n_3)^{\frac{3}{2}})\right)^{\frac{1}{3}}  \left((n_1 n_3)^{\frac{3}{2}})\right)^{\frac{1}{3}} \\ 
&\leq \frac{1}{3} \left( (n_1 n_2)^{3/2} + (n_2 n_3)^{3/2} + (n_1 n_2)^{3/2} \right) = O(m^{3/2}),
\end{align*}
using our assumption that each of $n_1n_2,n_2n_3,n_1n_3$ is $O(m)$. Furthermore, a feasible solution achieving this bound is simply $n_1=n_2=n_3=\sqrt{m} \leq n$.

\end{document}